\documentclass[a4paper,UKenglish,cleveref, autoref, thm-restate]{lipics-v2021}



\bibliographystyle{plainurl}

\title{Typing Fallback Functions: A Semantic Approach to Type Safe Smart Contracts}


\author{Stian Lybech}
{Reykjavík University, Reykjavík, Iceland \and University of Southern Denmark (SDU), Odense, Denmark}
{stly@mmmi.sdu.dk}
{https://orcid.org/0000-0001-8219-2285}
{Supported by the Icelandic Research Fund Grant No.\@ 218202-05(1-3), the Department of Computer Science at Reykjavik University, and by the Software Engineering section at University of Southern Denmark.}

\author{Daniele Gorla}{Sapienza, Università di Roma, Rome, Italy}{gorla@di.uniroma1.it}{https://orchid.org/0000-0001-8859-9844}{}

\author{Luca Aceto}{Reykjavík University, Reykjavík, Iceland \and Gran Sasso Science Institute, L'Aquila, Italy}{luca@ru.is}{https://orchid.org/0000-0002-2197-3018}{Supported by the Icelandic Research Fund Grant No.\@ 218202-05(1-3).}

\authorrunning{
S. Lybech, D. Gorla and L. Aceto
}
\Copyright{
Stian Lybech, Daniele Gorla and Luca Aceto
}
\keywords{semantic typing, smart contracts, information flow control, non-interference}

\ccsdesc[500]{Theory of computation~Type theory}
\ccsdesc[500]{Security and privacy~Information flow control}

\category{} 

\relatedversion{} 
\relatedversiondetails{Full Version}{https://doi.org/10.48550/arXiv.2512.04755} 




\nolinenumbers 

\EventEditors{}
\EventLongTitle{}
\EventShortTitle{}
\EventAcronym{}
\EventYear{}
\EventDate{}
\EventLocation{}
\EventLogo{}
\SeriesVolume{}
\ArticleNo{}

\usepackage{calculi}
\usepackage{microtype}
\usepackage{listings}
\newcommand{\changed}[1]{#1}



\definecolor{strings}{rgb}{0,0.5,0}
\definecolor{emphs}{rgb}{0.64,0.08,0.08}
\definecolor{comments}{rgb}{0.17,0.57,0.68}
\colorlet{keywords}{blue!50!cyan}

\lstdefinestyle{tinysol}{
  language=C,
  captionpos=b,
  numbers=left,
  numberstyle=\tiny,
  frame=lines,
  showspaces=false,
  showtabs=false,
  breaklines=true,
  showstringspaces=false,
  breakatwhitespace=true,
  emph={contract,field,func,call,dcall,if,then,else,while,do},
  emphstyle={\rmfamily\bfseries\color{emphs}},
  commentstyle=\color{comments},
  morekeywords={contract, interface, account, this, var, field, func, method, send, value, balance, sender, then, fallback, call, dcall, args, id},
  keywordstyle={\bfseries\color{keywords}},
  stringstyle=\color{strings},
  basicstyle=\ttfamily\small,
  escapechar=@
}

\usetikzlibrary{
  patterns
}

\pdfstringdefDisableCommands{%
  \def\alpha{alpha}
  \def\rho{rho}
  \def\pi{pi}
  \def\lambda{lambda}
  \def\Omega{Omega}
  \def\Psi{Psi}
  \def\EPI{epi}
  \def\ENV{env}
  
}

\newbool{showedits}
\boolfalse{showedits}

\makeatletter
\def\edited{%
  \def\edited@temp{edited}%
  \ifx\@currenvir\edited@temp
    \@xp\edited@environmentcase 
  \else
    \@xp\edited@commandcase 
  \fi
}

\ifbool{showedits}{
  \newcommand{\edited@commandcase}[2][red]{\textcolor{#1}{#2}}
  \newcommand{\edited@environmentcase}[1][red]{\color{#1}}
}{
  \newcommand{\edited@commandcase}[2][\relax]{#2}
  \newcommand{\edited@environmentcase}[1][\relax]{}
}
\makeatother

\makeatletter
\newcommand{\replunderscores}[1]{\expandafter\@repl@underscores#1_\relax}
\def\@repl@underscores#1_#2\relax{%
  \ifx \relax #2\relax
    #1%
  \else
    #1%
    \textunderscore
    \@repl@underscores#2\relax
  \fi
}
\makeatother




\def\TINT{\text{\normalfont\sffamily int}}

\def\TBOOL{\text{\normalfont\sffamily bool}}

\def\SUBS{\ensuremath{\mathrel{\text{\code{<:}}}}}

\def\TYPES{\SETNAME{T}}
\def\BASETYPES{\SETNAME{B}}

\def\TYPEENVS{\SETNAME{E}}

\def\SUCHTHAT{\ensuremath{\mathrel{.}}}

\newcommand{\WRONG}[2][\Gamma]{\ensuremath{\text{\normalfont\sffamily Wrong}_{#1}\kern-2pt\ARGS{#2}}}
\newcommand{\SAFE}[2][\Gamma]{\ensuremath{\text{\normalfont\sffamily Safe}_{#1}\kern-2pt\ARGS{#2}}}
\newcommand{\NSAFE}[2][\Gamma]{\ensuremath{\text{\normalfont\sffamily NSafe}_{#1}\kern-2pt\ARGS{#2}}}



\def\EPI{\ensuremath{\prescript{e}{}{\kern-1pt\PI}}}
\def\TINYSOL{\textsc{TinySol}}


\def\SWAP#1#2#3{\ensuremath{\ARGS{#1,#2}\kern-2pt\boldsymbol{\cdot}\kern-2pt #3}}

\makeatletter
\newcommand*\bigcdot{\mathpalette\bigcdot@{.5}}
\newcommand*\bigcdot@[2]{\mathbin{\raisebox{-.7ex}{\hbox{\scalebox{#2}{$\m@th#1\bullet$}}}}}
\makeatother

\def\CHANEQ{\ensuremath{\xleftrightarrow{\kern-1.5pt\bigcdot}}}

\def\CSOR{\ensuremath{~{\syntaxfont{[\kern-2pt]}}~}}

\def\TYPES{\ensuremath{\textbf{\textit{Types}}}}




\def\INCRSYM{\ensuremath{\text{\syntaxfont{+}}}}

\def\NAMESPACE#1{\SETNAME{N}_{\kern-1pt\HOLE[#1]}}
\def\LNS#1{\prescript{\INCRSYM\kern-2pt}{}{\NAMESPACE{#1}}}

\def\LNC{\prescript{\INCRSYM\kern-2pt}{}{N}}

\newcommand{\LNSSUBST}[1][s_1,s_2]{\ensuremath{\prescript{\INCRSYM\kern-2pt}{}{\sigma_{#1}}}}



\def\TYPES{\ensuremath{\LANG{T}}}

\def\SUCHTHAT{\ensuremath{\;.\;}}

\newlength\QEDSymbolSpace
\settowidth\QEDSymbolSpace{$\blacksquare$}


\newcommand{\code}[1]{\texttt{#1}}
\newcommand{\CONF}[1]{\ensuremath{\left<#1\right>}}
\newcommand{\UPDATE}[2]{\ensuremath{\kern-2pt\left[#1 \mapsto #2\right]}} 

\def\EXPR{\ensuremath{\text{\normalfont\sffamily Exp}}}
\def\STM{\ensuremath{\text{\normalfont\sffamily Stm}}}

\def\VALUES{\SETNAME{V}}


\newcommand{\GROUPNAMES}[1]{\ensuremath{\SETNAME{N}_{#1}}}
\def\ANAMES{\GROUPNAMES{A}} 
\def\MNAMES{\GROUPNAMES{M}} 
\def\VNAMES{\GROUPNAMES{V}} 
\def\FNAMES{\GROUPNAMES{F}} 
\def\TNAMES{\GROUPNAMES{I}} 

\newcommand{\ENV}[1]{\ensuremath{\text{\normalfont\sffamily env}_{#1}}}
\newcommand{\DEC}[1]{\ensuremath{\SETNAME{D}_{#1}}}
\newcommand{\SETENV}[1]{\ensuremath{\text{\normalfont\sffamily Env}_{#1}}}

\def\TRUE{\ensuremath{\text{\sffamily T}}}
\def\FALSE{\ensuremath{\text{\sffamily F}}}



\def\BOOLEANS{\ensuremath{\mathbb{B}}}
\def\INTEGERS{\ensuremath{\mathbb{Z}}}


\newcommand{\CALL}[4]{\ensuremath{\code{call $#1$.$#2$($#3$)\$$#4$}}}
\newcommand{\DCALL}[3]{\ensuremath{\code{dcall $#1$.$#2$($#3$)}}}


\def\SECLEVELS{\SETNAME{S}}
\def\ordleq{\sqsubseteq}

\def\SUBS{\mathrel{<:}} 

\newcommand{\TVAR}[1]{\ensuremath{\text{\normalfont\sffamily var}(#1)}}
\newcommand{\TCMD}[1]{\ensuremath{\text{\normalfont\sffamily cmd}(#1)}}
\newcommand{\TSPROC}[2]{\ensuremath{\text{\normalfont\sffamily proc}(#1)\code{:}#2}}

\def\STOP{\ensuremath{s_\top}}
\def\SBOT{\ensuremath{s_\bot}}
\def\ITOP{\ensuremath{I^{\texttt{top}}}}


\newcommand{\TRANSACT}[5]{\ensuremath{#1\leadsto#2\code{.}#3\code{(}#4\code{)}\$#5}}
\newcommand{\DEL}[1]{\ensuremath{\text{\normalfont\sffamily del}(#1)}}

\newcommand{\EXTEND}[2]{\ensuremath{\kern-2pt\left[#1 \mapsto #2\right]}} 
\newcommand{\EXTRACT}[1]{\ensuremath{\mathbf{E}(#1)}} 






\def\STACKS{\SETNAME{Q}}
\def\TSTACKS{\SETNAME{TQ}}

\def\TIDF{\text{\sffamily idf}}   
\def\TTYPE{\text{\sffamily type}} 
\def\TYPEENVS{\SETNAME{E}} 


\NewDocumentCommand{\ETYPI}  { O{B_s} O{\Delta} }{\ensuremath{\SETNAME{R}^{#1}_{\Sigma;\Gamma;#2} }}
\NewDocumentCommand{\ETYPING}{ O{B_s} O{\Delta} }{\ensuremath{\nu\SETNAME{R}^{#1}_{\Sigma;\Gamma;#2} }}

\NewDocumentCommand{\STYPI}  { O{R} }{\ensuremath{\SETNAME{#1}_{\Sigma,\Gamma,\ENV{T}}}}

\NewDocumentCommand{\STYPIS}  { O{S} }{\ensuremath{\SETNAME{#1}_{\Sigma,\Gamma,\ENV{T}}}}

\NewDocumentCommand{\STYPING}{ O{R} }{\ensuremath{\nu\SETNAME{R}_{\Sigma,\Gamma,\ENV{T}}}}

\NewDocumentCommand{\sstypi}  { O{s} O{\Delta} }{\ensuremath{\SETNAME{R}_{#2,#1} }} 
\newcommand{\CANDSTYPI}[1][R]{\ensuremath{\SETNAME{#1}}}

\newcommand{\esemDash}[1][B_s]{\ensuremath{\vDash_{#1}}}
\newcommand{\ssemDash}[1][s]{\ensuremath{\vDash_{#1}}}

\newcommand{\TYPEOF}[2][\Gamma]{\ensuremath{\text{\normalfont\textsc{TypeOf}}_{#1}(#2)}}

\def\BSVEC{\VEC{B}_{\VEC{s}}}                           
\newcommand{\REMS}[1]{\ensuremath{\text{\normalfont\textsc{rem}}(#1)}}  
\newcommand{\MERGE}[1]{\ensuremath{\text{\normalfont\textsc{mrg}}(#1)}} 
\newcommand{\REBIND}[2]{\ensuremath{\kern-2pt\left[#1 \mapsto #2\right]}} %

\newcommand{\FA}[1]{\ensuremath{\mathcal{F}_A(#1)}}
\newcommand{\FV}[1]{\ensuremath{\mathcal{F}_V(#1)}}

\def\ordgeq{\sqsupseteq} 
\def\conseq{\rightleftharpoons} 

\def\DNAMES{\ensuremath{\SETNAME{D}}} 

\newcommand{\FIRSTS}[1]{\ensuremath{\text{\normalfont\textsc{fst}}(#1)}}
\newcommand{\FINISH}[1]{\ensuremath{\text{\normalfont\textsc{fin}}(#1)}}

\def\ttrans{\ensuremath{\Mapsto}}           
\def\progress{\ensuremath{\rightarrowtail}} 
\newcommand{\UPTOSTYPI}[1][S]{\ensuremath{\SETNAME{#1}}}

\def\SETTRIPLETS{\ensuremath{\SETNAME{P}}}



\begin{document}
\sloppy

\maketitle

\begin{abstract}
This paper develops semantic typing in a 
smart-contract setting to ensure type safety of code that uses statically untypable language constructs, such as the fallback function. 
The idea is that the creator of a contract on the blockchain equips code containing such constructs with a formal proof of its type safety, given in terms of the semantics of types.  
Then, a user of the contract only needs to check the validity of the provided `proof certificate' of type safety. 
This is a form of proof-carrying code, which naturally fits with
the immutable nature of the blockchain environment.
%
As a concrete application of our approach, we focus on ensuring information flow control and non-interference for 
\TINYSOL{}, a distilled version of the Solidity language, through security types. 
We provide the semantics of types in terms of a typed operational semantics of \TINYSOL\ and we express the proofs of safety as coinductively-defined typing interpretations, which can be 
represented compactly  
via up-to techniques, similar to those used for bisimilarity. 
We also show how our machinery can be used to type the typical pointer-to-implementation pattern 
based on the fallback function
and to reject a distilled version of the infamous Parity Multisig Wallet Attack.
\end{abstract}

\section{Introduction}
In 2017, a vulnerability in a smart contract on the Ethereum platform led to the theft of approximately 31 million dollars' worth of Ether (see \cite{manning2018parity_hack} for a code example with comments explaining the details of the attack).
This vulnerability, known as the \emph{Parity Multisig Wallet attack} (PMW), involved an intricate interplay between two peculiar, low-level features of the Solidity smart-contract language \cite{solidity2022}: \emph{delegate calls} and \emph{fallback functions}, which can be used for developing utility libraries. 
The root cause was an unintended and unwanted flow of data from an untrusted source to a trusted variable, which was used by an attacker to access wallet contracts and drain them of their Ether. 
In 
security parlance, 
the \emph{integrity} of that variable was compromised.
Preserving integrity is crucial to guarantee \emph{non-interference} \cite{goguen_mesegier1982security_policies} and, in \cite{volpano2000secure_flow_typesystem}, Volpano, Irvine and Smith created a type system for ensuring such a property, using the concept of \emph{security types}---that is, types that control the information flow between memory segments of different security levels.
This line of work has been taken up in
\cite{AGL25,chinese2021flowtypes},
where type systems with security types for a Solidity-like language have been devised
to prevent such unwanted flows.
Unfortunately, the language used in \cite{AGL25} does not include the fallback function construct, and the authors of the type system in \cite{chinese2021flowtypes}, despite using the PMW attack as their motivating example, do not actually provide a typing rule for the fallback function, which was a key factor in enabling the aforementioned vulnerability. 

This is unsurprising, because the fallback function enables a limited form of \emph{reflection}, or at least introspection, which is notoriously difficult to handle using a syntactic type system.
The source of this problem is that the fallback function is invoked when a method call \emph{fails}, because the called method does not exist, and the body of the fallback function can then use information \emph{about} the failed call to determine how the computation shall proceed.
This information is only available at runtime, hence it cannot be assigned a static type able to distinguish between the safe and unsafe usages of this construct. 
We shall explain these problems better and illustrate our proposal by using a concrete example in Section~\ref{sec:motEx}.

The difficulty in handling the fallback function is a prime example of a more general shortcoming of the standard, syntactic approach to type soundness \cite{wright_felleisen1994subject_reduction}, namely that it tends to 
equate `well-typedness' (a syntactic property) with `safety' (a semantic, or behavioural, property).
Specifically, in the syntactic approach, one defines a syntactic predicate, well-typedness, by means of syntax-driven type rules.
These rules are defined \emph{a priori} and then subsequently shown to \emph{imply} safety, but they are not \emph{derived} from the notion of safety itself.
This makes the syntactic approach incapable of reasoning about programs or constructs that \emph{behave} safely at runtime, but for which the difference between safe and unsafe usages is not a purely syntactic property.
Examples of such constructs include inline assembly, pointer arithmetics, reflection, and, indeed, the fallback function.

This issue was originally noted in \cite{10.5555/10510,MARTINLOF1982153} and, more recently, by
Caires in \cite{caires2007} and Timany et al.\@ in \cite{timany/2024/acm/semantic_typing}, who instead advocate a different approach, known as the \emph{semantic approach to type soundness}, or just \emph{semantic typing} for short, which is capable of reasoning about even seemingly `unsafe' programming constructs.
%
Semantic typing is now a well-established technique and, among other properties, it has also been used to ensure non-interference (see, e.g., \cite{9470654,GregersenBTB21,RajaniG20}). 
However, as far as we know, the semantic approach to type soundness has not hitherto been applied in the setting of blockchains and smart contracts, even though it seems particularly well-suited for this purpose.
Suppose a contract makes use of a syntactically untypable construct, such as a fallback function, but does it in such a way that it can be shown to preserve type safety via a formal proof, given in terms of the semantics of types.
Such a proof can then be supplied along with the code of the contract containing the fallback function, when it is declared on the blockchain, as a form of proof-carrying code \cite{necula/1997/popl/pcc}.
The immutable nature of the blockchain environment ensures that neither the code nor the proof can be subsequently altered. 
Thus, as long as the proof can be verified by any user of that contract, 
the untypable fragment of code does not itself need to be type checked using the syntactic rules.
Of course, \emph{finding} such a proof may then be challenging, but this obligation now falls on the contract creator, rather than on the user of the contract, who only needs to check the validity of the provided `proof certificate' of type safety.
In this way, we may not be able to make the fallback function itself typable by syntactic means, but we can provide a method for reasoning about, and ensuring, the type safety of contracts \emph{using} that construct.

Our main contribution in this paper is the presentation of the theoretical developments necessary to make this approach work in a blockchain/smart-contract setting.
Similar to \cite{chinese2021flowtypes}, we shall focus on ensuring non-interference by means of types for secure information flows in the style of Volpano, Irvine and Smith \cite{volpano2000secure_flow_typesystem}, yet this should only be viewed as an example of how the semantic approach may be applied.
(Indeed, in Appendix~\ref{app:CI} we show the changes needed to ensure \emph{call integrity}, a property designed in  \cite{grishchenko2018} to avoid attacks similar to PMW.)
To keep the presentation simple, with a focus on the fallback function, we shall use the language \TINYSOL{} \cite{bartoletti2019tinysol}, which models some of the core features of the smart-contract language Solidity.
A syntactic version of the Volpano-Irvine-Smith type system has already been developed for a version of \TINYSOL{} \emph{without} fallback functions in \cite{AGL25}, so we can adapt this work by adding just the fallback function construct.
We present this extended version of \TINYSOL{} in Section~\ref{sec:typedSyntax}, and review the language of security types in Section~\ref{sec:types}.

There are different approaches to defining a semantic model of types and type judgments.
One is the step-indexed model of Appel and McAllester \cite{appel_mcallester2001semantic_types_pcc}, which was further extended by Ahmed in \cite{ahmed2004semantic_types_phd}.
This approach has been applied in a variety of settings, for example to reason about the safety of `unsafe code' in core Rust libraries \cite{jung/2017/popl/rustbelt}; see \cite{timany/2024/acm/semantic_typing} for an overview of work based on this approach.
Another approach is based on \emph{typing interpretations}, which are coinductively defined objects corresponding to fragments of transition systems.
It is reminiscent of the logical relations technique of Plotkin \cite{plotkin1973logical_relations} and Tait \cite{tait1967logical_relations}, and it is the method used by Caires in \cite{caires2007} for a simple type system for the \PI-calculus \cite{PICALC}. 
Informally, 
\changed{
a typing interpretation for a syntactic category (e.g.\@ programs $P$)
w.r.t.\@ a collection of type predicates $\Gamma$, 
is a set of programs $\SETNAME{R}_\Gamma$ 
}%
defined such that $P \in \SETNAME{R}_\Gamma$ implies that
(1) $P$ satisfies the type predicates in $\Gamma$, and 
(2) If $P \trans P'$ then $P' \in \SETNAME{R}_\Gamma$.
The largest typing interpretation for a given $\Gamma$, written $\nu\SETNAME{R}_\Gamma$, is the union of all typing interpretations for $\Gamma$, just as bisimilarity is the union of all bisimulations.
The typing interpretation 
$\nu\SETNAME{R}_\Gamma$ is the \emph{typing} of $\Gamma$, and the semantic typing relation $\Gamma \vDash P$ is then defined as $P \in \nu\SETNAME{R}_\Gamma$.
Proving semantic type safety of $P$ w.r.t.\@ a given $\Gamma$ thus becomes a matter of finding a typing interpretation for $\Gamma$ that contains $P$.
\changed{
We take this approach, as it is flexible and based on general coinductive techniques, which are well-studied (e.g.\@ in bisimulation theory).
Using this method, it is possible to equip a statically untypable piece of code with a proof of its type safety, to be stored on the blockchain, as a form of \emph{proof-carrying code}; the proof is simply the typing interpretation itself.
}

In Section~\ref{sec:semTypingTinySol}, we define the semantics of types in terms of a \emph{typed operational semantics} of \TINYSOL{}, given in Section~\ref{sec:semantic_types_stacks}.
It is based on an (untyped) small-step operational semantics of the language (given in Appendix~\ref{app:untyped_operational_semantics}) with type constraints inspired from the syntactic type system. This ensures a crucial property, which is needed for the definition of typing interpretations: \emph{a non-terminal, stuck configuration indicates a type error}.
In a sense, our approach may resemble \emph{gradual typing} \cite{CLPS19,SW07}, which allows a mixing of static and dynamic typing.
However, the crucial difference is that our typed operational semantics is defined solely for the purpose of giving a semantics to types.
The runtime semantics of \TINYSOL{} is still untyped, and our typing interpretations are intended to \emph{statically} witness the safety of statically untypable code, without runtime type-related checks.

The task of finding a typing interpretation falls on the contract creator, but any client \emph{using} a contract with such an associated proof must, of course, also be able to verify that it indeed is a typing interpretation.
Both tasks can be difficult, since typing interpretations can be large and, in limit cases, also infinite.
This also threatens the practical applicability of our proposed approach, since infinite objects cannot be stored on the blockchain, unless they have a finite representation.
Inspired by the similarity between typing interpretations and bisimulations, we therefore adapt the work of Pous and Sangiorgi \cite{pous_sangiorgi2011uptotechniques} on enhancements of the bisimulation proof-technique, and propose \emph{up-to techniques for typing interpretations}, which may be used to reduce their size. 
The benefits of this approach and its soundness are 
discussed in Section~\ref{sec:upto_techniques}.
To our knowledge, this is the first use of up-to techniques in a predicative (i.e., non-relational) setting; indeed, traditionally these techniques have been used to reduce the size of (inductively or coinductively defined) {\em equivalences} on programs (see, e.g., \cite{BGGP18,BKK17,CPV16,Dan18,SV19}), whereas typing interpretations are (unary) {\em predicates} on programs.
The possibility of exploiting up-to techniques is a key advantage of giving the semantics to our types by means of typing interpretations (instead of, e.g., logical relations \cite{JKJBBD2018,10.1007/978-3-662-54434}).

Finally, we conclude the paper in Section~\ref{sec:conclusion} with a discussion of some avenues for future research.
All proofs and many definitions are deferred to the appendices for space reasons. A synopsis of all the notations used in this paper is in Appendix~\ref{app:notations}.

\section{A Motivating Example}
\label{sec:motEx}
We highlight the key points of our typing approach using the code in Figure~\ref{fig:example_fallback_proxy}.
This is a typical example of Solidity-like programming written in the language \TINYSOL{} (to be formally introduced in the next section), which is a class-based language, where classes, called \emph{contracts}, have fields and methods.
Every contract has 
an interface ($I^P$ and $I^X$ in our example) and 
a security level, $s$. 
Both these annotations are related to typing: the former is a way to handle typing and subtyping of contract addresses; the latter is the key feature to devise a type system for secure information flow.

\begin{figure}
\begin{minipage}[t]{.5\textwidth}
\begin{lstlisting}[style=tinysol]
contract Proxy : @$I_s^P$@ {
  field owner := A;
  field impl := X;
  @$\lstvdots$@
  func update(x) {
    if sender = this.owner 
    then this.impl := x 
    else skip      
  }
  func fallback() {
    call this.impl.id(args)$value
  }
}
\end{lstlisting}
\end{minipage}\hfill
\begin{minipage}[t]{.45\textwidth}
\begin{lstlisting}[style=tinysol]
contract X : @$I_s^X$@ {
  @$\cdots$@
  func @$f_1$@(@$\VEC{x_1}$@) { @$\ldots$@ }
  @\lstvdots@
  func @$f_h$@(@$\VEC{x_h}$@) { @$\ldots$@ }
  func fallback() { skip }
}
\end{lstlisting}
\end{minipage}
\vspace*{-.5cm}
\caption{An example of the pointer-to-implementation pattern ($\VEC{x_i}$, $1\leq i\leq h$, denotes a sequence of distinct variables).}
\label{fig:example_fallback_proxy}
\end{figure}

As previously mentioned, a \emph{fallback function} is a special, low-level construct, which is used to handle calls to non-existing contract methods.
When a contract receives such a call, its fallback function is invoked instead, and the name and actual arguments of the failed call are then available within the fallback method body through the reserved variables \code{id} and \code{args}, respectively.
The construct was originally introduced as 
a mechanism for handling failed calls, but is commonly used for other purposes, such as enabling contract functionality to be \emph{upgradable}, despite the immutable nature of the blockchain (see \cite{ilham_phd} for an extensive discussion of this pattern for upgradability and related  ones).
This usage is implemented by the code in Figure~\ref{fig:example_fallback_proxy}. The \code{Proxy} contract uses its fallback function to forward all  calls to methods other than \code{update} to an implementation residing in the contract \code{impl}, which is initially \code{X}. In more detail, assume that we invoke \code{Proxy.$f$($\VEC{v}$)}, for some method $f$ different from \code{update} and for some sequence of values $\VEC{v}$. Then, \code{Proxy}'s function \code{fallback} is executed, which, in turn, forwards to \code{impl} the invocation of function $f$ with actual parameters $\VEC{v}$ through the use of \code{id} and \code{args}.\footnote{There is also some currency transfer as denoted by the `\code{\$value}' part of the invocation, but this is orthogonal to the issues of our typing approach, so we ignore it for the moment.} The \code{Proxy} contract provides another method, \code{update}, that allows the contract owner (having the address \code{A}) to update the \code{impl} field, which holds the reference to the implementation.
The functionality can then be updated by deploying a new implementation contract and updating the reference in \code{Proxy}. 

Figure~\ref{fig:example_fallback_proxy} is an example of the so-called \emph{pointer-to-implementation pattern} \cite{dickerson2018proof_carrying_smart_contracts}.
The implementation contract \code{X} provides methods $f_1$, \ldots, $f_h$ and a fallback function with \code{skip} as its body.
Thus, even though \code{Proxy} uses \code{id} and \code{args} to generically forward all method calls (except \code{update}) to \code{X}, we know that there are only finitely many possible continuations after any method call to \code{Proxy}:
either \code{update}, or one of $f_1$, \ldots, $f_h$, or \code{skip} (when the fallback function of \code{X} is invoked).
Therefore, if each of the methods $f_i$ is type safe (for some notion of type safety), then the fallback function of \code{Proxy} \emph{ought to} be type safe as well.
However, \emph{deciding} this, by using a standard compositional, syntactic type system, is difficult, because the body of \code{Proxy}'s fallback function does not contain calls to these methods.
In more general terms, the safety of the fallback function is not a purely syntactic property \emph{of} its body; it also depends on the set-up of these two contracts, and on the current state of the program, in which the address of \code{X} is held in the field \code{Proxy.impl}.
A purely syntactic analysis of the body of the fallback function will therefore not suffice.

The solution we propose is to formalise the above informal argument, by providing a definition of what it means to \emph{behave} type-safely:
this is what the typed semantics for \TINYSOL{} does.
A fragment of this typed transition system, called a \emph{typing interpretation} and which contains the body of the fallback function, can then serve as a witness of its safety.
The key idea is to think of a typing interpretation as a set of safe states that is closed under transitions. So, safety is an invariant for all states in a typing interpretation. 

A typing interpretation can---at least, in principle---be attached to the contract as a form of proof-carrying code, when it is committed to the blockchain.
However, one important detail to note here is that the value of \code{Proxy.impl} can \emph{change} at runtime, to allow the implementation to be updated; this is precisely the point of the pointer-to-implementation pattern.
If the associated transition system fragment were to witness the entire transition sequence of every possible continuation from an invocation of the fallback function, it would either have to be very large, which would make our proposed solution infeasible, or it would have to be updatable as well, which would defeat the purpose of static verification.

We handle this problem by
making typing interpretations modular using up-to techniques.
Through them,
 the witnessing set need not contain \emph{all} the following states of all continuations from an invocation of the fallback function; it suffices to require that all states in the witnessing set only have typed transitions to states that are known to be in \emph{some} typing interpretation.
Suppose, for example, that \code{Proxy.impl} were updated with the address of some contract \code{Y}.
To be well-typed, \code{Y} must also implement the interface $I^X$, ensuring that it will also contain methods $f_1, \ldots, f_h$ with the same signatures.
If these are all well-typed, then we know they are also type-safe, so there exist typing interpretations containing them.
Thus, the set witnessing the safety of the body of the fallback function only needs to contain the states corresponding to an invocation of any of these methods, which will be the same for \code{X} and \code{Y}.


\section{The \TINYSOL{} language and its security types}\label{sec:tinysol_syntax_semantics}
 
\subsection{Typed Syntax}
\label{sec:typedSyntax}

We use an extended version of the \TINYSOL{} language, based on the presentation in \cite{AGL25}, which, in turn, is based on the original presentation in \cite{bartoletti2019tinysol}.
Its syntax is given in Figure~\ref{fig:syntax_tinysol}, where the notation $\mathop{\ \widetilde\cdot\ }$ denotes (possibly empty) sequences of elements.
The syntax consists of declarations for contracts $DC$, fields $DF$, and methods $DM$, followed by values $v$, expressions $e$, and statements $S$.
%
A contract contains declarations of fields and methods, and of the so-called \emph{fallback function} (discussed in Section~\ref{sec:motEx} and in what follows).
The set of \emph{values} $\VALUES$, ranged over by $v$, consists of integers $\INTEGERS$, ranged over by $z$, booleans $\BOOLEANS = \SET{\TRUE, \FALSE}$, ranged over by $b$, method names $\MNAMES$, ranged over by $f$,\footnote{
Having method names as values is related to the fallback function and the special variable \code{id}.
} 
and address names $\ANAMES$, ranged over by $X, Y$.
Note that we use a \emph{typed} syntax, with type annotations $I_s$ and $\TVAR{B_s}$ appearing in declarations of contracts and of local variables; these will be described in Section~\ref{sec:types}.

\begin{figure}[t]
\begin{syntax}[h]
  DC \in \DEC{C} \IS \epsilon 
                 \OR \code{contract $X$ : $I_s$ \{ }                 \tabularnewline
                  & & \qquad\ \ \ \code{field balance := $z$; $DF$}  \tabularnewline
                  & & \qquad\ \ \ \code{func send() \{ skip \} $DM$} \tabularnewline
                  & & \qquad\ \ \ \code{func fallback() \{ $S$ \} }  \tabularnewline
                  & & \quad\ \code{\} $DC$} 
\tabularnewline
  DF \in \DEC{F} \IS \epsilon 
                 \OR \code{field p := $v$;} DF 
\tabularnewline
  DM \in \DEC{M} \IS \epsilon 
                 \OR \code{func $f(\VEC{x})$ \{ $S$ \} } DM 
\tabularnewline
  v \in \VALUES \IS \INTEGERS 
                \UNION \BOOLEANS 
                \UNION \ANAMES 
                \UNION \MNAMES    
\tabularnewline
  e \in \EXPR \IS v 
              \OR x 
              \OR \code{$e$.balance} 
              \OR \code{$e$.$p$} 
              \OR \op(\VEC{e})         
            \ISOR \code{this} 
              \OR \code{sender} 
              \OR \code{value} 
              \OR \code{id} 
              \OR \code{args} 
\tabularnewline 
  S \in \STM  \IS \code{skip} 
              \OR \code{throw} 
              \OR \code{$\TVAR{B_s}$ $x$ := $e$ in $S$} 
              \OR \code{$x$ := $e$} 
              \OR \code{this.$p$ := $e$} 
            \ISOR \code{$S_1$;$S_2$} 
              \OR \code{if $e$ then $S_{\TRUE}$ else $S_{\FALSE}$ } 
              \OR \code{while $e$ do $S$} 
            \ISOR \CALL{e_1}{m}{\VEC{e}}{e_2} 
              \OR \DCALL{e}{m}{\VEC{e}} 
\end{syntax}
\flushleft
\noindent where $x, y \in \VNAMES$ (variable names), $p, q \in \FNAMES$ (field names), $X, Y \in \ANAMES$ (address names),\\ $f \in \MNAMES$ (method names), $m \in \{\code{id}\} \UNION \MNAMES$, $I \in \TNAMES$ (interface names),  $s\in\SECLEVELS$ (security levels).
\caption{The syntax of \TINYSOL.}
\label{fig:syntax_tinysol}
\end{figure}

There are six `magic' keywords appearing in the syntax:
\begin{itemize}
  \item \code{balance}, an integer field that records the current balance of the contract;
    it can be read from, but not directly assigned to, except through method calls.
  \item \code{value}, an integer variable that is bound to the currency amount transferred with a call.
  \item \code{sender}, an address variable that is bound to the address of the caller of a method. 
  \item \code{this}, an address variable that is bound to the address of the contract that contains the method that is currently executed.  
  \item \code{id} and \code{args}, two variables that, when the fallback function is invoked, contain the name of the (non-existing) called method and the list of actual arguments.  
\end{itemize}
All except \code{balance} are local variables, and we collectively refer to them as `magic variables'.

The core of the language is the declaration of the expressions and of the statements, which are similar to those for the classic imperative language \code{While} \cite{nielson_nielson2007semantics_with_applications}, except for the aforementioned magic keywords, and for the method calls, which have three flavours:
\begin{itemize}
  \item In an ordinary method call $\,\CALL{\!\!e_1}{f}{\VEC{e}}{e_2}$, expression $e_1$ must evaluate to a contract address $X$ and $e_2$ must evaluate to an integer $z$.
    The latter is the amount of currency transferred from the caller to the callee along with the call; this amount will be deducted from the caller's \code{balance} and added to the callee's  \code{balance}. 
    All currency transfers are done through method calls; so, all contracts contain a  \code{send()} method (for \code{send} $\in \MNAMES$), which does nothing, apart from ensuring that the contract can always receive payments.

  \item In a delegate call $\,\DCALL{\!\!e}{f}{\VEC{e}}$, the method body $S$ is executed in the context of the caller, rather than the callee.
    This means that
    \code{this}, \code{sender} and \code{value} are not rebound, so the method executes \emph{as if} it were declared in the caller contract.
    Hence, there is also no currency transfer, since the execution is local.
    The purpose of this feature is to facilitate code reuse and allow for the development of library contracts (see, e.g., Figure~\ref{fig:example_fallback_combined}).

  \item Differently from \cite{AGL25,bartoletti2019tinysol} and similarly to Solidity \cite{solidity2022}, if a call $\,\CALL{\!\!X}{f}{\VEC{v}}{z}$ is issued to a contract $X$ that does \emph{not} contain a method named $f$, then the fallback function of $X$ will be executed.
In this case, 
\code{id} and \code{args} contain, respectively, the method name and the actual arguments of the failed call, i.e.\@ $f$ and $\VEC{v}$, which allow the programmer to perform some error handling, for example by forwarding the call to another contract $Y$: 
  \begin{lstlisting}[style=tinysol, frame=none, numbers=none]
    func fallback() { call Y.id(args)$value }
  \end{lstlisting}

  \noindent Hence, we use $m$ instead of $f$ in the syntax for method calls and delegate calls in Figure~\ref{fig:syntax_tinysol} to allow both \code{id}  and a method name $f$ to be used for invocations.
\end{itemize}


To give an operational semantics for \TINYSOL, the first thing to do is turning all declarations into some environments, to record the bindings of variables, fields, methods, and contracts.
Such environments are modelled as the following sets of partial functions:
\vspace*{-.1cm}
\[
\begin{array}{ll}
  \multicolumn{2}{l}{\ENV{V} \in \SETENV{V} : (\VNAMES \UNION \SET{\code{value},\code{sender},\code{this},\code{id},\code{args}}) \PARTIAL \VALUES}
  \vspace*{.2cm}
    \\
  \ENV{S} \in \SETENV{S} : \ANAMES \PARTIAL \SETENV{F}
  &
  \ENV{F} \in \SETENV{F} : (\FNAMES \UNION \SET{\code{balance}}) \PARTIAL \VALUES 
  \vspace*{.2cm}
  \\
  \ENV{T} \in \SETENV{T} : \ANAMES \PARTIAL \SETENV{M}  \qquad\qquad
  &
  \ENV{M} \in \SETENV{M} : (\MNAMES \UNION \SET{\code{fallback}}) \PARTIAL (\VNAMES^* \times \STM)
\end{array}
\vspace*{-.1cm}
\]
\noindent We regard each environment $\ENV{J}$, for any $J \in \SET{V, F, M, S, T}$, as a (finite) set of pairs $(d, c)$ where $d \in \DOM{\ENV{J}}$ and $c \in \CODOM{\ENV{J}}$.
The notation $\ENV{J}, d:c$ denotes the extension of $\ENV{J}$ with the new pair $(d,c)$; the update of $\ENV{J}$ where $d$ is mapped to $c$ is denoted as $\ENV{J}\EXTEND{d}{c}$; updates are left-associative.
We write $\ENV{J}^{\EMPTYSET}$ for the empty $J$-environment.
To simplify the notation, when two or more environments appear together, we shall use the convention of writing the subscripts together (e.g.\@ we write $\ENV{MF}$ instead of $\ENV{M}, \ENV{F}$).

The \emph{method table} $\ENV{T}$ and the \emph{state} $\ENV{S}$ map addresses to method and field environments, respectively.
Thus, for each contract, we have the list of methods it declares (together with the lists of formal parameters and the statement that forms their bodies) and its current state (its fields and their values).
The method table is constant, once all declarations are performed;
in contrast, the state will change during the evaluation of a program.
Lastly, we also have a \emph{variable environment} $\ENV{V}$, which contains bindings for locally declared variables, including the magic variables and formal parameters of methods;
hence, this too will change during the execution of a program.

Differently from \cite{AGL25} and similarly to  \cite{AGLM/2024/reocas/outofgas}, we rely on a small-step operational semantics, and we therefore also introduce \emph{stacks} (that, again, use type information---see Section~\ref{sec:types}):
\begin{syntax}[h]
  Q \in \STACKS \IS \bot \OR S ; Q \OR \DEL{x}; Q \OR (\ENV{V}, \Delta); Q \OR s; Q
\end{syntax}
Briefly, $\bot$ is the empty stack, and besides statements $S$, a stack can also contain: the end-of-scope symbol $\DEL{x}$, denoting deallocation of a local variable $x$; the method-return symbol $(\ENV{V}, \Delta)$, where $\ENV{V}$ and $\Delta$ are the bindings for local variables and their type annotations which were in use before the method call, that have to be restored at its completion; lastly, a security level $s$, since the current security level may be changed and restored during a computation (see Section~\ref{sec:typed_semantics}).
Stacks also obviate the need for an explicit definition of \emph{transactions}; in \cite{AGL25} these are sequences of method invocations, which can be easily rendered in the current setting as stacks consisting of  the sequential composition of commands corresponding to such calls.

The operational semantics uses transitions 
of the form $\CONF{Q, \ENV{TSV}} \trans \CONF{Q', \ENV{T},\ENV{SV}'}$, which in turn are based on judgements 
of the form $\CONF{e, \ENV{SV}} \trans v$ for evaluating expressions.
The 
rules are in Appendix~\ref{app:untyped_operational_semantics} and 
yield the real operational semantics of the language.
However, to define the semantics of types, we shall need a \emph{typed operational semantics}, which performs some type-related checks.
For space reasons, we omit the untyped semantics (which is essentially obtained from the typed one by removing all type-related checks) and postpone the definition of the typed one to Section~\ref{sec:typed_semantics}, 
after we have described the language of types. 

\changed{To conclude, we remark that \TINYSOL\ is complex because Solidity is complex. 
The former is already a dramatic simplification of the latter, being essentially a classic while-language, together with the key features for contract declarations and the three forms of method invocations. 
We could omit local variables declaration/assignment, but this would not reduce the complexity of the presentation, since parameters in method invocations are needed (e.g.\@ the \code{id} and \code{args} variables, which are typical of the fallback function). 
We could avoid \code{while}-loops, because recursion is allowed, but we think that this simplification would not have any major impact. 
No other sensible simplification is possible without affecting the link with Solidity and the ability to code the key examples illustrating our approach (e.g.\@ the pointer-to-implementation pattern of Figure~\ref{fig:example_fallback_proxy}).
}

\subsection{Types and Type Environments for Information Flow Control and Non-Interference
}\label{sec:types}
We assume a finite lattice $(\SECLEVELS, \ordleq)$ of security levels $s$, with $\STOP$ as the highest level and $\SBOT$ as the lowest level.
For ease of notation, we shall use the following convention when comparing multiple levels against each other using the ordering relation $\ordleq$:
\vspace*{-.1cm}
\[
  s \ordleq s_1, \ldots, s_h \DEFSYM s \ordleq s_1 \land \ldots \land s \ordleq s_h
  \qquad\qquad
  s_1, \ldots, s_h \ordleq s_1', \ldots s_h' \DEFSYM s_1 \ordleq s_1' \land \ldots \land s_h \ordleq s_h'
\]
The crucial property related to this framework is {\em information flow control}, stipulating that information at level $s$ cannot affect information at levels \emph{strictly below}, or incomparable to, $s$  (w.r.t. $\ordleq$). This, in turn, can be used to obtain {\em non-interference}: if we run a piece of code in memories differing only for information at level $s$ or higher, the resulting memories still differ only for information of level $s$ or higher.
As usual, the security levels can be used to express both secrecy and integrity. In the first case, higher levels (w.r.t. $\ordleq$) denote secret information, that should not influence lower level (i.e., less secret) information. The second case is dual: higher levels denote untrusted information, that should not influence lower level (i.e., trusted) information.

Our language of types is similar to the one given in \cite{AGL25} for secure information flows, inspired by the pioneering work from~\cite{volpano2000secure_flow_typesystem}.

\begin{definition}[Types and environments]
\label{def:types}
We define types, base types and typing environments:
\begin{center}
\begin{minipage}{0.55\textwidth}
\begin{syntax}[h]
  T \in \TYPES      \IS B_s \OR \TVAR{B_s} \OR \TSPROC{\BSVEC}{s} \tabularnewline
  \Gamma, \Delta \in \TYPEENVS \IS \NAMES \PARTIAL \TYPES \UNION \TYPEENVS 
\end{syntax}
\end{minipage}
\begin{minipage}{0.43\textwidth}
\begin{syntax}[h]
  B \in \BASETYPES  \IS \TIDF \OR \TINT \OR \TBOOL \OR I          \tabularnewline
  \Sigma                       \IS \TNAMES \PARTIAL \TNAMES 
\end{syntax}
\end{minipage}
\end{center}
where 
$\NAMES$, ranged over by $n$, is defined as $\VNAMES \UNION \FNAMES \UNION \MNAMES \UNION \ANAMES \UNION \TNAMES$. 
\end{definition}

For the base types ($\BASETYPES$),  $\TIDF$ is the type assigned to method names contained in the magic variable \code{id} and interfaces are associated to contract addresses. The type of an address $X$ is therefore an interface $I$ with the signatures of the methods and fields of that contract;\footnote{
	The interface could be extracted from the contract definition itself (as a form of structural typing); however, we prefer to define interfaces separately (thus adopting a nominal typing), to let multiple contracts implement the same interface.
}
moreover, interfaces are hierarchically organized and we define them as follows:

\begin{definition}[Interface definition language]
\label{def:interfaces}
Interface declarations $ID$ are given by:
\begingroup\normalfont
\begin{syntax}[h]
  ID \IS \epsilon \OR \mbox{\code{interface}}\  I_1 : I_2\ \{ IF\ IM \}\ ID \\
  IF \IS \epsilon \OR p : \TVAR{B_s}, IF
  \qquad\qquad
  IM ::= \epsilon \OR f : \TSPROC{\BSVEC}{s}, IM
\end{syntax}
\endgroup

\noindent where $I_1$ is the name of the interface and $I_2$ is the name of its parent in the hierarchy.
\end{definition}

The intended meaning of the types $T$ is as follows.
The type $B_s$ (interchangeably written as $(B,s)$) is given to expressions $e$: it denotes that the value of $e$ is of type $B$ and all values read to evaluate $e$ are from containers, i.e.\@ variables and fields, of level $s$ \emph{or lower}.
The type $\TVAR{B_s}$ is given to containers: it denotes that a field $p$ or variable $x$ is capable of storing a value of type $B$ of level $s$ \emph{or lower}.
We do not include a type for commands in the language of types; 
however, we say that a command $S$ is safe to execute as $\TCMD{s}$ if (1) all field assignments made during the execution of $S$ are to containers of level $s$ \emph{or higher}, and (2) all guarded commands (i.e., \code{if} and \code{while}) within $S$ must be safe to execute at the same level as their guards (or higher), and this level must be equal to, or higher than, $s$.\footnote{
	As usual, this restriction is necessary to ward against indirect information flows, since a guarded statement could produce observable effects that are indirectly influenced by the expression guard $e$.
}
Finally, the type $\TSPROC{\BSVEC}{s}$ is given to methods $f$: it denotes that the body of $f$ is safe to execute as $\TCMD{s}$, given that the formal arguments have types $\TVAR{\BSVEC}$ (that can be alternatively written as $\TVAR{\VEC{(B,s)}}$), which denotes a sequence of types $\TVAR{B_1, s_1}, \ldots, \TVAR{B_h, s_h}$.

%

We have three kinds of type environments:
\begin{itemize}
  \item $\Sigma$ records the interface hierarchy:
    if we have the declaration \code{interface\,$I_1$\,:\,$I_2$\,\{$IF$\ $IM$\}}, then $\Sigma(I_1) = I_2$.
    For the sake of simplicity, we shall assume that members common to both interfaces also appear in both the parent and the child; i.e.\@ members are not automatically inherited from the parent.
    We also assume that all contracts implement only a single interface.

  \item $\Gamma$ records the statically declared interface definitions and statically assigned types of contracts.
    Thus, if a contract $X$ implements an interface $I$, we have that $\Gamma(X) = I_s$, for some $s$, and $\Gamma(I) = \Gamma_I$, where $\Gamma_I = IF, IM$ (and these correspond to the fields and methods declared in $X$).

  \item $\Delta$ records the types of local variables.
    We record these in a separate environment, since local variables are allocated/deallocated at runtime; consequently,
    this environment may change during the evolution of the program (whereas the other environments are static).
\end{itemize}

Given the intended usage of $\Gamma$, only addresses $X$ and interface names $I$  should appear in its domain.
We express this requirement with the following well-formedness criterion and only consider well-formed environments in what follows.

\begin{definition}[Well-formedness of $\Gamma$]\label{def:wellformedness_gamma}
A type environment $\Gamma$ is \emph{well-formed} if:
\begin{itemize}
  \item Each contract name $X$ in $\DOM{\Gamma}$ has an interface type $I$; i.e.\@ $\Gamma(X) = I_s$ and $I \in \DOM{\Gamma}$.
  \item Each interface name $I$ in $\DOM{\Gamma}$ has a type environment entry $\Gamma_I$; i.e.\@ $\Gamma(I) = \Gamma_I$.
  \item No name in $\VNAMES$, $\FNAMES$ or $\MNAMES$ has an entry in $\Gamma$. 
\end{itemize}
\end{definition}

We remark that we only allow single inheritance among interfaces, and we shall therefore assume the existence of an interface $\ITOP$, corresponding to the basic implementation of a contract, which acts as the single root of the inheritance tree.
It has the following definition:

\begin{lstlisting}[style = tinysol]
interface @$\ITOP$@ : @$\ITOP$@ {
  balance  : @$\TVAR{\TINT, \STOP}$@
  send     : @$\TSPROC{}{\SBOT}$@
  fallback : @$\TSPROC{}{\SBOT}$@
}
\end{lstlisting}

Note that this definition specifies $\ITOP$ as its own parent; since $\ITOP$ is the root of the inheritance tree, it alone can inherit from nothing besides itself.
Indeed, although self-inheritance is permitted by the syntax, a simple consistency check on $\Sigma$ and $\Gamma$ prevents self-inheritance in any cases other than $\ITOP$ (see Appendix~\ref{app:subtyping_consistency}).
It also ensures that all interface members in a parent interface indeed are supertypes of the corresponding members in their children, where
the key rules in the definition of the subtyping relation $\SUBS$ are the following:
\begin{center}
\begin{semantics}
  \RULE[sub-field][typerules_sub_field]
    { \Sigma \vdash B_1 \SUBS B_2 \qquad s_1 \ordleq s_2}
    { \Sigma \vdash \TVAR{B_1, s_1} \SUBS \TVAR{B_2, s_2} }
\end{semantics}
\begin{semantics}
  \RULE[sub-proc][typerules_sub_proc]
    { \Sigma \vdash \VEC{B_2} \SUBS \VEC{B_1} \qquad  s_2 \ordleq s_1  \qquad \VEC{s_2} \ordleq \VEC{s_1}}
    { \Sigma \vdash \TSPROC{\VEC{B_1, s_1}}{s_1} \SUBS \TSPROC{\VEC{B_2, s_2}}{s_2} }
\end{semantics}
\end{center}

\noindent These rules stipulate that reads are covariant (all fields are read-only outside of the contract in which they are declared), whereas writes are contravariant (indeed, methods are comparable to `write-only' fields).\footnote{
\changed{Since fields can be both read and written, one might have expected subtyping to be invariant, since reading is co-variant but writing is contra-variant. 
However, fields are actually only writable within the contract in which they are declared (i.e. by using \code{this.p := e}). 
When accessed from outside the contract (i.e. through the contract’s address), a field can only be read. 
Consequently, all such accesses where subtyping could be involved (e.g.~storing a contract address in a variable \code{x} and then accessing through \code{x.p}) would be reads. 
Therefore, we can allow subtyping of fields to be co-variant.}
}
Since we use static inheritance, and do not allow dynamic creation of new interfaces, these checks only need to be performed once, when $\Sigma$ and $\Gamma$ are built.
In the remainder of this paper, we shall therefore assume that any $\Sigma$ and $\Gamma$ we consider are consistent in this sense.

Finally, for integer and boolean values, the base type can be determined directly from observing the value itself.
However, as the type of a value has both a \emph{data} aspect and a \emph{security} aspect, we also need to assign a security level:
we choose the lowest possible level, $\SBOT$, for integers and booleans, since they have no inherent security level attached, and it is always safe to treat a piece of data as having a higher security level than it actually does.  
With addresses $X \in \ANAMES$, 
we have to retrieve the base type associated to $X$ from the given typing environment $\Gamma$, since addresses are \emph{declared} to have a type and a security level.

\begin{definition}[\textsc{TypeOf}]\label{def:typeof}
We define the partial function $\TYPEOF{\cdot}$ from values to types:
\begin{equation*}
  \TYPEOF{v} = 
  \begin{cases}
    (\TIDF, \SBOT)  & \text{if $v \in \MNAMES$}   \\
    (\TINT, \SBOT)  & \text{if $v \in \INTEGERS$} \\
    (\TBOOL, \SBOT) & \text{if $v \in \BOOLEANS$} \\
    \Gamma(v)       & \text{if $v \in \ANAMES$ and $v \in \DOM{\Gamma}$} \\
    \text{undefined} & \text{otherwise.}
  \end{cases}
\end{equation*}
\end{definition}

\section{Semantic Typing for \TINYSOL{}} 
\label{sec:semTypingTinySol}

The key steps in our approach are the following: 
(1) define runtime type safety in terms of a typed operational semantics; 
(2) define typing interpretations as sets of configurations that are runtime type safe (for all steps); 
(3) define typing as the union of all typing interpretations; 
(4) show that the syntactic type rules of \cite{AGL25} are compatible with the semantic interpretation.
These steps have to be performed both for expressions and for stacks.

\subsection{Typed operational semantics}\label{sec:semantic_types_stacks}\label{sec:typed_semantics}
We give the semantics of our types in terms of a \emph{typed operational semantics} of \TINYSOL{}, which we create by adding 
to the (untyped) operational rules for expressions, statements and stacks (Appendix~\ref{app:untyped_operational_semantics})
some type-related conditions from the static type rules of \cite{AGL25} (reported in Appendix~\ref{app:syntactic_type_system}).
For expressions, judgements take the form
$\Sigma; \Gamma; \Delta \esemDash \CONF{e, \ENV{SV}} \trans v$ and
 should be read as: $e$ evaluates to the value $v$ of type $B_s$, or a subtype thereof, given $\ENV{SV}$ and the type assumptions in $\Sigma; \Gamma; \Delta$.
The details are instructive but, for space reasons, we are forced to defer them to Appendix~\ref{app:semantic_types_expressions}.
%
For stack configurations, a transition is of the form
$\Sigma; \Gamma; \Delta \ssemDash \CONF{Q, \ENV{TSV}} \trans \Sigma; \Gamma; \Delta' \ssemDash[{s'}] \CONF{Q', \ENV{T};\ENV{SV}'}$
and it denotes that we execute the element at the top of the stack $Q$ at security level $s$, relative to the code, state and variable environments $\ENV{TSV}$ and type environments $\Sigma; \Gamma; \Delta$.
As a result of the execution, the state and variable environments may have changed to $\ENV{SV}'$, 
the type environment for local variables may turn into $\Delta'$, if a new variable was dynamically allocated or deallocated, and the security level may have been altered from $s$ to $s'$, if e.g.\@ the transition steps in a context of heightened security.
The rules are given in Figures~\ref{fig:typed_semantics_stacks1} and~\ref{fig:typed_semantics_stacks2}.
By fixing a security level $s$, the rules ensure that (1) all containers written to within a command are of level $s$ or \emph{higher}, and (2) branching, loops and method calls only depend on information that is of a level \emph{lower} than, or equal to, the level of the variables they modify.
There are a few key points worth noting.

\begin{figure}[t]\centering
\begin{semantics}
  \RULE[r-skip][stm_rt_skip]
    { }
    { 
             \Sigma; \Gamma; \Delta \ssemDash \CONF{\code{skip} ; Q, \ENV{TSV}} 
      \trans \Sigma; \Gamma; \Delta \ssemDash \CONF{Q, \ENV{TSV}} 
    }
  \RULE[r-if][stm_rt_if]( s \ordleq s' )
    { \Sigma; \Gamma; \Delta \esemDash[(\TBOOL, s')] \CONF{e, \ENV{SV}} \trans b \in \BOOLEANS }
    { 
              \Sigma; \Gamma; \Delta \ssemDash \CONF{\code{if $e$ then $S_{\TRUE}$ else $S_{\FALSE}$} ; Q, \ENV{TSV}} 
      \trans  \Sigma; \Gamma; \Delta \ssemDash[{s'}] \CONF{S_b ; s; Q, \ENV{TSV}} 
    }
  \RULE[r-while$_\TRUE$][stm_rt_whiletrue]( s \ordleq s' )
    { \Sigma; \Gamma; \Delta \esemDash[(\TBOOL, s')] \CONF{e, \ENV{SV}} \trans \TRUE }
    { 
             \Sigma; \Gamma; \Delta \ssemDash \CONF{\code{while $e$ do $S$} ; Q, \ENV{TSV}}
      \trans \Sigma; \Gamma; \Delta \ssemDash[{s'}] \CONF{S ; s ; \code{while $e$ do $S$} ; Q, \ENV{TSV}} 
    } 
  \RULE[r-while$_\FALSE$][stm_rt_whilefalse]( s \ordleq s' )
    { \Sigma; \Gamma; \Delta \esemDash[(\TBOOL, s')] \CONF{e, \ENV{SV}} \trans \FALSE }
    { 
             \Sigma; \Gamma; \Delta \ssemDash \CONF{\code{while $e$ do $S$} ; Q, \ENV{TSV}}
      \trans \Sigma; \Gamma; \Delta \ssemDash \CONF{Q, \ENV{TSV}} 
    } 
  \RULE[r-decv][stm_rt_decv](x \notin \DOM{\ENV{V}}, \DOM{\Delta})
    { \Sigma; \Gamma; \Delta \esemDash[(B, s')] \CONF{e, \ENV{SV}} \trans v }
    {    \begin{array}{r @{~} l}
             & \Sigma; \Gamma; \Delta \ssemDash \CONF{\code{$\TVAR{B, s'}$ $x$ := $e$ in $S$} ; Q, \ENV{TSV}} \\
      \trans & \Sigma; \Gamma; \Delta, x:\TVAR{B, s'} \ssemDash \CONF{S ; \DEL{x} ; Q, \ENV{TS}; \ENV{V}, x:v} 
    \end{array}
    }
  \RULE[r-assv][stm_rt_assv]({ 
    \begin{array}{r @{~} l}
      x         & \in \DOM{\ENV{V}} \\
      \Delta(x) & = \TVAR{B, s'}    \\
      s         & \ordleq s'
    \end{array}
  }) 
    { \Sigma; \Gamma; \Delta \esemDash[(B, s')] \CONF{e, \ENV{SV}} \trans v }
    { 
    \begin{array}{r @{~} l}
             & \Sigma; \Gamma; \Delta \ssemDash \CONF{\code{$x$ := $e$} ; Q, \ENV{TSV}} \\
      \trans & \Sigma; \Gamma; \Delta \ssemDash \CONF{Q, \ENV{TS}; \ENV{V}\REBIND{x}{v}} 
    \end{array}
    }
  \RULE[r-assf][stm_rt_assf]({ 
  \begin{array}{r @{~} l}
    s_1 & \ordleq s' \\
    s   & \ordleq s' 
  \end{array}
  }) 
    { \Sigma; \Gamma; \Delta \esemDash[(B, s')] \CONF{e, \ENV{SV}} \trans v }
    { 
    \begin{array}{r @{~} l}
             & \Sigma; \Gamma; \Delta \ssemDash \CONF{\code{this.$p$ := $e$} ; Q, \ENV{TSV}} \\
      \trans & \Sigma; \Gamma; \Delta \ssemDash \CONF{Q, \ENV{T}; \ENV{S}\REBIND{X}{\ENV{F}\REBIND{p}{v}}; \ENV{V}} 
    \end{array}
    }
    \WHERE{}{\hspace*{-1cm}
    \begin{array}{lll}
    \ENV{V}(\code{this}) = X
    &
    \ENV{S}(X) = \ENV{F}
    &
    p \in \DOM{\ENV{F}}
    \\
    \Delta(\code{this}) = \TVAR{I, s_1}
    &
    \Gamma(I)(p) = \TVAR{B, s'}
    \end{array}
    }
  \RULE[r-delv][stm_rt_delv]
    { }
    { 
             \Sigma; \Gamma; \Delta, x:\TVAR{B_{s'}} \ssemDash \CONF{\DEL{x} ; Q, \ENV{TS}; \ENV{V}, x:v } 
      \trans \Sigma; \Gamma; \Delta \ssemDash \CONF{Q, \ENV{TSV}} 
    }
  \RULE[r-return][stm_rt_return]
    { \Sigma; \Gamma; \Delta' \vdash \ENV{V}' }
    { 
             \Sigma; \Gamma; \Delta \ssemDash \CONF{(\ENV{V}', \Delta') ; Q, \ENV{TSV}} 
      \trans \Sigma; \Gamma; \Delta' \ssemDash \CONF{Q, \ENV{TS}; \ENV{V}'} 
    }
  \RULE[r-restore][stm_rt_restore](s' \ordgeq s)
    { }
    {  
             \Sigma; \Gamma; \Delta \ssemDash[{s'}] \CONF{s; Q, \ENV{TSV}} 
      \trans \Sigma; \Gamma; \Delta \ssemDash \CONF{Q, \ENV{TSV}}
    }
\end{semantics}
\caption{Typed operational semantics of stacks (1).}
\label{fig:typed_semantics_stacks1}
\end{figure}

\begin{figure}\centering
\begin{semantics}
  \RULE[r-call][stm_rt_call]({
  \begin{array}{r @{~} l}
    s_1                     & \ordleq s' \\
    s                       & \ordleq s' \\
    z \neq 0 \Rightarrow s' & \ordleq s_3, s_4 
  \end{array}
  })
    {
    \begin{array}{r @{~} l}
      \Sigma; \Gamma; \Delta & \esemDash[(I^Y, s')]           \CONF{e_1, \ENV{SV}}     \trans Y \\
      \Sigma; \Gamma; \Delta & \esemDash[(\TINT, s')]         \CONF{e_2, \ENV{SV}}     \trans z \\
      \Sigma; \Gamma; \Delta & \esemDash[(\VEC{B}, \VEC{s}')] \CONF{\VEC{e}, \ENV{SV}} \trans \VEC{v}
    \end{array}
    }
    {
    \begin{array}{r @{~} l}
             & \Sigma; \Gamma; \Delta \ssemDash \CONF{ \CALL{e_1}{m}{\VEC{e}}{e_2} ; Q, \ENV{TSV}}    \\
      \trans & \Sigma; \Gamma; \Delta' \ssemDash[{s'}] \CONF{S ; (\ENV{V}, \Delta) ; s ; Q, \ENV{T}; \ENV{SV}'} 
    \end{array}
    }
    \WHERE{}{\hspace*{-1cm}
      \begin{array}{lll}
        \ENV{V}(\code{this}) = X & \ENV{S}(X) = \ENV{F}^X & \Delta(\code{this}) = \TVAR{I^X, s_1}  \\
        \Gamma(Y) = (I^Y, s_2)   & \ENV{S}(Y) = \ENV{F}^Y & \\ 
        f = 
        \begin{cases} 
          \ENV{V}(m) & \text{if $m = \code{id}$} \\ 
                   m & \text{otherwise} 
        \end{cases} 
	                         & \ENV{T}(Y)(f)  = (\VEC{x}, S)      & \Gamma(I^Y)(f) = \TSPROC{\BSVEC}{s'} \\
       |\VEC{x}| = |\VEC{v}| = |\VEC{B}| = |\VEC{s}| = |\VEC{s'}| = h & \Gamma(I^X)(\code{balance}) = \TVAR{\TINT, s_3}  
       & \Gamma(I^Y)(\code{balance}) =  \TVAR{\TINT, s_4} 
    \end{array}      
    }
    \WHERE{ \ENV{V}' }           { = \code{this}:Y, \code{sender}:X, \code{value}:z, x_1:v_1, \ldots ,x_h:v_h }
    \WHERE{ \ENV{S}' }           { = \ENV{S}\REBIND{X}{\ENV{F}^X [\code{balance -= } z]}\REBIND{Y}{\ENV{F}^Y [\code{balance += } z]} }
    \WHERE{ \Delta' }            { = \code{this}:\TVAR{I^Y, s_2}, \code{sender}:\TVAR{I^X, s_1}, \code{value}:\TVAR{\TINT, s'}, x_1:\TVAR{B_1, s_1'}, \ldots, x_h:\TVAR{B_h, s_h'} }
  \RULE[r-dcall][stm_rt_dcall]({ 
  \begin{array}{r @{~} l}
    s_1 & \ordleq s' \\
    s   & \ordleq s' 
  \end{array}
  })
    { 
    \begin{array}{r @{~} l @{\AND} l }
      \Sigma; \Gamma; \Delta & \esemDash[(I^Y, s')]           \CONF{e, \ENV{SV}}       \trans Y       & \\
      \Sigma; \Gamma; \Delta & \esemDash[(\VEC{B}, \VEC{s}')] \CONF{\VEC{e}, \ENV{SV}} \trans \VEC{v} & \Sigma \vdash I^X \SUBS I^Y 
    \end{array}
    }
    { 
    \begin{array}{r @{~} l}
             & \Sigma; \Gamma; \Delta \ssemDash \CONF{\DCALL{e}{m}{\VEC{e}} ; Q, \ENV{TSV}}       \\
      \trans & \Sigma; \Gamma; \Delta' \ssemDash[{s'}] \CONF{S ; (\ENV{V}, \Delta); s ; Q, \ENV{TS}; \ENV{V}'}  
    \end{array}
    }
    \WHERE{ f }                  { = 
                                 \begin{cases} 
                                   \ENV{V}(m) & \text{if $m = \code{id}$} \\ 
                                   m & \text{otherwise} 
                                 \end{cases} 
    \hspace*{1cm} 
    \ENV{T}(Y)(f)  = (\VEC{x}, S)
    \hspace*{1cm} 
    \Gamma(I^Y)(f) = \TSPROC{\BSVEC}{s'}
     }
    \WHERE{ \Delta(\code{this}) }{ = \TVAR{I^X, s_1} 
    \hspace*{2.9cm}     
     |\VEC{x}| = |\VEC{v}| = |\VEC{B}| = |\VEC{s}| = |\VEC{s'}| = h } 
    \WHERE{\ENV{V}'}             { = \code{this}:\ENV{V}(\code{this}), \code{sender}:\ENV{V}(\code{sender}), \code{value}:\ENV{V}(\code{value}), x_1:v_1, \ldots, x_h:v_h }
    \WHERE{ \Delta' }            { = \code{this}:\Delta(\code{this}), \code{sender}:\Delta(\code{sender}), \code{value}:\Delta(\code{value}), x_1:\TVAR{B_1, s_1'}, \ldots, x_h:\TVAR{B_h, s_h'} }
    \WHERE{ }{ \forall q \in (\DOM{\Gamma(I^Y)} \INTERSECT (\FNAMES \UNION \{\code{balance}\})).\ \Gamma(I^Y)(q) = \Gamma(I^X)(q) }
  \RULE[r-fcall][stm_rt_fcall]({
  \begin{array}{r @{~} l}
    s_1                     & \ordleq s' \\
    s                       & \ordleq s' \\
    z \neq 0 \Rightarrow s' & \ordleq s_3, s_4 
  \end{array}
  })
    { 
    \begin{array}{r @{~} l}
      \Sigma; \Gamma; \Delta & \esemDash[(I^Y, s')]           \CONF{e_1, \ENV{SV}}     \trans Y \\
      \Sigma; \Gamma; \Delta & \esemDash[(\TINT, s')]         \CONF{e_2, \ENV{SV}}     \trans z \\
      \Sigma; \Gamma; \Delta & \esemDash[(\VEC{B}, \VEC{s}')] \CONF{\VEC{e}, \ENV{SV}} \trans \VEC{v}
    \end{array}
    }
    {
    \begin{array}{r @{~} l}
             & \Sigma; \Gamma; \Delta \ssemDash \CONF{\CALL{e_1}{m}{\VEC{e}}{e_2} ; Q, \ENV{TSV}}             \\
      \trans & \Sigma; \Gamma; \Delta' \ssemDash[{s'}] \CONF{S ; (\ENV{V}, \Delta) ; s ; Q, \ENV{T}; \ENV{SV}'} 
    \end{array}
    }
        \WHERE{ } { \hspace*{-1cm}
        \begin{array}{lll}
        \ENV{V}(\code{this}) = X
    &
	\ENV{S}(X) = \ENV{F}^X
    &
        \Delta(\code{this}) = \TVAR{I^X, s_1}
    \\
	\Gamma(Y) = (I^Y, s_2)  
    &
    \ENV{S}(Y) = \ENV{F}^Y    
     &
    \\
    f =  \begin{cases} 
                                   \ENV{V}(m) & \text{if $m = \code{id}$} \\ 
                                    m & \text{otherwise} 
                                 \end{cases} 
	&
	 f \notin \DOM{\ENV{T}(Y)}
	&
	 \ENV{T}(Y)(\code{fallback}) = (\epsilon, S) 
     \\
      \Gamma(I^Y)(\code{fallback}) = \TSPROC{}{s'}
    &
	 \Gamma(I^X)(\code{balance}) = \TVAR{\TINT, s_3}
        &
	\Gamma(I^Y)(\code{balance}) = \TVAR{\TINT, s_4} 
    \end{array}
    }              
    \WHERE{ \ENV{V}' }         { = \code{this}:Y, \code{sender}:X, \code{value}:z, \code{id}:f, \code{args}:\VEC{v} }   
    \WHERE{ \ENV{S}' }         { = \ENV{S}\REBIND{X}{\ENV{F}^X [\code{balance -= z} ]}\REBIND{Y}{\ENV{F}^Y [\code{balance += } z]} }
    \WHERE{ \Delta' }          { = \code{this}:\TVAR{I^Y, s_2}, \code{sender}:\TVAR{I^X, s_1}, \code{value}:\TVAR{\TINT, s'}, \code{id}:\TVAR{\TIDF, s'}, \code{args}:\TVAR{\VEC{B}, \VEC{s}'} }
\end{semantics}
\caption{Typed operational semantics of stacks (2): method calls}
\label{fig:typed_semantics_stacks2}
\end{figure}

 In rules \nameref{stm_rt_if}, \nameref{stm_rt_whiletrue}, \nameref{stm_rt_call}, \nameref{stm_rt_dcall}, and \nameref{stm_rt_fcall}, we allow the security level to be raised from $s$ to $s'$, for some level $s' \ordgeq s$.
    The original level $s$ is pushed onto the stack \emph{before} the command inside the guarded statement or the method body, to be restored after the completion of such a command by \nameref{stm_rt_restore}.
    This rule has again the side condition $s' \ordgeq s$, which is 
    necessary, since otherwise it might be possible to start with a stack of the form $S; s; Q$ and execute it at some level $s'$ that is stricly lower than (or incomparable to) $s$: 
    this would violate the property that all containers written to within a command are of level $s$ or higher.
   To avoid this, the typed operational rules ensure that
    the security levels in a \emph{well-formed} stack must occur in \emph{descending} order (by reading the stack top-down, i.e., from left to right). 
    Consequently, 
    we can choose the execution level to be greater than or equal to the \emph{first} (highest) security level occurring in $Q$.
    If $Q$ does not contain any security level, we can choose the execution level arbitrarily.

  In \nameref{stm_rt_if}, we only run the selected branch, whilst discarding the other.
    In terms of type safety, this means that only the selected branch needs to be type safe, since the other branch is not executed.
    In contrast, the corresponding syntactic type rule must type-check both branches;  therefore, it can potentially reject a program if one of the branches were ill-typed, even if that branch will never be executed.
    The rule \nameref{stm_rt_whilefalse} presents a similar case, since the body of the loop is not checked, because it is not run, whereas in the corresponding syntactic type rule it is.
    These are examples of the \emph{slack} of the syntactic type system.

In rules \nameref{stm_rt_call} and \nameref{stm_rt_fcall}, we require that $s' \ordleq s_3, s_4$ must hold only if the value to be transferred is not 0;
 indeed, if $z=0$, then no value is transferred and the balance fields of the caller and callee are not modified.\footnote{\label{fn:no_exceptions}%
Notice that we should also check that the amount of currency transferred in the call ($z$) is at most the value currently owned by the caller in its \code{balance} field; if this is not the case, the method invocation should fail and an exception should be raised. 
To keep the presentation simple (in particular, to avoid exception handling), in this paper we let method invocations always succeed, irrespectively of the amount of currency associated with them.
}
Moreover, we also perform a lookup in the environment (namely $\Gamma(Y) = (I^Y, s_2)$) to obtain the actual security level $s_2$ of $Y$, rather than just the level $s'$, which is required in the evaluation of $e_1$ and which therefore might be higher.
The actual level is needed when we create the type binding for \code{sender} in the new type environment $\Delta'$, to prevent the level from increasing beyond necessity.

The rule \nameref{stm_rt_fcall} is similar to \nameref{stm_rt_call}, except for some minor differences in the construction of $\Delta'$.
    Firstly, we use the `dummy type' $\TIDF$ for the variable \code{id}. 
    We also assign the security level $s'$ to this variable, which is the same as the security level obtained from the type of the fallback function.
    For the sake of simplicity, we assume that this identifier is not passed as an argument to a method call; i.e.\@ we assume \code{id} is only ever read from within the body of the fallback function, and that its contents are not passed out of this scope.
    Secondly, the \code{args} magic variable is really an \emph{array} of values, which we assume is expanded at runtime.
    Therefore, we also assume that its contents are never assigned to another variable, or otherwise modified or read; i.e.\@ \code{args} is not allowed to appear either as the left-hand side of an assignment, or in an ordinary expression.
    It can only be used as an argument to another method call.
    Therefore, we also do not give it a single security level or a base type, but just retain the individual types of its elements, namely $\TVAR{\VEC{B}, \VEC{s}'}$.

Finally, rule \nameref{stm_rt_dcall} has three main differences w.r.t. \nameref{stm_rt_call}, all related to the fact that the body is executed as if it were present in the caller contract. 
First, there is no currency transfer.
Second, the magic variables \code{this}, \code{sender} and \code{value} are not changed to handle the call. 
Third, we require that the interface of the caller $I_X$ must be a subtype of the interface of the callee $I_Y$, and that all fields common to both must have exactly the same type. This requirement effectively makes subtyping invariant for fields and is needed to prevent an illegal information flow, and also to ensure that the execution cannot be stuck due to a missing field or method in the caller, which might be referenced in the method body.

\subsection{Properties of typed transitions} 
We start by proving a collection of properties of typed transitions, which ensure that the type-related checks indeed guarantee safety in accordance with the intended meaning of the types.
One property
is \emph{$s$-equivalence} between two memory states, written $\Gamma; \Delta \vdash \ENV{SV}^1 =_s \ENV{SV}^2$, as in the classic work on non-interference by Volpano, Irvine and Smith \cite{volpano2000secure_flow_typesystem}. 
Informally, it denotes that $\ENV{SV}^1$ and $\ENV{SV}^2$ contain exactly the same values in all memory cells (fields 
and variables) \emph{of level $s$ or lower}, according to their types in $\Gamma$ and $\Delta$.
The formal definition is straightforward (see Appendix~\ref{app:equivalence_states}).
Another property
is \emph{well-typedness} of memory states, written $\Sigma; \Gamma; \Delta \vdash \ENV{SV}$, which denotes that all values stored in $\ENV{SV}$ indeed are of the respective types of the fields/variables according to the type assignments in $\Sigma; \Gamma; \Delta$.
This definition is also straightforward (see Appendix~\ref{app:syntactic_type_system}) and it is similar to the one used in 
\cite{AGL25}.

Well-typedness alone ensures that all contract addresses occurring in $\ENV{SV}$, either as identifiers or values, have a corresponding interface type $I$ in $\Gamma$.
However, it does \emph{not} ensure that the address also contains a contract with a corresponding implementation matching $I$.
This property is obviously also desirable to avoid stuck configurations, so we introduce the notion of a \emph{consistency} check, written $\Gamma \vdash \ENV{TSV}$, 
to ensure it 
(see Appendix~\ref{app:environment_consistency}).

Given a typed configuration $\Sigma; \Gamma; \Delta \ssemDash \CONF{Q, \ENV{TSV}}$, well-typedness and consistency together ensure that $\Sigma; \Gamma; \Delta$ and $\ENV{SV}$ are compatible.
We must also impose a similar restriction on stacks $Q$, which ensures that the free variable names, field names and addresses occurring in $Q$ are in the domains of $\ENV{SV}$.
Furthermore, since a transition can restore a variable environment from the stack $Q$, we must also naturally require that all return symbols $(\ENV{V}, \Delta)$ occurring in $Q$ are also well-typed and consistent.
Lastly, if a stack $Q$ represents a (partially evaluated) actual program, then all security levels $s_1, \ldots, s_h$ in $Q$ must occur in \emph{descending} order, i.e. $s_1 \ordgeq \ldots \ordgeq s_h$, and these must be lower than the current execution level $s$, i.e. $s \ordgeq s_1$.
We ensure these by a \emph{well-formedness} check on stacks, written $\Sigma; \Gamma; \ENV{SV}, s \vdash Q$ (see Appendix~\ref{app:wellformedness_stack} for a formal definition).
We write $\FIRSTS{Q}$ for the first security level occurring in $Q$, which will also be the highest if $Q$ is well-formed.
If no security level occurs in $Q$, we let $\FIRSTS{Q} = \SBOT$.
Together, well-typedness, consistency and well-formedness ensure that the various components of the configuration are not simply chosen randomly, but in fact constitute an executable program.
We can now formulate a first property of the typed operational semantics, i.e. that the coercion property holds for stacks (the proof is in Appendix~\ref{app:typed_semantics_stacks}).

\begin{proposition}[Coercion property for stacks]\label{theorem:stacks_coercion_property}
Assume $Q$ satisfies that all security levels in $Q$ occur in descending order.
For all type environments $\Sigma$, $\Gamma$ and $\Delta$, security levels $s_1 \ordgeq s_2 \ordgeq \FIRSTS{Q}$, and environments $\ENV{TSV}$, it holds that, for every $s_1'$ such that $\Sigma; \Gamma; \Delta \ssemDash[{s_1}] \CONF{Q, \ENV{TSV}} \trans \Sigma; \Gamma; \Delta' \ssemDash[{s_1'}] \CONF{Q', \ENV{T}; \ENV{SV}'}$, there exists an $s_2'$ such that $\Sigma; \Gamma; \Delta \ssemDash[{s_2}] \CONF{Q, \ENV{TSV}} \trans \Sigma; \Gamma; \Delta' \ssemDash[{s_2'}] \CONF{Q', \ENV{T}; \ENV{SV}'}$.
\end{proposition}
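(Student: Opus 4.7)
The plan is to proceed by case analysis on the rule applied in the derivation of the transition $\Sigma; \Gamma; \Delta \ssemDash[{s_1}] \CONF{Q, \ENV{TSV}} \trans \Sigma; \Gamma; \Delta' \ssemDash[{s_1'}] \CONF{Q', \ENV{T}; \ENV{SV}'}$. In each case, the goal is to exhibit an application of the \emph{same} rule starting from execution level $s_2$ instead of $s_1$, yielding a valid transition of the same shape.

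For the rules with no side condition on the current execution level (r-skip, the false branch of r-while, r-decv, r-delv, and r-return), the argument is immediate: the rule fires at any starting level, and the resulting level $s_2'$ is determined entirely by other parameters of the rule. For the rules whose only dependence on $s$ is a side condition of the shape $s \ordleq s'$ --- namely r-assv, r-assf, r-if, the true branch of r-while, r-call, r-dcall, and r-fcall --- the premise of the original derivation furnishes $s_1 \ordleq s'$, so transitivity with $s_2 \ordleq s_1$ yields $s_2 \ordleq s'$, and the rule fires at $s_2$ with the new execution level $s_2' = s'$ unchanged. All remaining premises (expression evaluation, method and field lookups, and the various structural side conditions) are independent of the starting execution level, so $\Delta'$, $\ENV{T}$ and $\ENV{SV}'$ agree in both derivations. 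The one case that genuinely uses the hypothesis $s_2 \ordgeq \FIRSTS{Q}$ is r-restore, where $Q$ has the form $s; Q''$ and the side condition is $s_1 \ordgeq s$: here $\FIRSTS{Q} = s$, so the assumption $s_2 \ordgeq \FIRSTS{Q}$ supplies exactly the inequality needed for the rule to fire at $s_2$, producing $s_2' = s = s_1'$.

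I expect the main subtlety to concern the rules r-if, the true branch of r-while, r-call, r-dcall, and r-fcall, which push the \emph{current} execution level onto the stack; consequently the level appearing inside $Q'$ is $s_1$ in the first derivation and would be $s_2$ in the second. Showing that this discrepancy is harmless uses the assumed descending order of security levels in $Q$ together with the chain $s' \ordgeq s_1 \ordgeq s_2 \ordgeq \FIRSTS{Q}$, which guarantees that either pushed level sits correctly between the new execution level $s'$ and the first pre-existing level of $Q$, preserving well-formedness of the resulting stack. Once this observation is made, the case analysis goes through uniformly.
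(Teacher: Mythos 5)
Your proof follows the same route as the paper's: a case analysis over the typed transition rules, split into those that place no constraint on the execution level, those whose only constraint is a side condition $s \ordleq s'$ (discharged by transitivity from $s_2 \ordleq s_1$), and \textsc{r-restore}, where the hypothesis $s_2 \ordgeq \FIRSTS{Q}$ is exactly what is needed to satisfy the side condition. The core argument is correct and essentially identical to the paper's proof.

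One remark on your final paragraph: the subtlety you flag is real, and the paper's own proof is silent about it. Rules \textsc{r-if}, \textsc{r-while}$_\TRUE$, \textsc{r-call}, \textsc{r-dcall} and \textsc{r-fcall} push the \emph{current} execution level onto the stack, so the reduct is of the form $\ldots; s_1; Q''$ in the first derivation but $\ldots; s_2; Q''$ in the second; taken literally, the claim that the \emph{same} $Q'$ is reached therefore fails for these rules unless $s_1 = s_2$. Your proposed resolution does not actually close this gap: showing that the pushed level $s_2$ keeps the resulting stack well-formed (using $s' \ordgeq s_1 \ordgeq s_2 \ordgeq \FIRSTS{Q}$) establishes that the second transition is legitimate, but it does not make the two reducts syntactically equal, which is what the statement asserts. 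The honest fix is either to weaken the statement so that the reducts agree up to the recorded level (lowered from $s_1$ to $s_2$), or to check that the downstream uses of the proposition only require this weaker form. Since the paper's proof exhibits exactly the same blind spot, this is a defect of the statement rather than of your argument, and you deserve credit for noticing it.
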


Note that Proposition~\ref{theorem:stacks_coercion_property} does not directly relate the two resulting security levels $s_1'$ and $s_2'$.
Indeed, several of the rules allow a higher security level to be chosen; so, depending on the structure of the security lattice, different choices could be made in the two transitions. 
However, if we assume that the same choice is made in both executions (e.g.\@ the highest, or lowest, possible security level for which the transition can be concluded), then 
$s_1' \ordgeq s_2'$.

We can also show a result that allows us to abstract away from the specific values stored in the environments $\ENV{SV}$.
Indeed, given some $\Sigma$, $\Gamma$ and $\Delta$, we can construct environments $\ENV{SV}$ such that well-typedness and consistency are ensured to hold:

\begin{definition}[Construction of $\ENV{SV}$]\label{def:construction_envsv}
Given $\Sigma$, $\Gamma$ and $\Delta$, let $\DNAMES \DEFSYM \DOM{\Gamma} \INTERSECT \ANAMES$ denote the set of \emph{declared} address names in $\Gamma$.
\begin{itemize}
  \item We construct $\ENV{V}$ as a list of pairs $x_1:v_1, \ldots, x_h:v_h$ such that $\SET{x_1, \ldots, x_h} = \DOM{\Delta}$ and each $v_i$ is such that, if $\Delta(x_i) = \TVAR{B_2, s_2}$, then $\TYPEOF{v_i} = (B_1, s_1)$ for $\Sigma \vdash B_1 \SUBS B_2$ and $s_1 \ordleq s_2$, and $v_i \in \DNAMES$, whenever $v_i \in \ANAMES$.

  \item We construct $\ENV{S}$ as a list of pairs $X_1:\ENV{F}^1, \ldots, X_h:\ENV{F}^h$ such that $\SET{X_1, \ldots, X_h} = \DNAMES$ and each $\ENV{F}^i$ is constructed as follows:
Assume that $\Gamma(X_i) = I_s$ and $\Gamma(I) = \Gamma_F^I, \Gamma_M^I$, then, $\ENV{F}^i$ is a list of pairs $p_1:v_1, \ldots, p_h:v_h$ such that $\SET{p_1, \ldots, p_h} = \DOM{\Gamma_F^I}$ and each $v_i$ is such that, if $\Gamma_F^I(x_i) = \TVAR{B_2, s_2}$, then $\TYPEOF{v_i} = (B_1, s_1)$ for $\Sigma \vdash B_1 \SUBS B_2$ and $s_1 \ordleq s_2$, and $v_i \in \DNAMES$, whenever $v_i \in \ANAMES$.
\end{itemize}
\end{definition}

\changed{
Notice that the introduction of the above definition is a consequence of our starting point being the untyped operational semantics, on top of which we create the  typed one (which therefore is on configurations containing $\ENV{SV}$). Definition~\ref{def:construction_envsv} then lets us abstract away from the concrete values stored in $\ENV{SV}$ and allows us to focus only on the types of those values. In this way, we are incrementally working our way upwards, to a higher abstraction level, whilst retaining the correspondence with the real, untyped operational semantics.
}

\begin{lemma}[Correctness of the construction]
\label{lemma:construction_envsv}
Given $\Sigma$, $\Gamma$ and $\Delta$, if $\ENV{SV}$ is built according to Definition~\ref{def:construction_envsv}, then 
$\Sigma; \Gamma; \Delta \vdash \ENV{SV}$ (well-typedness), 
and $\Gamma \vdash \ENV{SV}$ (consistency). 
\end{lemma}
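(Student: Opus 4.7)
The plan is a direct verification argument that unfolds the definitions of well-typedness and consistency and checks that each clause is satisfied by construction. Both predicates decompose pointwise over the bindings of $\ENV{V}$ and of each field environment $\ENV{F}^i$ in $\ENV{S}$, so no induction on the shape of a derivation is required; the entire argument is structural, matching the structure of Definition~\ref{def:construction_envsv}.

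For the well-typedness claim $\Sigma; \Gamma; \Delta \vdash \ENV{SV}$, I would first observe that, by construction, $\DOM{\ENV{V}} = \DOM{\Delta}$ and, for every $X_i \in \DNAMES$ with $\Gamma(X_i) = I_s$ and $\Gamma(I) = \Gamma_F^I,\Gamma_M^I$, $\DOM{\ENV{F}^i} = \DOM{\Gamma_F^I}$, so the domains line up as required. Then, for each variable binding $x_i{:}v_i \in \ENV{V}$ with $\Delta(x_i) = \TVAR{B_2,s_2}$, the construction supplies $v_i$ with $\TYPEOF{v_i} = (B_1,s_1)$ satisfying $\Sigma \vdash B_1 \SUBS B_2$ and $s_1 \ordleq s_2$; this is exactly the subtype/sub-level condition required by the well-typedness predicate of Appendix~\ref{app:syntactic_type_system} (and it mirrors rule \nameref{typerules_sub_field}). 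The same argument applies verbatim to each field binding $p_i{:}v_i \in \ENV{F}^i$, using $\Gamma_F^I(p_i)$ in the role of $\Delta(x_i)$. Well-formedness of $\Gamma$ (Definition~\ref{def:wellformedness_gamma}) guarantees that the interface lookup $\Gamma(I)$ succeeds and produces the field environment against which the local check is performed.

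For the consistency claim $\Gamma \vdash \ENV{SV}$, I would unfold the definition in Appendix~\ref{app:environment_consistency} and check two things: (i) every address serving as an identifier in $\DOM{\ENV{S}}$ is declared in $\Gamma$, which is immediate since $\DOM{\ENV{S}} = \DNAMES = \DOM{\Gamma} \cap \ANAMES$; and (ii) every address appearing as a value in $\ENV{V}$ or in some $\ENV{F}^i$ is also in $\DNAMES$, which is exactly the side condition ``$v_i \in \DNAMES$ whenever $v_i \in \ANAMES$'' imposed by the construction in both bullets of Definition~\ref{def:construction_envsv}. Hence every occurrence of an address in $\ENV{SV}$, whether as a domain element of $\ENV{S}$ or as a value in $\ENV{V}$ or a field environment, has a matching interface entry in $\Gamma$.

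The main (minor) obstacle is purely bookkeeping: ensuring that the construction in Definition~\ref{def:construction_envsv} can actually be carried out when some $\Delta(x_i)$ or $\Gamma_F^I(p_i)$ is an interface type $I$, because this requires $\DNAMES$ to contain some address whose interface is a subtype of $I$. For the primitive base types $\TINT$, $\TBOOL$ and $\TIDF$, suitable values always exist since $\INTEGERS$, $\BOOLEANS$ and $\MNAMES$ are inhabited and the construction picks security level $\SBOT$, which is always below any $s_2$. For address types, the proof proceeds under the implicit assumption that $\Gamma$ is rich enough to provide at least one such address (for example, because all interfaces ultimately inherit from $\ITOP$); under this assumption, the pointwise checks above go through and both properties follow immediately.
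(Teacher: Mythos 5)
Your proposal is correct and matches the paper's own argument: the paper simply observes that Definition~\ref{def:construction_envsv} is a ``reverse reading'' of the well-typedness and consistency rules and concludes by an easy induction on the respective inferences, which is exactly the pointwise structural verification you spell out. Your closing remark about needing $\DNAMES$ to be inhabited for interface-typed containers is a legitimate side condition that the paper's one-line proof also leaves implicit, so flagging it is a small improvement rather than a divergence.
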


\begin{corollary}[Safe state abstraction]\label{theorem:stacks_runtime_states_abstraction}
For all environments $\Sigma$, $\Gamma$, $\Delta$ and $\ENV{T}$, security level $s$, and $\ENV{SV}^1, \ENV{SV}^2$ built according to Definition~\ref{def:construction_envsv}, it holds that, for every $s_1'$ such that $\Sigma; \Gamma; \Delta \ssemDash \CONF{Q, \ENV{T}; \ENV{SV}^1} \trans \Sigma; \Gamma; \Delta_1' \ssemDash[{s_1'}] \CONF{Q', \ENV{T}; \ENV{SV}^{1'}}$, there exists an $s_2'$ such that $\Sigma; \Gamma; \Delta \ssemDash \CONF{Q, \ENV{T}; \ENV{SV}^2} \trans \Sigma; \Gamma; \Delta_2' \ssemDash[{s_2'}] \CONF{Q', \ENV{T}; \ENV{SV}^{2'}}$. 
\end{corollary}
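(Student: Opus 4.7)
The plan is to proceed by case analysis on the typed transition rule (Figures~\ref{fig:typed_semantics_stacks1} and~\ref{fig:typed_semantics_stacks2}) applied to $\CONF{Q, \ENV{T}; \ENV{SV}^1}$, leveraging two ingredients. First, Lemma~\ref{lemma:construction_envsv} ensures that $\ENV{SV}^2$ is well-typed and consistent with respect to $\Sigma; \Gamma; \Delta$, exactly as $\ENV{SV}^1$ is. Second, Proposition~\ref{theorem:stacks_coercion_property} will absorb any residual flexibility in the resulting security level. The crux of the argument is the observation that every side condition in Figures~\ref{fig:typed_semantics_stacks1} and~\ref{fig:typed_semantics_stacks2} is of one of two forms: (a) a purely type-theoretic constraint on $\Sigma, \Gamma, \Delta$ (such as $s \ordleq s'$, $\Delta(x) = \TVAR{B, s'}$, $\Gamma(Y) = (I^Y, s_2)$, or $\Sigma \vdash I^X \SUBS I^Y$), which is unchanged when swapping $\ENV{SV}^1$ for $\ENV{SV}^2$; or (b) an expression evaluation $\Sigma; \Gamma; \Delta \esemDash[(B, s')] \CONF{e, \ENV{SV}} \trans v$, which, thanks to the well-typedness and consistency supplied by Lemma~\ref{lemma:construction_envsv}, will again produce some value of the declared type in $\ENV{SV}^2$.

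For the bulk of the rules --- \nameref{stm_rt_skip}, \nameref{stm_rt_delv}, \nameref{stm_rt_restore}, \nameref{stm_rt_return}, \nameref{stm_rt_decv}, \nameref{stm_rt_assv}, \nameref{stm_rt_assf}, and the three call rules \nameref{stm_rt_call}, \nameref{stm_rt_dcall}, \nameref{stm_rt_fcall} --- both the successor stack $Q'$ and the updated type environment $\Delta'$ are determined by $Q$, $\Sigma$, $\Gamma$ and $\Delta$ alone, so the same $Q'$ and $\Delta'$ appear in both transitions. The new state $\ENV{SV}^{2'}$ may differ from $\ENV{SV}^{1'}$ only in the specific values written into the affected containers, but those values respect the corresponding type annotations by premise (b), so well-typedness and consistency carry over. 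The residual freedom in the resulting security level $s_2'$ is then exactly what Proposition~\ref{theorem:stacks_coercion_property} permits us to adjust, so that the two transitions end at compatible levels.

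The main obstacle is the data-dependent rules \nameref{stm_rt_if}, \nameref{stm_rt_whiletrue}, and \nameref{stm_rt_whilefalse}, where the successor stack genuinely depends on the boolean returned by evaluating the guard, and hence on the specific values stored in $\ENV{SV}^2$. Reading the existential quantification in the Corollary as permitting the selection of the branch actually taken from $\ENV{SV}^2$, the transition always goes through: since both branches of a well-typed \code{if} (resp.\ the body of a \code{while}) are type-safe --- otherwise the original transition from $\ENV{SV}^1$ could not have succeeded either --- the stack obtained from $\ENV{SV}^2$ has the expected syntactic shape ($S_{b'} ; s ; Q$ for the \code{if} and \code{while}-true rules, or $Q$ itself for \code{while}-false), with $\Delta_2' = \Delta$ unchanged, completing the argument.
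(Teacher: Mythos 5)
Your proof is correct and follows essentially the same route as the paper's: a case analysis on the typed rules of Figures~\ref{fig:typed_semantics_stacks1} and~\ref{fig:typed_semantics_stacks2}, observing that $\ENV{SV}$ only enters through expression evaluations, which succeed in $\ENV{SV}^2$ because it is built by the same construction (the paper discharges exactly this point by citing the expression-level result, Theorem~\ref{theorem:expressions_runtime_states_abstraction}, which you may as well invoke directly instead of re-arguing it from Lemma~\ref{lemma:construction_envsv}). Your reading of the data-dependent rules --- that the conclusion asserts the existence of \emph{some} transition, with the reduct possibly differing when guards evaluate differently --- matches how the paper itself uses the corollary in the discussion following Definition~\ref{def:type_lifted_transitions}.
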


We can now ascertain that the typed semantics does ensure the intended notion of type safety; the following two theorems are indeed the first cornerstone results of our approach.

\begin{theorem}[Runtime preservation]\label{theorem:stacks_runtime_preservation}
Assume that 
$\Sigma; \Gamma; \Delta \vdash \ENV{SV}$ (Well-typedness),
$\Gamma \vdash \ENV{TSV}$ (Consistency), and 
$\Sigma; \Gamma; \ENV{SV}; s \vdash Q$ (Well-formedness).
If $\Sigma; \Gamma; \Delta \ssemDash \CONF{Q, \ENV{TSV}} \trans \Sigma; \Gamma; \Delta' \ssemDash[{s'}] \CONF{Q', \ENV{T}; \ENV{SV}'}$,
then 
$\Sigma; \Gamma; \Delta' \vdash \ENV{SV}'$ (Well-typedness),
$\Gamma \vdash \ENV{T};\ENV{SV}'$ (Consistency),
$\Sigma; \Gamma; \ENV{SV}'; s' \vdash Q'$ (Well-formedness), and 
 $\Gamma \vdash \ENV{S} =_{s''} \ENV{S}'$ ($s''$-equivalence), for all $s''$ such that $s \not\ordleq s''$.
\end{theorem}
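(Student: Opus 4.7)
The plan is to proceed by case analysis on the transition rule used to derive $\Sigma; \Gamma; \Delta \ssemDash \CONF{Q, \ENV{TSV}} \trans \Sigma; \Gamma; \Delta' \ssemDash[{s'}] \CONF{Q', \ENV{T}; \ENV{SV}'}$, checking each of the four conclusions separately for every rule of Figures~\ref{fig:typed_semantics_stacks1} and~\ref{fig:typed_semantics_stacks2}. For rules that do not modify $\ENV{SV}$ (namely \nameref{stm_rt_skip}, \nameref{stm_rt_if}, \nameref{stm_rt_whiletrue}, \nameref{stm_rt_whilefalse}, \nameref{stm_rt_delv}, \nameref{stm_rt_return}, \nameref{stm_rt_restore}), well-typedness and consistency of the state are immediate, $s''$-equivalence is vacuous, and only well-formedness of the new stack $Q'$ requires work. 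For the rules that push an element of the form $s; Q$ or the body of a new scope onto the stack, I would use the side conditions $s \ordleq s'$ and the hypothesis $\Sigma; \Gamma; \ENV{SV}; s \vdash Q$ to ensure the pushed security levels still appear in descending order, and that the free names occurring in the pushed statement lie in the (possibly extended) domains of $\ENV{SV}$ and $\Delta$.

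For the rules that do modify $\ENV{SV}$, the key observations are localised to the cell being written: in \nameref{stm_rt_assv} and \nameref{stm_rt_assf}, the typed evaluation premise $\Sigma; \Gamma; \Delta \esemDash[(B,s')] \CONF{e,\ENV{SV}} \trans v$ gives a value whose type is a subtype of the container's declared type, which immediately yields well-typedness of $\ENV{SV}'$; consistency is preserved because $\ENV{T}$ is unchanged and the only modified cells hold non-address values or addresses already consistent with $\Gamma$ (a simple inversion on the typed evaluation of $e$). In \nameref{stm_rt_decv} and \nameref{stm_rt_delv} the extension/restriction of $\Delta$ matches the extension/restriction of $\ENV{V}$, and \nameref{stm_rt_return} restores a $(\ENV{V}',\Delta')$ pair that is well-typed by the premise $\Sigma;\Gamma;\Delta' \vdash \ENV{V}'$ embedded in the method-call frame (which, by induction on the length of the stack, was itself ensured at the moment the frame was pushed).

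The $s''$-equivalence clause follows uniformly from the side conditions: whenever a cell is written, the rule requires $s \ordleq s'$ where $s'$ is the security level of the target container. For any $s''$ with $s \not\ordleq s''$, we therefore get $s' \not\ordleq s''$ (otherwise $s \ordleq s' \ordleq s''$, contradicting the hypothesis), so the modified cell is not of level $s''$ or lower and $\Gamma \vdash \ENV{S} =_{s''} \ENV{S}'$. The same argument, applied to the balance fields, handles the balance update in \nameref{stm_rt_call} and \nameref{stm_rt_fcall}: the premise $z \neq 0 \Rightarrow s' \ordleq s_3, s_4$ combined with $s \ordleq s'$ yields $s \ordleq s_3, s_4$, hence the \code{balance} cells of caller and callee are of level incomparable with or above $s$ and do not affect $s''$-equivalence. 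Delegate calls modify no field at all.

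The main obstacle, and the most tedious case, will be the three call rules \nameref{stm_rt_call}, \nameref{stm_rt_dcall} and \nameref{stm_rt_fcall}, since they simultaneously build a new variable environment $\ENV{V}'$, a new type environment $\Delta'$, update $\ENV{S}$, and push three stack elements $S ; (\ENV{V}, \Delta); s ; Q$. For well-typedness of the new $\ENV{V}'$ I would verify each binding (\code{this}, \code{sender}, \code{value}, \code{id}, \code{args}, and the formal parameters) against the corresponding entry of $\Delta'$, using the typed evaluation premises on $e_1, e_2, \VEC e$ and consistency of $\Gamma$ (which gives $\Gamma(Y)=(I^Y,s_2)$ and $\Gamma(X)=(I^X,s_1)$). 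For well-formedness of the new stack, I would use $s \ordleq s'$ together with the existing well-formedness of $Q$ to show that $s';\ldots;s;Q$ still has descending security levels; the existence of all free names in $S$ follows from consistency of $\ENV{T}$ at $Y$ combined with $\Delta'$ containing the formal parameters. In \nameref{stm_rt_dcall} the extra invariant on interfaces ($\Sigma \vdash I^X \SUBS I^Y$, with fields typed invariantly) is precisely what is needed to ensure that every field or method referenced in $S$ is actually present in the caller's state environment $\ENV{F}^X$, so that the new configuration is genuinely executable. Once these calls are treated in detail, the remaining rules collapse to easy applications of Lemma~\ref{lemma:construction_envsv} and inversion on the typed evaluation judgements.
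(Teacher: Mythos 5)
Your proposal is correct and follows essentially the same route as the paper's proof: a case analysis on the transition rule, with well-typedness/consistency handled via inversion on the typed expression evaluation (the paper packages this as substitution lemmas for $\ENV{V}$ and $\ENV{S}$), well-formedness via the descending-level side conditions, and $s''$-equivalence via the observation that every written container has level $\ordgeq s$. The only cosmetic difference is that you invoke an induction on the stack to justify well-typedness of the restored $(\ENV{V}',\Delta')$ frame, whereas the paper gets this directly from the premise of the well-formedness rule for return symbols (and from the premise of the \nameref{stm_rt_return} rule itself), so no induction is needed there.
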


Theorem~\ref{theorem:stacks_runtime_preservation} assures us that the properties of well-typedness, consistency, and well-formedness are preserved by the typed semantics; additionally, any change to $\ENV{S}$ must be to a field that is of level $s$ or higher.
The latter is in accordance with the meaning of the static type judgment $\TCMD{s}$, corresponding to the parameterisation of the turnstile $\ssemDash$ in the typed semantics.
All variables of a level \emph{strictly below}, or incomparable to, $s$ will be unaffected by the transition steps; so, the environments $\ENV{S}$ and $\ENV{S}'$ agree on all such entries.
This is the safety property induced by the types, since executing the stack at level $s$ means that no variables of a level \emph{lower} than, or incomparable to, $s$ may be modified (cf.\@ rules \nameref{stm_rt_assf}, \nameref{stm_rt_call} and \nameref{stm_rt_fcall}), 
and information is not permitted to flow from a higher level to a lower one.
Indeed, we can also use Theorem~\ref{theorem:stacks_runtime_preservation} to show that the typed semantics ensures non-interference for stack executions.

\begin{theorem}[Runtime non-interference for stacks]\label{theorem:stacks_runtime_noninterference}
Assume that
$\Sigma; \Gamma; \Delta \vdash \ENV{SV}^1$ and $\Sigma; \Gamma; \Delta \vdash \ENV{SV}^2$ (Well-typedness), 
$\Gamma \vdash \ENV{T}; \ENV{SV}^1$ and $\Gamma \vdash \ENV{T}; \ENV{SV}^2$ (Consistency), 
$\Sigma; \Gamma; \ENV{SV}^1, s \vdash Q$ and $\Sigma; \Gamma; \ENV{SV}^2, s \vdash Q$ (Well-formedness), and 
$\Gamma; \Delta \vdash \ENV{SV}^1 =_s \ENV{SV}^2$ ($s$-equivalence). 
If $\Sigma; \Gamma; \Delta \ssemDash \CONF{Q, \ENV{T}; \ENV{SV}^i} \trans \Sigma; \Gamma; \Delta' \ssemDash[{s'}] \CONF{Q_i, \ENV{T}; \ENV{SV}^{i'}}$, for $i \in \{1,2\}$, 
then $\Gamma; \Delta' \vdash \ENV{SV}^{1'} =_s \ENV{SV}^{2'}$. 
\end{theorem}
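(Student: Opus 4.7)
The plan is to proceed by case analysis on the typed operational rule that fires in the step, relying on a key auxiliary lemma for expressions that I would establish first: if $\Gamma; \Delta \vdash \ENV{SV}^1 =_s \ENV{SV}^2$ and $\Sigma; \Gamma; \Delta \esemDash[(B, s')] \CONF{e, \ENV{SV}^i} \trans v_i$ for $i \in \{1,2\}$ with $s' \ordleq s$, then $v_1 = v_2$. This is proved by structural induction on $e$: typing at level $s'$ guarantees that $e$ only reads containers of level $\ordleq s'$, and those agree across the two states by $s$-equivalence whenever $s' \ordleq s$. Since $\ENV{T}$, $Q$ and all the type environments coincide in both runs, and the hypothesis of the theorem forces the two runs to share the same resulting $\Delta'$ and $s'$, the same rule must fire on both sides, and the core task reduces to showing that the ensuing updates preserve $=_s$.

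In every case the analysis follows the same schema: if the container being written (if any) has security level $s' \ordleq s$, then the rule's side conditions force the source expression to be typed at a level $\ordleq s$, so the auxiliary lemma forces identical values in the two runs and the updates coincide; if instead $s' \not\ordleq s$, the affected container lies outside the scope of $=_s$ and any change is invisible. The structural rules (\nameref{stm_rt_skip}, \nameref{stm_rt_if}, \nameref{stm_rt_whiletrue}, \nameref{stm_rt_whilefalse}, \nameref{stm_rt_restore}, \nameref{stm_rt_delv}, \nameref{stm_rt_return}) do not touch $\ENV{SV}$ directly, or else restore part of it from a stack frame that is common to both runs, so the conclusion follows immediately, even when the resulting stacks $Q_1, Q_2$ differ because of a high-level guard. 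The local updates (\nameref{stm_rt_decv}, \nameref{stm_rt_assv}, \nameref{stm_rt_assf}) fit the schema directly, with Theorem~\ref{theorem:stacks_runtime_preservation} supplying the well-typedness of $\Delta'$ needed to interpret $=_s$ in the conclusion.

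The delicate cases are \nameref{stm_rt_call}, \nameref{stm_rt_dcall} and \nameref{stm_rt_fcall}, which simultaneously modify the balances of caller and callee (when $z \neq 0$) and install a fresh variable environment. The side conditions $s \ordleq s'$ and, when $z \neq 0$, $s' \ordleq s_3, s_4$ are crucial: if either balance is observable at $s$ (i.e.\@ $s_3 \ordleq s$ or $s_4 \ordleq s$), then $s' \ordleq s$, and by the auxiliary lemma $e_1$, $e_2$ and $\VEC{e}$ evaluate to identical values in both runs, so the balance updates coincide; otherwise the updates fall outside $=_s$. The bindings installed in the new variable environment for \code{this}, \code{sender}, \code{value}, and, for \nameref{stm_rt_fcall}, \code{id} and \code{args}, inherit the security levels dictated by the rule, so $=_s$ carries over among them as well. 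I expect the main obstacle to be the combinatorial bookkeeping in these call rules, in particular checking that the hypothesis that both runs share the same $\Delta'$ correctly rules out the pathological cases in which the two runs would diverge between \nameref{stm_rt_call} and \nameref{stm_rt_fcall} (e.g.\@ when the computed callee $Y$ differs between the runs and the method $f$ is defined in $\ENV{T}(Y)$ on only one side); the difficulty is essentially bureaucratic rather than conceptual.
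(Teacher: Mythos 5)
Your proposal is correct and follows essentially the same route as the paper's proof: the auxiliary lemma you describe is exactly Corollary~\ref{corollary:expressions_runtime_noninterference}, and the per-rule dichotomy (written container at level $\ordleq s$ forces identical values via expression non-interference; otherwise the update is invisible to $=_s$) is precisely the argument used there, with the call rules' side conditions $s \ordleq s' \ordleq s_3, s_4$ handling the balance updates. The only cosmetic difference is that the paper first disposes of all fields at levels not above $s$ wholesale via Theorem~\ref{theorem:stacks_runtime_preservation} before treating the level-$s$ fields and the variable environment, whereas you fold everything into a single case analysis.
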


It is worth noting that the executions of the configurations $\CONF{Q, \ENV{T}; \ENV{SV}^1}$ and $\CONF{Q, \ENV{T}; \ENV{SV}^2}$ actually allow for two different methods to be called.
Suppose for example that $Q$ is of the form $\CALL{x}{f}{\VEC{e}}{e_2}$, and $x$ is a variable of type $\TVAR{I, s'}$ for some level $s'$ strictly higher than $s$.
This would allow it to contain two different addresses, e.g.\@ $X_1$ and $X_2$ in the two executions, which would lead to two different methods being called, albeit with the same signature.
Well-typedness and consistency of the two environments ensure that either both methods exist or neither does, because both $X_1$ and $X_2$ must be addresses of contracts that implement all the methods of the interface $I$.
Thus, the two resulting variable environments, $\ENV{V}^{1'}$ and $\ENV{V}^{2'}$, will still contain exactly the same variable entries, and with values agreeing at all levels not higher than $s$.

On the other hand, calling two different methods would also mean that the two different \code{balance} fields of the two callees would be different in $\ENV{S}^{1'}$ and $\ENV{S}^{2'}$ (e.g., in $\ENV{S}^{1'}$ it might be $X_1.\code{balance}$ that was modified, but $X_2.\code{balance}$ in $\ENV{S}^{2'}$).
However, this does not contradict  
Theorem~\ref{theorem:stacks_runtime_noninterference} because of the side condition of the rules \nameref{stm_rt_call} and \nameref{stm_rt_fcall}, which are the only two rules that could have been used to conclude such a transition (rule \nameref{stm_rt_dcall} cannot have been used, since it does not modify $\ENV{S}$):
this side condition states that $s' \ordleq s_3, s_4$, where $s_3$ and $s_4$ are the levels of the \code{balance} fields of the caller and callee, respectively.
As we assumed that $s'$ is strictly greater than $s$, then so are both $s_3$ and $s_4$, which means that this difference will not affect the $s$-equality of the two environments.

\subsection{Typing interpretations and semantic typing rules}\label{sec:typing_interpretation_stm_stack}
Theorem~\ref{theorem:stacks_runtime_preservation} ensures that a configuration $\CONF{Q, \ENV{TSV}}$ does not get stuck due to a malformed stack or an attempt to access a contract member on an address that does not contain an implementation.
Once we have ruled out such `trivial' type errors, what remains is that a stuck configuration can only derive from three situations:
(1) $Q$ is of the form $\bot$, meaning we have reached the bottom of the stack, so the execution terminates normally;
(2) $Q$ is of the form $\code{throw}; Q'$, meaning an exception was thrown, so the execution terminates abnormally; and
(3) the corresponding untyped transition would cause a violation of \emph{type safety} by allowing an insecure information flow.
Of these, the first two forms represent terminal stacks, defined as 
$\TSTACKS \triangleq \{\bot\} \UNION \{\code{throw};Q\ |\ Q \in \STACKS\}$;
they are obviously safe to execute, since they have no transitions, and can be detected simply by examining the structure of $Q$.
Thus, the key point is that \emph{a non-terminal stuck configuration indicates a runtime type error}.
We shall use this idea to formulate a notion of \emph{typing interpretation} for stacks $Q$, in such a way that $Q$  must either be one of the first two forms or have a typed transition (with some appropriately shaped $\ENV{SV}$) whose reduct is again contained in the typing interpretation.

However, the type constraints represented by $\Delta$ and the current security level $s$ can \emph{change} after a transition step.
Hence, also $\Delta$ and $s$ must be included in the typing interpretation, together with $Q$.
Our notion of typing interpretations will therefore not be for stacks alone, but for \emph{stack type triplets} $(Q, \Delta, s)$, where $\Delta$ and $s$ represent the currently active variable type environment and security level under which $Q$ is executing.
To make this clearer, we first define a new transition system directly on the triplets themselves.

\begin{definition}[Type lifted transition]\label{def:type_lifted_transitions}
Assume $\Sigma$, $\Gamma$ and $\ENV{T}$ are fixed.
A \emph{type lifted transition system} is a tuple $(\SETTRIPLETS, \ttrans)$, where $\SETTRIPLETS = 
\STACKS \times \TYPEENVS \times \SECLEVELS$ is the set of stack type triplets $(Q, \Delta, s)$ and the transition relation  $\Sigma; \Gamma; \ENV{T} \vDash (Q, \Delta, s) \ttrans (Q', \Delta', s')$ holds whenever there exist an $\ENV{SV}$ built from $\Sigma; \Gamma; \Delta$ according to Definition~\ref{def:construction_envsv} and an $\ENV{SV}'$ such that $\Sigma; \Gamma; \Delta; \ENV{T} \ssemDash \CONF{Q, \ENV{TSV}} \trans \Sigma; \Gamma; \Delta'; \ENV{T} \ssemDash[{s'}] \CONF{Q', \ENV{T};\ENV{SV}'}$. 
\end{definition}

By Corollary~\ref{theorem:stacks_runtime_states_abstraction}, we know that, if there exists an $\ENV{SV}^1$ such that the configuration $\Sigma; \Gamma; \Delta \ssemDash \CONF{Q, \ENV{T}; \ENV{SV}^1}$ has a transition, then a transition will also exist for all similarly shaped (according to Definition~\ref{def:construction_envsv}) $\ENV{SV}^2$.
All such $\ENV{SV}$ are ensured to be well-typed, consistent and compatible with $(Q, \Delta, s)$ by construction, and one way to view this quantification over all such environments is to think of it as a quantification over all possible, appropriately shaped and typed \emph{inputs} to the program on the stack.
This is quite natural, since we obviously want a notion of safety that does not depend on the \emph{actual} values being passed to the program, but only on the \emph{types} of those values.
It is in this sense that the transition system is \emph{type lifted:} the shape of the transition system is not determined by actual values, but by the \emph{types} of those values.

Different inputs may, of course, yield different reducts, e.g.\@ in case the stack $Q$ had an \code{if-else} statement at the top, and the environments yield different values for the guard expression.
Hence, there may also be more than one possible reduct for a given triplet $P$ in the type lifted transition system, and each of them must be contained in the typing interpretation.
We define the notion of typing interpretation as follows.

\begin{definition}[Typing interpretations]\label{def:typing_interpretation_stm}
A \emph{typing interpretation} for stack type triplets w.r.t.\@ environments $\Sigma$, $\Gamma$ and $\ENV{T}$, written $\STYPI$, is a set of stack type triplets $P = (Q, \Delta, s)$ such that, for every $(Q, \Delta, s) \in \STYPI$, it holds that 
\begin{enumerate}
  \item\label{case:typi1} $Q \in \TSTACKS$, or

  \item\label{case:typi2} $\exists P_1'$ such that $\Sigma; \Gamma; \ENV{T} \vDash (Q, \Delta, s) \ttrans P_1'$ and
                          $\forall P_2'$ such that $\Sigma; \Gamma; \ENV{T} \vDash (Q, \Delta, s) \ttrans P_2'$ it holds that $P_2' \in \STYPI$.
\end{enumerate}
The \emph{typing} of stack type triplets w.r.t.\@ $\Sigma$, $\Gamma$ and $\ENV{T}$, written $\STYPING$, is the union of all typing interpretations $\STYPI$.
We write $\Sigma; \Gamma; \Delta; \ENV{T} \vDash Q : \TCMD{s}$ if 
$(Q, \Delta, s)\in\STYPING$, and 
$\Sigma; \Gamma; \Delta; \ENV{T} \vDash S : \TCMD{s}$ if 
$(S; \bot, \Delta, s)\in\STYPING$.
\end{definition}

Notice that the coinductive nature of typing interpretations allows for the possibility of divergent executions.
Moreover, the existence of a typing interpretation $\STYPI$ containing $(Q, \Delta, s)$ expresses that $Q$ is safe to execute at level $s$ or lower (down to $\FIRSTS{Q}$) for all steps, according to the type constraints in $\Sigma$, $\Gamma$ and $\Delta$. 
Note also that typing interpretations for stack type triplets are formulated relative to a \emph{particular} choice of code environment $\ENV{T}$.
This is necessary because of the presence of fallback functions, which can place (statically) untypable code on the stack.
All we therefore can require is that such code must be runtime safe, if it is executed, which is what the existence of a typing interpretation expresses.

By using the type-lifted transition system in Definition~\ref{def:typing_interpretation_stm}, we abstract away from any particular choice of $\ENV{SV}$:
we only require that one can be built from $\Sigma$ and $\Gamma$, and from the $\Delta$ in the stack type triplet.
In the definition, we discard the environments $\ENV{SV}'$ from the configuration in the reduct of the underlying typed transition, since we use the type-lifted transition system, but we require that the triplet $P_2' = (Q_2', \Delta_2', s_2')$  must again be in the typing interpretation.
This works because, by Theorem~\ref{theorem:stacks_runtime_preservation}, we know that the properties of well-typedness, consistency and well-formedness are preserved by the typed semantics, so this $\ENV{SV}'$ will again be amongst those that can be built from $\Sigma$, $\Gamma$ and $\Delta_2'$ in the next step.

Using Definition~\ref{def:typing_interpretation_stm}, we can now show some properties of typing interpretations for stacks; all the proofs are in Appendix~\ref{app:semantic_typing_stacks}. 
The first one can be used as a proof technique for showing that a stack cannot be typed under the prescriptions of the given  environments.

\begin{lemma}\label{lemma:untypable}
If  $\Sigma; \Gamma; \ENV{T} \vDash (Q, \Delta, s) \ttrans^* (Q', \Delta', s') \not\ttrans$, for some $\Delta'$, $s'$ and $Q' \not\in \TSTACKS$, then $(Q, \Delta, s) \not\in \STYPING$.
\end{lemma}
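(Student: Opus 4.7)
The plan is to proceed by contradiction. Assume that $(Q, \Delta, s) \in \STYPING$. Since $\STYPING$ is defined as the union of all typing interpretations, there exists some $\STYPI$ with $(Q, \Delta, s) \in \STYPI$. The goal is to show that $(Q', \Delta', s')$ must also lie in this $\STYPI$, which will directly contradict Definition~\ref{def:typing_interpretation_stm} once we combine it with the two hypotheses $Q' \notin \TSTACKS$ and $(Q', \Delta', s') \not\ttrans$.

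The central step is an induction on the length $n$ of the reduction $(Q, \Delta, s) \ttrans^n (Q', \Delta', s')$, proving that the endpoint lies in $\STYPI$. The base case $n = 0$ is immediate. For the inductive step, decompose the path as $(Q, \Delta, s) \ttrans (Q_1, \Delta_1, s_1) \ttrans^{n-1} (Q', \Delta', s')$. The transition $(Q, \Delta, s) \ttrans (Q_1, \Delta_1, s_1)$ witnesses that $Q \notin \TSTACKS$, because a quick inspection of the rules in Figures~\ref{fig:typed_semantics_stacks1} and~\ref{fig:typed_semantics_stacks2} shows that no rule fires when the top of the stack is $\bot$ or of the form $\code{throw}; Q''$. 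Hence clause~(\ref{case:typi1}) of Definition~\ref{def:typing_interpretation_stm} is ruled out for $(Q, \Delta, s)$, and clause~(\ref{case:typi2}) must apply, forcing \emph{every} reduct of $(Q, \Delta, s)$ to lie in $\STYPI$. In particular $(Q_1, \Delta_1, s_1) \in \STYPI$, and the induction hypothesis then yields $(Q', \Delta', s') \in \STYPI$.

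It remains to derive the contradiction. Since $(Q', \Delta', s') \in \STYPI$, one of the two clauses of Definition~\ref{def:typing_interpretation_stm} must hold at $(Q', \Delta', s')$. Clause~(\ref{case:typi1}) is excluded by the hypothesis $Q' \notin \TSTACKS$, and clause~(\ref{case:typi2}) is excluded by the hypothesis $(Q', \Delta', s') \not\ttrans$, which in particular asserts the non-existence of any $P_1'$ with $(Q', \Delta', s') \ttrans P_1'$. This is the required contradiction.

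Conceptually, this is just a forward-closure property of typing interpretations along the type-lifted transition relation. I do not foresee any real obstacle: the argument is essentially clerical, and the only point that deserves a brief justification is the observation that terminal stacks are precisely the ones with no outgoing typed transition, which is what lets us upgrade the premise ``a transition exists'' to ``clause~(\ref{case:typi2}) applies, so all reducts stay inside $\STYPI$''.
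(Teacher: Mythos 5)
Your proposal is correct and follows essentially the same route as the paper's proof: an induction on the length of the reduction sequence, combined with the forward-closure of typing interpretations under $\ttrans$ (clause~(\ref{case:typi2}) of Definition~\ref{def:typing_interpretation_stm}) and the observation that the stuck, non-terminal endpoint violates both clauses. The only cosmetic difference is that you propagate membership forward within a fixed $\STYPI$ and conclude by contradiction, whereas the paper applies the induction hypothesis to the suffix of the path and concludes non-membership directly; the underlying argument is identical.
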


We can now also prove that typing is closed under various operations, including, importantly, the syntactic constructors of the language, given certain assumptions. 
As an example of what is to come, we can now show that typing is closed under coercions.

\begin{lemma}[Typing coercion for stacks and statements]\label{lemma:compatibility_stacks_coercion}\label{lemma:compatibility_stm_coercion}
\ 
\begin{enumerate}
\item If  $\Sigma; \Gamma; \Delta; \ENV{T} \vDash Q : \TCMD{s_1}$ and 
$s_1 \ordgeq s_2 \ordgeq \FIRSTS{Q}$, then $\Sigma; \Gamma; \Delta; \ENV{T} \vDash Q : \TCMD{s_2}$.
\item If $\Sigma; \Gamma; \Delta; \ENV{T} \vDash S : \TCMD{s_1}$ and $s_1 \ordgeq s_2$, then $\Sigma; \Gamma; \Delta; \ENV{T} \vDash S : \TCMD{s_2}$.
\end{enumerate}
\end{lemma}


\begin{figure}\centering
\begin{semantics}
  \RULE[st-stack-sub]({ s_1 \ordgeq s_2 \ordgeq \FIRSTS{Q} })
    { \Sigma; \Gamma; \Delta; \ENV{T} \vDash Q : \TCMD{s_1} }
    { \Sigma; \Gamma; \Delta; \ENV{T} \vDash Q : \TCMD{s_2} }
  \RULE[st-stm-sub][stm_st_sub]( s_1 \ordgeq s_2 )
    { \Sigma; \Gamma; \Delta; \ENV{T} \vDash S : \TCMD{s_1} }
    { \Sigma; \Gamma; \Delta; \ENV{T} \vDash S : \TCMD{s_2} }
\end{semantics}
\caption{Semantic subtyping rules for stacks and statements.}
\label{fig:semantic_subtype_rules_stacks_stm}
\end{figure}

The previous results are instructive for two reasons.
Firstly, they are proved by coinduction: we construct a new candidate typing interpretation $\CANDSTYPI$ from an existing one, and show that $\CANDSTYPI$ is still a typing interpretation (that is contained in the typing of $\Sigma; \Gamma; \ENV{T}$, since $\STYPING$ is the union of all typing interpretations).
Secondly, both results have a particular form, with a number of premises involving typing that imply a single conclusion, still involving typing; hence, they can alternatively be written in the form of \emph{inference rules}, as shown in Figure~\ref{fig:semantic_subtype_rules_stacks_stm} (in particular, 
\nameref{stm_st_sub} resembles the contravariant coercion rule for typing statements in \cite{AGL25}). 
However, unlike the typing rules from \cite{AGL25}, these rules have been \emph{proved admissible} based on the definition of typing, rather than being defined a priori; this is one of the key differences between the syntactic and the semantic approaches.

Similarly, we have a collection of inference rules for each of the stack operations; 
as with the coercion rules, these rules closely resemble the static type rules for stacks (see Figure~\ref{fig:type_rules_stacks} in Appendix~\ref{app:syntactic_type_system}).
%
We now focus on the main rules, given in Figure~\ref{fig:semantic_type_rules_stm}, which handle the syntactic category of statements and allow us to infer
type safety assertions of the form $\Sigma; \Gamma; \Delta; \ENV{T} \vDash S : \TCMD{s}$
(proofs are again in Appendix~\ref{app:semantic_typing_stacks}).
Most rules are self-explanatory: compare the type checks with the corresponding ones in the typed operational semantics rules (or, again, to the static typing rules).
We discuss typing the fallback function in Section~\ref{sec:typing_fallback}.
For method and delegate calls, 
we define a semantic notion of type safety for the code environment $\ENV{T}$, using Definition~\ref{def:typing_interpretation_stm};
this is written $\Sigma; \Gamma; \ENV{T} \vDash \ENV{T}$ and states that there exists a typing interpretation for each method body $S$, with $\Delta$ and $s$ derived from the method signature in $\Gamma$ (see Figure~\ref{fig:semantic_type_rules_envtm} in Appendix~\ref{app:semantic_typing_stacks}).
Including the premise $\Sigma; \Gamma; \ENV{T} \vDash \ENV{T}$ in the following lemmas requires that every method declaration in $\ENV{T}$ (except fallback functions) must be semantically type safe.
This is of course a reasonable requirement, since a standard Subject-Reduction theorem  would also require well-typedness of all the environments.
However, it is worth pointing out that this is an over-approximation, since we actually only need the method $f$ (and its possible subtypes) declared on the interface $I^Y$ (and its possible subtypes) to be type safe, in order for the proof to work.
This restriction can thus be weakened without affecting the proof.
We preferred to use the more general requirement $\Sigma; \Gamma; \ENV{T} \vDash \ENV{T}$ merely to keep the premises of the rules simpler.

\begin{figure}\centering
\begin{semantics}
  \RULE[st-skip]
    { }
    { \Sigma; \Gamma; \Delta \vDash \code{skip} : \TCMD{s} }
  \RULE[st-throw]
    { }
    { \Sigma; \Gamma; \Delta \vDash \code{throw} : \TCMD{s} }
  \RULE[st-seq]
    { \Sigma; \Gamma; \Delta; \ENV{T} \vDash S_1 : \TCMD{s} \AND \Sigma; \Gamma; \Delta; \ENV{T} \vdash S_2 : \TCMD{s} }
    { \Sigma; \Gamma; \Delta; \ENV{T} \vDash S_1; S_2 : \TCMD{s} }
  \RULE[st-if][stm_st_if]
    {  \Sigma; \Gamma; \Delta \vDash e : \TBOOL_s \AND
       \Sigma; \Gamma; \Delta; \ENV{T} \vDash S_{\TRUE}  : \TCMD{s}  \AND
       \Sigma; \Gamma; \Delta; \ENV{T} \vDash S_{\FALSE} : \TCMD{s} 
    }
    { \Sigma; \Gamma; \Delta; \ENV{T} \vDash \code{if $e$ then $S_{\TRUE}$ else $S_{\FALSE}$} : \TCMD{s} }
  \RULE[st-while]
    { \Sigma; \Gamma; \Delta \vDash e : \TBOOL_s \AND \Sigma; \Gamma; \Delta; \ENV{T} \vDash S : \TCMD{s} }
    { \Sigma; \Gamma; \Delta; \ENV{T} \vDash \code{while $e$ do $S$} : \TCMD{s} }
  \RULE[st-decv]
    { 
      \Sigma; \Gamma; \Delta                              \vDash e : B_{s_1} \AND
      \Sigma; \Gamma; \Delta, x : \TVAR{B_{s_1}}; \ENV{T} \vDash S : \TCMD{s} 
    }
    { \Sigma; \Gamma; \Delta; \ENV{T} \vDash \code{$\TVAR{B_{s_1}}$ $x$ := $e$ in $S$} : \TCMD{s} }
  \RULE[st-assv]({ \Delta(x) = \TVAR{B_s} })
    { \Sigma; \Gamma; \Delta \vDash e : B_s }
    { \Sigma; \Gamma; \Delta; \ENV{T} \vDash \code{$x$ := $e$} : \TCMD{s} }
  \RULE[st-assf][stm_st_assf]({
  \begin{array}{r @{~} l}
    \Delta(\code{this}) & = \TVAR{I_{s_1}} \\
    \Gamma(I)(p)        & = \TVAR{B_s}     \\
  \end{array}
  })
    { \Sigma; \Gamma; \Delta \vDash e : B_s }
    { \Sigma; \Gamma; \Delta; \ENV{T} \vdash \code{this.$p$ := $e$} : \TCMD{s} }
  \RULE[st-call][stm_st_call]({ s_1 \ordleq s \ordleq s_3, s_4 })
    { 
    \begin{array}{l @{~} l}
      \Sigma; \Gamma; \Delta \vDash e_1 : I^Y_s   & \AND \Sigma; \Gamma; \Delta \vDash \VEC{e} : \BSVEC    \\
      \Sigma; \Gamma; \Delta \vDash e_2 : \TINT_s & \AND \Sigma; \Gamma; \ENV{T} \vDash \ENV{T}
    \end{array}
    }
    { \Sigma; \Gamma; \Delta; \ENV{T} \vDash \CALL{e_1}{f}{\VEC{e}}{e_2} : \TCMD{s} }
    \WHERE{ }{\hspace*{-.7cm}
    \begin{array}{ll}
    \Gamma(I^Y)(f) = \TSPROC{\BSVEC}{s}
    &
    \Delta(\code{this}) = \TVAR{I^X_{s_1}}
    \\
    \Gamma(I^X)(\code{balance}) =  \TVAR{\TINT_{s_3}} \qquad\qquad
    &
    \Gamma(I^Y)(\code{balance}) = \TVAR{\TINT_{s_4}}
    \end{array}
    }
  \RULE[st-dcall][stm_st_dcall]({ s_1 \ordleq s })
    { 
    \begin{array}{l @{~} l}  
      \Sigma; \Gamma; \Delta \vDash e : I^Y_s        & \AND \Sigma; \Gamma; \Delta \vDash \VEC{e} : \BSVEC  \\
       \Sigma \vdash I^X \SUBS I^Y & \AND \Sigma; \Gamma; \ENV{T} \vDash \ENV{T} 
    \end{array}
    }
    { \Sigma; \Gamma; \Delta; \ENV{T} \vdash \DCALL{e}{f}{\VEC{e}} : \TCMD{s} } 
    \WHERE{ }{\hspace*{-.7cm}
    \begin{array}{ll}
    \Gamma(I^Y)(f) = \TSPROC{\BSVEC}{s}\ \qquad\qquad\qquad
    &
    \Delta(\code{this}) = \TVAR{I^X_{s_1}}
    \end{array}
    }
    \WHERE{ \forall q \in }{ (\DOM{\Gamma(I^Y)} \INTERSECT (\FNAMES \UNION \{\code{balance}\})).\ \Gamma(I^Y)(q) = \Gamma(I^X)(q) }
  \RULE[st-fcall][stm_st_fcall]({ s_1 \ordleq s \ordleq s_3, s_4 })
    { 
      \Sigma; \Gamma; \Delta \vDash e_1 : I^Y_s  
      \AND
      \Sigma; \Gamma; \Delta \vDash \VEC{e} : \BSVEC
      \AND 
      \Sigma; \Gamma; \Delta \vDash e_2 : \TINT_s 
    }
    { \Sigma; \Gamma; \Delta; \ENV{T} \vDash \CALL{e_1}{f}{\VEC{e}}{e_2} : \TCMD{s} }
    \WHERE{ }{\hspace*{-.7cm}
    \begin{array}{ll}
    f \notin \DOM{\Gamma(I^Y)} 
    &
    \Delta(\code{this}) = \TVAR{I^X_{s_1}}
    \\
    \Gamma(I^X)(\code{balance}) =  \TVAR{\TINT_{s_3}} \qquad\qquad
    &
    \Gamma(I^Y)(\code{balance}) = \TVAR{\TINT_{s_4}}
    \end{array}
    }
    \WHERE{ \forall Y }         { \in \DOM{\ENV{T}} \SUCHTHAT \Gamma(Y) = I'_{s'} \land \Sigma \vdash I'_{s'} \SUBS I^Y_s  \implies \Sigma; \Gamma; \Delta'; \ENV{T} \vDash S : \TCMD{s} }
    \WHERE{ }                   {\!\!\!\text{where:} }
    \WHERE{ }                   {{
      \begin{array}{l}
        \ENV{T}(Y)(\code{fallback}) = (\epsilon, S) \\
        \Delta' = \code{this}:\TVAR{I'_{s'}}, \code{value}:\TVAR{\TINT_s}, \code{sender}:\TVAR{\ITOP_{\STOP}}, \code{id}:\TVAR{\TIDF_s}, \code{args}:\TVAR{\BSVEC}
      \end{array}
    }}
\end{semantics}
\caption{Semantic type rules for statements}
\label{fig:semantic_type_rules_stm}
\label{fig:semantic_type_rules_fallback}
\end{figure}


The similarity between the syntactic typing rules from \cite{AGL25} and the corresponding semantic ones 
allows us to relate the two judgements; this is our second cornerstone result.

\begin{theorem}[Compatibility for statements]\label{thm:compat_comms}
If $\Sigma; \Gamma; \EMPTYSET \vdash \ENV{T}$ and $\Sigma; \Gamma; \Delta \vdash S : \TCMD{s}$, then $\Sigma; \Gamma; \ENV{T} \vDash \ENV{T}$ and $\Sigma; \Gamma; \Delta; \ENV{T} \vDash S : \TCMD{s}$.
\end{theorem}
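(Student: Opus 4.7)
The plan is to first establish $\Sigma; \Gamma; \ENV{T} \vDash \ENV{T}$ coinductively, and then derive $\Sigma; \Gamma; \Delta; \ENV{T} \vDash S : \TCMD{s}$ by induction on the syntactic typing derivation, discharging the $\vDash \ENV{T}$ premise of the method-call compatibility lemmas when needed. A direct, rule-by-rule appeal to the compatibility lemmas fails because Lemmas~\ref{lemma:compatibility_stm_call} and~\ref{lemma:compatibility_stm_dcall} require $\vDash \ENV{T}$ as a hypothesis, while the proof of $\vDash \ENV{T}$ in turn passes through the semantic typing of method bodies that may contain calls; this mutual dependency must be broken by a single coinductive construction.

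First I would define a candidate set $\CANDSTYPI$ containing every stack type triplet $(Q,\Delta,s)$ such that (a)~$Q$ is syntactically well-typed at $\TCMD{s}$ under $\Sigma;\Gamma;\Delta$, where the syntactic judgement is extended to stacks as in Figure~\ref{fig:type_rules_stacks} in Appendix~\ref{app:syntactic_type_system} (treating $\DEL{x}$ markers, return pairs $(\ENV{V}',\Delta')$, and security-level markers uniformly with statements); (b)~the security levels embedded in $Q$ occur in descending order, with $s \ordgeq \FIRSTS{Q}$; and (c)~every embedded return pair $(\ENV{V}',\Delta')$ satisfies $\Sigma;\Gamma;\Delta' \vdash \ENV{V}'$. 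By the hypothesis $\Sigma;\Gamma;\EMPTYSET \vdash \ENV{T}$, each method body $S$ declared in $\ENV{T}$ is syntactically well-typed at the level announced by its signature, so the triplet $(S;\bot,\Delta_f,s_f)$ built from its formal parameters lies in $\CANDSTYPI$.

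Next I would show that $\CANDSTYPI$ is a typing interpretation in the sense of Definition~\ref{def:typing_interpretation_stm}. Take $(Q,\Delta,s) \in \CANDSTYPI$: if $Q \in \TSTACKS$ we are in Case~\ref{case:typi1}; otherwise, I proceed by case analysis on the top frame of $Q$. For each non-call frame, the reduct dictated by Figures~\ref{fig:typed_semantics_stacks1}--\ref{fig:typed_semantics_stacks2} preserves syntactic typability together with invariants~(b)--(c), via arguments that mirror the proofs of the stack compatibility Lemmas~\ref{lemma:compatibility_stacks_bottom}--\ref{lemma:compatibility_stacks_restore}. The decisive cases are the call rules: in \nameref{stm_rt_call}, \nameref{stm_rt_dcall} and \nameref{stm_rt_fcall}, the reduct places on top of the stack a body $S$ drawn from $\ENV{T}$, followed by a fresh return pair and a level marker. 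Since $\ENV{T}$ is syntactically well-typed, this body is syntactically typable at exactly the level required by the rule; the new return pair is well-typed by the construction of $\Delta'$; and the descending-level invariant is maintained because the new marker carries the pre-call level $s$. Hence the reduct is again in $\CANDSTYPI$, and the coinductive step closes.

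Having $\CANDSTYPI \subseteq \STYPING$, the first conclusion $\Sigma;\Gamma;\ENV{T} \vDash \ENV{T}$ is immediate from the fact that each $(S;\bot,\Delta_f,s_f)$ lies in $\CANDSTYPI$. For the second conclusion I would proceed by induction on the derivation of $\Sigma;\Gamma;\Delta \vdash S : \TCMD{s}$, matching each syntactic rule to one of Lemmas~\ref{lemma:compatibility_stm_skip}--\ref{lemma:compatibility_stm_dcall}; the method-call cases discharge their $\vDash \ENV{T}$ premise using the result just obtained, while the coercion rule is handled via Corollary~\ref{lemma:compatibility_stm_coercion}. The main obstacle is the careful formulation of $\CANDSTYPI$: because the typed operational semantics for stacks freely intermixes statements, $\DEL{x}$ markers, return pairs and security-level markers, the coinductive step effectively amounts to a compact re-derivation of all the stack compatibility lemmas, now applied uniformly to stacks that arise inside method bodies whose semantic safety is exactly what the construction is establishing.
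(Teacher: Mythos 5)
Your proposal is correct in substance but takes a genuinely different route from the paper. The paper proves the theorem by a simultaneous induction on the two syntactic derivations $\Sigma;\Gamma;\EMPTYSET \vdash \ENV{T}$ and $\Sigma;\Gamma;\Delta \vdash S : \TCMD{s}$, dispatching each case on the last rule used to the corresponding compatibility lemma (Lemmas~\ref{lemma:compatibility_stm_skip}--\ref{lemma:compatibility_stm_dcall}), with Corollary~\ref{lemma:compatibility_stm_coercion} handling the rules that carry a side condition $s \ordleq s'$; the mutual dependency between $\vDash\ENV{T}$ and the call lemmas is absorbed into the remark that ``both statements have to be shown together''. You instead build one global candidate typing interpretation containing every syntactically well-typed stack triplet and verify closure once. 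Your route is heavier --- the coinductive step effectively re-derives, in a single place, much of what the stack and statement compatibility lemmas already establish --- but it makes explicit, and genuinely discharges, the well-foundedness issue you identify: in the call cases the appeal to the induction hypothesis for $\vDash\ENV{T}$ is not on a structurally smaller derivation (a method body may call any method, including itself), so a single coinduction is the cleaner way to close the loop. Two details you should tighten: (i) the body placed on the stack by \nameref{stm_rt_call}, \nameref{stm_rt_dcall} and \nameref{stm_rt_fcall} is typed under the \emph{declared} signature of the actual callee, which may be a proper subtype of the runtime signature $\Gamma(I^Y)(f)$ used by the transition rule, so your invariant~(a) must be stated up to subtyping and up to the retyping of \code{this}, \code{sender} and \code{value} (cf.\ Lemma~\ref{lemma:dcall_stm_type} and the footnote in the proof of Lemma~\ref{lemma:compatibility_stm_call}); and (ii) Case~\ref{case:typi2} of Definition~\ref{def:typing_interpretation_stm} also requires the \emph{existence} of at least one type-lifted transition for non-terminal stacks, which needs Theorem~\ref{theorem:compatibility_expressions} and the state construction of Definition~\ref{def:construction_envsv}, not merely preservation of your invariants under the transitions that happen to exist.
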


Notice that, also for the semantic typing, the typing rules 
provide only an \emph{approximation} to the set of type safe programs, just as the syntactic type rules only define well-typedness as an approximation to type safety.
Both must necessarily induce some \emph{slack}, that is, statements $S$ for which safety cannot be concluded by the rules, but which nevertheless behave in a type safe way at runtime (i.e.\@ whose typed transitions would not be stuck).
However, the important thing to note here is that using the semantic type rules is \emph{optional}:
there is always the possibility of showing safety for a statement $S$ directly by giving an explicit typing interpretation instead, to satisfy a premise anywhere in the system.
 In particular, it now becomes possible to approach safety for some usages of the fallback function.

\subsection{Safe and unsafe fallback calls}\label{sec:typing_fallback}
Armed with our notion of type safety, we can now return to the problem of how to ensure safety of the fallback function.
The semantic typing rule for fallback function calls is in Figure~\ref{fig:semantic_type_rules_fallback}.
The only difference w.r.t. \nameref{stm_st_call} is that we obtain the body $S$ of the fallback function and the type environment $\Delta'$ from an explicit quantification over addresses $Y$ in the side condition, rather than having the more general statement $\Sigma; \Gamma; \ENV{T} \vDash \ENV{T}$ in the premise.
That difference is unimportant in \emph{proving} admissibility of the rule, but it does make a difference in \emph{using} it for type checking a piece of code.
In rules \nameref{stm_st_call} and \nameref{stm_st_dcall}, the statement $\Sigma; \Gamma; \ENV{T} \vDash \ENV{T}$ is independent of the other premises; 
it does not have to be proved for each application of one of the aforementioned rules.
In contrast, in the rule \nameref{stm_st_fcall}, the quantification is over all addresses $Y$ satisfying a condition that depends on the type of the expression $e_1$.  This, in turn, determines the type environment $\Delta'$ and the level $s$ for which the body $S$ must be type safe.
The truth value of this side condition may thus have to be evaluated anew \emph{each time} the rule is applied, whereas $\Sigma; \Gamma; \ENV{T} \vDash \ENV{T}$ can be shown separately, once and for all.

In essence, rule \nameref{stm_st_fcall} states that the fallback functions of \emph{every} contract implementing the interface $I^Y$, \emph{or one of its descendants}, must be type safe at level $s$.
This extra step is necessary, since $I^Y_s$ is not guaranteed to be the actual type of $e_1$, but only its runtime type.
Any address having an actual type $I'_{s'}$, where $I'$ is a descendant of $I^Y$ and $s' \ordleq s$, is also in the set of values inhabiting the type $I^Y_s$; hence they too could be obtained from the evaluation of $e_1$, and they must therefore be included as well.

As with the other semantic type rules, rule \nameref{stm_st_fcall} only requires that the premises and side conditions must hold, but it imposes no restrictions on how they are shown.
A proof of type safety for a fallback function body $S$, w.r.t.\@ some type environment $\Delta$ and security level $s$, could simply be given in the form of an explicit typing interpretation $\STYPI$ containing $(S;\bot, \Delta, s)$.
Such a typing interpretation is just a set of triplets, and it could therefore in principle be provided by the contract creator and stored together with the code itself, to witness the type safety of the contract's fallback function.
This would then allow any user of the code, who might 
generate a method call that will result in an invocation of that fallback function, to use this typing interpretation to satisfy the premise of rule \nameref{stm_st_fcall} when type checking his own code; this can be seen as a form of \emph{proof-carrying code} \cite{necula/1997/popl/pcc}.

As an example of a safe usage of the fallback function, consider the code in Figure~\ref{fig:example_fallback_proxy}.
As we discussed in Section~\ref{sec:motEx}, there are only finitely many possible continuations after any method call to \code{Proxy}:
either \code{update}, or one of $f_1$, \ldots, $f_h$, or \code{skip} (when the fallback function is invoked).
Thus, if none of these methods can cause an insecure information flow, then any \code{call Proxy.$f$()\$$z$} for $f \neq \code{update}$ will be safe too.
This cannot be inferred directly, using the other semantic type rules, because the interface definition for $I^P$ will not contain a declaration of the method $f$.
However, with an explicit typing interpretation witnessing the safety of the execution of \code{call this.impl.id()\$value}, which will include the executions of $f_1$, \ldots, $f_h$, the call can nevertheless be shown to be safe by using rule \nameref{stm_st_fcall}.
Of course, \emph{finding} such a typing interpretation may be a monumental task, even with only finitely many possible continuations.
We return to this issue in Section~\ref{sec:upto_techniques}.

Of course, not all usages of the fallback function are type safe, even with semantic typing.
Suppose, for example, that we replaced the forwarding method call in the fallback function of \code{Proxy} with a \emph{delegate} call; i.e.
\begin{lstlisting}[style=tinysol, frame=none, numbers=none]
  func fallback() { dcall this.impl.id(args) }  @\hspace*{3cm}  ($\dagger$)@
\end{lstlisting}

\noindent Any attempt to invoke \code{Proxy.$f_i$()} would now result in $f_i$ being executed in the context of \code{Proxy}, rather than \code{X}, thereby allowing it to mutate the state of \code{Proxy}.
However, using the typed semantics, this execution would be stuck because of the type constraint on the rule \nameref{stm_rt_dcall} for delegate calls, which requires that the caller must be a subtype of the callee.
Clearly, it does not hold that $\Sigma \vdash I^P \SUBS I^X$, since \code{X} declares the methods $f_1, \ldots, f_h$ whereas \code{Proxy} does not contain similar declarations; so, this constraint cannot be satisfied.\footnote{Conversely, if \code{Proxy} \emph{did} declare methods $f_1, \ldots, f_h$ too, then none of these methods in \code{X} could be invoked through the fallback function of \code{Proxy}, since it is only executed on calls to \emph{non-existing} methods.}
Thus, no typing interpretation can be created to witness the safety of $(\dagger)$, cf.\@ Lemma~\ref{lemma:untypable}.
Altering our \code{Proxy} example with $(\dagger)$ is in fact a key ingredient for mounting an attack similar to the Parity Multisig Wallet (PMW) attack \cite{chinese2021flowtypes,manning2018parity_hack} mentioned in the Introduction, that thus is prevented by our approach.
The second ingredient is adding to the implementation contract \code{X} a method declaration 
\begin{lstlisting}[style=tinysol, frame=none, numbers=none]
func init(x) { this.owner := x }
\end{lstlisting}
which modifies a local field in \code{X} called \code{owner}.
However, since delegate calls are executed in the context of the \emph{caller}, this allows an attacker to call \code{Proxy.init(A')} with his own address \code{A'} as argument, which then results in the \code{owner} field of \code{Proxy} being updated instead.
Thus, the attacker can take over ownership of the \code{Proxy} contract and drain it of all funds.

\subsection{Variadic arguments}
Another issue pertains to the magic variable \code{args}, which we have hitherto avoided using.
This variable acts as a variadic argument list, i.e.\@ an array of unknown length with values of mixed types.
This can easily cause configurations to be stuck, due to a wrong number of arguments or mismatched types in method calls, if a method call is issued with \code{args} in place of an ordinary argument list.
However, \code{args} can also be encapsulated in a safe manner, although this will require some extensions to the type language, which we now briefly outline.

\begin{figure}
\begin{lstlisting}[style = tinysol]
func fallback() {
  if len(args) = 1 then 
    if typeof(args[0]) = (int,@$s_\top$@) then call X.@$f_1$@(args)$value 
    else if @$\ldots$@
    @\lstvdots@
  else skip
}
\end{lstlisting}
\vspace*{-.5cm}
\caption{An example with safe encapsulation of variadic arguments.}
\label{fig:example_fallback_variadic}
\end{figure}

As a first step, suppose we admit operators such as \code{len}, \code{typeof} and array indexing into the language of expressions, which can be used to extract information from \code{args}  at runtime.
Consider now the code in Figure~\ref{fig:example_fallback_variadic}.
By using these constructs, we can meticulously check the number of arguments present in \code{args} and their actual types, and then pass control to various handlers accordingly. 
Again, we ensure safety by reducing the infinite number of possible values of \code{args} to yield only a finite number of possible continuations.
One problem here is the operator \code{typeof}: we have hitherto assumed that all expression operators \code{op} can be given a type $\VEC{B} \to B$, but the \code{typeof} operator must be able to operate on a value of \emph{any} type.
Hence, it is polymorphic, and the language of types for operators must therefore be extended with polymorphic types; we must furthermore extend the language of base types $B$ with a type for type objects, e.g.\@ $\TTYPE$.
Thus, the type of \code{typeof} must be $\forall \alpha \SUCHTHAT \alpha \to \TTYPE$, where $\alpha$ is a type variable.
These are technical details that are, in a sense, orthogonal to the issue of semantic typing, so we shall not pursue them further here.
However, extending the type language with these features would be a relevant avenue for future work.

\begin{figure}
\begin{minipage}[t]{.45\textwidth}
\begin{lstlisting}[style=tinysol]
contract Proxy : @$I_s^P$@ {
  field owner := A;
  field impl := X;
  @$\cdots$@
  func @$f_1$@(@$\VEC{x}$@) { 
    dcall this.impl.@$f_1$@(@$\VEC{x}$@) 
  }
  @\lstvdots@
  func @$f_h$@(@$\VEC{x}$@) { 
    dcall this.impl.@$f_h$@(@$\VEC{x}$@) 
  }
  func fallback() {
    call this.impl.id(args)$0
  }
}
\end{lstlisting}
\end{minipage}\hfill
\begin{minipage}[t]{.48\textwidth}
\begin{lstlisting}[style=tinysol]
contract X : @$I_s^X$@ {
  @$\cdots$@
  func @$f_1$@(@$\VEC{x}$@) { @$\ldots$@ }
  @\lstvdots@
  func @$f_h$@(@$\VEC{x}$@) { @$\ldots$@ }
  func fallback() { 
    if len(args) = 1 then 
      if typeof(args[0]) = (int,@$s_\top$@) 
      then call Y.@$g_1$@(args)$0 
      else if @$\ldots$@
      @\lstvdots@
    else skip
  }
}
\end{lstlisting}
\end{minipage}
\vspace*{-.5cm}
\caption{A proxy contract using both explicit forwarding and fallback.}
\label{fig:example_fallback_combined}
\end{figure}

Finally, combining the preceding examples, we can show how to create a variant of the \code{Proxy} contract that allows both upgrading and extending the functionality, but avoids the vulnerability of the PMW attack.
Consider the code in Figure~\ref{fig:example_fallback_combined}: we now let the \code{Proxy} declare methods $f_1, \ldots, f_h$ and then \emph{explicitly} forward each call to the corresponding method in the implementation \code{X}.
This avoids the problem of allowing an unintended call to a method like \code{init}, as in the PMW attack, whilst still permitting the method bodies to mutate the state of \code{Proxy}.
All other method calls are then generically forwarded using the fallback function, but using the direct call style.
These calls thus cannot modify the state of \code{Proxy}, but they can still extend its functionality.

Because of the type restriction on the rule \nameref{stm_rt_dcall}, $I^P$ must be a subtype of $I^X$, hence \code{X} cannot declare any methods or fields that are not also declared in \code{Proxy}.
All generically forwarded calls, which necessarily cannot be to any of $f_1, \ldots, f_h$, will therefore invoke the fallback function of \code{X}.
It can then use the aforementioned \code{len} and \code{typeof} operators to ensure that the argument list has the correct number of elements and types, as in the preceding example, before forwarding these calls to another contract \code{Y} (not shown) containing the extension methods $g_1, g_2, \ldots$.
This extra level of indirection allows both the implementation in \code{X} and the extensions in \code{Y} to be upgraded.

Note that using explicit forwarding with delegate calls does not directly prevent the programmer from \emph{explicitly} forwarding a call that can modify the \code{owner} or \code{impl} fields.
For example, one of the forwarded $f_i$ methods could indeed be \code{update(x)}, which allows the implementation pointer to be updated.
The safety of the example would then depend on the implementation of this method.
Using integrity security settings $T$ (Trusted) and $U$ (Untrusted), with $T \ordleq U$, it would be natural for the \code{Proxy} interface $I^P$ to declare \code{impl} as $\TVAR{I^X, T}$.
However, if \code{X.update} were declared simply as
\begin{lstlisting}[style=tinysol, frame=none, numbers=none]
  func update(x) { this.impl := x } 
\end{lstlisting}
\noindent thus permitting an unrestricted information flow into \code{impl}, the appropriate security setting in the implementation interface $I^X$ would be to declare
\begin{lstlisting}[style=tinysol, frame=none, numbers=none]
  impl : @$\TVAR{I^X,U}$@        update : @$\TSPROC{(I^X,U)}{U}$@
\end{lstlisting}
\noindent In that case, forwarding \code{Proxy.update} to \code{X.update} with a delegate call would lead to an information flow from an Untrusted variable to a Trusted field, which is disallowed by the type settings and would lead to a stuck configuration in the type semantics (cf.\@ rule \nameref{stm_rt_assf}).
Hence, this cannot be safe, and it would be rejected by \nameref{stm_st_assf}.

\section{Up-to techniques for semantic typing}\label{sec:upto_techniques}
To manually prove type safety for a stack $Q$ w.r.t.\@ some type environment $\Delta$ and security level $s$, we must exhibit a typing interpretation $\STYPI$ containing the triplet $(Q, \Delta, s)$.
A major disadvantage of this approach is that it requires us to check the conditions of a typing interpretation for all the states reachable from $(Q, \Delta, s)$ in the type-lifted transition system (Definition~\ref{def:type_lifted_transitions}).
This poses practical problems in light of our proposal to store on.chain typing interpretations beside contract code, to witness type safety of fallback functions.
First, the contract creator, wishing to demonstrate type safety of his untypable code, must be able to \emph{build} a typing interpretation containing the code.
Second, a user, wishing to invoke a piece of untypable code and ensure (when writing a new contract) that the invocation is type safe, must be able to \emph{verify} that a given set of triplets indeed is a typing interpretation (this should preferably be doable by an automatic decision procedure, which can be called by the type checker itself, rather than requiring manual intervention). 
Not only may these tasks be laborious to do manually; they may even be impossible, since typing interpretations can contain infinitely many triplets in limit cases.
Furthermore, to be practically useful, it must be decidable whether a given set $\CANDSTYPI$ in fact \emph{is} a typing interpretation. 

The problem is similar to that of proving bisimilarity~\cite{Mi89} between two processes, by exhibiting a bisimulation relating them:
in that case, we also need to check that the bisimulation conditions hold for every pair in the candidate relation.
This problem is particularly well-known in the field of process calculi, such as the \PI-calculus \cite{PICALC, milner_walker_parrow1992picalc, parrow2001introduction}, where program equivalences such as bisimilarity are studied extensively; see e.g.\@ \cite{sangiorgi2011introduction_bisimulation_coinduction,sangiorgi2003pi}.
To alleviate this burden of proof, Pous and Sangiorgi \cite{pous_sangiorgi2011uptotechniques} present a number of \emph{up-to techniques}~\cite{Mi89,SangiorgiM92} for proving bisimilarity.
The idea is that it is enough to prove that a relation $\SETNAME{R}$ relating two processes is \emph{contained in} a bisimulation. 
Pous and Sangiorgi present a general theory of how to prove soundness of an up-to technique,
and this relies on greatest fixed-points on complete lattices; it can therefore also be adapted to work with other coinductively defined objects, like our typing interpretations.
To the best of our knowledge, this is the first use of up-to techniques in a concrete, non-relational setting.

\subsection{Another motivating example}\label{sec:upto_example}
\begin{figure}
\begin{lstlisting}[style = tinysol]
contract Y : @$I^Y_s$@ {
  @\lstvdots@
  func fallback() {
    if id = @$f_1$@ @$\lor$@ @$\ldots$@ @$\lor$@ id = @$f_h$@ then call X.id()$value else call X.@$f_0$@()$value
  }
}
\end{lstlisting}
\vspace*{-.5cm}
\caption{An example where \code{id} is used as the method name.}
\label{fig:example_fallback_upto}
\end{figure}

Consider the body of the fallback function in Figure~\ref{fig:example_fallback_upto}.
Calling methods by using the magic variable \code{id} makes the body untypable by the semantic type rules, as we have previously argued when discussing ($\dagger$).
However, the body consists of just a single \code{if-else} statement, with a method call in each branch.
Thus, this is a variant of the pattern presented in Figure~\ref{fig:example_fallback_proxy};
similarly to that example, also here, after just two transition steps, the body on the stack reduces to the body of one of the methods $f_1, \ldots, f_h$ or of the default handler $f_0$.
In terms of the type-lifted transition system, this body will yield a transition sequence of the form
\begin{align*}
\!\!\!\!\!  \Sigma; \Gamma; \ENV{T} & \vDash (\code{if\,\ldots\,};\bot, \Delta, s_1) \ttrans (\code{call\,\ldots\,};s_1;\bot, \Delta, s_2) \ttrans (S_i;(\ENV{V}, \Delta);s_2;s_1;\bot, \Delta_i, s_i)
\end{align*}

\noindent where $S_i$ is the body of the called handler method $f_i$, $\Delta_i$ is the constructed type environment, and $s_i \ordgeq s_2 \ordgeq s_1$. 
If we further assume that all the handler methods $f_i$ in fact are semantically typable, then there exists a typing interpretation $\STYPI^i$ containing $(S_i;\bot, \Delta_i, s_i)$.
From this and the above, we can obtain that $\Sigma; \Gamma; \Delta_i; \ENV{T} \vDash S_i;(\ENV{V}, \Delta);s_2;s_1;\bot : \TCMD{s_i}$; thus, there exists a typing interpretation $\STYPI^{{i'}}$ containing $(S_i;(\ENV{V}, \Delta);s_2;s_1;\bot, \Delta_i, s_i)$.

A typing interpretation for the body of the fallback function must contain the three triplets mentioned above for each of the possible other method calls $f_i$ and the union of all the typing interpretations $\STYPI^{{i'}}$.
The union of a collection of typing interpretations is itself a typing interpretation, so we can let $\STYPI' \DEFSYM \STYPI^{{0'}} \UNION \ldots \UNION \STYPI^{{h'}}$ and thus obtain a typing interpretation containing all the \emph{continuations} after the method call is issued.
However, the problem is that the set
\begin{equation*}
  \UPTOSTYPI \DEFSYM  \bigcup_{i \in \{0, \ldots, h\}} \SET{                   %
      (\code{if\,\ldots\,};\bot, \Delta, s_1), 
      (\code{call\,\ldots\,};s_1;\bot, \Delta, s_2),
      (S_i;(\ENV{V}, \Delta);s_2;s_1;\bot, \Delta_i, s_i)} , 
\end{equation*}

\noindent consisting of the union of the three triplets for each of the $f_i$ handler methods, is \emph{not} on its own a typing interpretation.
Hence, we cannot just discard $\STYPI'$ and only take $\UPTOSTYPI$ as our witness of type safety of the fallback function.
We need to take the whole union of $\UPTOSTYPI$ and $\STYPI'$, %
\changed{
which would require us to actually \emph{find} the  typing interpretations $\STYPI^{{0'}}, \ldots, \STYPI^{{h'}}$.
Furthermore, any user wishing to ascertain type safety of a call invoking this fallback function would also need to \emph{check} this whole union.
}
This is impractical, as we can clearly see that, if $(Q', \Delta', s') \in \UPTOSTYPI$ and $\Sigma; \Gamma; \ENV{T} \vDash (Q', \Delta', s') \ttrans (Q'', \Delta'', s'')$, then $(Q'', \Delta'', s'') \in \UPTOSTYPI \UNION \STYPI' $. 
Thus, although $\UPTOSTYPI$ is not itself a typing interpretation, it \emph{is} a typing interpretation \emph{up-to union} with $\STYPI'$.
Hence, \changed{it would be practical}, if we could use it in place of the full typing interpretation and let the remainder of the stack (viz., $S_i;(\ENV{V}, \Delta);s_2;s_1;\bot$) be typed by the ordinary semantic type rules.
Using $\UPTOSTYPI$ in place of $\UPTOSTYPI \UNION \STYPI'$ will drastically reduce the number of triplets that must be checked; this makes it easier for the contract creator to build the set $\UPTOSTYPI$ and for the user to verify it.

\subsection{Typing interpretations up-to union}
Following \cite{pous_sangiorgi2011uptotechniques}, we shall formulate up-to techniques for typing interpretations in terms of progressions, which we define as follows:

\begin{definition}[Progression]\label{def:progression}
Given 
environments $\Sigma$, $\Gamma$ and $\ENV{T}$, and two sets of stack type triplets $\UPTOSTYPI, \CANDSTYPI \subseteq \SETTRIPLETS$, we say that $\UPTOSTYPI$ \emph{progresses} to $\CANDSTYPI$, written $\UPTOSTYPI \progress \CANDSTYPI$, if $(Q, \Delta, s) \in \UPTOSTYPI$ implies
\begin{enumerate}
  \item\label{case:prog1} $Q \in \TSTACKS$, or
  \item\label{case:prog2} $\exists P_1'$ such that $\Sigma; \Gamma; \ENV{T} \vDash (Q, \Delta, s) \ttrans P_1'$ and 
        $\forall P_2'$ such that $\Sigma; \Gamma; \ENV{T} \vDash (Q, \Delta, s) \ttrans P_2'$ it holds that $P_2' \in \CANDSTYPI$. \qedhere
\end{enumerate}
\end{definition}
Note the similarity to the definition of typing interpretations (Definition~\ref{def:typing_interpretation_stm}).
When $\UPTOSTYPI$ and $\CANDSTYPI$ coincide, we have the ordinary definition of a typing interpretation; thus $\UPTOSTYPI \progress \UPTOSTYPI$ if and only if $\UPTOSTYPI$ is a typing interpretation.
The purpose of an up-to technique is to allow us to infer $\UPTOSTYPI \subseteq \STYPING$ from a progression $\UPTOSTYPI \progress \UPTOSTYPI \UNION \SETNAME{R}$, where $\SETNAME{R}$ is a set of redundant triplets.
If this is possible, then the up-to technique is \emph{valid} (sound); this is the third cornerstone result of our paper (for a proof, see Appendix~\ref{app:upto}).

\begin{definition}[Typing interpretation up-to union]\label{def:upto_union}
Let $\STYPI$ be a typing interpretation w.r.t.\@ environments $\Sigma$, $\Gamma$ and $\ENV{T}$.
A set of stack type triplets $\UPTOSTYPI$ is a \emph{typing interpretation up-to union} with $\STYPI$ if $\UPTOSTYPI \progress \UPTOSTYPI \UNION \STYPI$.
\end{definition}

\begin{theorem}[Validity of up-to union]\label{lemma:validity_upto_union}
Let $\STYPI$ be a typing interpretation.
If $\UPTOSTYPI$ is a typing interpretation up-to union with $\STYPI$ then $\UPTOSTYPI \UNION \STYPI$ is a typing interpretation. 
\end{theorem}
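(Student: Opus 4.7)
The plan is to prove the theorem by a direct unfolding of the definition of a typing interpretation, combined with a one-step case analysis on the provenance of each triplet. Concretely, I would fix an arbitrary $(Q, \Delta, s) \in \UPTOSTYPI \UNION \STYPI$ and verify the two clauses of Definition~\ref{def:typing_interpretation_stm} for the candidate set $\UPTOSTYPI \UNION \STYPI$, splitting on whether the triplet comes from $\UPTOSTYPI$ or from $\STYPI$.

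If $(Q, \Delta, s) \in \UPTOSTYPI$, the hypothesis together with Definition~\ref{def:upto_union} gives $\UPTOSTYPI \progress \UPTOSTYPI \UNION \STYPI$. Expanding via Definition~\ref{def:progression}, either $Q \in \TSTACKS$ (so clause~\ref{case:typi1} is met) or there is a transition $\Sigma; \Gamma; \ENV{T} \vDash (Q,\Delta,s) \ttrans P_1'$ and every reduct $P_2'$ lies in $\UPTOSTYPI \UNION \STYPI$, which is exactly clause~\ref{case:typi2} for the larger set.

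If instead $(Q, \Delta, s) \in \STYPI$, then, because $\STYPI$ is already a typing interpretation by hypothesis, Definition~\ref{def:typing_interpretation_stm} applied to $\STYPI$ yields either $Q \in \TSTACKS$ or the existence of a transition with all reducts in $\STYPI$. By monotonicity of union, $\STYPI \subseteq \UPTOSTYPI \UNION \STYPI$, so in particular every such reduct is also in $\UPTOSTYPI \UNION \STYPI$, and both clauses hold for the larger set as well.

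I do not anticipate any genuine obstacle: the argument is essentially definitional, and the only point worth being explicit about is that the existential witness required by clause~\ref{case:typi2} is preserved in each sub-case, being supplied either by the progression condition on $\UPTOSTYPI$ or by the typing-interpretation condition on $\STYPI$. Since both clauses are verified for every element of $\UPTOSTYPI \UNION \STYPI$, the set $\UPTOSTYPI \UNION \STYPI$ is a typing interpretation, and the theorem follows; as a corollary, $\UPTOSTYPI \subseteq \STYPING$ by the definition of $\STYPING$ as the union of all typing interpretations, which is the intended justification for reasoning up-to union as sketched in Section~\ref{sec:upto_example}.
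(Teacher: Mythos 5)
Your proposal is correct and follows essentially the same argument as the paper's own proof: an element-wise case split on whether a triplet comes from $\UPTOSTYPI$ (where the progression $\UPTOSTYPI \progress \UPTOSTYPI \UNION \STYPI$ supplies the two clauses of Definition~\ref{def:typing_interpretation_stm}) or from $\STYPI$ (where being a typing interpretation plus $\STYPI \subseteq \UPTOSTYPI \UNION \STYPI$ does the job). No gaps; the concluding remark that $\UPTOSTYPI \subseteq \STYPING$ matches the paper's intended use of the result.
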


Going back to the example in Section~\ref{sec:upto_example}, this means that we indeed can use the set $\UPTOSTYPI$ in place of the full typing interpretation $\UPTOSTYPI \UNION \STYPI'$ to witness type safety of the body of the fallback function.
Instead of storing $\UPTOSTYPI \UNION \STYPI'$ on the blockchain, it will therefore suffice to store just $\UPTOSTYPI$, together with an indication that $\UPTOSTYPI$ is a typing interpretation up-to union with the typing interpretation corresponding to the semantic type judgments 
$\Sigma; \Gamma; \Delta_i; \ENV{T} \vDash S_i;(\ENV{V}, \Delta);s_2;s_1;\bot : \TCMD{s_i}$ which can be concluded by the ordinary semantic type rules.
This greatly reduces the number of triplets that must be stored on the blockchain and checked by (the type checker of) the user.
\changed{Notice that this is not only a matter of size: if we were to give an explicit typing interpretation for the body of the fallback function---i.e. if we were to actually \emph{find} $\STYPI'$---, then we would also have to find one  typing interpretation for each of the $f_i$ method calls. 
In contrast, the up-to union technique allows us to simply assume that the latter ones exist, without the need of finding them.
This allows for greater flexibility in how manual proofs can be mixed with use of the type rules.}

\section{Concluding remarks}\label{sec:conclusion}\raggedbottom
The difference between safe and unsafe usages of the fallback function is not a purely syntactic property; this makes the conventional syntactic approach to type soundness unsuitable to handle this construct.
Hence, in this paper, we have followed the \emph{semantic} approach to type soundness, by defining a semantics of types and type judgments in terms of a typed operational semantics, which gives us a definition of safety that does not depend on syntactic type rules.
To 
coin a slogan, we can say that semantic typing is a \emph{type-safety-first} approach.
The fallback function is a particularly problematic programming construct, both in terms of the security breaches to which it has given rise and the challenges it poses for a type-theoretic approach to ensuring program safety.
We have 
proposed a solution to the problem of how to recover type safety of programs with fallback functions, without simply rendering the construct useless by preventing its execution altogether.
However, these developments should only be seen as an example of the approach we advocate and
may be equally relevant for other notions of type safety (for another example, see Appendix~\ref{app:CI}), other programming constructs, and even other smart-contract languages besides \TINYSOL/Solidity.

We 
remark that our proposal is not the only possible way to formulate a semantics of types.
For example, in \cite{caires2007}, Caires gives a \emph{logical} characterisation of various notions of type safety in the \PI-calculus by expressing types as formulae in the language of spatial logics \cite{caires2004behavioural_spatial_logic,caires_cardelli2003spatial_logic1, caires_cardelli2003spatial_logic2}. 
This allows him to directly define the sets of safe  programs (processes) w.r.t.\@ a given type formula, corresponding to our typing interpretations, as simply those processes that satisfy the formula.
This naturally poses the question of what kind of logic might be appropriate for describing secure information flows.
Proposals for such logics are presented in, e.g.,~\cite{Balliu14,BeringerH07,FruminKB21,GregersenBTB21,SolovievBG24}. 
Studying a logic suitable for describing 
the Volpano-Irvine-Smith type system~\cite{volpano2000secure_flow_typesystem} and developing a corresponding logical characterisation of the present notion of safety are interesting areas for future work.


The adaptation of up-to techniques from Section~\ref{sec:upto_techniques} is interesting, as it also opens new research avenues.
Indeed, their use is not limited to our present \TINYSOL{} setting: such techniques are generally relevant for the semantic approach to type soundness based on typing interpretations.
Besides up-to union, Pous and Sangiorgi \cite{pous_sangiorgi2011uptotechniques} also define a number of other up-to techniques  
which also should be transferable to typing interpretations, since they pertain to the coinduction proof method. 
We can consider, for example, 
\emph{up-to environment extension}, which could be relevant if runtime contract creation were added to \TINYSOL.
Indeed, in our present setting, $\Sigma$, $\Gamma$ and $\ENV{T}$ are assumed to be static, but in the future we can  consider a more realistic setting where new contracts can be dynamically added.
It would therefore be important to ensure that the addition of a new contract to the blockchain cannot invalidate a typing interpretation that witnesses the safety of an existing contract.

One important issue that deserves more investigation is a formal study of the size of the typing interpretations. Indeed, these are checked only when a new contract is developed (hence, this does {\em not} consume any gas because it involves no runtime computation). However, the proofs must be permanently stored on the blockchain. This has a cost, due to the increased contract size, which depends on the size of the proof and its representation on-chain. As a promising benchmark, the size of the typing interpretation (up-to union) for the example in Section~\ref{sec:upto_example} is linear in the number of functions mentioned in the fallback code, so it is linear in the size of the program.

Last, but by no means least, it would be interesting to develop algorithms and tools for building and certifying typing interpretations, and to apply them to case studies from the domain of smart contracts that involve the use of the fallback function.

\bibliography{literature}

\newpage
\appendix

\section{Untyped Operational Semantics}\label{app:untyped_operational_semantics}
The rules for turning declarations into the environments $\ENV{J}$ (for $J \in \{F,M,S,T\})$ are straightforward; they are given in Figure~\ref{fig:semantics_declarations}.

\begin{figure}[h]
\begin{tabular}{cc}
\begin{minipage}{0.29\textwidth}
\center
\begin{semantics}
  \RULE[decf$_1$]
    { }
    { \CONF{\epsilon, \ENV{F}} \trans \ENV{F} }
  \RULE[decm$_1$]
    { }
    { \CONF{\epsilon, \ENV{M}} \trans \ENV{M} }
  \RULE[decc$_1$][ts_dec_c1]
    { }
    { \CONF{\epsilon, \ENV{ST}} \trans \ENV{ST} }
  \RULE[decc$_2$][ts_dec_c2]
    {
        \CONF{DF, \ENV{F}^{\EMPTYSET}} \trans \ENV{F} \AND
        \CONF{DM, \ENV{M}^{\EMPTYSET}} \trans \ENV{M} \AND
        \CONF{DC, \ENV{ST}} \trans \ENV{TS}'  
    }
    { \CONF{\code{contract $X$ \{ $DF$ $DM$ \} $DC$}, \ENV{ST}} \trans  (\ENV{T}'\ , X: \ENV{M}) \ ,\,  (\ENV{S}'\ , X : \ENV{F}) }
\end{semantics}
\end{minipage}
~
&
~
\begin{minipage}{0.66\textwidth}
\center
\begin{semantics}
  \RULE[decf$_2$]
    { \CONF{DF, \ENV{F}} \trans \ENV{F}' }
    { \CONF{\code{field $p$ := $v$;$DF$}, \ENV{F}} \trans \ENV{F}'\ , p: v}
  \RULE[decm$_2$]
    { \CONF{DM, \ENV{M}} \trans \ENV{M}' }
    { \CONF{\code{$f$($\VEC{x}$) \{ $S$ \} $DM$}, \ENV{M}} \trans \ENV{M}'\ , f : (\VEC{x}, S) }
\vspace*{2.2cm}
\end{semantics}
\end{minipage}
\end{tabular}
\caption{Semantics of declarations.}
\label{fig:semantics_declarations}
\end{figure}

The semantics of expression configurations has the form $\CONF{e, \ENV{SV}} \trans v$ and is standard; its defining rules are in Figure~\ref{fig:semantics_expressions}.
Here, we assume some semantics $\trans_{\op}$ to evaluate the result of applying an operation $\op$ to a sequence of values $\VEC{v}$; hence, such operations are assumed to be total and let $\op(\VEC{v}) \trans_{\op} v$, for some basic value $v$.

\begin{figure}[h]\centering
\begin{minipage}{.49\textwidth}
\begin{semantics}
  \RULE[exp-val][exp_sem_val]
    { }
    { \CONF{v, \ENV{SV}} \trans v }
  \RULE[exp-op][exp_sem_op]
    { \CONF{\VEC{e}, \ENV{SV}} \trans \VEC{v} \AND \op(\VEC{v}) \trans_{\op} v}
    { \CONF{\op(\VEC{e}), \ENV{SV}} \trans v }
\end{semantics}
\end{minipage}  
\begin{minipage}{.49\textwidth}
\begin{semantics}
  \RULE[exp-var][exp_sem_var]({ \ENV{V}(x) = v })
    { }
    { \CONF{x, \ENV{SV}} \trans v }
  \RULE[exp-field][exp_sem_field]({ \ENV{S}(X)(p) = v })
    { \CONF{e, \ENV{SV}} \trans X }
    { \CONF{e.p, \ENV{SV}} \trans v }
\end{semantics}
\end{minipage}
\caption{Semantics of expressions; for the sake of simplicity, we here assume that $x$ also ranges over the `magic variable' names \code{this}, \code{sender}, \code{value}, \code{id} and \code{args}.}
\label{fig:semantics_expressions}
\end{figure}

\begin{figure}[t]\centering
\begin{semantics}
  \RULE[skip][untyped_skip]
    { }
    { \CONF{\code{skip} ; Q, \ENV{TSV}} \trans \CONF{Q, \ENV{TSV}} }
  \RULE[if][untyped_if]
    { \CONF{e, \ENV{SV}} \trans b \in \BOOLEANS }
    { \CONF{\code{if $e$ then $S_{\TRUE}$ else $S_{\FALSE}$} ; Q, \ENV{TSV}} \trans \CONF{S_b ; Q, \ENV{TSV}} }
  \RULE[while$_\TRUE$][untyped_whileT]
    { \CONF{e, \ENV{SV}} \trans \TRUE }
    { \CONF{\code{while $e$ do $S$} ; Q, \ENV{TSV}} \trans \CONF{S ; \code{while $e$ do $S$} ; Q, \ENV{TSV}} } 
  \RULE[while$_\FALSE$][untyped_whileF]
    { \CONF{e, \ENV{SV}} \trans \FALSE }
    { \CONF{\code{while $e$ do $S$} ; Q, \ENV{TSV}} \trans \CONF{Q, \ENV{TSV}} } 
  \RULE[decv][untyped_decv]
    { \CONF{e, \ENV{SV}} \trans v \qquad x \notin \DOM{\ENV{V}}}
    {\CONF{\code{$\TVAR{B_s}$ $x$ := $e$ in $S$} ; Q, \ENV{TSV}} 
      \trans \CONF{S ; \DEL{x} ; Q, \ENV{TS}, (\ENV{V}\ ,x : (v, B_s) )} 
    }
  \RULE[assv][untyped_assv]
    { \CONF{e, \ENV{SV}} \trans v \qquad  x \in \DOM{\ENV{V}}}
    { \CONF{\code{$x$ := $e$} ; Q, \ENV{TSV}} \trans \CONF{Q, \ENV{TS}, \ENV{V}\REBIND{x}{v}} }
  \RULE[assf][untyped_assf]
    { \CONF{e, \ENV{SV}} \trans v }
    {\CONF{\code{this.$p$ := $e$} ; Q, \ENV{TSV}} \trans \CONF{Q, \ENV{T}, \ENV{S}\REBIND{X}{\ENV{F}\REBIND{p}{v}}, \ENV{V}} 
    }
    \WHERE{ X }      { = \ENV{V}(\code{this}) \qquad\quad \ENV{F} = \ENV{S}(X) \qquad\quad p \in \DOM{\ENV{F}} }
  \RULE[delv][untyped_delv]
    { }
    { \CONF{\DEL{x} ; Q, \ENV{TS}, ( \ENV{V}\ , x:(v, B_s))} \trans \CONF{Q, \ENV{TSV}} }
  \RULE[return][untyped_return]
    { }
    { \CONF{\ENV{V}' ; Q, \ENV{TSV}} \trans \CONF{Q, \ENV{TS}, \ENV{V}'} }
\end{semantics}
\caption{Small-step semantics of stacks (to be completed with the rules depicted in Figure~\ref{fig:semantics_stacks2}).}
\label{fig:semantics_stacks1}
\end{figure}

The semantics for stacks provides a small-step semantics for commands; most of the rules are standard and reported in Figure~\ref{fig:semantics_stacks1}. 
Note that, as we are using a typed syntax, with types appearing in declarations of local variables, we need a place to store these type assignments after the declaration has been performed, and the variable and its current value has been appended to the current variable environment $\ENV{V}$.
Thus, we record this type information directly in $\ENV{V}$ along with the value, as is done in \cite{AGLM/2024/reocas/outofgas}.
This has no practical significance for the semantics, but is necessary for the syntactic type system (given in Appendix~\ref{app:syntactic_type_system}), since variable environments can be pushed onto the stack in method calls, and recovered from the stack after a method returns (see rule \nameref{stack_t_ret} in Figure~\ref{fig:type_rules_stacks}).

The rules for the three forms of method calls in Figure~\ref{fig:semantics_stacks2} also deserve a few comments:
For rule \nameref{stm_s_call}, we first evaluate the address and the parameters (viz., $e_1$, $\VEC{e}$ and $e_2$), relatively to the current execution environment $\ENV{SV}$; we use the obtained address $Y$ of the callee to retrieve the field environment $\ENV{F}^Y$ for this contract and, through the method table, to extract the list of formal parameters $\VEC{x}$ and the body of the method $S$. 
Also, we have to check that the number of actual parameters is the same as the number of formal parameters. 
We then evaluate the currency amount passed through the call\footnote{
Notice that here we should also check that the amount of currency transferred in the call (viz., $z$ in such rules) is at most the value currently owned by the caller in its \code{balance} field; if this is not the case, the method invocation fails and an exception is raised.  To model this, we would need to complicate the semantics with an issue (namely, exception handling) that is orthogonal to the aims of this paper. Hence, we prefer to keep the presentation simple and let method invocations succeed, irrespectively of the amount of currency associated with them.
} and
update the state environment by subtracting $z$ from the balance of $X$ and adding $z$ to the balance of $Y$, in their respective field environments; this yields a new state $\ENV{S}''$.
We create the new execution environment $\ENV{V}''$ by creating new bindings for the special variables \code{this}, \code{sender} and \code{value}, and by binding the formal parameters $\VEC{x}$ to the values of the actual parameters $\VEC{v}$. 
Finally, we execute the statement $S$ in this new environment. 
This yields the new state $\ENV{S}'$, and also an updated variable environment $\ENV{V}'$, since $S$ may have modified the bindings in $\ENV{V}''$. 
However, these bindings are local to the method and therefore we discard them once the call finishes.
So, the result of this transition is the updated state ($\ENV{S}'$) and the original variable environment of the caller ($\ENV{V}$).


Rule \nameref{stm_s_fcall} handles the fallback function. Essentially, this is a normal call, with the exception that it is activated when the invoked method is not provided by the callee.
Moreover, the magic variables \code{id} and \code{args} are bound to the invoked method name and to the actual arguments passed, respectively,.

Rule \nameref{stm_s_dcall} accounts for delegate calls, which differ from normal calls in two aspects.
Since the method body $S$ is executed in the context of the caller, the magic variables \code{this}, \code{sender} and \code{value} are \emph{not} rebound and funds are not transferred to the callee with this type of call. 
This is consistent with the idea that the code is executed in the context of the caller.

\begin{figure}[t]\centering
\begin{semantics}
  \RULE[call][stm_s_call]
    { 
      \CONF{e_1, \ENV{SV}} \trans Y \AND
      \CONF{e_2, \ENV{SV}} \trans z \AND
      \CONF{\VEC{e}, \ENV{SV}} \trans \VEC{v}
    }
    { \CONF{ \CALL{e_1}{m}{\VEC{e}}{e_2} ; Q, \ENV{TSV}} \trans \CONF{S ; \ENV{V} ; Q, \ENV{T}, \ENV{SV}'} }
    \WHERE{ X }           { = \ENV{V}(\code{this}) \hspace*{2cm}  \ENV{F}^X  = \ENV{S}(X) \hspace*{1.5cm} \ENV{F}^Y  = \ENV{S}(Y) }
    \WHERE{ f }           { = 
                          \begin{cases} 
                            \ENV{V}(m) & \text{if $m = \code{id}$} \\ 
                            m         & \text{otherwise} 
                          \end{cases}
    \hspace*{.35cm} (\VEC{x}, S)  = \ENV{T}(Y)(f) 
    \hspace*{1.1cm} |\VEC{x}|  = |\VEC{v}| = h }
    \WHERE{ \ENV{V}' }{ = (\code{this}, Y), (\code{sender}, X) , (\code{value}, z) , (x_1, v_1) , \ldots , (x_h, v_h) }
    \WHERE{ \ENV{S}'}{ = \ENV{S}\REBIND{X}{\ENV{F}^X [\code{balance -= } z]}\REBIND{Y}{\ENV{F}^Y [\code{balance += } z]} }
  \RULE[fcall][stm_s_fcall]
    { 
      \CONF{e_1, \ENV{SV}} \trans Y \AND
      \CONF{e_2, \ENV{SV}} \trans z \AND
      \CONF{\VEC{e}, \ENV{SV}} \trans \VEC{v}
    }
    { \CONF{\CALL{e_1}{m}{\VEC{e}}{e_2}; Q, \ENV{TSV}} \trans \CONF{S ; \ENV{V} ; Q, \ENV{T}, \ENV{SV}'} }
    \WHERE{ X }           { = \ENV{V}(\code{this}) \hspace*{2cm}  \ENV{F}^X  = \ENV{S}(X) \hspace*{1.5cm} \ENV{F}^Y  = \ENV{S}(Y) }
    \WHERE{ f }            { = 
                           \begin{cases} 
                             \ENV{V}(m) & \text{if $m = \code{id}$} \\ 
                             m          & \text{otherwise} 
                           \end{cases} 
    \hspace*{.35cm} f \notin \DOM{\ENV{T}(Y)} 
    \hspace*{1cm} (\epsilon, S) = \ENV{T}(Y)(\code{fallback}) }
        \WHERE{ \ENV{V}' }{ = (\code{this}, Y), (\code{sender}, X), (\code{value}, z), (\code{id}, f), (\code{args}, \VEC{v}) }   
    \WHERE{ \ENV{S}' }     { = \ENV{S}\REBIND{X}{\ENV{F}^X [\code{balance -= } z]}\REBIND{Y}{\ENV{F}^Y [\code{balance += } z]} }
  \RULE[dcall][stm_s_dcall]
    { 
      \CONF{e, \ENV{SV}} \trans Y \AND
      \CONF{\VEC{e}, \ENV{SV}} \trans \VEC{v}      
    }
    { \CONF{\DCALL{e}{f}{\VEC{e}} ; Q, \ENV{TSV}} \trans \CONF{S ; \ENV{V} ; Q, \ENV{TS}, \ENV{V}'}  }
    \WHERE{ f }           { = 
                          \begin{cases} 
                            \ENV{V}(m) & \text{if $m = \code{id}$} \\ 
                            m         & \text{otherwise} 
                          \end{cases}
    \hspace*{.35cm} (\VEC{x}, S)  = \ENV{T}(Y)(f) 
    \hspace*{1.1cm} |\VEC{x}|  = |\VEC{v}| = h }
    \WHERE{\ENV{V}'}      { = (\code{this}, \ENV{V}(\code{this})), (\code{sender}, \ENV{V}(\code{sender})), (\code{value},\ENV{V}(\code{value})), (x_1, v_1), \ldots, (x_h, v_h)}
\end{semantics}
\caption{Small-step semantics of stacks: method calls}
\label{fig:semantics_stacks2}
\end{figure}

\clearpage 
\section{Subtyping and type environment consistency rules}\label{app:subtyping_consistency}

$\Sigma$ gives a \emph{static} representation of the subtype-supertype relationship for interfaces.
We shall therefore need a way to decide whether a given $\Sigma$ actually is \emph{consistent} w.r.t.\@ this relationship.
Thus, we shall give a set of rules for deciding whether a given interface $I_1$, declaring that it inherits from another interface $I_2$, indeed also is a \emph{subtype} of $I_2$.
The rules are given in Figure~\ref{fig:consistency_rules_sigma}.

\begin{figure}[h]\centering
\begin{semantics}
  \RULE[$\Sigma$-top][cons_sigma_top]
    { }
    { \Sigma; \Gamma \vdash (\ITOP, \ITOP) }
  \RULE[$\Sigma$-rec][cons_sigma_rec]
    { \Sigma; \Gamma \vdash \Sigma' \AND \forall n \in \DOM{\Gamma_2} \SUCHTHAT \Sigma \vdash \Gamma_1(n) \SUBS \Gamma_2(n) }
    { \Sigma; \Gamma \vdash (I_1, I_2), \Sigma' }
    \WHERE{ I_1 }{ \neq I_2
    \qquad \not\exists I' \SUCHTHAT \Sigma'(I') = I_1 }
    \WHERE{ \Gamma(I_1) }   { = \Gamma_1 
    \qquad \Gamma(I_2) = \Gamma_2 
    \qquad\qquad \DOM{\Gamma_2} \subseteq \DOM{\Gamma_1} }
\end{semantics}
\caption{Consistency rules for $\Sigma$.}
\label{fig:consistency_rules_sigma}
\end{figure}

The judgment is of the form $\Sigma; \Gamma \vdash \Sigma$, which expresses that the inheritance tree recorded in $\Sigma$ is \emph{consistent} with the actual declarations recorded in $\Gamma$.
The rules are quite simple, since they just iterate through the inheritance environment, examining each entry in turn.
$\Sigma$ only records the inheritance hierarchy, whereas the actual structures of the interfaces are recorded in $\Gamma$.
Hence, consistency of any sub-part of $\Sigma$ must be judged relative to $\Gamma$, as well as relative to the full $\Sigma$.

\begin{itemize}
  \item The base case is the rule \nameref{cons_sigma_top}, since we assume the interface $\ITOP$ always exists and inherits from nothing besides itself.
    Furthermore, the form of the conclusion ensures that this rule can only be applied to an inheritance tree consisting of the root node itself.

  \item The recursive case is the rule \nameref{cons_sigma_rec}, which examines the entry $(I_1, I_2)$.
    In the side condition, we require that $I_1 \neq I_2$, since no interface except $\ITOP$ may inherit from itself.
    We also require that there must not exist another interface $I'$ such that $I'$ inherits from $I_1$ according to the remainder of the inheritance environment $\Sigma'$.
    This ensures that \nameref{cons_sigma_rec} is only applicable to a \emph{leaf node} of the current inheritance tree, which is then trimmed away as we recur in the premise. 
    Hence it ensures that the environment indeed describes a \emph{tree}.
Finally, the premise of \nameref{cons_sigma_rec} invokes a subtyping judgment 
\begin{equation*}
  \Sigma \vdash \Gamma_1(n) \SUBS \Gamma_2(n) 
\end{equation*}
for each member of the interface $I_2$, i.e.\@ the supertype; the side condition 
\begin{equation*}
  \DOM{\Gamma_2} \subseteq \DOM{\Gamma_1} 
\end{equation*}
ensures that all members of the supertype also exist in the subtype. 
\end{itemize}

The rules defining the subtyping judgments for interface members are \nameref{typerules_sub_field} and \nameref{typerules_sub_proc}, given in Section~\ref{sec:types}.
Both rules invoke a subtyping judgment for types of the form 
\begin{equation*}
  \Sigma \vdash B_1 \SUBS B_2
\end{equation*}

\noindent in their premise: this, finally, is the proper subtyping relation, which is defined by the rules in Figure~\ref{fig:subtyping_expressions}.
We shall also need this relation later in the presentation.
Note that we shall sometimes use the abbreviation
\begin{equation*}
  \Sigma \vdash B_1 \SUBS B_2 \SUBS B_3 \DEFSYM \Sigma \vdash B_1 \SUBS B_2 \land \Sigma \vdash B_2 \SUBS B_3 
\end{equation*}

\noindent to simplify some premises.

%
%

\begin{figure}\centering
\begin{semantics}
  \RULE[sub-refl][typerules_sub_refl]
    { }
    { \Sigma \vdash B \SUBS B }
  \RULE[sub-trans][typerules_sub_trans]
    { \Sigma \vdash B_1 \SUBS B_2 \AND \Sigma \vdash B_2 \SUBS B_3 }
    { \Sigma \vdash B_1 \SUBS B_3 }
  \RULE[sub-name][typerules_sub_name]({\Sigma(I_1) = I_2})
    { }
    { \Sigma \vdash I_1 \SUBS I_2 }
\end{semantics}
\caption{The subtyping relation for expression types $B_s$}
\label{fig:subtyping_expressions}
\end{figure}

\clearpage
\section{Syntactic type rules}\label{app:syntactic_type_system}
In our presentation, we intentionally left the syntactic category of operations $\op$ undefined, and simply assumed that they can all be evaluated using some semantics $\trans_{\op}$, such that $\op(\VEC{v}) \trans_{\op} v$.
Now, we also assume that for each operation we can determine its signature, written $\vdash \op : \VEC{B} \to B$.
To tie these two together, we need to make one further assumption; 

\begin{figure}[h]\centering
\begin{semantics}
  \RULE[t-val][exp_t_val]({
  \begin{array}{r @{~} l}
    \TYPEOF{v} & = (B', s')  \\
    s'         & \ordleq s
  \end{array}
  }) 
    { \Sigma \vdash B' \SUBS B }
    { \Sigma; \Gamma; \Delta \vdash v : B_s }
  \RULE[t-var][exp_t_var]({ 
  \begin{array}{r @{~} l}
    \Delta(x) & = \TVAR{B', s'} \\
    s'        & \ordleq s
  \end{array}
  })
    { \Sigma \vdash B' \SUBS B }
    { \Sigma; \Gamma; \Delta \vdash x : B_s }
  \RULE[t-field][exp_t_field]({
    \begin{array}{r @{~} l}
      \Gamma(I)(p) & = \TVAR{B', s'} \\
      s'           & \ordleq s
    \end{array}
  })
    { \Sigma; \Gamma; \Delta \vdash e : I_s \AND \Sigma \vdash B' \SUBS B }
    { \Sigma; \Gamma; \Delta \vdash e.p : B_s }
  \RULE[t-op][exp_t_op]
    { \vdash \op : \VEC{B} \to B \AND \Sigma; \Gamma; \Delta \vdash \VEC{e} : \VEC{B}_{s, \ldots, s} }
    { \Sigma; \Gamma; \Delta \vdash \op(\VEC{e}) : B_s }
\end{semantics}
\caption{Type rules for expressions.}
\label{fig:type_rules_expr}
\end{figure}

\begin{definition}[Safety requirement for operations]\label{def:safety_operations}
For all operations $\op$ and argument list $\VEC{v}$, if
\begin{itemize}
  \item $\vdash \op : \VEC{B} \to B$, and
  \item $|\VEC{B}| = |\VEC{v}|$, and
  \item $\forall v_i \in \VEC{v} \SUCHTHAT \TYPEOF{v_i} = (B_i, s_i)$, 
\end{itemize}
then $\op(\VEC{v}) \trans_{\op} v$ and $\TYPEOF{v} = (B, \SBOT)$, assuming any addresses $X$ appearing in the argument list $\VEC{v}$ has an entry in $\Gamma$.
\end{definition}

Note that, by this definition, we allow addresses to appear as arguments to operations.
However, as we assume that an operation can never \emph{yield} an address, we also know that the resulting security level will be $\SBOT$ and that the resulting base type $B$ will be one of $\TINT$ or $\TBOOL$.

%

\begin{figure}\centering
\begin{semantics}
  \RULE[t-skip][stm_t_skip]
    { }
    { \Sigma; \Gamma; \Delta \vdash \code{skip} : \TCMD{s} }
  \RULE[t-throw][stm_t_throw]
    { }
    { \Sigma; \Gamma; \Delta \vdash \code{throw} : \TCMD{s} }
  \RULE[t-seq][stm_t_seq]
    { \Sigma; \Gamma; \Delta \vdash S_1 : \TCMD{s} \AND \Sigma; \Gamma; \Delta \vdash S_2 : \TCMD{s} }
    { \Sigma; \Gamma; \Delta \vdash S_1; S_2 : \TCMD{s} }
  \RULE[t-if][stm_t_if]( s \ordleq s' )
    { 
    \begin{array}{r @{~} l @{\qquad} r @{~} l}
                             &                         & \Sigma; \Gamma; \Delta & \vdash S_{\TRUE} : \TCMD{s'}  \\
      \Sigma; \Gamma; \Delta & \vdash e : (\TBOOL, s') & \Sigma; \Gamma; \Delta & \vdash S_{\FALSE} : \TCMD{s'} 
    \end{array}
    }
    { \Sigma; \Gamma; \Delta \vdash \code{if $e$ then $S_{\TRUE}$ else $S_{\FALSE}$} : \TCMD{s} }
  \RULE[t-while][stm_t_while](s \ordleq s')
    { \Sigma; \Gamma; \Delta \vdash e : (\TBOOL, s') \AND \Sigma; \Gamma; \Delta \vdash S : \TCMD{s'} }
    { \Sigma; \Gamma; \Delta \vdash \code{while $e$ do $S$} : \TCMD{s} }
  \RULE[t-decv][stm_t_decv]
    { 
      \Sigma; \Gamma; \Delta                   \vdash e : (B, s') \AND
      \Sigma; \Gamma; \Delta, x : \TVAR{B, s'} \vdash S : \TCMD{s} 
    }
    { \Sigma; \Gamma; \Delta \vdash \code{$\TVAR{B, s'}$ $x$ := $e$ in $S$} : \TCMD{s} }
  \RULE[t-assv][stm_t_assv]({ 
  \begin{array}{r @{~} l}
    \Delta(x) & = \TVAR{B, s'} \\
    s         & \ordleq s'
  \end{array}
  })
    { \Sigma; \Gamma; \Delta \vdash e : (B, s') }
    { \Sigma; \Gamma; \Delta \vdash \code{$x$ := $e$} : \TCMD{s} }
  \RULE[t-assf][stm_t_assf]({
  \begin{array}{r @{~} l}
    \Delta(\code{this}) & = \TVAR{I, s_1} \\
    \Gamma(I)(p)        & = \TVAR{B, s'}  \\
    s_1                 & \ordleq s' \\
    s                   & \ordleq s'
  \end{array}
  })
    { \Sigma; \Gamma; \Delta \vdash e : (B, s') }
    { \Sigma; \Gamma; \Delta \vdash \code{this.$p$ := $e$} : \TCMD{s} }
  \RULE[t-call][stm_t_call]({ 
  \begin{array}{r @{~} l}
    s_1 & \ordleq s'        \\
    s'  & \ordleq s_3, s_4  \\
    s   & \ordleq s'       
  \end{array}
  })
    { 
    \begin{array}{r @{~} l}
      \Sigma; \Gamma; \Delta & \vdash e_1 : (I^Y, s')    \\
      \Sigma; \Gamma; \Delta & \vdash e_2 : (\TINT, s') \AND \Sigma; \Gamma; \Delta \vdash \VEC{e} : \BSVEC
    \end{array}
    }
    { \Sigma; \Gamma; \Delta \vdash \CALL{e_1}{f}{\VEC{e}}{e_2} : \TCMD{s} } 
    \WHERE{ \TVAR{I^X, s_1} }    { = \Delta(\code{this}) \hspace*{2.6cm}  \TSPROC{\BSVEC}{s'} = \Gamma(I^Y)(f) }             
    \WHERE{ \TVAR{\TINT, s_3} }  { = \Gamma(I^X)(\code{balance})  \hspace*{1.5cm}  \TVAR{\TINT, s_4} = \Gamma(I^Y)(\code{balance}) }
  \RULE[t-dcall][stm_t_dcall]({ 
  \begin{array}{r @{~} l}
    s_1 & \ordleq s' \\
    s   & \ordleq s' 
  \end{array}
  })
    { 
      \Sigma \vdash I^X \SUBS I^Y                   \AND 
      \Sigma; \Gamma; \Delta \vdash e : (I^Y, s')   \AND
      \Sigma; \Gamma; \Delta \vdash \VEC{e} : \BSVEC
    }
    { \Sigma; \Gamma; \Delta \vdash \DCALL{e}{f}{\VEC{e}} : \TCMD{s} }
    \WHERE{ }{ \TVAR{I^X, s_1} = \Delta(\code{this}) \hspace*{2cm}  \TSPROC{\BSVEC}{s'}  = \Gamma(I)(f) }
    \WHERE{ \forall q }{ \in (\DOM{\Gamma(I^Y)} \INTERSECT (\FNAMES \UNION \{\code{balance}\})).\ \Gamma(I^Y)(q) = \Gamma(I^X)(q) }
    %
\end{semantics}
\caption{Type rules for statements}
\label{fig:type_rules_stm}
\end{figure}

\begin{figure}\centering
\begin{semantics}
  \RULE[t-env-$\EMPTYSET$](J \in \SET{T, M, S, F, V})
    { }
    { \Sigma; \Gamma; \Delta \vdash \ENV{J}^\EMPTYSET }
  \RULE[t-env-t][env_t_envt]
    { \Sigma; \Gamma; \EMPTYSET \vdash_X \ENV{M} \AND \Sigma; \Gamma; \EMPTYSET \vdash \ENV{T} }
    { \Sigma; \Gamma; \EMPTYSET \vdash \ENV{T}, X:\ENV{M} }
  \RULE[t-env-m][env_t_envm](s \ordleq s_1)
    { \Sigma; \Gamma; \EMPTYSET \vdash_X \ENV{M} \AND \Sigma; \Gamma; \Delta \vdash S : \TCMD{s} }
    { \Sigma; \Gamma; \EMPTYSET \vdash_X \ENV{M}, f:(x_1, \ldots, x_h, S) }
  \WHERE{ I_{s_1} }            { = \Gamma(X) \qquad\qquad \TVAR{\TINT_{s_2} } = \Gamma(I)(\code{balance})}
  \WHERE{ \Gamma(I)(f) }       {  = \TSPROC{(B_1, s_1'), \ldots, (B_n, s_n')}{s} }
  \WHERE{\Delta}               { = \code{this}:\TVAR{I_{s_1}}, \code{value}:\TVAR{\TINT_{s_2}}, \code{sender}:\TVAR{\ITOP_{\STOP}}, }
  \WHERE{}                     { \qquad x_1:\TVAR{B_1, s_1'}, \ldots, x_n:\TVAR{B_n, s_n'} }
  \RULE[t-envs][env_t_envs]
    { \Sigma; \Gamma; \EMPTYSET \vdash \ENV{S} \AND \Sigma; \Gamma; \EMPTYSET \vdash_X \ENV{F} }
    { \Sigma; \Gamma; \EMPTYSET \vdash \ENV{S}, X:\ENV{F} }
  \RULE[t-envf][env_t_envf]({ 
  \begin{array}{r @{~} l}
    \Gamma(X)    & = (I, s')   \\
    \Gamma(I)(p) & = \TVAR{B_s} 
  \end{array}
  })
    { \Sigma; \Gamma; \EMPTYSET \vdash_X \ENV{F} \AND \Sigma; \Gamma; \EMPTYSET \vdash v : B_s }
    { \Sigma; \Gamma; \EMPTYSET \vdash_X \ENV{F}, p:v }
  \RULE[t-envv$_t$][env_t_envv_t]({ \Delta(x) = \TVAR{B_s} })
    { \Sigma; \Gamma; \Delta \vdash \ENV{V} \AND \Sigma; \Gamma; \EMPTYSET \vdash v : B_s }
    { \Sigma; \Gamma; \Delta \vdash \ENV{V}, x:(v, B_s) }
  \RULE[t-envv$_u$][env_t_envv_u]({ \Delta(x) = \TVAR{B_s} })
    { \Sigma; \Gamma; \Delta \vdash \ENV{V} \AND \Sigma; \Gamma; \EMPTYSET \vdash v : B_s }
    { \Sigma; \Gamma; \Delta \vdash \ENV{V}, x:v }
  \RULE[t-envsv][env_t_envsv]
    { \Sigma; \Gamma; \EMPTYSET \vdash \ENV{S} \AND \Sigma; \Gamma; \Delta \vdash \ENV{V} }
    { \Sigma; \Gamma; \Delta \vdash \ENV{SV} }
\end{semantics}
\caption{Type rules for the method-, state-, and variable environments}
\label{fig:type_rules_env}
\end{figure}

The type rules for expressions $e$ are given in Figure~\ref{fig:type_rules_expr}; those for statements $S$ are given in Figure~\ref{fig:type_rules_stm}; those for the three environments $\ENV{TSV}$ are given in Figure~\ref{fig:type_rules_env}.
All of them are similar to the corresponding ones provided by the type system in \cite{AGL25}, with one exception: 
Since we store the type information for locally declared variables in $\ENV{V}$, we also need to slightly alter the type rule for $\ENV{V}$, so for this purpose we use the rule \nameref{env_t_envv_t} as is done in \cite{AGLM/2024/reocas/outofgas}. 
However, as we reuse the definition of well-typedness of environments in the runtime typed semantics and in later developments in the paper, where this type information instead is stored in a separate type environment $\Delta$, we also provide an `untyped' variant of this rule, \nameref{env_t_envv_u}.
Thus, for the syntactic type system with the untyped semantics, rule \nameref{env_t_envv_t} is to be used, and in the typed semantics and all later developments rule \nameref{env_t_envv_u} is to be used.
We do not directly distinguish between the two different definitions of well-typedness of $\ENV{V}$, since it is clear from the context which rule to use.

\begin{figure}
\begin{semantics}
  \RULE[t-bot][stack_t_bot]
    { }
    { \Sigma; \Gamma; \Delta \vdash \bot : \TCMD{s} }
  \RULE[t-stm][stack_t_stm]
    { \Sigma; \Gamma; \Delta \vdash S : \TCMD{s} \AND \Sigma; \Gamma; \Delta \vdash Q : \TCMD{s} }
    { \Sigma; \Gamma; \Delta \vdash S; Q : \TCMD{s} }
  \RULE[t-del][stack_t_del]
    { \Sigma; \Gamma, \Delta \vdash Q : \TCMD{s} }
    { \Sigma; \Gamma; \Delta, x:T \vdash \DEL{x}; Q : \TCMD{s} }
  \RULE[t-ret][stack_t_ret]
    { \Sigma; \Gamma; \EXTRACT{\ENV{V}} \vdash \ENV{V} \AND \Sigma; \Gamma; \EXTRACT{\ENV{V}} \vdash Q : \TCMD{s} }
    { \Sigma; \Gamma; \Delta \vdash \ENV{V}; Q : \TCMD{s} } 
\end{semantics}
\caption{Type rules for stacks.}
\label{fig:type_rules_stacks}
\end{figure}

The type rules for stacks $Q$ are given in Figure~\ref{fig:type_rules_stacks}.
These are mostly as expected, since none of the stack operations place any restrictions on the level $s$; hence, we do not need any side conditions to alter the security level.
The only slightly unobvious rule is \nameref{stack_t_ret}, corresponding to the `return' operation, where a method call ends, and we restore the previous variable environment $\ENV{V}$.
In this case, we need a way to recover the types of locally declared variables, which are stored in $\ENV{V}$ along with the values.
The extraction function $\EXTRACT{\cdot}$ simply extracts this information from the variable environment found on the stack, to build the type environment $\Delta'$ relative to which the remainder of the stack $Q$ is then typed. It is defined as follows:
\begin{align*}
  \EXTRACT{\epsilon}          & = \epsilon \\
  \EXTRACT{\ENV{V}'\ ,x:(v,T)} & = \EXTRACT{\ENV{V}'}\ , x:T
\end{align*}

\clearpage
\section{Equivalence of states}\label{app:equivalence_states}
The notion of an $s$-parametrised equivalence relation on states, written $\Gamma; \Delta \vdash \ENV{SV}^1 =_s \ENV{SV}^2$, is defined by the rules in Figure~\ref{fig:env_s_eq}.
It denotes that the environments $\ENV{SV}^1$ and $\ENV{SV}^2$ have the same values on all entries of level $s$ or lower.

\begin{figure}[h]\centering
\begin{semantics}
  \RULE[eq-env$_V^\EMPTYSET$]
    { }
    { \Delta \vdash \ENV{V}^\EMPTYSET =_s \ENV{V}^\EMPTYSET }
  \RULE[eq-env$_V$][env_s_eq_envv]({
    \begin{array}{l}
      \Delta(x) = \TVAR{B_{s'}} \\
      s' \ordleq s \implies v_1 = v_2
    \end{array}
  })
    { \Delta \vdash \ENV{V}^1 =_s \ENV{V}^2 }
    { \Delta \vdash \ENV{V}^1, x:v_1 =_s \ENV{V}^2, x:v_2 }
  \RULE[eq-env$_F^\EMPTYSET$]
    { }
    { \Gamma \vdash_I \ENV{F}^\EMPTYSET =_s \ENV{F}^\EMPTYSET }
  \RULE[eq-env$_F$][env_s_eq_envf]({
    \begin{array}{l}
      \Gamma(I)(p) = \TVAR{B_{s'}} \\
      s' \ordleq s \implies v_1 = v_2
    \end{array}
  })
    { \Gamma \vdash_I \ENV{F}^1 =_s \ENV{F}^2 }
    { \Gamma \vdash_I \ENV{F}^1, p:v_1 =_s \ENV{F}^2, p:v_2 }
  \RULE[eq-env$_S^\EMPTYSET$]
    { }
    { \Gamma \vdash \ENV{S}^\EMPTYSET =_s \ENV{S}^\EMPTYSET }
  \RULE[eq-env$_S$][env_s_eq_envs]({ \Gamma(X) = I_{s'} })
    { \Gamma \vdash \ENV{S}^1 =_s \ENV{S}^2 \AND \Gamma \vdash_I \ENV{F}^1 =_s \ENV{F}^2 }
    { \Gamma \vdash \ENV{S}^1, X:\ENV{F}^1 =_s \ENV{S}^2, X:\ENV{F}^2 }
  \RULE[eq-env$_{SV}$]
    { \Gamma \vdash \ENV{S}^1 =_s \ENV{S}^2 \AND \Delta \vdash \ENV{V}^1 =_S \ENV{V}^2 }
    { \Gamma; \Delta \vdash \ENV{SV}^1 =_s \ENV{SV}^2 }
\end{semantics}
\caption{Rules for the $s$-parameterised equivalence relation.}
\label{fig:env_s_eq}
\end{figure}

\clearpage
\section{Environment consistency}\label{app:environment_consistency}
Having a well-typed stack $Q$ and well-typed environments $\ENV{TSV}$ is unfortunately not enough to ensure that the combination $\CONF{Q, \ENV{TSV}}$ will also be safe to execute.
The names appearing in a stack $Q$ must obviously also appear within the state $\ENV{SV}$, since otherwise the transition might become stuck.
However, well-typedness does not ensure the \emph{existence} of a field or variable in $\ENV{SV}$, so it is entirely possible that two different states $\ENV{SV}^1$ and $\ENV{SV}^2$ can both be well-typed relative to the same $\Sigma; \Gamma; \Delta$, yet nevertheless have no fields or variables in common.
Then e.g.\@ $\CONF{Q, \ENV{TSV}^1}$ might be executable, but $\CONF{Q, \ENV{TSV}^2}$ might not, if the two states do not contain the same names.

Likewise, $\ENV{T}$ and $\ENV{S}$ represent two different parts of the contract declarations, namely the method declarations and the field declarations, respectively.
Yet, well-typedness itself does not ensure that an arbitrary $\ENV{T}$ and $\ENV{S}$ are actually derived from the \emph{same} set of contract declarations.

Furthermore, addresses can not only appear in the domain of $\ENV{S}$, but also as \emph{values} $v$ in $\ENV{SV}$, e.g.\@ as the value of a variable $x$.
Any such address could be `dereferenced' if $x$ is used in an expression or as the object path in a method call.
Well-typedness only ensures that the address has an appropriate interface type, but not that an appropriate \emph{implementation} of the interface also exists on that address.
Moreover, even if a contract $C$ declares that it implements an interface $I$, well-typedness does not ensure that $C$ implements exactly \emph{every member} of $I$.
It only ensures that the members that $C$ \emph{does} implement are in accordance with the corresponding type definitions in $I$.

To handle these problems, we shall need to impose some further consistency checks on the environments $\ENV{TSV}$ and their inner environments $\ENV{M}$ and $\ENV{F}$, containing the methods and fields of each individual contract.
Firstly, we shall define two different notions of free names:

\begin{figure}\centering
\begin{align*}
 \FV{v}                                      & = \EMPTYSET                              \\ 
 \FV{x}                                      & = \SET{x}                                \\ 
 \FV{e.p}                                    & = \FV{e}                                 \\
 \FV{\op(e_1, \ldots, e_n)}                  & = \FV{e_1} \UNION \ldots \UNION \FV{e_n} \\
                                             &                                          \\ 
 \FV{\code{skip}}                            & = \EMPTYSET                              \\
 \FV{\code{throw}}                           & = \EMPTYSET                              \\
 \FV{\code{var $x$ := $e$ in $S$}}           & = \FV{e} \UNION \left(\FV{S} \SETMINUS \SET{x} \right) \\
 \FV{\code{$x$ := $e$}}                      & = \SET{x} \UNION \FV{e}                  \\
 \FV{\code{this.$p$ := $e$}}                 & = \SET{\code{this}} \UNION \FV{e}        \\ 
 \FV{\code{$S_1$; $S_2$}}                    & = \FV{S_1} \UNION \FV{S_2}               \\ 
 \FV{\code{if $e$ then $S_1$ else $S_2$}}    & = \FV{e} \UNION \FV{S_1} \UNION \FV{S_2} \\
 \FV{\code{while $e$ do $S$}}                & = \FV{e} \UNION \FV{S}                   \\ 
 \FV{\CALL{e_1}{m}{e_1', \ldots, e_n'}{e_2}} & = \FV{e_1} \UNION \FV{e_1'} \UNION \ldots \UNION \FV{e_n'} \\ 
                                             & \quad \UNION \FV{e_2} \UNION \FV{m}  \\
 \FV{\DCALL{e}{m}{e_1', \ldots, e_n'}}       & = \FV{e} \UNION \FV{e_1'} \UNION \ldots \UNION \FV{e_n'} \UNION \SET{\code{id}} \\
 \FV{m}                                      & = %
   \begin{cases}
     \SET{ \code{id} } & \text{if $m$ = \code{id}} \\
     \EMPTYSET         & \text{otherwise}
   \end{cases} 
\end{align*}
\caption{The free variable names for expressions and statements.}
\label{fig:free_variable_names}
\end{figure}

\begin{definition}[Free names]
We use the following notation:
\begin{itemize}
 \item We write $\FA{S}$ and $\FA{e}$ for the set of \emph{addresses} $X$ occurring in a statement $S$ and an expression $e$, respectively.
   We omit a formal definition, since addresses cannot be bound, so they can be read directly from the syntax.

 \item We write $\FV{S}$ and $\FV{e}$ for the set of \emph{free variable names} $x$ occurring in a statement $S$ and an expression $e$.
   The formal definition is given in Figure~\ref{fig:free_variable_names}. 
   We assume here that $x$ also ranges over the magic variable names \code{this}, \code{sender}, \code{value}, \code{id}, and \code{args}. \qedhere
\end{itemize}
\end{definition}

\begin{figure}\centering
\begin{semantics}
 \RULE[c-envf$_1$][env_c_envf_empty]
   { }
   { \ENV{S} \vdash \ENV{F}^\EMPTYSET \conseq \EMPTYSET }
 \RULE[c-envf$_2$][env_c_envf_rec]
   { \ENV{S} \vdash \ENV{F}' \conseq \Gamma_F' }
   { \ENV{S} \vdash \ENV{F}', p:v \conseq \Gamma_F', p:\TVAR{B_s} }
   \WHERE{ v \in \ANAMES }{ \implies v \in \DOM{\ENV{S}} }
 \RULE[c-envm$_1$][env_c_envm_empty]
   { }
   { \ENV{M}^\EMPTYSET \conseq \EMPTYSET }
 \RULE[c-envm$_2$][env_c_envm_rec]
   { \ENV{S} \vdash \ENV{M}' \conseq \Gamma_F' }
   { \ENV{S} \vdash \ENV{M}', f:(\VEC{x}, S) \conseq \Gamma_M', f:\TSPROC{\BSVEC}{s} }  
   \WHERE{ \FA{S} }{ \subseteq \DOM{\ENV{S}} }
\end{semantics}
\caption{Consistency rules for $\ENV{F}$ and $\ENV{M}$.}
\label{fig:consistency_rules_envfm}
\end{figure}

Now, in Figure~\ref{fig:consistency_rules_envfm}, we give the rules for a consistency check for $\ENV{F}$ and $\ENV{M}$.
Both are defined in terms of a relation $\conseq$ between a code environment ($\ENV{F}$, resp.\@ $\ENV{M}$) and type environment ($\Gamma_F$, resp.\@ $\Gamma_M$), where the latter represents the interface definitions of fields and methods for a particular contract.
For $J \in \SET{F,M}$, the consistency judgments are of the form 
\begin{equation*}
 \ENV{S} \vdash \ENV{J} \conseq \Gamma_J .
\end{equation*}
In rule \nameref{env_c_envf_rec} we check that each field definition in $\Gamma_F$ has a corresponding field declaration in $\ENV{F}$; furthermore, if the value $v$ stored in the field $p$ is an address, then this address must also be in the domain of $\ENV{S}$.
   The first check ensures that all declared fields also have an implementation, whilst the second check ensures that all addresses appearing as values also contain a contract.
In rule \nameref{env_c_envm_rec} we check that each method signature has a corresponding implementation.
   Furthermore, in the side condition, we require that any free addresses occurring in the method body $S$ must also exist in $\ENV{S}$.
   This ensures that, even if a `hard-coded' address appears in $S$, there must still be a contract on that address.
   This is necessary, since well-typedness alone only ensures that the address has the correct type for its usage, but not that it contains an implementation.

\begin{figure}\centering
\begin{semantics}
 \RULE[c-envv$_1$][env_c_envv_empty]
   { }
   { \ENV{S} \vdash \ENV{V}^\EMPTYSET }
 \RULE[c-envv$_2$][env_c_envv_rec]
   { \ENV{S} \vdash \ENV{V} }
   { \ENV{S} \vdash \ENV{V}, x:v }
   \WHERE{ v \in \ANAMES }{ \implies v \in \DOM{\ENV{S}} }
 \RULE[c-envs$_1$][env_c_envs_empty]
   { }
   { \Gamma; \ENV{S} \vdash \ENV{S}^\EMPTYSET }
 \RULE[c-envs$_2$][env_c_envs_rec]({
 \begin{array}{r @{~} l}
   \Gamma(X)          & = I_s      \\ 
   \Gamma(I)          & = \Gamma_I
 \end{array}
 })
   { \Gamma; \ENV{S} \vdash \ENV{S}' \AND \ENV{S} \vdash \ENV{F} \conseq \Gamma_I |_{\FNAMES} }
   { \Gamma; \ENV{S} \vdash \ENV{S}', X:\ENV{F} }
 \RULE[c-envt$_1$][env_c_envt_empty]
   { }
   { \Gamma; \ENV{S} \vdash \ENV{T}^\EMPTYSET }
 \RULE[c-envt$_2$][env_c_envt_rec]({
 \begin{array}{r @{~} l}
   \Gamma(X)          & = I_s      \\ 
   \Gamma(I)          & = \Gamma_I
 \end{array}
 })
   { \Gamma; \ENV{S} \vdash \ENV{T}' \AND \ENV{S} \vdash \ENV{M} \conseq \Gamma_I |_{\MNAMES} }
   { \Gamma; \ENV{S} \vdash \ENV{T}', X:\ENV{M} }
 \RULE[c-envsv][env_c_envsv]
   { \Gamma; \ENV{S} \vdash \ENV{S} \AND \ENV{S} \vdash \ENV{V} }
   { \Gamma \vdash \ENV{SV} }
 \RULE[c-envtsv][env_c_envtsv]
   { \Gamma; \ENV{S} \vdash \ENV{T} \AND \Gamma; \ENV{S} \vdash \ENV{S} \AND \ENV{S} \vdash \ENV{V} }
   { \Gamma \vdash \ENV{TSV} }
 \WHERE{ \DOM{\ENV{T}} }{ = \DOM{\ENV{S}} }
\end{semantics}
\caption{Consistency rules for the environments $\ENV{TSV}$.}
\label{fig:consistency_rules_envtsv}
\end{figure}

Next, we can define the consistency rules for $\ENV{TSV}$, which are given in Figure~\ref{fig:consistency_rules_envtsv}.
Note that we use the notation $f|_{\SETNAME{D}}$ to denote the \emph{restriction} of the domain of a function $f$ to the set of values in $\SETNAME{D}$.
We use this to extract the field, resp.\@ method, signatures from an inner type environment $\Gamma_I$, corresponding to the declaration of an interface $I$, in rules \nameref{env_c_envs_rec} and \nameref{env_c_envt_rec}, which then invoke the consistency rules from Figure~\ref{fig:consistency_rules_envfm} in their premises.
The rules are otherwise straightforward:
\begin{itemize}
 \item Rules \nameref{env_c_envv_empty} and \nameref{env_c_envv_rec} are used to check the consistency of the variable environment $\ENV{V}$, which simply consists of ensuring that any address appearing as a value also exists in the domain of $\ENV{S}$, similar to the rule \nameref{env_c_envf_rec} above.
   Thus, the judgment is relative to $\ENV{S}$.

 \item Rules \nameref{env_c_envs_empty} and \nameref{env_c_envs_rec} are used to check the consistency of $\ENV{S}$, which contains all field declarations of all contracts.
   It must be judged consistent relative to \emph{itself}, as well as to $\Gamma$, because we need to ensure both that all declared fields have an implementation, \emph{and} that all addresses appearing as values refer to actual contracts.

 \item Rules \nameref{env_c_envt_empty} and \nameref{env_c_envt_rec} are used to check the consistency of $\ENV{T}$, which contains all method declarations of all contracts.
   Like \nameref{env_c_envs_rec}, we extract the relevant method declarations from $\Gamma$ for each collection of declarations $\ENV{M}$, which then must correspond.

 \item Finally, we have the two short-hand rules \nameref{env_c_envsv} and \nameref{env_c_envtsv}, which are used to judge consistency of a collection of environments $\ENV{SV}$, resp.\@ $\ENV{TSV}$, together.
   Notably, in both rules, $\ENV{S}$ must be judged consistent relative to \emph{itself}; moreover, in the latter rule we also require that the domains of $\ENV{T}$ and $\ENV{S}$ must coincide.
   This must be the case, if $\ENV{T}$ and $\ENV{S}$ are derived from the same collection of contracts.
\end{itemize}

Well-typedness and consistency together ensure that all addresses actually contain contracts with appropriate interface types, and that all contracts also implement all members of their interfaces.
Unless noted otherwise, in what follows, we shall only consider consistent environments $\ENV{TSV}$.

\clearpage
\section{Semantic Typing for Expressions}\label{app:semantic_types_expressions}
The typed operational semantic for expressions relies on judgements of the form
$\Sigma; \Gamma; \Delta \esemDash \CONF{e, \ENV{SV}} \trans v$.
This judgement should be read as: $e$ evaluates to the value $v$ of type $B_s$, or a subtype thereof, given $\ENV{SV}$ and the type assumptions in $\Sigma; \Gamma; \Delta$.
The rules are in Figure~\ref{fig:typed_semantics_expressions} and have some quite complicated side conditions.
In particular, to ensure that the coercion property holds, all rules have a side condition of the form $s' \ordleq s$, which ensures that the transition can be concluded for any level $s$ or \emph{higher}.

\begin{figure}[h]\centering
\begin{semantics}
  \RULE[r-val][exp_rt_val]({
  \begin{array}{r @{~} l}
    \TYPEOF{v} & = (B', s') \\ 
    s'         & \ordleq s
  \end{array}
  })
    { \Sigma \vdash B' \SUBS B }
    { \Sigma; \Gamma; \Delta \esemDash \CONF{v, \ENV{SV}} \trans v }
  \RULE[r-var][exp_rt_var]({ 
  \begin{array}{r @{~} l}
    \ENV{V}(x)  & = v               \\
    \TYPEOF{v}  & = (B_1, s_1)      \\
    \Delta(x)   & = \TVAR{B_2, s_2} \\
    s_1 \ordleq & s_2 \ordleq s
  \end{array}
  })
    { \Sigma \vdash B_1 \SUBS B_2 \SUBS B }
    { \Sigma; \Gamma; \Delta \esemDash \CONF{x, \ENV{SV}} \trans v }
  \RULE[r-field][exp_rt_field]({ 
  \begin{array}{r @{~} l} 
    \ENV{S}(X)(p) & = v               \\
    \TYPEOF{v}    & = (B_1, s_1)      \\
    \Gamma(I)(p)  & = \TVAR{B_2, s_2} \\ 
    s_1 \ordleq   & s_2 \ordleq s
  \end{array}
  })
    { 
      \Sigma \vdash B_1 \SUBS B_2 \SUBS B \qquad
      \Sigma; \Gamma; \Delta \esemDash[I_s] \CONF{e, \ENV{SV}} \trans X 
    }
    { \Sigma; \Gamma; \Delta \esemDash \CONF{e.p, \ENV{SV}} \trans v }
  \RULE[r-op][exp_rt_op]({ \vdash \op : \VEC{B} \to B })
    { \Sigma; \Gamma; \Delta \esemDash[\VEC{B}_{s, \ldots, s}] \CONF{\VEC{e}, \ENV{SV}} \trans \VEC{v} \AND \op(\VEC{v}) \trans_{\op} v }
    { \Sigma; \Gamma; \Delta \esemDash \CONF{\op(\VEC{e}), \ENV{SV}} \trans v }
\end{semantics}
\caption{Typed operational semantics for expressions.}
\label{fig:typed_semantics_expressions}
\end{figure}

In \nameref{exp_rt_val}, we are evaluating a hard-coded value, so we use $\TYPEOF{\cdot}$ to obtain its type.
    However, as this value could be an \emph{address} $X$, and its type therefore would be an interface $I$, we need to invoke subtyping on the actual type, in case the runtime base type $B$ in fact were a supertype of $I$, rather than $I$ itself.

In \nameref{exp_rt_var}, we have a similar situation to the one in \nameref{exp_rt_val}, but now we need to invoke subtyping twice:
    first, to ensure that the actual value $v$ in fact was able to be stored in the variable $x$; second, to obtain the runtime base type $B$, in case $v$ were an address and the runtime type were a supertype of its interface.
    The check on the value is necessary, since we cannot know whether $\ENV{V}$ in fact is well-typed, so we must check this at runtime as well whenever an entry is read.    
    Likewise, we must require that the three security levels form a chain $s_1 \ordleq s_2 \ordleq s$; first, we must have that $s_1 \ordleq s_2$ to ensure that the value in fact was of a lower level than the container $x$; second, we must have that $s_2 \ordleq s$ to ensure that the coercion property holds.

Rule \nameref{exp_rt_field} is similar to the situation in \nameref{exp_rt_var}, except that we must now, in addition, have that $e$ evaluates to an address $X$ of type $I_s$.
    Thus, we require a runtime type $I_s$ in the evaluation. 
    Now, because of subtyping and the coercion property, it might be the case that the \emph{actual} type of $X$ were some interface $I'$ and of level $s'$ such that $\Sigma \vdash I' \SUBS I$ and $s' \ordleq s$, but in that case the transition can be concluded for $I_s$ as well.
    Note that, by requiring the runtime level $s$ in the evaluation of $e$, we are in effect ensuring that $e$ can be evaluated to a level that is \emph{lower} than, or equal to, $s$, which ascertains that higher-level parts of the program cannot alter the path to the field being read.

In \nameref{exp_rt_op}, we require the runtime types $\VEC{B}_{s, \ldots s}$ for the operands $\VEC{v}$, and in the side condition we require that $\vdash \op : \VEC{B} \to B$.
    Thus, we make use of two assumptions: first, the safety requirement for operations (Definition~\ref{def:safety_operations}), which assures us that the resulting value $v$ in fact will be of type $B$; second, $v$ cannot be an address, since no operation is allowed to return an address value.
    Hence, we can let the type of $v$, obtained from the signature of $\op$, equal the runtime base type, because $B$ must be one of $\TINT$ or $\TBOOL$, on which subtyping can only be reflexive.
    Furthermore, by Definition~\ref{def:typeof}, the level of $v$ can only be $\SBOT$, and therefore it always holds that the level of $v$ is below, or equal to, the runtime level $s$, since by definition $\SBOT \ordleq s$.
    Lastly, the runtime levels $s, \ldots, s$ required for the operands are the same as the level $s$ in the conclusion.
    In effect, we thus require that all operands can be evaluated at \emph{some} levels $s_1, \ldots, s_n$, which must all be equal to, or lower than, $s$.

We now prove a few results about the types semantics of expressions.
The first two results state that every transition that can be concluded by the typed semantic rules can also be concluded by the `untyped' semantics and that the coercion property for expressions (relying on the ordering of security levels and on the subtyping relation) holds.

\begin{theorem}[Semantic compatibility for expressions]\label{theorem:expressions_semantic_compatibility}
If $\Sigma; \Gamma; \Delta \esemDash \CONF{e, \ENV{SV}} \trans v$, then $\CONF{e, \ENV{SV}} \trans v$. 
\end{theorem}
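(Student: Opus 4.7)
The plan is to prove this by a straightforward structural induction on the derivation of the typed judgement $\Sigma; \Gamma; \Delta \esemDash \CONF{e, \ENV{SV}} \trans v$. The key observation that drives the entire argument is that each typed rule in Figure~\ref{fig:typed_semantics_expressions} has a direct untyped counterpart in Figure~\ref{fig:semantics_expressions}, obtained by erasing the type-related side conditions and (in recursive cases) the type annotations on the subderivation turnstiles. So every case reduces to simply discarding the extra premises and applying the corresponding untyped rule.

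Concretely, I would proceed case-split on the last rule used: for \nameref{exp_rt_val}, the untyped rule \nameref{exp_sem_val} has no premises at all, so the conclusion $\CONF{v, \ENV{SV}} \trans v$ is immediate. For \nameref{exp_rt_var}, the hypothesis already contains $\ENV{V}(x) = v$, which is precisely what \nameref{exp_sem_var} requires. For \nameref{exp_rt_field}, I would apply the induction hypothesis to the subderivation $\Sigma; \Gamma; \Delta \esemDash[I_s] \CONF{e, \ENV{SV}} \trans X$ to obtain $\CONF{e, \ENV{SV}} \trans X$, and combine this with the lookup $\ENV{S}(X)(p) = v$ (taken from the side condition) to apply \nameref{exp_sem_field}. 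For \nameref{exp_rt_op}, induction on each operand subderivation yields $\CONF{\VEC{e}, \ENV{SV}} \trans \VEC{v}$, which together with the premise $\op(\VEC{v}) \trans_{\op} v$ gives the conclusion via \nameref{exp_sem_op}.

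I do not anticipate any real obstacle: the theorem is essentially a sanity check saying that the typed semantics only restricts, and never enlarges, the set of admissible transitions. The only minor subtlety is making sure that when a subderivation in the typed semantics carries a refined runtime type (e.g.\ $I_s$ in \nameref{exp_rt_field} or $\VEC{B}_{s,\ldots,s}$ in \nameref{exp_rt_op}), the induction hypothesis is still applicable, which it is because the statement quantifies uniformly over all type annotations on $\esemDash$. Since the proof is entirely mechanical, a short case analysis suffices and no auxiliary lemmas are needed.
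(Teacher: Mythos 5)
Your proposal is correct and matches the paper's own proof essentially verbatim: both proceed by induction on the typed derivation, with the same four cases (\nameref{exp_rt_val}, \nameref{exp_rt_var}, \nameref{exp_rt_field}, \nameref{exp_rt_op}) each concluded by dropping the type-related side conditions and applying the corresponding untyped rule, using the induction hypothesis on subderivations for the field and operator cases. No gaps.
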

\label{proof:expressions_semantic_compatibility}
\begin{proof}
By induction on the inference for $\Sigma; \Gamma; \Delta \esemDash \CONF{e, \ENV{SV}} \trans v$.
There are two base cases:
\begin{itemize}
  \item Suppose \nameref{exp_rt_val} was used; then, $e = v$. As there are no premises or side conditions, $\CONF{v, \ENV{SV}} \trans v$ can then be concluded by \nameref{exp_sem_val}.

  \item Suppose \nameref{exp_rt_var} was used; then, $e = x$ and, from the side condition of this rule, we know that $\ENV{V}(x) = v$.
    This satisfies the side condition of \nameref{exp_sem_var}, which then allows us to conclude $\CONF{x, \ENV{SV}} \trans v$.

\end{itemize}
There are two inductive cases:
\begin{itemize}
  \item Suppose \nameref{exp_rt_field} was used; then $e = e'.p$ and, from the premise and side conditions, we know that 
    \begin{align*}
      \Sigma; \Gamma; \Delta & \esemDash[I_s] \CONF{e', \ENV{SV}} \trans X \\
      \ENV{S}(X)(p)          & = v 
    \end{align*}
By the induction hypothesis, the first fact implies that
    \begin{equation*}
 \CONF{e', \ENV{SV}} \trans X
    \end{equation*}
This, together with $\ENV{S}(X)(p) = v$ and rule \nameref{exp_sem_field}, allows us to  conclude that $\CONF{e.p, \ENV{SV}} \trans v$.

  \item Suppose \nameref{exp_rt_op} was used; then, $e=\op(\VEC{e})$ and, from the premises of the rule, we have that 
    \begin{align*}
      \Sigma; \Gamma; \Delta & \esemDash[(B_1, s)] \CONF{e_1, \ENV{SV}} \trans v_1 \\
                             & \vdots                                              \\
      \Sigma; \Gamma; \Delta & \esemDash[(B_n, s)] \CONF{e_n, \ENV{SV}} \trans v_n \\
                             & \op(v_1, \ldots, v_n) \trans_{\op} v 
    \end{align*}
By the induction hypothesis on each of the $n$ transitions, we obtain
$$
\CONF{e_1, \ENV{SV}} \trans v_1
      \qquad \ldots \qquad
\CONF{e_n, \ENV{SV}} \trans v_n
 $$
 These, together with $\op(v_1, \ldots, v_n) \trans_{\op} v$ and rule \nameref{exp_sem_op}, allow us to conclude that $\CONF{\op(\VEC{e}), \ENV{SV}} \trans v$. \qedhere
\end{itemize}
\end{proof}

\begin{theorem}[Coercion property for expressions]\label{theorem:expressions_coercion_property}
If $\Sigma \vdash B_1 \SUBS B_2$ and $s_1 \ordleq s_2$, then
$\Sigma; \Gamma; \Delta \esemDash[(B_1, s_1)] \CONF{e, \ENV{SV}} \trans v$ implies that
$\Sigma; \Gamma; \Delta \esemDash[(B_2, s_2)] \CONF{e, \ENV{SV}} \trans v$.
\end{theorem}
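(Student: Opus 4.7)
The plan is to proceed by induction on the derivation of $\Sigma; \Gamma; \Delta \esemDash[(B_1, s_1)] \CONF{e, \ENV{SV}} \trans v$, doing one case for each of the four rules \nameref{exp_rt_val}, \nameref{exp_rt_var}, \nameref{exp_rt_field}, \nameref{exp_rt_op} in Figure~\ref{fig:typed_semantics_expressions}. In every case, the required target derivation at $(B_2, s_2)$ will be obtained by re-applying the \emph{same} rule, with side conditions discharged via transitivity of $\SUBS$ (rule \nameref{typerules_sub_trans}) and of $\ordleq$ (a lattice ordering), and premises involving subexpressions discharged by the induction hypothesis.

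Concretely, for \nameref{exp_rt_val} we are given $\TYPEOF{v} = (B', s')$, $s' \ordleq s_1$, and $\Sigma \vdash B' \SUBS B_1$; then $s' \ordleq s_2$ follows from $s_1 \ordleq s_2$ and transitivity, and $\Sigma \vdash B' \SUBS B_2$ from $\Sigma \vdash B_1 \SUBS B_2$ and transitivity, so \nameref{exp_rt_val} re-fires with target $(B_2, s_2)$. The case \nameref{exp_rt_var} is identical in spirit, only involving one more chained subtyping step $B_1' \SUBS B_2' \SUBS B_1 \SUBS B_2$ and the chain $s_1' \ordleq s_2' \ordleq s_1 \ordleq s_2$. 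For \nameref{exp_rt_field}, the analogous discharges work on the inner types, and the only non-trivial point is the premise $\Sigma; \Gamma; \Delta \esemDash[I_{s_1}] \CONF{e, \ENV{SV}} \trans X$: by the induction hypothesis applied with $\Sigma \vdash I \SUBS I$ (reflexivity, rule \nameref{typerules_sub_refl}) and $s_1 \ordleq s_2$, this upgrades to $\Sigma; \Gamma; \Delta \esemDash[I_{s_2}] \CONF{e, \ENV{SV}} \trans X$ as required.

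The \nameref{exp_rt_op} case is the only one that needs a small extra observation. The rule fixes an operator signature $\vdash \op : \VEC{B'} \to B'$ and demands that the conclusion base type coincide with the signature output. To re-fire the rule at target $(B_2, s_2)$, we appeal to Definition~\ref{def:safety_operations}: operators only return values whose base type is $\TINT$ or $\TBOOL$, and on such base types the subtyping relation $\SUBS$ is only reflexive (interfaces, $\TIDF$, $\TINT$, $\TBOOL$ being pairwise incomparable under the rules \nameref{typerules_sub_refl}, \nameref{typerules_sub_trans}, \nameref{typerules_sub_name}). Hence $\Sigma \vdash B' \SUBS B_1 \SUBS B_2$ forces $B_2 = B' = B_1$, so the operator's typing side condition is unchanged. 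The premises $\Sigma; \Gamma; \Delta \esemDash[(B_i', s_1)] \CONF{e_i, \ENV{SV}} \trans v_i$ then each upgrade to level $s_2$ by the induction hypothesis (with $B_i' \SUBS B_i'$ reflexively), and \nameref{exp_rt_op} re-fires.

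The anticipated main obstacle is thus not inductive bookkeeping but the pedantic verification in the \nameref{exp_rt_op} case that the signature-imposed equality of the output type with the conclusion type is compatible with coercion; this relies crucially on the restriction that operators cannot return address values, part of Definition~\ref{def:safety_operations}. Once this is observed, the rest of the argument is a routine chase through transitivity of $\SUBS$ and $\ordleq$.
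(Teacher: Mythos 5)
Your proof is correct and follows essentially the same route as the paper's: induction on the derivation, re-firing the same rule in each case via transitivity of $\SUBS$ and $\ordleq$, with the only delicate point being the \nameref{exp_rt_op} case, where — exactly as in the paper — the fact that operators return only $\TINT$ or $\TBOOL$ values (on which subtyping is reflexive) forces $B_1 = B_2$ and keeps the signature side condition intact.
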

\begin{proof}
By induction on the inference for $\Sigma; \Gamma; \Delta \esemDash[(B_1, s_1)] \CONF{e, \ENV{SV}} \trans v$.
There are two base cases:
\begin{itemize}
  \item If \nameref{exp_rt_val} was used, then $e=v$; from the premise and side conditions, we know that 
    \begin{align*}
      \TYPEOF{v} & = (B', s')          \\
      \Sigma     & \vdash B' \SUBS B_1 \\
      s'         & \ordleq s_1
    \end{align*}
     By assumption, $\Sigma \vdash B_1 \SUBS B_2$ and $s_1 \ordleq s_2$, so, by transitivity, we have that 
    \begin{align*}
      \Sigma & \vdash B' \SUBS B_2 \\ 
      s'     & \ordleq s_2 
    \end{align*}
 This satisfies the premises of \nameref{exp_rt_val}, so we can conclude that
    \begin{equation*}
      \Sigma; \Gamma; \Delta \esemDash[(B_2, s_2)] \CONF{v, \ENV{SV}} \trans v 
    \end{equation*}

  \item If \nameref{exp_rt_var} was used, then $e=x$; from the premise and side conditions, we know that 
    \begin{align*}
      \ENV{V}(x)   & = v                              \\
      \TYPEOF{v}   & = (B_1', s_1')                   \\
      \Delta(x)    & = \TVAR{B_2', s_2'}              \\
      \Sigma       & \vdash B_1'  \SUBS B_2' \SUBS B_1 \\
      s_1' \ordleq & \ s_2' \ordleq s_1 
    \end{align*}
 By assumption, $\Sigma \vdash B_1 \SUBS B_2$ and $s_1 \ordleq s_2$, so by transitivity we have that 
    \begin{align*}
      \Sigma       & \vdash B_1' \SUBS B_2' \SUBS B_2 \\
      s_1' \ordleq &\ s_2' \ordleq s_2 
    \end{align*}
 This satisfies the premises of \nameref{exp_rt_var}, so we can conclude that
    \begin{equation*}    
      \Sigma; \Gamma; \Delta \esemDash[(B_2, s_2)] \CONF{x, \ENV{SV}} \trans v 
    \end{equation*}

\end{itemize}
There are two inductive cases:
\begin{itemize}
  \item If \nameref{exp_rt_field} was used, then $e=e'.p$; from the premise and side conditions, we know that 
    \begin{align*}
      \Sigma; \Gamma; \Delta & \esemDash[(I, s_1)] \CONF{e', \ENV{SV}} \trans X \\
      \ENV{S}(X)(p)          & = v                                             \\
      \TYPEOF{v}             & = (B_1', s_1')                                  \\
      \Gamma(I)(p)           & = \TVAR{B_2', s_2'}                             \\ 
      \Sigma                 & \vdash B_1' \SUBS B_2' \SUBS B_1                \\
      s_1' \ordleq           &\ s_2' \ordleq s_1                  
    \end{align*}
 By assumption, $\Sigma \vdash B_1 \SUBS B_2$ and $s_1 \ordleq s_2$, so by transitivity we have that 
    \begin{align*}
      \Sigma & \vdash B_2' \SUBS B_2 \\
      s_2'   & \ordleq s_2 
    \end{align*}
 By the induction hypothesis, 
    \begin{equation*}
      \Sigma; \Gamma; \Delta \esemDash[(I, s_2)] \CONF{e', \ENV{SV}} \trans X 
    \end{equation*}
 can be concluded as well.
    This satisfies the premises of \nameref{exp_rt_field}, so
    \begin{equation*}    
      \Sigma; \Gamma; \Delta \esemDash[(B_2, s_2)] \CONF{e'.p, \ENV{SV}} \trans v 
    \end{equation*}

  \item If \nameref{exp_rt_op} was used, then $e=\op(e_1, \ldots, e_n)$; from the premise and side conditions, we know that 
    \begin{align*}
      \Sigma; \Gamma; \Delta & \esemDash[(B_1', s_1)] \CONF{e_1, \ENV{SV}} \trans v_1 \\
                             & \vdots                                                 \\
      \Sigma; \Gamma; \Delta & \esemDash[(B_n', s_1)] \CONF{e_n, \ENV{SV}} \trans v_n \\
                             & \op(\VEC{v}) \trans_{\op} v                            \\
                             & \vdash \op : B_1', \ldots B_n' \to B_1
    \end{align*}
 By assumption, $s_1 \ordleq s_2$; thus, by the induction hypothesis, 
$$
      \Sigma; \Gamma; \Delta \esemDash[(B_1', s_2)] \CONF{e_1, \ENV{SV}} \trans v_1 
 \qquad \ldots \qquad
      \Sigma; \Gamma; \Delta \esemDash[(B_n', s_2)] \CONF{e_n, \ENV{SV}} \trans v_n
$$
Furthermore, no operation $\op$ can yield any other type of value than $\TBOOL$ or $\TINT$, for which the only form of subtyping, that can be concluded, is by reflexivity (rule \nameref{typerules_sub_refl}).
    Thus, since $\Sigma \vdash B_1 \SUBS B_2$ by assumption, we know that $B_1 = B_2$.
    Therefore, 
    \begin{equation*}
      \vdash \op : B_1', \ldots B_n' \to B_2
    \end{equation*}
also holds.
    This satisfies the premises of \nameref{exp_rt_op}, so 
    \begin{equation*}
      \Sigma; \Gamma; \Delta \esemDash[(B_2, s_2)] \CONF{\op(e_1, \ldots, e_n), \ENV{SV}} \trans v 
\qedhere
    \end{equation*}
\end{itemize}
\end{proof}

An expression configuration $\CONF{e, \ENV{SV}}$ is \emph{safe} precisely when it yields a value $v$ by the typed semantics; i.e.\@ the rules ensure both that $v$ indeed is of type $B$, or a subtype thereof, and that all containers \emph{read from} in the evaluation of $e$ are of security level $s$ or lower.
This precisely captures the intended meaning of safety implied by the expression type $B_s$.
Again, we can quite easily show that this property holds for transitions in the typed semantics:

\begin{theorem}[Runtime safety for expressions]\label{theorem:expressions_runtime_safety}
If $\Sigma; \Gamma; \Delta \esemDash \CONF{e, \ENV{SV}} \trans v$, then $\TYPEOF{v} = (B_1, s_1)$, for some $B_1$ and $s_1$ such that: (1) $\Sigma \vdash B_1 \SUBS B$; (2) $s_1 \ordleq s$; and (3) every container read from in the evaluation of $e$ is of type $\TVAR{B_2, s_2}$, with $s_2 \ordleq s$.
\end{theorem}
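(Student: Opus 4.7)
The plan is to proceed by straightforward structural induction on the derivation of the typed expression judgement $\Sigma; \Gamma; \Delta \esemDash \CONF{e, \ENV{SV}} \trans v$, examining the four possible last rules (\nameref{exp_rt_val}, \nameref{exp_rt_var}, \nameref{exp_rt_field}, \nameref{exp_rt_op}) in Figure~\ref{fig:typed_semantics_expressions}. In each case, the three conclusions (the subtyping statement on $B_1$, the ordering statement on $s_1$, and the invariant on containers read during evaluation) will be read almost directly off the side conditions of the rule, with a single use of transitivity of $\SUBS$ and $\ordleq$ to collapse the chains appearing in the premises.

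For the base cases, rule \nameref{exp_rt_val} gives $\TYPEOF{v} = (B',s')$ with $\Sigma \vdash B' \SUBS B$ and $s' \ordleq s$ directly, and the set of containers read is empty, so condition (3) is vacuous. Rule \nameref{exp_rt_var} yields $\TYPEOF{v} = (B_1,s_1)$ with $\Sigma \vdash B_1 \SUBS B_2 \SUBS B$ and $s_1 \ordleq s_2 \ordleq s$, so (1) and (2) follow by transitivity; the sole container read is $x$, of type $\TVAR{B_2,s_2}$ with $s_2 \ordleq s$, giving (3).

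For the inductive cases, rule \nameref{exp_rt_field} adds one more container read ($X.p$, of type $\TVAR{B_2,s_2}$ with $s_2 \ordleq s$) on top of those contributed by the evaluation of $e$: the induction hypothesis applied to the premise $\Sigma; \Gamma; \Delta \esemDash[I_s] \CONF{e, \ENV{SV}} \trans X$ gives that all containers read during the evaluation of $e$ are of level at most $s$, and the rest is again transitivity. Rule \nameref{exp_rt_op} is the one case where the result requires an external ingredient: from the premises $\Sigma; \Gamma; \Delta \esemDash[(B_i,s)] \CONF{e_i, \ENV{SV}} \trans v_i$ we obtain by induction that $\TYPEOF{v_i} = (B_i',s_i')$ with $\Sigma \vdash B_i' \SUBS B_i$; we then invoke the safety requirement for operations (Definition~\ref{def:safety_operations}) to conclude $\TYPEOF{v} = (B,\SBOT)$, whence (1) is reflexive subtyping and (2) holds since $\SBOT \ordleq s$ universally.

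The only mildly delicate point is the \nameref{exp_rt_op} case, because Definition~\ref{def:safety_operations} is stated for arguments whose $\TYPEOF$ matches the operator signature exactly, whereas the induction hypothesis only gives us subtypes. This is harmless for $B_i \in \{\TINT,\TBOOL\}$ (subtyping there is only reflexive), and for interface arguments the assumption that addresses in $\VEC{v}$ have entries in $\Gamma$ suffices; I expect this to be the one spot where a short explicit argument is needed, but no genuine obstacle arises.
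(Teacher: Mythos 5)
Your proof follows essentially the same route as the paper's: structural induction on the typed evaluation derivation, reading conditions (1)--(3) off the side conditions of each rule, closing the base and field cases with transitivity of $\SUBS$ and $\ordleq$, and invoking Definition~\ref{def:safety_operations} together with $\SBOT \ordleq s$ for the \nameref{exp_rt_op} case. Your closing remark about the mismatch between the exact types in Definition~\ref{def:safety_operations} and the subtypes delivered by the induction hypothesis is a fair observation that the paper's proof glosses over, but it does not change the structure of the argument.
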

\begin{proof}
By induction on the inference of $\Sigma; \Gamma; \Delta \esemDash \CONF{e, \ENV{SV}} \trans v$.
There are two base cases:
\begin{itemize}
  \item Suppose the transition was concluded by \nameref{exp_rt_val}; then, $e=v$. As no variables or fields are evaluated, the requirement on their security level is vacuously satisfied.
    From the premise and side conditions, we directly obtain the desired 
    \begin{align*}
      \TYPEOF{v} & = (B_1, s_1)       \\ 
      \Sigma     & \vdash B_1 \SUBS B \\
      s_1        & \ordleq s
    \end{align*}

  \item Suppose the transition was concluded by \nameref{exp_rt_var}; then, $e=x$. From the premise and side conditions, we have that 
    \begin{align*}
      \ENV{V}(x)  & = v                          \\
      \TYPEOF{v}  & = (B_1, s_1)                 \\
      \Delta(x)   & = \TVAR{B_2, s_2}            \\
      \Sigma      & \vdash B_1 \SUBS B_2 \SUBS B \\
      s_1 \ordleq &\ s_2 \ordleq s
    \end{align*}
Hence, we read from a container, which has type $\TVAR{B_2, s_2}$ for $s_2 \ordleq s$, as required. 
    Furthermore, we can conclude $s_1 \ordleq s$ and $\Sigma \vdash B_1 \SUBS B$ as well by transitivity of the two relations.
\end{itemize}
There are two inductive cases:
\begin{itemize}

  \item Suppose the transition was concluded by \nameref{exp_rt_field}; then, $e=e'.p$. From the premise and side conditions, we have that
    \begin{align*}
      \Sigma; \Gamma; \Delta & \esemDash[I_s] \CONF{e', \ENV{SV}} \trans X \\
      \ENV{S}(X)(p)          & = v               \\
      \TYPEOF{v}             & = (B_1, s_1)      \\
      \Gamma(I)(p)           & = \TVAR{B_2, s_2} \\ 
      \Sigma                 & \vdash B_1 \SUBS B_2 \SUBS B \\
      s_1 \ordleq            & s_2 \ordleq s
    \end{align*}
By the induction hypothesis, the statement of this Lemma holds for the transition 
    \begin{equation*}
      \Sigma; \Gamma; \Delta \esemDash[I_s] \CONF{e', \ENV{SV}} \trans X  
    \end{equation*}
In particular, all containers read from in the evaluation of $e'$ are of level $s$ or lower.
Moreover, the field $p$ has type $\TVAR{B_2, s_2}$, and so the requirement $s_2 \ordleq s$ holds.
Finally, we can conclude $s_1 \ordleq s$ and $\Sigma \vdash B_1 \SUBS B$ as well, by transitivity of the two relations. 

  \item Suppose the transition was concluded by \nameref{exp_rt_op}; then $e=\op(e_1, \ldots, e_n)$. From the premise and side conditions, we have that 
    \begin{align*}
      \Sigma; \Gamma; \Delta & \esemDash[(B_1, s)] \CONF{e_1, \ENV{SV}} \trans v_1 \\ 
                             & \vdots                                              \\
      \Sigma; \Gamma; \Delta & \esemDash[(B_n, s)] \CONF{e_n, \ENV{SV}} \trans v_n \\
                             & \op(\VEC{v}) \trans_{\op} v                         \\
                             & \vdash \op : \VEC{B} \to B
    \end{align*}

    \noindent By $n$ applications of the induction hypothesis, we know that the statement of this Lemma holds for the $n$ transitions above.
Hence, no container of a level higher than $s$ is read from in the evaluations of $e_1$, \ldots, $e_n$.
    Finally, we know that $\TYPEOF{v} = (B, \SBOT)$ by the safety requirement for operations (Definition~\ref{def:safety_operations}), and $\SBOT \ordleq s$ by definition. \qedhere
\end{itemize}
\end{proof}


\begin{corollary}[Runtime non-interference for expressions]
\label{corollary:expressions_runtime_noninterference}
If $s_1 \ordleq s_2$, $\Gamma; \Delta \vdash \ENV{SV}^1 =_{s_2} \ENV{SV}^2$, 
$\Sigma; \Gamma; \Delta \esemDash[{B_{s_1}}] \CONF{e, \ENV{SV}^1} \trans v_1$, and 
$\Sigma; \Gamma; \Delta \esemDash[{B_{s_1}}] \CONF{e, \ENV{SV}^2} \trans v_2$,
then $v_1 = v_2$.
\end{corollary}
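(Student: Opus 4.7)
The plan is to prove the corollary by structural induction on the expression $e$ (equivalently, on the inference of either typed evaluation). The key leverage is Theorem~\ref{theorem:expressions_runtime_safety}: every container that contributes to the evaluation of $e$ at type $B_{s_1}$ must itself have a security level $s' \ordleq s_1$. Since by assumption $s_1 \ordleq s_2$, transitivity gives $s' \ordleq s_2$, so by the definition of $s_2$-equivalence (rules \nameref{env_s_eq_envv}, \nameref{env_s_eq_envf} and \nameref{env_s_eq_envs}), every such container holds the same value in $\ENV{SV}^1$ and $\ENV{SV}^2$. This is precisely what drives each case.

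For the base cases, if $e = v$ then both evaluations reduce by \nameref{exp_rt_val} to $v$, so $v_1 = v_2 = v$. If $e = x$, rule \nameref{exp_rt_var} yields $v_i = \ENV{V}^i(x)$ and forces $\Delta(x) = \TVAR{B',s'}$ with $s' \ordleq s_1 \ordleq s_2$; hence $\ENV{V}^1(x) = \ENV{V}^2(x)$ by $s_2$-equivalence on variables. For the inductive cases, when $e = \op(\VEC{e})$, the premises of \nameref{exp_rt_op} provide typed evaluations of each $e_i$ at the same level $s_1$ in both environments, so the induction hypothesis gives componentwise equality of the argument tuples; determinism of $\trans_{\op}$ then forces $v_1 = v_2$. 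When $e = e'.p$, both derivations by \nameref{exp_rt_field} first evaluate $e'$ at type $I_{s_1}$ to addresses $X_1$ and $X_2$; the induction hypothesis applied to this sub-evaluation gives $X_1 = X_2$, and since $\Gamma(I)(p) = \TVAR{B_2,s_2'}$ with $s_2' \ordleq s_1 \ordleq s_2$, $s_2$-equivalence on the state yields $\ENV{S}^1(X)(p) = \ENV{S}^2(X)(p)$, so $v_1 = v_2$.

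The only step that is not completely routine is the field case: one must resist the temptation to invoke $s_2$-equivalence directly on the state, because the pair of read fields is indexed by a \emph{computed} address, and $s_2$-equivalence says nothing about entries indexed by distinct addresses. The correct move is to first discharge the sub-evaluation by the induction hypothesis, obtaining $X_1 = X_2$, and only then to apply $s_2$-equivalence to the field environment at that common address. Once this is handled, the remainder is a direct bookkeeping argument exploiting transitivity of $\ordleq$ together with the guarantees of Theorem~\ref{theorem:expressions_runtime_safety}.
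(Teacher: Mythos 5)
Your proof is correct and follows essentially the same route as the paper's: the paper's own argument is a two-line appeal to Theorem~\ref{theorem:expressions_runtime_safety} (every container read in the evaluation is at level $\ordleq s_1 \ordleq s_2$, hence holds the same value in both states by $s_2$-equivalence, so the results coincide), and your structural induction simply makes that appeal precise case by case. Your remark about the field case --- first obtaining $X_1 = X_2$ from the induction hypothesis and only then invoking $s_2$-equivalence on the field environment at that common address --- is exactly the detail the paper's terse proof leaves implicit.
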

\begin{proof}
By assumption, all values of entries of level $s_2$ or lower in $\ENV{SV}^1$ and $\ENV{SV}^2$ agree; thus, by Theorem~\ref{theorem:expressions_runtime_safety}, only entries of level $s_1$ or lower are read in the transitions, since $s_1 \ordleq s_2$. Thus, the two expressions use the same values and therefore also produce the same result.
\end{proof}

Theorem~\ref{theorem:expressions_runtime_safety} ensures that, if an expression $e$ can be evaluated at a given type, then the resulting value $v$ will indeed be of that type.
However, suppose $v$ is an address $X$.
It is then technically possible that $X$ could have a type in $\Gamma$, but not actually contain an \emph{implementation} in $\ENV{S}$.
This trivial error is easily prevented by also requiring well-typedness and consistency of $\ENV{SV}$, and that $e$ does not contain any free names.

\begin{corollary}[Consistency of expression values]
\label{corollary:expressions_consistency_values}
If $\Sigma; \Gamma; \Delta \vdash \ENV{SV}$ (well-typedness), $\Gamma \vdash \ENV{SV}$ (consistency), and $\Sigma; \Gamma; \Delta \esemDash \CONF{e, \ENV{SV}} \trans v$,
then $v \in \ANAMES$ implies that $v \in \DOM{\ENV{S}}$.
\end{corollary}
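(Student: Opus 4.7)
The plan is to proceed by induction on the derivation of $\Sigma; \Gamma; \Delta \esemDash \CONF{e, \ENV{SV}} \trans v$, examining the four rules that define the typed evaluation of expressions. The hypothesis $v \in \ANAMES$ is relevant only in the cases where the produced value can actually be an address, so most cases will be either immediate or vacuous.

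For rule \nameref{exp_rt_var}, we have $v = \ENV{V}(x)$ for some $x \in \DOM{\ENV{V}}$. From $\Gamma \vdash \ENV{SV}$ together with rule \nameref{env_c_envsv} we extract $\ENV{S} \vdash \ENV{V}$; if $v \in \ANAMES$, a straightforward sub-induction on the derivation of $\ENV{S} \vdash \ENV{V}$, using the side condition of \nameref{env_c_envv_rec}, yields $v \in \DOM{\ENV{S}}$. The \nameref{exp_rt_field} case is analogous: by the induction hypothesis, the subexpression produces some $X \in \DOM{\ENV{S}}$, and then $v = \ENV{S}(X)(p)$; the consistency of the inner field environment $\ENV{F}$ inside $\ENV{S}$ (rule \nameref{env_c_envf_rec}, triggered via \nameref{env_c_envs_rec}) enforces that any address stored in a field belongs to $\DOM{\ENV{S}}$. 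For rule \nameref{exp_rt_op}, the safety requirement for operations (Definition~\ref{def:safety_operations}) forces $\TYPEOF{v} = (B, \SBOT)$ with $B \in \SET{\TINT, \TBOOL}$, so $v$ cannot be an address and the conclusion holds vacuously.

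The subtle step is rule \nameref{exp_rt_val}, where $e = v$ is a literal: the rule's side condition $\TYPEOF{v} = (B', s')$ only forces $v \in \DOM{\Gamma}$ when $v \in \ANAMES$, and consistency of $\ENV{SV}$ alone does not give the reverse inclusion $\DOM{\Gamma} \cap \ANAMES \subseteq \DOM{\ENV{S}}$. This is precisely why the preceding paragraph states the additional requirement that $e$ contain no free names: under $\FA{e} = \emptyset$ the literal-address subcase of \nameref{exp_rt_val} cannot arise, so the obligation is vacuous. An equivalent route is to run the argument relative to a consistent full triplet $\Gamma \vdash \ENV{TSV}$ (rule \nameref{env_c_envtsv}), where the matching $\DOM{\ENV{T}} = \DOM{\ENV{S}}$ together with well-formedness of $\Gamma$ bridges the gap. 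I expect this literal-value case to be the only real obstacle; once it is handled via the free-names hypothesis, the remaining cases are a routine unpacking of the consistency rules.
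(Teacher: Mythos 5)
Your proof is correct, but it takes a genuinely different (and more fine-grained) route than the paper's. The paper argues in three lines: by Theorem~\ref{theorem:expressions_runtime_safety} the value $v$ has a type, so if $v \in \ANAMES$ then Definition~\ref{def:typeof} forces $v \in \DOM{\Gamma}$, and it then asserts that well-typedness and consistency together guarantee that every address with an interface type in $\Gamma$ also has an implementation in $\ENV{S}$. You instead induct on the evaluation derivation and trace the \emph{provenance} of $v$: inversion of \nameref{env_c_envsv} and the side condition of \nameref{env_c_envv_rec} for variables, \nameref{env_c_envs_rec} and \nameref{env_c_envf_rec} for fields, and Definition~\ref{def:safety_operations} for operators. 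What your decomposition buys is that it isolates exactly the case the paper's argument glosses over: for a literal address in \nameref{exp_rt_val}, the consistency rules as written only check that the entries of $\ENV{S}$ conform to $\Gamma$, not the converse inclusion $\DOM{\Gamma} \INTERSECT \ANAMES \subseteq \DOM{\ENV{S}}$, so the paper's global claim does not literally follow from the stated rules (it does hold for environments built via Definition~\ref{def:construction_envsv}, whose state domain is exactly $\DNAMES = \DOM{\Gamma} \INTERSECT \ANAMES$). Your repair via the free-names condition from the surrounding prose (in the form $\FA{e} \subseteq \DOM{\ENV{S}}$) is the right fix and matches how the corollary is actually invoked downstream, e.g.\@ in Corollary~\ref{corollary:expressions_static_soundness}; the cost is that you need a hypothesis the formal statement omits, whereas the paper's one-shot argument is uniform across all four rules but rests on an unproved domain claim.
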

\begin{proof}
By Theorem~\ref{theorem:expressions_runtime_safety}, we know that $\TYPEOF{v} = (B_1, s_1)$ such that $\Sigma \vdash B_1 \SUBS B$ and $s_1 \ordleq s$.
Thus $v$ is ensured to have a type.
If $v$ is an address $X$, then by the definition of $\TYPEOF{\cdot}$ (Definition~\ref{def:typeof}), $X$ must be an interface type $I$ from $\Gamma$.
Well-typedness and consistency of $\ENV{SV}$ then together ensure that every interface $I$ in $\Gamma$ also must have an implementation in $\ENV{S}$.
Thus we conclude that $v \in \DOM{\ENV{S}}$. 
\end{proof}


We now show a result that allows us to abstract away from the specific values stored in the state environment $\ENV{SV}$.
This ensures that, if an expression $e$ has a typed transition for some appropriately shaped $\ENV{SV}^1$, then any other similarly shaped $\ENV{SV}^2$ will also yield a transition.
Indeed, given some $\Sigma$, $\Gamma$ and $\Delta$, in Definition~\ref{def:construction_envsv} we show how to construct environments $\ENV{SV}$ such that well-typedness and consistency are ensured to hold, and thereby abstract away from a particular choice of these environments. 
Now, using the construction of Definition~\ref{def:construction_envsv}, we can now abstract away from the specific values used in the evaluation of an expression configuration $\CONF{e, \ENV{SV}}$ and thereby obtain a result about the safety of the expression $e$ itself:

\begin{theorem}[Safe state abstraction]\label{theorem:expressions_runtime_states_abstraction}
For all $\Sigma$, $\Gamma$ and $\Delta$, if $\ENV{SV}^1$ and $\ENV{SV}^2$ are built according to Definition~\ref{def:construction_envsv} and
$\Sigma; \Gamma; \Delta \esemDash \CONF{e, \ENV{SV}^1} \trans v_1$, then there exists a $v_2$ such that
$\Sigma; \Gamma; \Delta \esemDash \CONF{e, \ENV{SV}^2} \trans v_2$.
\end{theorem}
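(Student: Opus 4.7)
The plan is to proceed by induction on the derivation of $\Sigma; \Gamma; \Delta \esemDash \CONF{e, \ENV{SV}^1} \trans v_1$, using the fact that (by Lemma~\ref{lemma:construction_envsv}) any $\ENV{SV}$ built per Definition~\ref{def:construction_envsv} is both well-typed and consistent relative to $\Sigma$, $\Gamma$, $\Delta$. In each case I will exhibit a witness $v_2$ from $\ENV{SV}^2$ and verify that the corresponding typed rule re-applies.

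The base case \nameref{exp_rt_val} is immediate: $e$ is a literal $v_1$, so I take $v_2 \DEFSYM v_1$; the side conditions involve only $\TYPEOF{v_1}$ and the subtyping/ordering data in $\Sigma$ and $\Gamma$, none of which depend on the state environment. For the base case \nameref{exp_rt_var} with $e = x$, I observe that by Definition~\ref{def:construction_envsv} the variable environments $\ENV{V}^1$ and $\ENV{V}^2$ share the same domain $\DOM{\Delta}$, and each entry stores a value whose type is guaranteed to be a subtype of, and at a level below, the declared $\Delta(x)$. Hence $v_2 \DEFSYM \ENV{V}^2(x)$ exists and meets the premises of \nameref{exp_rt_var} by transitivity of $\SUBS$ and $\ordleq$.

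For the inductive case \nameref{exp_rt_field} with $e = e'.p$, the induction hypothesis applied to the evaluation of $e'$ yields some address $X_2$ obtained in $\ENV{SV}^2$. I then need to argue two things. First, that $X_2 \in \DOM{\ENV{S}^2}$: this holds because Definition~\ref{def:construction_envsv} stipulates that every address appearing as a value in the construction lies in $\DNAMES = \DOM{\Gamma} \INTERSECT \ANAMES$, which by the same definition coincides with $\DOM{\ENV{S}^2}$. Second, that $p$ is defined on $\ENV{S}^2(X_2)$: by Theorem~\ref{theorem:expressions_runtime_safety} the actual type of $X_2$ is some $I'_{s'}$ with $\Sigma \vdash I' \SUBS I$, and by consistency of $\Sigma$ with $\Gamma$ (Appendix~\ref{app:subtyping_consistency}) every member declared on the supertype $I$ is also declared on the subtype $I'$. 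Therefore $p \in \DOM{\ENV{S}^2(X_2)}$, and I take $v_2 \DEFSYM \ENV{S}^2(X_2)(p)$. The case \nameref{exp_rt_op} reduces to $n$ applications of the induction hypothesis together with the totality clause of Definition~\ref{def:safety_operations}, which guarantees that $\op$ produces some value $v_2$ on the new argument tuple.

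The main obstacle is the field case: I must ensure not only that the address obtained by evaluating $e'$ in $\ENV{SV}^2$ belongs to $\DOM{\ENV{S}^2}$, but also that the specific field $p$ is present in its field environment, even though the concrete address and hence its precise interface may differ between the two executions. Making this precise requires combining the closure properties of Definition~\ref{def:construction_envsv} with the consistency of the interface hierarchy, so that subtyping on interfaces guarantees the preservation of field membership.
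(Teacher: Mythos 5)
Your proof is correct and follows essentially the same route as the paper's: induction on the typed evaluation rules, using the fact that the constructed environments share the same named entries and are well-typed and consistent (Lemma~\ref{lemma:construction_envsv}, Corollary~\ref{corollary:expressions_consistency_values}, Theorem~\ref{theorem:expressions_runtime_safety}) to rule out stuckness. Your treatment of the field case is in fact more explicit than the paper's about why $p$ must be defined on the possibly different address obtained in $\ENV{SV}^2$, which is a welcome sharpening rather than a divergence.
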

\begin{proof}
By induction on the rules of the typed semantics (Figure~\ref{fig:typed_semantics_expressions}).
By construction, $\ENV{SV}^1$ and $\ENV{SV}^2$ contain the same named entries (fields and variables), so since the configuration $\CONF{e, \ENV{SV}^1}$ was executable, the configuration $\CONF{e, \ENV{SV}^2}$ cannot be stuck due to a missing variable name.
A field or variable in $\ENV{SV}^2$ could contain an address $X$ not present in $\ENV{S}^1$, but by Lemma~\ref{lemma:construction_envsv} we know that $\ENV{SV}^2$ is well-typed and consistent w.r.t.\@ $\Sigma$, $\Gamma$ and $\Delta$, which by Corollary~\ref{corollary:expressions_consistency_values} ensures that $X$ must have an implementation in $\ENV{S}^2$, and by Theorem~\ref{theorem:expressions_runtime_safety} that $X$ has the appropriate type.
Thus, this cannot lead to the configuration being stuck either.
\end{proof}

The actual values $v_1$ and $v_2$ produced in the two transitions in Theorem~\ref{theorem:expressions_runtime_states_abstraction} may of course differ, but by Theorem~\ref{theorem:expressions_runtime_safety} above, their \emph{types} will be the same.
This intuitively follows because the typed semantics only restricts the \emph{types} of values read from $\ENV{SV}$, and not the actual values.
The existence of a typed transition with $\ENV{SV}^1$ implies that all the values read from this environment in the evaluation of $e$ are in accordance with the type restrictions in $\Sigma$, $\Gamma$, $\Delta$; and choosing any other state $\ENV{SV}^2$, which satisfies the same restrictions, will therefore still yield a transition.

%
%
%
Finally, Theorem~\ref{theorem:expressions_runtime_states_abstraction} allow us to abstract away from any \emph{specific} choice of environments $\ENV{SV}$ and instead focus on expressions that have a transition for \emph{any} sensible choice of these environments.
By this theorem, if an expression $e$ has a transition for some $\ENV{SV}^1$ constructed according to Definition~\ref{def:construction_envsv}, then it will have a transition for \emph{all} similarly constructed environments $\ENV{SV}^2$.
In a sense, these environments can be thought of as representing all meaningful `inputs' to the expression and, for any such appropriately shaped input, a safe expression will yield a value of the correct type.
We can now define the set of such safe expressions by using the notion of \emph{typing interpretations} in the style of Caires \cite{caires2007}:

\begin{definition}[Typing interpretation for expressions]\label{def:typing_interpretation_expr}
A \emph{typing interpretation} for expressions under a given $\Sigma$, $\Gamma$, $\Delta$ and $B_s$, written $\ETYPI$, is a set of expressions such that, if $e \in \ETYPI$, then  there exist an $\ENV{SV}$ built according to Definition~\ref{def:construction_envsv} and a value $v$ such that 
$\Sigma; \Gamma; \Delta \esemDash \CONF{e, \ENV{SV}} \trans v$.
The \emph{typing} of expressions w.r.t.\@ a given $\Sigma$, $\Gamma$, $\Delta$ and $B_s$, written $\ETYPING$, is the union of all typing interpretations $\ETYPI$.
%
We write $\Sigma; \Gamma; \Delta \vDash e : B_s$ if there exists a typing interpretation $\ETYPI$ containing $e$.
\end{definition}

Intuitively, any expression from $\ETYPING$ will safely \emph{behave as} a value of the type $B_s$ according to the typed semantics, regardless of the actual values bound to its names (as long as they are of the appropriate types).
This is exactly what we would expect it to \emph{mean}, when we say an expression inhabits the type $B_s$.
However, Definition~\ref{def:typing_interpretation_expr} does not tell us how to prove that an expression $e$ inhabits a given type $B_s$, apart from exhibiting a typing interpretation $\ETYPI$ containing $e$.
To remedy this situation, we shall therefore now state and show a theorem, which expresses that the syntactic type system is compatible with Definition~\ref{def:typing_interpretation_expr}, meaning 
that typability implies safety. 

\begin{theorem}[Compatibility for expressions]\label{theorem:compatibility_expressions}
If $\Sigma; \Gamma; \Delta \vdash e : B_s$, then $\Sigma; \Gamma; \Delta \vDash e : B_s$.
\end{theorem}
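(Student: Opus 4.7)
The goal is to exhibit, for any expression $e$ with $\Sigma;\Gamma;\Delta \vdash e : B_s$, a typing interpretation $\ETYPI$ containing $e$. Since Definition~\ref{def:typing_interpretation_expr} only requires that each member of the interpretation admit a typed transition for \emph{some} appropriately shaped $\ENV{SV}$, it suffices to take the singleton $\{e\}$ (or the set of all syntactically typable expressions at $(B,s)$) and show that one such $\ENV{SV}$ and $v$ exist. I will proceed by induction on the derivation of the syntactic judgement $\Sigma;\Gamma;\Delta \vdash e : B_s$, mimicking each syntactic rule of Figure~\ref{fig:type_rules_expr} by the corresponding typed operational rule of Figure~\ref{fig:typed_semantics_expressions}.

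\textbf{Base cases.} For \nameref{exp_t_val}, the side conditions $\TYPEOF{v}=(B',s')$, $\Sigma\vdash B'\SUBS B$, $s'\ordleq s$ coincide exactly with the side conditions of \nameref{exp_rt_val}, so $\Sigma;\Gamma;\Delta\esemDash\CONF{v,\ENV{SV}}\trans v$ holds for any $\ENV{SV}$ built according to Definition~\ref{def:construction_envsv}. For \nameref{exp_t_var}, construct $\ENV{SV}$ from Definition~\ref{def:construction_envsv}: by construction, $\ENV{V}(x)$ is some $v$ with $\TYPEOF{v}=(B_1,s_1)$, $\Sigma\vdash B_1\SUBS B'$, $s_1\ordleq s'$, where $\Delta(x)=\TVAR{B',s'}$. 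Combined with the hypotheses $\Sigma\vdash B'\SUBS B$ and $s'\ordleq s$ of \nameref{exp_t_var}, these exactly match the side conditions of \nameref{exp_rt_var}.

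\textbf{Inductive cases.} For \nameref{exp_t_field} with $e = e'.p$, the induction hypothesis yields an $\ENV{SV}^1$ and a value $v_1$ with $\Sigma;\Gamma;\Delta\esemDash[I_s]\CONF{e',\ENV{SV}^1}\trans v_1$. By Theorem~\ref{theorem:expressions_runtime_safety}, $\TYPEOF{v_1}$ is a subtype of $I_s$, so $v_1$ must be an address $X$; by Lemma~\ref{lemma:construction_envsv} the environment $\ENV{SV}^1$ is well-typed and consistent, hence Corollary~\ref{corollary:expressions_consistency_values} gives $X\in\DOM{\ENV{S}^1}$, and consistency further ensures that $X$'s field environment contains an entry for $p$ whose stored value $v$ respects the type $\TVAR{B',s'}$ prescribed by $\Gamma(I)(p)$. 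Combining with the subtyping/coercion side conditions gives precisely the premises of \nameref{exp_rt_field}. For \nameref{exp_t_op} with $\op(\VEC{e})$, apply the induction hypothesis to each $e_i$ at type $(B_i,s)$; one may need different $\ENV{SV}^i$ per subexpression, but Theorem~\ref{theorem:expressions_runtime_states_abstraction} lets me transport all subderivations to a single common $\ENV{SV}$ built via Definition~\ref{def:construction_envsv}. The safety requirement on operations (Definition~\ref{def:safety_operations}) then guarantees that $\op(\VEC{v})\trans_{\op}v$ produces a value with $\TYPEOF{v}=(B,\SBOT)$, satisfying the premises of \nameref{exp_rt_op}.

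\textbf{Main obstacle.} The delicate point is the field-access case: the syntactic typing rule only constrains the interface $I$, not the actual contract sitting at the runtime address. Ensuring that the constructed $\ENV{SV}$ not only contains \emph{some} address at the right interface but also an implementation providing the field $p$ of the right type is where well-typedness, consistency (Definition~\ref{def:wellformedness_gamma} plus the rules of Appendix~\ref{app:environment_consistency}), and the explicit construction of Definition~\ref{def:construction_envsv} must be carefully aligned. Once this alignment is established, essentially by invoking Lemma~\ref{lemma:construction_envsv} and Corollary~\ref{corollary:expressions_consistency_values}, the remaining verifications are routine matchings between the side conditions of the syntactic and typed operational rules, often mediated by transitivity of $\SUBS$ and $\ordleq$.
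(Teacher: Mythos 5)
Your proposal is correct and follows essentially the same route as the paper's proof: a singleton typing interpretation, induction on the syntactic typing derivation, a state $\ENV{SV}$ built via Definition~\ref{def:construction_envsv}, and a case-by-case matching of the side conditions of the syntactic rules against those of the typed operational rules, with well-typedness and consistency discharging the field-access case and Definition~\ref{def:safety_operations} discharging the operator case. The only cosmetic difference is that the paper fixes a single $\ENV{SV}$ before the induction (so no transport across sub-environments is needed), whereas you patch this afterwards with Theorem~\ref{theorem:expressions_runtime_states_abstraction}; both are fine.
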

\begin{proof}
We must show that $\Sigma; \Gamma; \Delta \vdash e : B_s$ implies that there exists a typing interpretation $\ETYPI$ containing $e$.
The simplest such typing interpretation is just the singleton set $\SET{e}$, which by Definition~\ref{def:typing_interpretation_expr} is a typing interpretation w.r.t.\@ $\Sigma; \Gamma; \Delta$ and $B_s$ \emph{if} the transition 
\begin{equation*}
  \Sigma; \Gamma; \Delta \esemDash \CONF{e, \ENV{SV}} \trans v
\end{equation*}
can be concluded by the typed semantics for some state $\ENV{SV}$ built according to Definition~\ref{def:construction_envsv}, and which by Lemma~\ref{lemma:construction_envsv} therefore is ensured to be well-typed and consistent w.r.t.\@ $\Sigma; \Gamma; \Delta$.

Furthermore, from $\Sigma; \Gamma; \Delta \vdash e : B_s$, we know that all variable names, addresses and field names in $e$ must exist in $\Gamma$ and $\Delta$; so, by the construction of $\ENV{SV}$, we therefore also know that 
\begin{itemize}
  \item $\FA{e} \subseteq \DOM{\ENV{S}}$, and 
  \item $\FV{e} \subseteq \DOM{\ENV{V}}$.
\end{itemize}
The statement to be shown then simplifies to 
\begin{equation*}
  \Sigma; \Gamma; \Delta \vdash e : B_s \implies \Sigma; \Gamma; \Delta \esemDash \CONF{e, \ENV{SV}} \trans v
\end{equation*}
We show this by induction on the inference for $\Sigma; \Gamma; \Delta \vdash e : B_s$:
There are two base cases:
\begin{itemize}
  \item Suppose the type judgment was concluded by \nameref{exp_t_val}; then, $e=v$. From the premise and the side conditions, we know that 
  \begin{align*}
    \TYPEOF{v} & = (B', s') \\ 
    s'         & \ordleq s  \\
    \Sigma     & \vdash B' \SUBS B
  \end{align*}
  This satisfies the side conditions of \nameref{exp_rt_val}, so we can infer that
  \begin{equation*}
    \Sigma; \Gamma; \Delta \esemDash \CONF{v, \ENV{SV}} \trans v
  \end{equation*}

  \item Suppose the type judgment was concluded by \nameref{exp_t_var}; then, $e=x$. From the premise and side conditions, we know that 
  \begin{align*}
    \Delta(x) & = \TVAR{B_2, s_2} \\
    s_2       & \ordleq s    \\
    \Sigma    & \vdash B_2 \SUBS B
  \end{align*}
  By assumption, $\Sigma; \Gamma; \Delta \vdash \ENV{V}$, which was concluded by \nameref{env_t_envv_u}. 
  We know $x \in \DOM{\ENV{V}}$, so $\ENV{V}(x) = v$ by the assumption $\FV{e} \subseteq \DOM{\ENV{V}}$.
  Therefore, let $\ENV{V} = \ENV{V}', (x,v)$.

  From the premise and side condition of the aforementioned rule, we then know that 
  \begin{align*} 
    \Sigma; \Gamma; \Delta & \vdash v : (B_2, s_2) \\
    \Delta(x)              & = \TVAR{B_2, s_2}
  \end{align*}
  Now, $\Gamma; \Delta \vdash v : (B_2, s_2)$ must have been concluded by \nameref{exp_t_val}, and from its side condition we know that 
  \begin{align*}
    \TYPEOF{v} & = (B_1, s_1) \\ 
    s_1        & \ordleq s_2  \\
    \Sigma     & \vdash B_1 \SUBS B_2
  \end{align*}
  This satisfies all the side conditions of \nameref{exp_rt_var}, so we can infer that
  \begin{equation*}
    \Sigma; \Gamma; \Delta \esemDash \CONF{x, \ENV{SV}} \trans v
  \end{equation*}

\end{itemize}
There are two inductive cases:
\begin{itemize}

  \item Suppose the type judgment was concluded by \nameref{exp_t_field}; then, $e=e'.p$. From the premise and side conditions, we know that 
    \begin{align*}
      \Gamma; \Delta & \vdash e' : I_s    \\ 
      \Gamma(I)(p)   & = \TVAR{B_2, s_2} \\
      s_2            & \ordleq s         \\
      \Sigma         & \vdash B_2 \SUBS B
    \end{align*}
    By the induction hypothesis on $\Sigma; \Gamma; \Delta \vdash e' : I_s$, we obtain that
    \begin{equation*}
   \Gamma; \Delta \esemDash[{I_s}] \CONF{e', \ENV{SV}} \trans X
    \end{equation*}
 for some address $X$, since only addresses are given a type $I$ by the definition of \TYPEOF{\cdot} (Definition~\ref{def:typeof}).

    By assumption, $\Sigma; \Gamma; \Delta \vdash \ENV{S}$ holds, which was concluded by \nameref{env_t_envs}.
    We know that $X \in \DOM{\ENV{S}}$; so, $\ENV{S}(X) = \ENV{F}$ because of $\FA{e} \subseteq \DOM{\ENV{S}}$ and consistency.
    Therefore, let $\ENV{S} = \ENV{S}', X: \ENV{F}$.
    From the premises and side conditions of this rule, we know that $\Sigma; \Gamma; \Delta \vdash_X \ENV{F}$, which must have been concluded by \nameref{env_t_envf}.
    We know that $p \in \DOM{\ENV{F}}$, so $\ENV{F}(p) = v$ by consistency and well-typedness of $\ENV{SV}$.
    This establishes that $\ENV{S}(X)(p) = v$.

    Now, let $\ENV{F} = \ENV{F}', p:v$.
    From the premises and side conditions of \nameref{exp_t_field}, we know that
    \footnote{In principle, instead of  $\Gamma(X) = (I, s)$, we could have $\Gamma(X) = (I', s')$. However, by the coercion property (Theorem~\ref{theorem:expressions_coercion_property}), we then have that $\Sigma \vdash I' \SUBS I$ and $s' \ordleq s$, so we omit this extra complication.}
    \begin{align*}
      \Sigma; \Gamma; \Delta & \vdash v : (B_1, s_1) \\
      \Gamma(X)              & = (I, s)\\
      \Gamma(I)(p)           & = \TVAR{B_2, s_2}
    \end{align*}
    Judgement $\Sigma; \Gamma; \Delta \vdash v : (B_1, s_1)$ must have been concluded by \nameref{exp_t_val}, and from its side condition we know that 
    \begin{align*}
      \TYPEOF{v} & = (B_1, s_1) \\ 
      s_1        & \ordleq s_2  \\
      \Sigma     & \vdash B_1 \SUBS B_2
    \end{align*}
    This satisfies all the side conditions of \nameref{exp_rt_field}, so we can infer that
    \begin{equation*}
      \Sigma; \Gamma; \Delta \esemDash \CONF{e.p, \ENV{SV}} \trans v
    \end{equation*}

  \item Suppose the type judgment was concluded by \nameref{exp_t_op};
    then, $e= \op(\VEC{e})$. From the premises, we know that 
    \begin{align*}
                              & \vdash \op : \VEC{B} \to B  \\ 
      \Sigma; \Gamma; \Delta  & \vdash \VEC{e} : \BSVEC
    \end{align*}
    For each $e_i$ in $\VEC{e}$ and $(B_i, s)$, we can use the induction hypothesis to obtain that
    \begin{equation*}
    \Sigma; \Gamma; \Delta \esemDash[{(B_i, s)}] \CONF{e_i, \ENV{SV}} \trans v_i
    \end{equation*}
    Now, by the safety requirement for operations (Definition~\ref{def:safety_operations}), we know that it must be the case that $\op(\VEC{v}) \trans_{\op} v$ and $\TYPEOF{v} = (B, \SBOT)$.
    Furthermore, $\SBOT \ordleq s$ holds by definition of $\SBOT$.
    Thus, this satisfies all the premises and side conditions of \nameref{exp_rt_op}, so we can infer
    \begin{equation*}
      \Sigma; \Gamma; \Delta \esemDash \CONF{\op(\VEC{e}), \ENV{SV}} \trans v
      \qedhere
    \end{equation*}
\end{itemize}
\end{proof}

Combining Theorems~\ref{theorem:expressions_semantic_compatibility}, \ref{theorem:compatibility_expressions}, and \ref{theorem:expressions_runtime_safety} gives us the equivalent of the soundness theorem for expressions from \cite{AGL25}, which now follows as a simple corollary:

\begin{corollary}[Static soundness for expressions]\label{corollary:expressions_static_soundness}
If
\begin{itemize}
  \item $\Sigma; \Gamma; \Delta \vdash e : B_s$, and
  \item $\Sigma; \Gamma; \Delta \vdash \ENV{SV}$, and 
  \item $\Gamma \vdash \ENV{SV}$, and 
  \item $\FA{e} \subseteq \DOM{\ENV{S}}$, and 
  \item $\FV{e} \subseteq \DOM{\ENV{V}}$,
\end{itemize}

\noindent then there exists some $v$ such that $\CONF{e, \ENV{SV}} \trans v$, and $\TYPEOF{v} = (B_1, s_1)$ for some $B_1$ and $s_1$ such that $\Sigma \vdash B_1 \SUBS B$ and $s_1 \ordleq s$, and every container, that will be read from in the evaluation of $e$, is of type $\TVAR{B_2, s_2}$, for some $B_2$, such that $s_2 \ordleq s$.
\end{corollary}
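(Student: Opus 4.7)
The proof will proceed by stitching together the three theorems cited immediately after the corollary. First, I would apply Theorem~\ref{theorem:compatibility_expressions} to the syntactic typing premise $\Sigma; \Gamma; \Delta \vdash e : B_s$. This yields the semantic typing judgment $\Sigma; \Gamma; \Delta \vDash e : B_s$, which by Definition~\ref{def:typing_interpretation_expr} guarantees the existence of a state $\ENV{SV}^\circ$, constructed according to Definition~\ref{def:construction_envsv}, and a value $v^\circ$ such that $\Sigma; \Gamma; \Delta \esemDash \CONF{e, \ENV{SV}^\circ} \trans v^\circ$.

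Second, I would transport this typed transition from the canonical $\ENV{SV}^\circ$ to the given $\ENV{SV}$ using (the argument underlying) Theorem~\ref{theorem:expressions_runtime_states_abstraction}. The given $\ENV{SV}$ is well-typed and consistent by hypothesis, and by $\FA{e} \subseteq \DOM{\ENV{S}}$ and $\FV{e} \subseteq \DOM{\ENV{V}}$ it contains every free name that can be dereferenced during evaluation. Inspecting the proof of state abstraction, one sees that these three conditions are precisely what is needed to conclude $\Sigma; \Gamma; \Delta \esemDash \CONF{e, \ENV{SV}} \trans v$ for some $v$: each field- or variable-lookup along the derivation succeeds, and every address value retrieved has an implementation in $\ENV{S}$ by consistency (cf.\ Corollary~\ref{corollary:expressions_consistency_values}).

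Third, Theorem~\ref{theorem:expressions_semantic_compatibility} immediately collapses this typed transition to the corresponding untyped one $\CONF{e, \ENV{SV}} \trans v$, delivering the first conclusion. Finally, Theorem~\ref{theorem:expressions_runtime_safety} applied to $\Sigma; \Gamma; \Delta \esemDash \CONF{e, \ENV{SV}} \trans v$ yields exactly the remaining three conclusions in one shot: the existence of $\TYPEOF{v} = (B_1, s_1)$ with $\Sigma \vdash B_1 \SUBS B$ and $s_1 \ordleq s$, and the bound $s_2 \ordleq s$ on the level of every container read during evaluation.

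The main delicate point is the second step, since Theorem~\ref{theorem:expressions_runtime_states_abstraction} is stated for two states both built via Definition~\ref{def:construction_envsv}, whereas here the given $\ENV{SV}$ need only satisfy the weaker well-typedness, consistency, and free-name containment conditions. I expect to handle this either by observing that the inductive argument in the proof of state abstraction only relies on well-typedness, consistency, and presence of the relevant names (so it applies verbatim), or by constructing an auxiliary $\ENV{SV}^\star$ via Definition~\ref{def:construction_envsv} that agrees with $\ENV{SV}$ on $\FA{e} \cup \FV{e}$, invoking state abstraction to transfer the transition to $\ENV{SV}^\star$ and noting by a straightforward inspection of the rules in Figure~\ref{fig:typed_semantics_expressions} that the derivation is insensitive to entries of $\ENV{SV}$ outside $\FA{e} \cup \FV{e}$. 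The rest is routine.
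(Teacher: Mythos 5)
Your proposal is correct and follows essentially the same route as the paper's own proof: chain Theorem~\ref{theorem:compatibility_expressions}, Definition~\ref{def:typing_interpretation_expr}, Theorem~\ref{theorem:expressions_runtime_safety} and Theorem~\ref{theorem:expressions_semantic_compatibility}. The one place you are more careful than the paper is the transport from a canonically constructed state to the given $\ENV{SV}$ — the paper simply asserts that the typed transition can be concluded for any consistent, well-typed $\ENV{SV}$ containing the names of $e$ (which is what the induction in the proof of Theorem~\ref{theorem:compatibility_expressions} actually establishes), so your explicit treatment of that step is a welcome refinement rather than a deviation.
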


\begin{proof}
By Theorem~\ref{theorem:compatibility_expressions}, we have that 
\begin{equation*}
  \Sigma; \Gamma; \Delta \vdash e : B_s \implies \Sigma; \Gamma; \Delta \vDash e : B_s
\end{equation*}

\noindent Unfolding the definition of $\Sigma; \Gamma; \Delta \vDash e : B_s$ (Definition~\ref{def:typing_interpretation_expr}) yields that there must exist a typing interpretation $\ETYPI$ containing $e$, which again means that the transition
\begin{equation*}
  \Sigma; \Gamma; \Delta \esemDash \CONF{e, \ENV{SV}} \trans v
\end{equation*}

\noindent can be concluded by the typed semantics for any consistent and well-typed $\ENV{SV}$ containing at least all the names used in $e$.
By Theorem~\ref{theorem:expressions_runtime_safety}, this transition implies that $\TYPEOF{v} = (B_1, s_1)$ and $s_1 \ordleq s$ and every container read from in $e$ is of level $s$ or lower. 
Finally, by Theorem~\ref{theorem:expressions_semantic_compatibility}, we have that 
\begin{equation*}
  \Sigma; \Gamma; \Delta \esemDash \CONF{e, \ENV{SV}} \trans v \implies \CONF{e, \ENV{SV}} \trans v \qedhere
\end{equation*}
\end{proof}

Finally, we can also obtain the equivalent of the non-interference theorem for expressions from \cite{AGL25} as a simple corollary of Corollaries~\ref{corollary:expressions_runtime_noninterference} and \ref{corollary:expressions_static_soundness}:

\begin{corollary}[Static non-interference for expressions]
If
\begin{itemize}
  \item $\Sigma; \Gamma; \Delta \vdash e : B_s$, and
  \item $\Sigma; \Gamma; \Delta \vdash \ENV{SV}^1$, $\Gamma \vdash \ENV{SV}^1$, $\FA{e} \subseteq \DOM{\ENV{S}^1}$, $\FV{e} \subseteq \DOM{\ENV{V}^1}$, and
  \item $\Sigma; \Gamma; \Delta \vdash \ENV{SV}^2$, $\Gamma \vdash \ENV{SV}^2$, $\FA{e} \subseteq \DOM{\ENV{S}^2}$, $\FV{e} \subseteq \DOM{\ENV{V}^2}$, and
  \item $\Gamma; \Delta \vdash \ENV{SV}^1 =_s \ENV{SV}^2$,
\end{itemize}

\noindent then $\CONF{e, \ENV{SV}^1} \trans v$ and $\CONF{e, \ENV{SV}^2} \trans v$.
\end{corollary}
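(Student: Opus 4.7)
The statement is essentially the composition of three earlier results that are set up precisely for this purpose: the compatibility theorem lifts the syntactic judgement $\Sigma;\Gamma;\Delta \vdash e : B_s$ to the semantic judgement $\Sigma;\Gamma;\Delta \vDash e : B_s$, which gives access to the typed transition relation; runtime non-interference for expressions then guarantees that two executions over $s$-equivalent states produce the same value; and semantic compatibility finally allows us to descend to the untyped semantics quoted in the conclusion.

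Concretely, my plan is as follows. First, I would invoke Theorem~\ref{theorem:compatibility_expressions} on the hypothesis $\Sigma;\Gamma;\Delta \vdash e : B_s$ to obtain $\Sigma;\Gamma;\Delta \vDash e : B_s$; unfolding Definition~\ref{def:typing_interpretation_expr} this tells us that a typing interpretation containing $e$ exists, so that \emph{some} typed evaluation of $e$ is possible. Reusing exactly the reasoning of the proof of Corollary~\ref{corollary:expressions_static_soundness}, which only needs well-typedness, consistency, and inclusion of free names in the domains of $\ENV{SV}^i$ (all of which are in the hypotheses), I would produce, for $i \in \{1,2\}$, a typed transition
\[
  \Sigma;\Gamma;\Delta \esemDash[B_s] \CONF{e, \ENV{SV}^i} \trans v_i .
\]
Second, I would apply Corollary~\ref{corollary:expressions_runtime_noninterference} with $s_1 = s_2 = s$: the required $s_1 \ordleq s_2$ holds trivially by reflexivity of $\ordleq$, the $s$-equivalence hypothesis of the present corollary supplies $\Gamma;\Delta \vdash \ENV{SV}^1 =_s \ENV{SV}^2$, and the two typed transitions just produced supply the remaining premises; the conclusion is $v_1 = v_2$, which we may now call $v$. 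Third, I would apply Theorem~\ref{theorem:expressions_semantic_compatibility} (semantic compatibility) to each typed transition to obtain the untyped transitions $\CONF{e, \ENV{SV}^1} \trans v$ and $\CONF{e, \ENV{SV}^2} \trans v$, as required.

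There is no real obstacle here: the proof is a routine bookkeeping exercise, amounting essentially to a corollary of the combination of Corollaries~\ref{corollary:expressions_static_soundness} and~\ref{corollary:expressions_runtime_noninterference}. The only minor care is in matching the indices of the security levels in Corollary~\ref{corollary:expressions_runtime_noninterference} (which are stated for a generic pair $s_1 \ordleq s_2$) with the single level $s$ appearing in the current statement, but the choice $s_1 = s_2 = s$ resolves this immediately. Thus the present corollary is, as expected, the static counterpart of the runtime non-interference property, obtained by precomposing it with the syntactic-to-semantic passage supplied by Theorem~\ref{theorem:compatibility_expressions}.
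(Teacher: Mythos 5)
Your proposal is correct and follows essentially the same route as the paper, which itself introduces this result as ``a simple corollary of Corollaries~\ref{corollary:expressions_runtime_noninterference} and \ref{corollary:expressions_static_soundness}''. The only (immaterial) difference is one of factoring: the paper applies Corollary~\ref{corollary:expressions_static_soundness} twice to get the untyped transitions together with the ``only containers of level $s$ or lower are read'' property and then argues equality of the results directly at the untyped level, whereas you establish $v_1=v_2$ at the typed level via Corollary~\ref{corollary:expressions_runtime_noninterference} and then descend with Theorem~\ref{theorem:expressions_semantic_compatibility}.
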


\begin{proof}
From the assumptions, and by two applications of Corollary~\ref{corollary:expressions_static_soundness}, we get that 
\begin{align*}
  \CONF{e, \ENV{SV}^1} & \trans v_1 \\
  \CONF{e, \ENV{SV}^2} & \trans v_2
\end{align*}

\noindent such that no container of a level higher than $s$ was read from $\ENV{SV}^1$ resp.\@ $\ENV{SV}^2$ in the two evaluations of $e$.
By assumption, $\Gamma; \Delta \vdash \ENV{SV}^1 =_s \ENV{SV}^2$, so the two states agree on all entries of level $s$ or lower.
Thus, the two evaluations of $e$ will use the same values, and will therefore yield the same result, so $v_1 = v_2$.
\end{proof}

With this last result, we have recovered the properties ensured by syntactic well-typedness for expressions in \cite{AGL25}, but through a quite different route; namely by starting from a semantics of types for expressions, given in terms of a typed semantics, that is compatible with the untyped semantics and ensures that the restrictions imposed by the type predicates are in fact observed.

\clearpage
\section{Well-formedness of stacks}\label{app:wellformedness_stack}

The notion of \emph{well-formedness} of stacks w.r.t.\@ type environments $\Sigma; \Gamma$, a state $\ENV{SV}$, and a security level $s$, is defined by the rules in Figure~\ref{fig:wellformedness_stack}.

\begin{figure}[h]\centering
\begin{semantics}
  \RULE[wf-bot][wf_stack_bot]
    { }
    { \Sigma; \Gamma; \ENV{SV}; s \vdash \bot }
  \RULE[wf-stm][wf_stack_stm]({
  \begin{array}{r @{~} l}
    \FA{S} & \subseteq \DOM{\ENV{S}} \\
    \FV{S} & \subseteq \DOM{\ENV{V}} 
  \end{array}
  })
    { \Sigma; \Gamma; \ENV{SV}; s \vdash Q }
    { \Sigma; \Gamma; \ENV{SV}; s \vdash S; Q }
  \RULE[wf-del][wf_stack_del]
    { \Sigma; \Gamma; \ENV{SV}; s \vdash Q }
    { \Sigma; \Gamma; \ENV{SV}, x:v; s \vdash \DEL{x}; Q }
  \RULE[wf-ret][wf_stack_return]
    { 
      \Sigma; \Gamma; \Delta               \vdash \ENV{V}' \AND   
      \ENV{S}                              \vdash \ENV{V}' \AND   
      \Sigma; \Gamma; \ENV{S}; \ENV{V}'; s \vdash Q               
    }
    { \Sigma; \Gamma; \ENV{S}; \ENV{V}; s \vdash (\ENV{V}', \Delta); Q }
  \RULE[wf-sec][wf_stack_restore](s \ordgeq s')
    { \Sigma; \Gamma; \ENV{SV}; s' \vdash Q }
    { \Sigma; \Gamma; \ENV{SV}; s \vdash s'; Q }
\end{semantics}
\caption{Well-formedness of stacks $Q$.}
\label{fig:wellformedness_stack}
\end{figure}

In \nameref{wf_stack_stm} we require that any statement $S$ found on the stack must only contain names found in the domains of $\ENV{SV}$.
    Together with consistency of $\ENV{SV}$, this ensures that transitions cannot be stuck due to a free name, or a missing contract implementation.
In \nameref{wf_stack_del}, we remove $x$ from $\ENV{V}$ when we encounter an end-of-scope symbol $\DEL{x}$, since the domain of $\ENV{V}$ is used in the rule \nameref{wf_stack_stm}.
The rule \nameref{wf_stack_return} handles the of a return symbol $(\ENV{V}', \Delta)$ occurring on the stack.
In this case, we must here require that $\ENV{V}$ is both well-typed and consistent, and the remainder of the stack must then be well-formed relative to this new $\ENV{V}'$.
Finally, rule \nameref{wf_stack_restore} ensures that all security levels $s'$ occurring in the stack are in descending order.

Well-formedness ensures that $Q$ and $\ENV{SV}$ can meaningfully be combined in a configuration $\CONF{Q, \ENV{TSV}}$ and executed by the typed semantics at level $s$, provided that the initial state $\ENV{SV}$ is well-typed, and $\ENV{TSV}$ are consistent, and that $s$ is greater than, or equal to, the first security level occurring in $Q$.
Note that it is necessary to assume consistency of $\ENV{T}$ as well to ensure that well-formedness of $Q$ is preserved, since a transition can place code from $\ENV{T}$ on the stack.

\clearpage
\section{Typed Operational Semantics}\label{app:typed_semantics_stacks}
As was the case with the typed semantics of expressions, the rules for the typed semantics of stacks are obtained from the rules for the untyped semantics, but with added restrictions from the type rules for statements and stacks.
Every transition that can be concluded by the typed semantics can therefore also be concluded by the untyped semantics, with one exception: the transitions 
which restore a security level $s$ that had previously been placed on the stack.
There is nothing similar in the untyped semantics.
Hence, to show a statement corresponding to Theorem~\ref{theorem:expressions_semantic_compatibility} for stacks, we first need to remove all security levels $s$, and also merge the type information from $\Delta$ into variable environments $\ENV{V}$ placed on the stack.
This is done by functions $\REMS{Q}$ and $\MERGE{\ENV{V}; \Delta}$, defined on the syntax of return symbols and stacks, respectively:
\begin{align*}
  \MERGE{\ENV{V}^{\EMPTYSET}; \EMPTYSET} & = \ENV{V}^{\EMPTYSET}                  \\
  \MERGE{x:v, \ENV{V}; x:B_s, \Delta}    & = (x, v, B_s), \MERGE{\ENV{V}, \Delta}
\end{align*}

\begin{align*}
  \REMS{\bot}                            & = \bot                                 \\
  \REMS{S; Q}                            & = S; \REMS{Q}                          \\
  \REMS{\DEL{x}; Q}                      & = \DEL{x}; \REMS{Q}                    \\
  \REMS{(\ENV{V}, \Delta); Q}            & = \MERGE{\ENV{V}, \Delta}; \REMS{Q}    \\
  \REMS{s; Q}                            & = \REMS{Q}
\end{align*}
It will also be useful to know the security lever of the first symbol in a stack:
\begin{align*}
  \FIRSTS{\bot}                 & = \SBOT      \\ 
  \FIRSTS{S; Q}                 & = \FIRSTS{Q} \\ 
  \FIRSTS{\DEL{x}; Q}           & = \FIRSTS{Q} \\ 
  \FIRSTS{(\ENV{V}, \Delta); Q} & = \FIRSTS{Q} \\ 
  \FIRSTS{s; Q}                 & = s
\end{align*}
Notice that the last function returns $\SBOT$ for the bottom of the stack.
This corresponds to no restriction on the level at which the stack can be executed, so in that case we can choose the execution level arbitrarily low.

\begin{theorem}[Semantic compatibility for stacks]\label{theorem:stacks_semantic_compatibility}
Let all security levels in $Q$ occur in descending order.
For all type environments $\Sigma$, $\Gamma$, $\Delta$, and security level $s \ordgeq \FIRSTS{Q}$, and environments $\ENV{TSV}$, it holds that either
\begin{equation}\label{eq:semcomp1}
  \begin{split}
             & \Sigma; \Gamma; \Delta \ssemDash \CONF{Q, \ENV{TSV}} \trans \Sigma; \Gamma; \Delta' \ssemDash[{s'}] \CONF{Q', \ENV{T}; \ENV{SV}'} \\ 
    \implies & \CONF{\REMS{Q}, \ENV{TS}; \MERGE{\ENV{V}; \Delta}} \trans \CONF{\REMS{Q}, \ENV{T}; \ENV{S}'; \MERGE{\ENV{V}'; \Delta'}}
  \end{split}
\end{equation}

\noindent or otherwise there exists a sequence of transitions
\begin{equation*}
  \Sigma; \Gamma; \Delta \ssemDash \CONF{Q, \ENV{TSV}} \trans \ldots \trans \Sigma; \Gamma; \Delta \ssemDash[{s'}] \CONF{Q', \ENV{TSV}}
\end{equation*}

\noindent of finite length, to a state for which the implication in \eqref{eq:semcomp1} holds.
\end{theorem}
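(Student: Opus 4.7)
The plan is to proceed by case analysis on the rule used to conclude the first typed transition $\Sigma; \Gamma; \Delta \ssemDash \CONF{Q, \ENV{TSV}} \trans \Sigma; \Gamma; \Delta' \ssemDash[{s'}] \CONF{Q', \ENV{T}; \ENV{SV}'}$. Of the rules defining the typed semantics of stacks (Figures~\ref{fig:typed_semantics_stacks1}--\ref{fig:typed_semantics_stacks2}), all but one have a direct counterpart in the untyped semantics (Appendix~\ref{app:untyped_operational_semantics}), obtained essentially by erasing the type-related side conditions and annotations. For each such rule I would show that the implication in \eqref{eq:semcomp1} holds directly, i.e.\@ the untyped companion rule fires on the state obtained by applying $\REMS{\cdot}$ to $Q$ and $\MERGE{\ENV{V}; \Delta}$ to the variable environment.

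The verification for these rules is essentially bookkeeping. For those that evaluate expressions (\nameref{stm_rt_if}, \nameref{stm_rt_whiletrue}, \nameref{stm_rt_whilefalse}, \nameref{stm_rt_decv}, \nameref{stm_rt_assv}, \nameref{stm_rt_assf}, \nameref{stm_rt_call}, \nameref{stm_rt_dcall}, \nameref{stm_rt_fcall}), I would invoke Theorem~\ref{theorem:expressions_semantic_compatibility} to transfer each typed expression evaluation to its untyped counterpart. For \nameref{stm_rt_decv} the new entry $x:\TVAR{B,s'}$ added to $\Delta$ and the new binding $x:v$ added to $\ENV{V}$ combine under $\MERGE{\cdot}$ to produce exactly the entry $(x, v, B_{s'})$ expected by \nameref{untyped_decv}; the dual check works for \nameref{stm_rt_delv}. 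For \nameref{stm_rt_return}, the return symbol $(\ENV{V}', \Delta')$ on the typed stack unfolds under $\REMS{\cdot}$ to $\MERGE{\ENV{V}'; \Delta'}$ on top of the untyped stack, so \nameref{untyped_return} applies. The three method-call rules are the most involved, but follow the same template: replace $\ENV{V}$ and $\Delta$ by the freshly built $\ENV{V}'$ and $\Delta'$, push the old pair on the stack, and check that $\REMS{\cdot}$ removes both the newly pushed security-level symbol and the bottom old-$\Delta$ annotation, yielding the configuration produced by the corresponding untyped rule.

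The one rule without an untyped analogue is \nameref{stm_rt_restore}: it pops the topmost symbol of $Q$, which is a security level $s''$, and changes only the current execution level, leaving $\ENV{TSV}$ and $\Delta$ untouched. This is exactly the case where the second clause of the statement is needed: no untyped transition is produced by a single step, so one must chain together any initial block of consecutive \nameref{stm_rt_restore} steps until reaching either the bottom of the stack or a symbol that is not a security level. Because the security levels in $Q$ occur in descending order and $Q$ itself is a finite sequence, the number of consecutive restore symbols at the top of $Q$ is bounded; so, after at most that many transitions, we reach a configuration $\CONF{Q', \ENV{TSV}}$ whose head is not a security level, at which point the first clause applies by the analysis above. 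Crucially, $\REMS{Q} = \REMS{Q'}$ along the chain because $\REMS{\cdot}$ discards security-level symbols, so the untyped side of the implication is unaffected by the detour.

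The main obstacle is purely bookkeeping: for each rule one must carefully check that $\REMS{\cdot}$ and $\MERGE{\cdot}$ commute with the rule's modifications to $Q$, $\Delta$, and $\ENV{V}$. No deep argument is required, but for the method-call rules the number of freshly rebound magic variables and the new security-level symbol pushed onto the stack make the calculation somewhat tedious. The finiteness of the restore chain needed in the second clause follows from well-formedness (Appendix~\ref{app:wellformedness_stack}), which is preserved by typed transitions thanks to Theorem~\ref{theorem:stacks_runtime_preservation}.
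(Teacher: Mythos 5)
Your proposal is correct and follows essentially the same route as the paper's own proof: handle an initial (finite, descending) chain of \nameref{stm_rt_restore} steps via the second disjunct, and for every other rule conclude the matching untyped transition by observing that its premises and side conditions are a subset of the typed rule's. The extra bookkeeping detail you give about $\REMS{\cdot}$ and $\MERGE{\cdot}$ and the appeal to Theorem~\ref{theorem:expressions_semantic_compatibility} merely makes explicit what the paper leaves implicit.
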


\begin{proof}
Firstly, suppose $Q = s'; Q'$.
Then $\FIRSTS{Q} = s'$ and $s \ordgeq s'$ by assumption.
Thus
\begin{equation*}
  \Sigma; \Gamma; \Delta \ssemDash \CONF{s'; Q', \ENV{TSV}} \trans \Sigma; \Gamma; \Delta \ssemDash[{s'}] \CONF{Q', \ENV{TSV}}
\end{equation*}

\noindent can then be concluded by \nameref{stm_rt_restore}.
It might now again be the case that $Q' = s''; Q''$, but then $s' \ordgeq s''$ by the assumption of ordering of the security levels.
Hence, the transition can again be concluded by \nameref{stm_rt_restore}, and so on.
As we know the stack is of finite length, it will eventually be the case that the symbol at the top of the stack is \emph{not} a security level, in which case the theorem holds if the statement in \eqref{eq:semcomp1} holds.

When $Q \neq s'; Q'$, we proceed by case analysis of the rules used to conclude the transition
\begin{equation*}
  \Sigma; \Gamma; \Delta \ssemDash \CONF{Q, \ENV{TSV}} \trans \Sigma; \Gamma; \Delta' \ssemDash[{s'}] \CONF{Q', \ENV{T}; \ENV{SV}'}
\end{equation*}

\noindent In each case, we obtain the required premises and side conditions needed to conclude the same transition in the untyped semantics from the rule of the typed semantics.
This is straightforward, since each typed semantic rule (with the exception of \nameref{stm_rt_restore}) has an untyped variant, whose premises and side conditions are a subset of the premises and side conditions of the typed variant.
\end{proof}

\begin{proof}[Proof of Proposition~\ref{theorem:stacks_coercion_property}]
By case analysis of the rules used to conclude the transition.
\begin{itemize}
  \item For rules \nameref{stm_rt_skip}, \nameref{stm_rt_whilefalse}, \nameref{stm_rt_decv}, \nameref{stm_rt_delv}, and \nameref{stm_rt_return}, the conclusion follows directly from the fact that these rules do not impose any restrictions on the security level.

  \item For rules \nameref{stm_rt_if}, \nameref{stm_rt_whiletrue}, \nameref{stm_rt_assv}, \nameref{stm_rt_assf}, \nameref{stm_rt_call}, \nameref{stm_rt_dcall}, and \nameref{stm_rt_fcall}, we have a side condition of the form $s_1 \ordleq s'$ restricting the security level, where $s'$ is derived from the premise of the rule. 
    As we know that $s_2 \ordleq s_1$, we therefore also have that $s_2 \ordleq s'$ by transitivity of $\ordleq$, which, for each rule, allows us to conclude the transition with $s_2$ as well.

  \item Finally, if \nameref{stm_rt_restore} was used to conclude the transition, we have in the side condition that $s_1 \ordgeq s'$, where $s'$ is the top element of the stack $Q$.
    Hence, $\FIRSTS{Q} = s'$, and by assumption $s_2 \ordgeq \FIRSTS{Q}$ so $s_2 \ordgeq s'$.
    Thus we can satisfy the side condition with $s_2$ as well and conclude the transition with \nameref{stm_rt_restore}. \qedhere
\end{itemize}
\end{proof}

\begin{proof}[Proof of Lemma~\ref{lemma:construction_envsv}]
The construction of Definition~\ref{def:construction_envsv} is a kind of ‘‘reverse reading'' of the well-typedness and consistency rules for $\ENV{SV}$ (see Figures~\ref{fig:type_rules_env} and~\ref{fig:consistency_rules_envtsv}).
The proofs of well-typedness and consistency are then obtained through an easy induction on the inferences of the respective judgements.
\end{proof}

\begin{proof}[Proof of Corollary~\ref{theorem:stacks_runtime_states_abstraction}]
By case analysis on the rules of the typed semantics (Figures~\ref{fig:typed_semantics_stacks1} and~\ref{fig:typed_semantics_stacks2}).
The environments $\ENV{SV}$ only affect the execution of a statement through variable and field names appearing in expressions, and for these cases we apply Theorem~\ref{theorem:expressions_runtime_states_abstraction} for the evaluation of all expressions occurring inside statements in $Q$.
\end{proof}

Finally, we can show a result concerning the type environment $\Delta$ and the security level $s$.
Both may have changed after a transition, in case the security level was raised, a local variable was declared, or a method call was issued.
However, any such change is ‘temporary', in the sense that all of them have resulted in a special symbol pushed onto the stack, which undoes the change whenever reached.
Thus, \emph{if} a stack $Q$ terminates normally (i.e., by reaching the bottom symbol $\bot$), then all such temporary changes have been discarded.
Assuming $Q$ is executed relative to some $\Delta$ and $s$, we can detect the type environment and security level $(\Delta', s')$ on which $Q$ will terminate, \emph{if} it terminates, as follows. This will be used in the proofs of Appendix~\ref{app:semantic_typing_stacks}.

\begin{definition}[Terminal $\Delta$ and $s$]\label{def:terminal_conf_types}
We define the function $\FINISH{\cdot}$ on triplets $(Q, \Delta, s)$ of stacks $Q$, type environments $\Delta$ and security levels $s$ as follows:
\belowdisplayskip=-12pt
\begin{align*}
  \FINISH{\bot, \Delta, s}                          & = (\Delta, s)            \\ 
  \FINISH{S; Q, \Delta, s}                          & = \FINISH{Q, \Delta, s}  \\
  \FINISH{\DEL{x}; Q, (\Delta, x:\TVAR{B_{s'}}), s} & = \FINISH{Q, \Delta, s}  \\
  \FINISH{(\ENV{V}, \Delta'); Q, \Delta, s}         & = \FINISH{Q, \Delta', s} \\
  \FINISH{s'; Q, \Delta, s}                         & = \FINISH{Q, \Delta, s'} 
\end{align*}\qedhere
\end{definition}


\begin{theorem}[Terminal type level]\label{theorem:stacks_term_level}
If $\Sigma; \Gamma; \Delta \ssemDash \CONF{Q, \ENV{TSV}} \trans* \Sigma; \Gamma; \Delta' \ssemDash[{s'}] \CONF{\bot, \ENV{T}, \ENV{SV}'}$, then $(\Delta', s') = \FINISH{Q, \Delta, s}$.
\end{theorem}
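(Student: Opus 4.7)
The plan is to proceed by induction on the length of the transition sequence
\[
  \Sigma; \Gamma; \Delta \ssemDash \CONF{Q, \ENV{TSV}} \trans^* \Sigma; \Gamma; \Delta' \ssemDash[{s'}] \CONF{\bot, \ENV{T}, \ENV{SV}'} ,
\]
reducing the claim to a single-step \emph{invariance lemma} for $\FINISH{\cdot}$. Namely, I intend to establish the auxiliary statement: whenever $\Sigma;\Gamma;\Delta \ssemDash \CONF{Q,\ENV{TSV}} \trans \Sigma;\Gamma;\Delta_1 \ssemDash[{s_1}] \CONF{Q_1,\ENV{T};\ENV{SV}_1}$, it holds that $\FINISH{Q,\Delta,s} = \FINISH{Q_1,\Delta_1,s_1}$. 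Granted this, the theorem follows: the base case $Q=\bot$ gives $\FINISH{\bot,\Delta',s'} = (\Delta',s')$ directly by the first clause of Definition~\ref{def:terminal_conf_types}; the inductive step chains one application of the single-step lemma with the induction hypothesis.

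The single-step lemma I would prove by case analysis on the typed operational rules of Figures~\ref{fig:typed_semantics_stacks1} and~\ref{fig:typed_semantics_stacks2}, verifying in each case that the change (if any) to $(\Delta,s)$ is exactly compensated by the change to the head of the stack. The easy cases are \nameref{stm_rt_skip}, \nameref{stm_rt_whilefalse}, \nameref{stm_rt_assv}, \nameref{stm_rt_assf}: none of these alters $\Delta$ or $s$, and the stack change is either the removal of a statement or the absorption of \code{while}, which $\FINISH{\cdot}$ traverses without side effect. For \nameref{stm_rt_if} and \nameref{stm_rt_whiletrue}, the new head is $S_b; s; Q$ (resp.\ $S; s; \code{while}\dots; Q$) and the new security level is the raised $s'$; the $\FINISH$ function skips the statement, then hits the restore symbol $s$, which resets the running level back to $s$—exactly matching $\FINISH{\code{if}\dots;Q,\Delta,s}$. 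For \nameref{stm_rt_decv}, the rule extends $\Delta$ with $x:\TVAR{B_{s'}}$ and pushes $\DEL{x}$ onto the stack; the two effects cancel by the third clause of $\FINISH{\cdot}$. The dual rule \nameref{stm_rt_delv} strips $x$ from both $\Delta$ and the stack, so the two sides of the equation coincide trivially.

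The most intricate cases are the method-call rules \nameref{stm_rt_call}, \nameref{stm_rt_dcall}, \nameref{stm_rt_fcall} and the return rule \nameref{stm_rt_return}. In a call, the reduct has the form $S;(\ENV{V},\Delta);s;Q$ with the new type environment $\Delta'$ and new level $s'$; tracing $\FINISH{\cdot}$ through this prefix shows that the return symbol resets the environment to the saved $\Delta$ and the restore symbol resets the level to the saved $s$, before $\FINISH{\cdot}$ continues on $Q$. Thus $\FINISH{S;(\ENV{V},\Delta);s;Q,\Delta',s'} = \FINISH{Q,\Delta,s}$, which is exactly $\FINISH{\CALL{e_1}{m}{\VEC{e}}{e_2};Q,\Delta,s}$. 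The \nameref{stm_rt_restore} rule is immediate from the fifth clause. Finally, for \nameref{stm_rt_return}, the head $(\ENV{V}',\Delta')$ causes the level on the right-hand side to be reset to $\Delta'$ (under the left-hand side's $\FINISH$), matching the $\Delta'$ that the rule actually installs.

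The main obstacle is purely bookkeeping: keeping track, in the call cases, of the three independent pieces of saved state (the caller's $\ENV{V}$, the caller's $\Delta$, and the caller's running level $s$) and verifying that $\FINISH{\cdot}$ traverses them in precisely the order in which the semantics restores them. Once the invariance lemma is in place, the induction itself is trivial, as is the base case $Q'=\bot$ which by Definition~\ref{def:terminal_conf_types} returns the current $(\Delta',s')$ verbatim.
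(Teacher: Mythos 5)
Your proposal is correct and follows essentially the same route as the paper's proof: induction on the length of the transition sequence, with a case analysis on the rule used for the first step showing that $\FINISH{\cdot}$ is invariant under a single typed transition (you merely factor this out as an explicit lemma, whereas the paper inlines it into each inductive case). The per-rule reasoning — including the treatment of the call rules, where the pushed return symbol and saved level are traversed by $\FINISH{\cdot}$ in exactly the order the semantics restores them — matches the paper's argument.
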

\begin{proof}
We proceed by induction on the length of $\trans*$.
In the base case, $Q=\bot$, $\Delta'=\Delta$ and $s'=s$, and it trivially holds by definition of $\FINISH\cdot$.
For the inductive case, let
\begin{equation*}
  \Sigma; \Gamma; \Delta \ssemDash \CONF{Q, \ENV{TSV}} \trans \Sigma; \Gamma; \Delta'' \ssemDash[{s''}] \CONF{Q'', \ENV{T}, \ENV{SV}''}
  \trans* \Sigma; \Gamma; \Delta' \ssemDash[{s'}] \CONF{Q', \ENV{T}, \ENV{SV}'}
\end{equation*}
and let us proceed by case analysis of the semantic rule used to conclude the first transition:
\begin{itemize}
  \item The cases for \nameref{stm_rt_skip}, \nameref{stm_rt_whilefalse}, \nameref{stm_rt_assv}, \nameref{stm_rt_assf} are immediately seen to hold, since neither $\Delta$ nor $s$ are modified.

  \item The cases for \nameref{stm_rt_if} and \nameref{stm_rt_whiletrue} are similar, so we shall only consider the case for \nameref{stm_rt_if}.
    We have that $Q = \code{if $e$ then $S_\TRUE$ else $S_\FALSE$}; \hat Q$ and  $Q''= S_b;s;\hat Q$.
    By induction and Definition~\ref{def:terminal_conf_types}, we can conclude 
    \begin{equation*}
      (\Delta',s') = \FINISH{S;s;\hat Q, \Delta, s''} = \FINISH{s;\hat Q, \Delta, s''} = \FINISH{\hat Q, \Delta, s} = \FINISH{Q, \Delta, s}
    \end{equation*}

  \item If \nameref{stm_rt_decv} was used, then $Q= \code{$\TVAR{B_{\bar s}}$ $x$ := $e$ in $S$} ; \hat Q$ and $Q''= S ; \DEL{x} ; \hat Q$; moreover, 
  $\Delta''= \Delta, x:\TVAR{B_{\bar s}}$ and $s''=s$. By induction and Definition~\ref{def:terminal_conf_types}, we can conclude
    \begin{align*}
      (\Delta',s') = ~   & \FINISH{ S ; \DEL{x} ; \hat Q, (\Delta, x:\TVAR{B_{\bar s}}), s} \\
      = ~& \FINISH{\DEL{x} ; \hat Q, (\Delta, x:\TVAR{B_{\bar s}}), s}      \\
      = ~& \FINISH{\hat Q, \Delta, s} \\
      = ~& \FINISH{Q, \Delta, s}
    \end{align*}

  \item The cases for \nameref{stm_rt_call}, \nameref{stm_rt_dcall} and \nameref{stm_rt_fcall} are similar, since the reduct has the same form in all three cases.
    We shall only consider the case for \nameref{stm_rt_call}, where
    we have that $Q=\CALL{e_1}{m}{\VEC{e}}{e_2}; \hat Q$ and $Q''=S ; (\ENV{V}, \Delta) ; s ; \hat Q$.
    By induction and Definition~\ref{def:terminal_conf_types}, 
    \begin{align*}
      (\Delta',s') =~& \FINISH{S;(\ENV{V}, \Delta);s;\hat Q, \Delta'', s''} \\
      = ~& \FINISH{(\ENV{V}, \Delta);s;\hat Q, \Delta'', s''}   \\
      = ~& \FINISH{s;\hat Q, \Delta, s''} \\
      = ~& \FINISH{\hat Q, \Delta, s} \\
      = ~& \FINISH{Q, \Delta, s}
    \end{align*}

  \item If \nameref{stm_rt_delv} was used, then $Q=\DEL{x};Q''$, $\Delta = \Delta'', x:\TVAR{B_{\bar s}}$ and $s'' = s$. Hence, by induction and Definition~\ref{def:terminal_conf_types}, we have
    \begin{equation*}
    (\Delta',s') =  \FINISH{Q'', \Delta'', s''} =  \FINISH{Q, \Delta, s}
    \end{equation*}

  \item If \nameref{stm_rt_return} was used, then $Q= (\ENV{V}'', \Delta'');Q''$ and $s'' = s$.
    Like before, by induction and Definition~\ref{def:terminal_conf_types}, we have
    \begin{equation*}
    (\Delta',s') =  \FINISH{Q'', \Delta'', s''} =  \FINISH{Q, \Delta, s}
    \end{equation*}

  \item If \nameref{stm_rt_restore} was used, then $Q=s''; Q''$ and $\Delta'' = \Delta$.
    Again by induction and Definition~\ref{def:terminal_conf_types}, 
    \begin{equation*}
    (\Delta',s') =  \FINISH{Q'', \Delta'', s''} =  \FINISH{Q, \Delta, s}
    \qedhere
    \end{equation*}
\end{itemize}
\end{proof}


We now provide a few auxiliary lemmas for the proof of Theorem~\ref{theorem:stacks_runtime_preservation}.
The lemmas are shown by induction on the length of the type inference (Figure~\ref{fig:type_rules_env}), or of the consistency inference (Figure~\ref{fig:consistency_rules_envtsv}), or of the well-formedness inference (Figure~\ref{fig:wellformedness_stack}).
In the substitution lemmas (Lemmas~\ref{lemma:substitution_envv}--\ref{lemma:substitution_envs}) we incorporate consistency in addition to well-typedness. 
They could equally well be shown separately, but since we shall always require both properties to hold for the $\ENV{SV}$ we shall consider, we prefer to combine them to avoid repeating essentially the same premises.

\begin{lemma}[Weakening of $\Delta$]\label{lemma:weakening_delta}
If $\Sigma; \Gamma; \Delta \vdash \ENV{V}$, and $x \notin \DOM{\ENV{V}}$, then $\Sigma; \Gamma; \Delta, x:T \vdash \ENV{V}$.
\end{lemma}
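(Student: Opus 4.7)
The plan is to proceed by induction on the derivation of $\Sigma; \Gamma; \Delta \vdash \ENV{V}$, following the rules for typing variable environments given in Figure~\ref{fig:type_rules_env} (specifically \nameref{env_t_envv_t} or \nameref{env_t_envv_u}, depending on whether the syntax carries local type annotations). The key observation is that the typing rules for $\ENV{V}$ only look up the type of $x$ in $\Delta$ by means of $\Delta(x)$, and that value typing $\Sigma; \Gamma; \EMPTYSET \vdash v : B_s$ does not mention $\Delta$ at all, so extending $\Delta$ with a fresh binding cannot interfere with any previously satisfied premises.

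In the base case, $\ENV{V} = \ENV{V}^{\EMPTYSET}$, and the conclusion follows immediately from the axiom $\Sigma; \Gamma; \Delta' \vdash \ENV{V}^{\EMPTYSET}$ instantiated with $\Delta' = \Delta, x:T$.

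In the inductive case, $\ENV{V} = \ENV{V}', y:v$ (possibly with a type annotation attached), and the derivation ends with an application whose premises are $\Sigma; \Gamma; \Delta \vdash \ENV{V}'$, $\Sigma; \Gamma; \EMPTYSET \vdash v : B_s$, and $\Delta(y) = \TVAR{B_s}$. Since $x \notin \DOM{\ENV{V}}$, in particular $x \neq y$ and $x \notin \DOM{\ENV{V}'}$. Hence $(\Delta, x:T)(y) = \Delta(y) = \TVAR{B_s}$, and the induction hypothesis applied to $\ENV{V}'$ yields $\Sigma; \Gamma; \Delta, x:T \vdash \ENV{V}'$. Re-applying the same typing rule with these updated premises gives $\Sigma; \Gamma; \Delta, x:T \vdash \ENV{V}', y:v$, as required.

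There is no substantive obstacle: the proof is a straightforward structural induction, and the side condition $x \notin \DOM{\ENV{V}}$ is exactly what is needed to ensure the look-up $(\Delta, x:T)(y) = \Delta(y)$ at each step. The only care to be taken is to mirror the proof for both variants of the variable-environment typing rule (\nameref{env_t_envv_t} and \nameref{env_t_envv_u}), but the argument is identical in both cases.
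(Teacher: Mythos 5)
Your proof is correct and follows essentially the same route as the paper: the paper dispatches this lemma (together with its companions) with the one-line remark that it is shown by induction on the length of the type inference of Figure~\ref{fig:type_rules_env}, and your structural induction---using $x \notin \DOM{\ENV{V}}$ to guarantee that the lookup $(\Delta, x:T)(y) = \Delta(y)$ is undisturbed and that value typing never consults $\Delta$---is precisely the argument being alluded to.
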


\begin{lemma}[Strengthening of $\Delta$]\label{lemma:strengthening_delta}
If $\Sigma; \Gamma; \Delta, x:T \vdash \ENV{V}$, and $x \notin \DOM{\ENV{V}}$, then $\Sigma; \Gamma; \Delta \vdash \ENV{V}$.
\end{lemma}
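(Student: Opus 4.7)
The plan is to prove this strengthening lemma by straightforward induction on the derivation of the well-typedness judgement $\Sigma; \Gamma; \Delta, x:T \vdash \ENV{V}$, using the rules for variable environments in Figure~\ref{fig:type_rules_env}. The key observation is that the side condition on \nameref{env_t_envv_u} (and \nameref{env_t_envv_t}) only inspects $\Delta$ at names that actually appear in $\DOM{\ENV{V}}$; since $x$ is assumed not to be such a name, its entry $x{:}T$ is never consulted in the derivation and can therefore be safely removed.

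For the base case, $\ENV{V} = \ENV{V}^{\EMPTYSET}$ and the derivation uses the empty-environment rule \textsf{t-env-$\EMPTYSET$}, which has no premises that depend on the type environment, so $\Sigma; \Gamma; \Delta \vdash \ENV{V}^{\EMPTYSET}$ follows immediately. For the inductive case, $\ENV{V} = \ENV{V}', y{:}v$ (or $\ENV{V}', y{:}(v, B_s)$, depending on which of the two variants of the rule was used), and the last step is an instance of \nameref{env_t_envv_u} (or \nameref{env_t_envv_t}). Its premises are $\Sigma; \Gamma; \Delta, x{:}T \vdash \ENV{V}'$ and $\Sigma; \Gamma; \EMPTYSET \vdash v : B_s$, with side condition $(\Delta, x{:}T)(y) = \TVAR{B_s}$. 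From $x \notin \DOM{\ENV{V}', y{:}v}$ we get both $x \neq y$ and $x \notin \DOM{\ENV{V}'}$; thus $(\Delta, x{:}T)(y) = \Delta(y)$, and the induction hypothesis applied to $\Sigma; \Gamma; \Delta, x{:}T \vdash \ENV{V}'$ yields $\Sigma; \Gamma; \Delta \vdash \ENV{V}'$. Reapplying the same rule with $\Delta$ in place of $\Delta, x{:}T$ gives the desired $\Sigma; \Gamma; \Delta \vdash \ENV{V}', y{:}v$.

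There is no real obstacle here: the lemma is the mirror image of Lemma~\ref{lemma:weakening_delta} and is essentially bookkeeping. The only point that requires mild care is to observe that the premise $\Sigma; \Gamma; \EMPTYSET \vdash v : B_s$ does not mention $\Delta$ at all, so it is preserved verbatim; and to note that, because the rule \nameref{env_t_envv_u} is applied pointwise for each entry $y{:}v$ in $\ENV{V}$, the condition $x \notin \DOM{\ENV{V}}$ is exactly what is needed to ensure that $\Delta$ and $\Delta, x{:}T$ agree on every $y$ that actually arises during the derivation.
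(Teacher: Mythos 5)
Your proof is correct and follows exactly the route the paper intends: the paper dispatches this lemma (together with the other auxiliary substitution/weakening lemmas) with the single remark that they are ``shown by induction on the length of the type inference (Figure~\ref{fig:type_rules_env})'', and your argument is precisely that induction spelled out, with the key observation that $x \notin \DOM{\ENV{V}}$ guarantees the side condition $\Delta(y) = \TVAR{B_s}$ is unaffected by dropping the entry $x{:}T$ and that the value premise is judged under the empty environment.
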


\begin{lemma}[Substitution in $\ENV{V}$]\label{lemma:substitution_envv}
Assume that
\begin{itemize}
  \item $\Gamma \vdash \ENV{SV}$ (consistency)
  \item $\Sigma; \Gamma; \Delta, x:\TVAR{B_s} \vdash \ENV{V}$ (well-typedness)
  \item $x \in \DOM{\ENV{V}}$
  \item if $\Sigma; \Gamma; \EMPTYSET \vdash v : B_s$ and $v \in \ANAMES$, then $v \in \DOM{\ENV{S}}$.
\end{itemize} 
Then, $\Sigma; \Gamma; \Delta, x:\TVAR{B_s} \vdash \ENV{V}'$ and $\Gamma \vdash \ENV{S}, \ENV{V}'$, where $\ENV{V}' = \ENV{V}\REBIND{x}{v}$.
\end{lemma}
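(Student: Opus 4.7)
}
The plan is to proceed by induction on the length of $\ENV{V}$ (equivalently, on the derivation of the well-typedness judgment $\Sigma; \Gamma; \Delta, x:\TVAR{B_s} \vdash \ENV{V}$), handling the two conclusions (well-typedness of $\ENV{V}'$ and consistency of $\ENV{S}, \ENV{V}'$) simultaneously. Since $x \in \DOM{\ENV{V}}$, we may write $\ENV{V} = \ENV{V}_1, x:v_0, \ENV{V}_2$ and observe that $\ENV{V}' = \ENV{V}_1, x:v, \ENV{V}_2$, where only the single binding for $x$ is affected. The entries in $\ENV{V}_1$ and $\ENV{V}_2$ are untouched, so their contributions to well-typedness (via \nameref{env_t_envv_u}) and to consistency (via \nameref{env_c_envv_rec}) carry over unchanged from the assumptions $\Sigma; \Gamma; \Delta, x:\TVAR{B_s} \vdash \ENV{V}$ and $\Gamma \vdash \ENV{SV}$.

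The interesting step is the reconstruction of the derivations at the binding for $x$. For well-typedness, the application of \nameref{env_t_envv_u} for $x:v$ requires $\Sigma; \Gamma; \EMPTYSET \vdash v : B_s$ together with $(\Delta, x:\TVAR{B_s})(x) = \TVAR{B_s}$; the former is the implicit content of the fourth hypothesis (that $v$ indeed has type $B_s$), and the latter is immediate. For consistency, the application of \nameref{env_c_envv_rec} for $x:v$ requires the side condition $v \in \ANAMES \implies v \in \DOM{\ENV{S}}$, which is precisely the fourth assumption in the lemma. Combining these reconstructed steps with the unchanged sub-derivations for $\ENV{V}_1$ and $\ENV{V}_2$ yields both $\Sigma; \Gamma; \Delta, x:\TVAR{B_s} \vdash \ENV{V}'$ and $\Gamma \vdash \ENV{S}, \ENV{V}'$.

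I do not foresee a substantial obstacle: the proof is essentially a bookkeeping argument, since the update $\ENV{V}\REBIND{x}{v}$ is local and both \nameref{env_t_envv_u} and \nameref{env_c_envv_rec} are local in the sense that they only refer to the newly added binding plus the tail environment. The mildest care is needed to make sure that the induction goes through both the prefix $\ENV{V}_1$ and the suffix $\ENV{V}_2$ relative to the position of $x$; this is handled uniformly by performing the induction on the structure of $\ENV{V}$ and distinguishing whether the last binding added is $x$ (the replacement case) or a different name (the recursive case, where the inductive hypothesis supplies the updated subderivation).
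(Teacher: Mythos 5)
Your proposal is correct and matches the paper's own treatment: the paper proves this lemma (together with the other auxiliary substitution/weakening lemmas) simply ``by induction on the length of the type inference \ldots or of the consistency inference,'' which is exactly the structural induction on $\ENV{V}$ you spell out, with the only non-trivial step being the local reconstruction of the \nameref{env_t_envv_u} and \nameref{env_c_envv_rec} applications at the binding for $x$. Your reading of the fourth hypothesis as implicitly supplying $\Sigma; \Gamma; \EMPTYSET \vdash v : B_s$ is also the intended one, as confirmed by how the lemma is invoked in the proof of Theorem~\ref{theorem:stacks_runtime_preservation}.
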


\begin{lemma}[Substitution in $\ENV{S}$]\label{lemma:substitution_envs}
Assume that
\begin{itemize}
  \item $\Gamma; \ENV{S} \vdash \ENV{S}$ (consistency)
  \item $\Sigma; \Gamma; \EMPTYSET \vdash \ENV{S}$ (well-typedness)
  \item $X \in \DOM{\ENV{S}}$
  \item $\Gamma(X) = (I,s'')$
  \item $\Gamma(I)(p) = \TVAR{B_s}$
  \item if $\Sigma; \Gamma; \EMPTYSET \vdash v : (B_s)$ and $v \in \ANAMES$, then $v \in \DOM{\ENV{S}}$.
\end{itemize}
Then, $\Sigma; \Gamma; \EMPTYSET \vdash \ENV{S}'$ and $\Gamma; \ENV{S}' \vdash \ENV{S}'$, where $\ENV{S}' = \ENV{S}\REBIND{X}{\ENV{F}\REBIND{p}{v}}$.
\end{lemma}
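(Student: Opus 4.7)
The plan is to treat well-typedness and consistency separately and, for each, to proceed by induction on the structure of $\ENV{S}$ (viewed as the ordered list of pairs $X_i : \ENV{F}^i$ appearing in the derivation), locating the entry for $X$ and updating only its field environment. Since the update $\ENV{S}' = \ENV{S}\REBIND{X}{\ENV{F}\REBIND{p}{v}}$ only changes the field $p$ of the single contract $X$, all other entries remain unchanged, so we have $\DOM{\ENV{S}'} = \DOM{\ENV{S}}$.

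First I would do the well-typedness part. By induction on the inference of $\Sigma;\Gamma;\EMPTYSET \vdash \ENV{S}$ (which uses rules \nameref{env_t_envs}): the only non-trivial case is the one where the last entry examined is precisely $X : \ENV{F}$, while in all other inductive steps the extended entry is left untouched and we simply reapply \nameref{env_t_envs} with the induction hypothesis. In the critical case, inversion of \nameref{env_t_envs} yields $\Sigma;\Gamma;\EMPTYSET \vdash_X \ENV{F}$, and a second, auxiliary induction on the structure of $\ENV{F}$ (using \nameref{env_t_envf}) shows that the entry for $p$ can be replaced by any value $v$ with $\Sigma;\Gamma;\EMPTYSET \vdash v : B_s$, since $\Gamma(X)=(I,s'')$ and $\Gamma(I)(p)=\TVAR{B_s}$ are exactly the side conditions required by \nameref{env_t_envf}. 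This gives $\Sigma;\Gamma;\EMPTYSET \vdash_X \ENV{F}\REBIND{p}{v}$, and reapplying \nameref{env_t_envs} concludes the case.

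For consistency, I would mirror the same induction, using \nameref{env_c_envs_rec} in place of \nameref{env_t_envs} and \nameref{env_c_envf_rec} in place of \nameref{env_t_envf}. Observe that the outer judgement $\Gamma;\ENV{S}\vdash \ENV{S}'$ is relative to the \emph{same} state $\ENV{S}'$ appearing as subject, so the key observation is that the domain of the state does not change: $\DOM{\ENV{S}'}=\DOM{\ENV{S}}$. Therefore, for every address-valued field in $\ENV{S}$ other than $\ENV{F}(p)$, the consistency condition $v' \in \DOM{\ENV{S}'}$ is inherited from $v' \in \DOM{\ENV{S}}$ in the original consistency derivation. The only new obligation arises at the updated entry: if the new value $v$ is an address, then by the fourth hypothesis we have $v \in \DOM{\ENV{S}} = \DOM{\ENV{S}'}$, which is exactly the side condition of \nameref{env_c_envf_rec}. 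Combining with \nameref{env_c_envs_rec}, we obtain $\Gamma;\ENV{S}'\vdash \ENV{S}'$.

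The main obstacle is conceptually minor but notationally cumbersome: the well-typedness and consistency rules process the environment as an ordered list, so the induction must carefully ``walk past'' the entries preceding $X:\ENV{F}$ and preserve the rest of the derivation unchanged. This is a routine manipulation, but it does require that we avoid using a naive structural induction that assumes $X$ is the last binding; invariance under reordering of the bindings (or, equivalently, proceeding by induction on $|\DOM{\ENV{S}}|$) makes the argument uniform. Once that detail is handled, the two substitution claims are symmetric to Lemma~\ref{lemma:substitution_envv}, with $\Gamma(I)(p)=\TVAR{B_s}$ playing the role of $\Delta(x)=\TVAR{B_s}$.
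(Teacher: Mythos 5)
Your proposal is correct and follows essentially the same route as the paper, which only remarks that Lemmas~\ref{lemma:substitution_envv}--\ref{lemma:substitution_envs} are ``shown by induction on the length of the type inference \ldots{} or of the consistency inference''; your write-up is exactly that induction, carried out with the right inversions of \nameref{env_t_envs}/\nameref{env_t_envf} and \nameref{env_c_envs_rec}/\nameref{env_c_envf_rec}. The observation that $\DOM{\ENV{S}'} = \DOM{\ENV{S}}$, so that all pre-existing consistency side conditions transfer and only the updated field incurs a new obligation (discharged by the last hypothesis), is precisely the point that makes the consistency half go through.
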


\begin{lemma}[Properties of well-formedness]\label{lemma:wellformedness_stacks_properties}
~
\begin{itemize}
  \item If $\Sigma; \Gamma; \ENV{SV}; s \vdash Q$ then $s \ordgeq \FIRSTS{Q}$. 
  \item If $\Sigma; \Gamma; \ENV{SV}; s \vdash Q$ and $s' \ordgeq s$ then $\Sigma; \Gamma; \ENV{SV}; s' \vdash s; Q$.
  \item If $\Sigma; \Gamma; \ENV{SV}; s \vdash Q$ and $x \in \DOM{\ENV{V}}$ then $\Sigma; \Gamma; \ENV{S}, \ENV{V}\REBIND{x}{v}; s \vdash Q$ for any $v$.
  \item If $\Sigma; \Gamma; \ENV{SV}; s \vdash Q$ and $X \in \DOM{\ENV{S}}$ and $p \in \DOM{\ENV{S}(X)}$ then $\Sigma; \Gamma; \ENV{S}\REBIND{X}{\ENV{F}\REBIND{p}{v}} , \ENV{V}; s \vdash Q$ for any $v$.
\end{itemize}
\end{lemma}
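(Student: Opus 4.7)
\medskip

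\noindent\textbf{Proof proposal.} The plan is to prove each of the four items by a separate, straightforward induction on the derivation of the well-formedness judgement $\Sigma; \Gamma; \ENV{SV}; s \vdash Q$ (Figure~\ref{fig:wellformedness_stack}), using the auxiliary definition of $\FIRSTS{Q}$ from Appendix~\ref{app:typed_semantics_stacks}. The five rules \nameref{wf_stack_bot}, \nameref{wf_stack_stm}, \nameref{wf_stack_del}, \nameref{wf_stack_return}, and \nameref{wf_stack_restore} are syntax-directed on the structure of $Q$, so in each case we know exactly which rule concludes the premise, and we can directly re-apply it (after invoking the induction hypothesis on the subderivation, where applicable).

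For item 1, I would argue as follows. If $Q=\bot$ then $\FIRSTS{Q}=\SBOT$ and the claim is immediate. If $Q$ has the form $S;Q'$, $\DEL{x};Q'$, or $(\ENV{V}',\Delta);Q'$, then $\FIRSTS{Q}=\FIRSTS{Q'}$ by definition, and the premise of the applied rule provides well-formedness of $Q'$ at the \emph{same} level $s$; hence the induction hypothesis yields $s \ordgeq \FIRSTS{Q'}=\FIRSTS{Q}$. The only interesting case is $Q=s';Q'$, where $\FIRSTS{Q}=s'$: here the side condition of \nameref{wf_stack_restore} directly gives $s\ordgeq s'$. Item 2 is essentially an immediate application of \nameref{wf_stack_restore}: both of its premises are available by assumption.

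Items 3 and 4 are also inductive, and they work because the \emph{domains} of $\ENV{V}$ and $\ENV{S}$ (and of each inner $\ENV{F}$) are preserved under the updates $\ENV{V}\REBIND{x}{v}$ and $\ENV{S}\REBIND{X}{\ENV{F}\REBIND{p}{v}}$, respectively; these are the only aspects of $\ENV{SV}$ that appear in the side conditions of \nameref{wf_stack_stm}. Consequently, in the base case \nameref{wf_stack_bot} the claim is vacuous, and in the cases \nameref{wf_stack_stm} and \nameref{wf_stack_restore} we simply apply the induction hypothesis to the subderivation and re-apply the rule. For \nameref{wf_stack_return}, the recursive premise is stated in terms of the environment $\ENV{V}'$ carried on the stack (not the outer $\ENV{V}$), so the outer update is irrelevant to the premises and the rule can be re-applied unchanged; the same phenomenon occurs for item 4 since $\ENV{S}$ is passed through unmodified.

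The only mildly delicate point, and the part I would write out most carefully, is the case \nameref{wf_stack_del} in item 3. There, the conclusion is about $\ENV{V}, x':v'$, while the premise recurses on $\ENV{V}$ alone. I would split on whether $x=x'$: if so, the update overwrites precisely the entry $x':v'$, and the original premise together with one application of \nameref{wf_stack_del} gives the result; if $x\neq x'$, I would first rewrite $(\ENV{V}, x':v')\REBIND{x}{v} = (\ENV{V}\REBIND{x}{v}, x':v')$, apply the induction hypothesis to the premise (which preserves $x\in\DOM{\ENV{V}}$ because $x\neq x'$), and then re-apply \nameref{wf_stack_del}. No obstacle of real substance arises; the proof is a routine but careful rule-by-rule case analysis.
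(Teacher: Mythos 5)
Your proposal is correct and matches the paper's approach: the paper only remarks that this lemma is ``shown by induction on the well-formedness inference (Figure~\ref{fig:wellformedness_stack})'', and your rule-by-rule case analysis is a faithful and accurate elaboration of exactly that induction, including the two points that genuinely need care (the side condition of \nameref{wf_stack_restore} for item~1, and the fact that the updates preserve the domains of $\ENV{V}$ and $\ENV{S}$, which is all the side conditions of \nameref{wf_stack_stm} and the premises of \nameref{wf_stack_return} depend on). Your treatment of the \nameref{wf_stack_del} case in item~3 is also right; no gaps.
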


%
%

\label{proof:stacks_runtime_preservation}
\begin{proof}[Proof of Theorem~\ref{theorem:stacks_runtime_preservation}]
By case analysis of the rules used to conclude the transition (Figs.~\ref{fig:typed_semantics_stacks1}--\ref{fig:typed_semantics_stacks2}).

\bigskip
If \nameref{stm_rt_skip} was used, then the transition is of the form 
\begin{equation*}
         \Sigma; \Gamma; \Delta \ssemDash \CONF{\code{skip} ; Q, \ENV{TSV}} 
  \trans \Sigma; \Gamma; \Delta \ssemDash \CONF{Q, \ENV{TSV}}  
\end{equation*}
and the properties are immediately seen to hold as follows:
\begin{itemize}
  \item Well-typedness and Consistency hold by assumption.
  \item Well-formedness follows from the assumption 
$\Sigma; \Gamma; \ENV{SV}; s \vdash \code{skip}; Q$ and rule \nameref{wf_stack_stm}.
  \item Equivalence holds, since the two environments are equal, so they agree on \emph{all} fields, regardless of their security level.
\end{itemize}

\bigskip
If \nameref{stm_rt_if} was used, then the transition is of the form
$$
\Sigma; \Gamma; \Delta \ssemDash \CONF{\code{if $e$ then $S_{\TRUE}$ else $S_{\FALSE}$} ; Q, \ENV{TSV}}
  \trans \Sigma; \Gamma; \Delta \ssemDash[{s'}] \CONF{S_b ; s; Q, \ENV{TSV}}  
$$
and from the premise and side condition we have that 
\begin{align*}
                       s & \ordleq s' \\
  \Sigma; \Gamma; \Delta & \esemDash[(\TBOOL, s')] \CONF{e, \ENV{SV}} \trans b \in \BOOLEANS .
\end{align*}
The properties are now seen to hold as follows:
\begin{itemize}
  \item Well-typedness and Consistency hold by assumption.
  \item For Well-formedness, we know by assumption that
    \begin{equation*}
      \Sigma; \Gamma; \ENV{SV}; s \vdash \code{if $e$ then $S_{\TRUE}$ else $S_{\FALSE}$} ; Q
    \end{equation*}
    This was concluded by \nameref{wf_stack_stm}, and from the premise we have that $\Sigma; \Gamma; \ENV{SV}; s \vdash Q$ holds.
    From $s' \ordgeq s$ and Lemma~\ref{lemma:wellformedness_stacks_properties} we then have that $\Sigma; \Gamma; \ENV{SV}; s' \vdash s; Q$ holds as well.
    Then $\Sigma; \Gamma; \ENV{SV}; s' \vdash S_b; s; Q$ can be concluded by \nameref{wf_stack_stm}. 

  \item Equivalence holds, since the two environments are equal.
\end{itemize}

\bigskip
If \nameref{stm_rt_whiletrue} was used, then the transition is of the form
$$
\Sigma; \Gamma; \Delta \ssemDash \CONF{\code{while $e$ do $S$} ; Q, \ENV{TSV}}
      \trans \Sigma; \Gamma; \Delta \ssemDash[{s'}] \CONF{S ; s ; \code{while $e$ do $S$} ; Q, \ENV{TSV}} 
$$
 and from the side condition we have that $s \ordleq s'$.
The properties are now seen to hold as follows:
\begin{itemize}
  \item Well-typedness and Consistency hold by assumption.
  \item For Well-formedness, we know that,
    by assumption,
    \begin{equation*}
      \Sigma; \Gamma; \ENV{SV}; s \vdash \code{while $e$ do $S$} ; Q
    \end{equation*}
    From $s' \ordgeq s$ and Lemma~\ref{lemma:wellformedness_stacks_properties}, we then have that 
    \begin{equation*}
      \Sigma; \Gamma; \ENV{SV}; s' \vdash s; \code{while $e$ do $S$}; Q
    \end{equation*} 
    Then, $\Sigma; \Gamma; \ENV{SV}; s' \vdash S; s; \code{while $e$ do $S$}; Q$ can be concluded by \nameref{wf_stack_stm}. 

  \item Equivalence holds, since the two environments are equal.
\end{itemize}

\bigskip
If \nameref{stm_rt_whilefalse} was used, then the transition is of the form
\begin{equation*}
             \Sigma; \Gamma; \Delta \ssemDash \CONF{\code{while $e$ do $S$} ; Q, \ENV{TSV}}  
      \trans \Sigma; \Gamma; \Delta \ssemDash \CONF{Q, \ENV{TSV}} 
\end{equation*}
and the properties are now seen to hold as follows: 
\begin{itemize}
  \item Well-typedness and Consistency hold by assumption.
  \item For Well-formedness, we know by assumption that
    \begin{equation*}
      \Sigma; \Gamma; \ENV{SV}; s \vdash \code{while $e$ do $S$} ; Q
    \end{equation*}
    This was concluded by \nameref{wf_stack_stm}, and from the premise we have the desired $\Sigma; \Gamma; \ENV{SV}; s \vdash Q$.

  \item Equivalence holds, since the two environments are equal.
\end{itemize}

\bigskip
If \nameref{stm_rt_decv} was used, then the transition is of the form
\begin{align*}
& \Sigma; \Gamma; \Delta \ssemDash \CONF{\code{$\TVAR{B, s'}$ $x$ := $e$ in $S$} ; Q, \ENV{TSV}}   \\
      \trans ~& \Sigma; \Gamma; \Delta, x:\TVAR{B, s'} \ssemDash \CONF{S ; \DEL{x} ; Q, \ENV{TS}; \ENV{V}, x:v} 
\end{align*}
and from the premise and side condition we know that 
\begin{align*}
                       x & \notin \DOM{\ENV{V}} \\
                       x & \notin \DOM{\Delta}  \\
  \Sigma; \Gamma; \Delta & \esemDash[(B, s')] \CONF{e, \ENV{SV}} \trans v .
\end{align*}
We now show how each of the properties hold, in turn:
\begin{itemize}
  \item Well-typedness:
    The goal is to show $\Sigma; \Gamma; \Delta, x:\TVAR{B, s'} \vdash \ENV{S}; \ENV{V}, x:v$.
    We know that $\Sigma; \Gamma; \Delta \vdash \ENV{S}$ holds by assumption.
    From the premise and side condition of \nameref{stm_rt_decv}, we can then conclude as follows:
    \begin{align*}
      \Sigma; \Gamma; \Delta                 & \vdash v : (B, s')           & \text{by Theorem~\ref{theorem:expressions_runtime_safety},} \\
      \Sigma; \Gamma; \Delta, x:\TVAR{B, s'} & \vdash \ENV{V}               & \text{by Lemma~\ref{lemma:weakening_delta},}                \\
      \Sigma; \Gamma; \Delta, x:\TVAR{B, s'} & \vdash \ENV{V}, x:v          & \text{by rule \nameref{env_t_envv_u},}                      \\
      \Sigma; \Gamma; \Delta, x:\TVAR{B, s'} & \vdash \ENV{S}; \ENV{V}, x:v & \text{by rule \nameref{env_t_envsv}.}
    \end{align*}

  \item Consistency:
    The goal is to show $\Gamma \vdash \ENV{S}; \ENV{V}, x:v$.
    We know that $\Gamma \vdash \ENV{SV}$ holds by assumption.
    From the premise and side condition of \nameref{stm_rt_decv}, we can then conclude as follows:
    \begin{align*}
      \Sigma; \Gamma; \Delta & \vdash v : (B, s')           & \text{by Theorem~\ref{theorem:expressions_runtime_safety},} \\
      v \in \ANAMES          & \implies v \in \DOM{\ENV{S}} & \text{by Corollary~\ref{corollary:expressions_consistency_values},} \\
      \ENV{S}                & \vdash \ENV{V}, x:v          & \text{by rule \nameref{env_c_envv_rec},} \\ 
      \Gamma                 & \vdash \ENV{S}; \ENV{V}, x:v & \text{by rule \nameref{env_c_envsv}.}
    \end{align*}


  \item Well-formedness:
    The goal is to show $\Sigma; \Gamma; \ENV{S}; \ENV{V}, x:v; s \vdash S ; \DEL{x} ; Q$.
    By assumption, we know that 
    \begin{equation*}
      \Sigma; \Gamma; \ENV{SV}, s \vdash \code{$\TVAR{B, s'}$ $x$ := $e$ in $S$} ; Q
    \end{equation*}
    which was concluded by by \nameref{wf_stack_stm}.
    From the premise of that rule, we have that $\Sigma; \Gamma; \ENV{SV}; s \vdash Q$ holds.
    We can then conclude as follows:
    \begin{align*}
      \Sigma; \Gamma; \ENV{S}; \ENV{V}, x:v; s & \vdash \DEL{x} ; Q    & \text{by rule \nameref{wf_stack_del},} \\
      \Sigma; \Gamma; \ENV{S}; \ENV{V}, x:v; s & \vdash S; \DEL{x} ; Q & \text{by rule \nameref{wf_stack_stm}.} 
    \end{align*}

  \item Equivalence holds, since the two environments are equal. 
\end{itemize}

\bigskip
If \nameref{stm_rt_assv} was used, then the transition is of the form
\begin{equation*}
             \Sigma; \Gamma; \Delta \ssemDash \CONF{\code{$x$ := $e$} ; Q, \ENV{TSV}} 
      \trans \Sigma; \Gamma; \Delta \ssemDash \CONF{Q, \ENV{TS}; \ENV{V}\REBIND{x}{v}} 
\end{equation*}
 and from the premise and side condition of \nameref{stm_rt_assv} we know that 
\begin{align*}
      x                  & \in \DOM{\ENV{V}} \\
      \Delta(x)          & = \TVAR{B, s'}    \\
      s                  & \ordleq s'        \\
  \Sigma; \Gamma; \Delta & \esemDash[(B, s')] \CONF{e, \ENV{SV}} \trans v .
\end{align*}
 We now show how each of the properties holds, in turn:
\begin{itemize}
  \item Well-typedness and consistency:
    The goal is to show 
	$$
	\Sigma; \Gamma; \Delta \vdash \ENV{S}; \ENV{V}\REBIND{x}{v}
	\qquad
	\Gamma \vdash \ENV{S}; \ENV{V}\REBIND{x}{v}
	$$
    We have the following:
    \begin{align*}
      \Sigma; \Gamma; \EMPTYSET & \vdash v : (B, s')           & \text{from premises of \nameref{stm_rt_assv} and Thm.~\ref{theorem:expressions_runtime_safety},} \\
      v \in \ANAMES             & \implies v \in \DOM{\ENV{S}} & \text{by Corollary~\ref{corollary:expressions_consistency_values}.}
    \end{align*}
     By Lemma~\ref{lemma:substitution_envv} we can then conclude 

  \item Well-formedness:
    The goal is to show $\Sigma; \Gamma; \ENV{S}; \ENV{V}\REBIND{x}{v}; s \vdash Q$.
    We know that 
    \begin{equation*}
      \Sigma; \Gamma; \ENV{S}; \ENV{V}; s \vdash \code{$x$ := $e$} ; Q
    \end{equation*}
 which was concluded by \nameref{wf_stack_stm}, and from the premise of this rule we have that $\Sigma; \Gamma; \ENV{S}; \ENV{V}; s \vdash Q$.
We can conclud by Lemma~\ref{lemma:wellformedness_stacks_properties}.

  \item Equivalence holds, since the two environments are equal.
\end{itemize}

\bigskip
If \nameref{stm_rt_assf} was used, then the transition is of the form
$$
\quad \Sigma; \Gamma; \Delta \ssemDash \CONF{\code{this.$p$ := $e$} ; Q, \ENV{TSV}} 
      \trans \Sigma; \Gamma; \Delta \ssemDash \CONF{Q, \ENV{T}; \ENV{S}\REBIND{X}{\ENV{F}\REBIND{p}{v}}; \ENV{V}} 
$$
 and from the premises and side condition of \nameref{stm_rt_assf} we know that
\begin{align*}
  X                      & = \ENV{V}(\code{this}) \\
  \ENV{F}                & = \ENV{S}(X)           \\
  p                      & \in \DOM{\ENV{F}}      \\
  \TVAR{I,s_1}           & = \Delta(\code{this})  \\
  \TVAR{B, s'}           & = \Gamma(I)(p)         \\
  \Sigma; \Gamma; \Delta & \esemDash[(B, s')] \CONF{e, \ENV{SV}} \trans v \\ 
  s_1                    & \ordleq s'             \\
  s                      & \ordleq s' .
\end{align*}
 We now show how each of the properties holds, in turn:
\begin{itemize}
  \item Well-typedness and consistency:
    The goals are to show 
	$$
      \Sigma; \Gamma; \Delta \vdash \ENV{S}\REBIND{X}{\ENV{F}\REBIND{p}{v}}; \ENV{V}
      \qquad
      \Gamma \vdash \ENV{S}\REBIND{X}{\ENV{F}\REBIND{p}{v}}; \ENV{V}
    $$
    From $\ENV{V}(\code{this}) = X$ and consistency of $\ENV{SV}$ we infer that $X \in \DOM{\ENV{S}}$.
    From the premises and Theorem~\ref{theorem:expressions_runtime_safety} we know that $\Sigma; \Gamma; \EMPTYSET \vdash v : (B, s')$, and by Corollary~\ref{corollary:expressions_consistency_values} that $v \in \ANAMES \implies v \in \DOM{\ENV{S}}$.
    By Lemma~\ref{lemma:substitution_envs} we can then conclude.

  \item Well-formedness:
    The goal is to show 
    \begin{equation*}
      \Sigma; \Gamma; \ENV{S}\REBIND{X}{\ENV{F}\REBIND{p}{v}}; \ENV{V}; s \vdash Q .
    \end{equation*}
    
    \noindent We know that 
    \begin{equation*}
      \Sigma; \Gamma; \ENV{S}; \ENV{V}; s \vdash \code{this.$p$ := $e$} ; Q 
    \end{equation*}

    \noindent which was concluded by \nameref{wf_stack_stm}, and from the premise of this rule we have that $\Sigma; \Gamma; \ENV{S}; \ENV{V}; s \vdash Q$.
    Then, we conclude by Lemma~\ref{lemma:wellformedness_stacks_properties}.

  \item Equivalence:
    The goal is to show 
    \begin{equation*}
      \forall s'' \SUCHTHAT s \not\ordleq s'' \implies \Gamma \vdash \ENV{S} =_{s''} \ENV{S}\REBIND{X}{\ENV{F}\REBIND{p}{v}} .
    \end{equation*}

    \noindent The only difference between the two environments is in the value stored at $p$, and from the premise and side conditions we know that $\Gamma(I)(p) = \TVAR{B, s'}$ and $s \ordleq s'$.
    Hence, the field $p$ is not amongst those for which the equivalence must hold.
\end{itemize}

\bigskip
Suppose one of \nameref{stm_rt_call}, \nameref{stm_rt_dcall}, \nameref{stm_rt_fcall} was used. 
We consider only the case for \nameref{stm_rt_call}, since the two other cases are similar or simpler.
If \nameref{stm_rt_call} was used, then the transition is 
$$
\Sigma; \Gamma; \Delta \ssemDash \CONF{ \CALL{e_1}{m}{\VEC{e}}{e_2} ; Q, \ENV{TSV}}   
      \trans \Sigma; \Gamma; \Delta' \ssemDash[{s'}] \CONF{S ; (\ENV{V}, \Delta) ; s ; Q, \ENV{T}; \ENV{SV}'} 
$$ 
where $S$ is the method body; $\ENV{S}'$ is modified because of the update of the \code{balance} fields; and $\ENV{V}'$ is the new variable environment containing bindings for the formal parameters of the method.
As there are many side conditions, we shall only mention those that are relevant for each property below.
We now show how each of the properties hold, in turn:

\begin{itemize}
  \item Well-typedness and consistency:
    The goal is to show: 
	$$
	\Sigma; \Gamma; \Delta' \vdash \ENV{SV}' 
	\qquad
      	\Gamma \vdash \ENV{SV}'
	$$
    We treat each environment separately:
    \begin{itemize}
      \item From the premise of \nameref{stm_rt_call}, we have that 
        \begin{equation*}
          \Sigma; \Gamma; \Delta \esemDash[(\TINT, s')] \CONF{e_2, \ENV{SV}} \trans z
        \end{equation*} 
 and by Theorem~\ref{theorem:expressions_runtime_safety}, we know that $z$ is an integer and all variables read in the evaluation are of level $s$ or lower.
        From the side condition $s' \ordleq s_3, s_4$, we have that the levels of the two \code{balance} fields, $s_3, s_4$ are higher than, or equal to $z$.
        From another side condition, we have that 
        \begin{equation*}
          \ENV{S}'  = \ENV{S}\REBIND{X}{\ENV{F}^X [\code{balance -= } z]}\REBIND{Y}{\ENV{F}^Y [\code{balance += } z]} 
        \end{equation*}
	 where $X$ is obtained from \code{this}, and $Y$ is obtained from $e_1$.
        This update is thus equivalent to executing the statements
        \begin{align*}
          \code{this.balance}  & \code{ := this.balance - $e_2$}  \\ 
          \code{$e_1$.balance} & \code{ := $e_1$.balance + $e_2$}
        \end{align*}
	 at level $s'$.
        We see that the transition in both cases can be concluded by rule \nameref{stm_rt_assf}.
        Hence, well-typedness and consistency of $\ENV{S}'$ holds by the case for \nameref{stm_rt_assf} above.
        
      \item From the side condition $\ENV{V}(\code{this}) = X$, and the (expanded) premise
        \begin{align*} 
          \Sigma; \Gamma; \Delta & \esemDash[(I^Y, s')]   \CONF{e_1, \ENV{SV}}  \trans Y   \\
          \Sigma; \Gamma; \Delta & \esemDash[(\TINT, s')] \CONF{e_2, \ENV{SV}}  \trans z   \\
          \Sigma; \Gamma; \Delta & \esemDash[(B_1, s_1')] \CONF{e_1', \ENV{SV}} \trans v_1 \\ 
                                 & \vdots                                                  \\
          \Sigma; \Gamma; \Delta & \esemDash[(B_h, s_h')] \CONF{e_h', \ENV{SV}} \trans v_h 
        \end{align*}

        \noindent and Theorem~\ref{theorem:expressions_runtime_safety}, we know that the bindings of the formal parameters in the new $\ENV{V}'$, and the type bindings in the new $\Delta'$ will correspond, as in the side condition we have 
        \begin{align*}
          \ENV{V}' & = \code{this}:Y, \code{sender}:X, \code{value}:z, x_1:v_1, \ldots ,x_h:v_h                   \\
          \Delta'  & = \code{this}:\TVAR{I^Y, s_2}, \code{sender}:\TVAR{I^X, s_1}, \code{value}:\TVAR{\TINT, s'}, \\ 
                   & \qquad x_1:\TVAR{B_1, s_1'}, \ldots, x_h:\TVAR{B_h, s_h'} 
        \end{align*}

        \noindent and the types of \code{this} and \code{sender} are obtained from the actual types of $X$ and $Y$ in $\Gamma$.
        Hence $\Sigma; \Gamma; \Delta' \vdash \ENV{V}'$ (well-typedness) holds.

        \item We know directly from the side conditions that $X$ and $Y$, which appear as values in $\ENV{V}$, have types in $\Gamma$; so, by the assumption of consistency of $\ENV{S}$, we also know that $X$ and $Y$ must appear in the domain of $\ENV{S}$ and therefore also in the domain of $\ENV{S}'$, since the domain is static. 
        Thus, consistency for these entries holds.

       \item For the values $v_1, \ldots, v_h$, we use Corollary~\ref{corollary:expressions_consistency_values} to conclude that 
        \begin{equation*}
          v_i \in \ANAMES \implies v_i \in \DOM{\ENV{S}}
        \end{equation*}
	 which then again implies $v_i \in \DOM{\ENV{S}'}$, since the domain is static.
        Hence, $\ENV{S}' \vdash \ENV{V}'$ (consistency) holds as well.
    \end{itemize}

    By combining the four results above, we prove well-typedness and consistency of $\ENV{SV}'$.

  \item Well-formedness:
    The goal is to show $\Sigma; \Gamma; \ENV{SV}'; s' \vdash S ; (\ENV{V}, \Delta) ; s ; Q$.
    By assumption, 
    \begin{equation*}
      \Sigma; \Gamma; \ENV{SV}, s \vdash \CALL{e_1}{m}{\VEC{e}}{e_2} ; Q
    \end{equation*}

    \noindent holds, which was concluded by \nameref{wf_stack_stm}.
    Let 
    \begin{equation*}
      \ENV{S}' = \ENV{S}\REBIND{X}{\ENV{F}^X [\code{balance -= } z]}\REBIND{Y}{\ENV{F}^Y [\code{balance += } z]} 
    \end{equation*}

    \noindent for ease of notation.
    We know that $s \ordleq s'$ from the side condition of \nameref{stm_rt_call}.
    We then conclude the following:
    \begin{align*} 
      \Sigma; \Gamma; \ENV{SV}, s           & \vdash Q    & \text{from premise of \nameref{wf_stack_stm},}                \\
      \Sigma; \Gamma; \ENV{S}'; \ENV{V}, s  & \vdash Q    & \text{by Lemma~\ref{lemma:wellformedness_stacks_properties},} \\
      \Sigma; \Gamma; \ENV{S}'; \ENV{V}, s' & \vdash s; Q & \text{by rule \nameref{wf_stack_restore}.}
    \end{align*}

    \noindent We also know the following:
    \begin{align*}
      \Sigma; \Gamma; \Delta & \vdash \ENV{V} & \text{by assumption} \\
      \ENV{S}'               & \vdash \ENV{V} & \text{by assumption, since $\DOM{\ENV{S}} = \DOM{\ENV{S}'}$.}
    \end{align*}

    \noindent Using this and the above, by \nameref{wf_stack_return} we can then conclude 
    \begin{equation*}
      \Sigma; \Gamma; \ENV{SV}'; s' \vdash S ; (\ENV{V}, \Delta) ; s ; Q
    \end{equation*}

  \item Equality:
    The goal is to show $\forall s'' \SUCHTHAT s \not\ordleq s'' \implies \Gamma \vdash \ENV{S} =_{s''} \ENV{S}'$.

    We know that the only difference between $\ENV{S}$ and $\ENV{S}'$ is that the fields \code{$X$.balance} and \code{$Y$.balance} may have been modified.
    From the side conditions we have that 
    \begin{align*}
      \TVAR{\TINT, s_3} & = \Gamma(I^X)(\code{balance}) \\
      \TVAR{\TINT, s_4} & = \Gamma(I^Y)(\code{balance}) \\ 
      s \ordleq s'      & \ordleq s_3, s_4     
    \end{align*}

    \noindent so by transitivity $s \ordleq s_3, s_4$.
    As neither field is strictly lower than, or unrelated to, $s$, we therefore have that $\forall s'' \SUCHTHAT s \not\ordleq s'' \implies \Gamma \vdash \ENV{S} =_{s''} \ENV{S}'$ holds.
\end{itemize}


\bigskip
If \nameref{stm_rt_delv} was used, then the transition is of the form 
\begin{equation*}
             \Sigma; \Gamma; \Delta, x:\TVAR{B_{s'}} \ssemDash \CONF{\DEL{x} ; Q, \ENV{TS}; \ENV{V}, x:v } 
      \trans \Sigma; \Gamma; \Delta \ssemDash \CONF{Q, \ENV{TSV}} 
\end{equation*}
 We now show how each of the properties holds, in turn:
\begin{itemize}
  \item Well-typedness:
    The goal is to show $\Sigma; \Gamma; \Delta \vdash \ENV{SV}$.
    We have that 
    \begin{align*}
      \Sigma; \Gamma; \EMPTYSET               & \vdash \ENV{S}      & \text{by assumption,} \\ 
      \Sigma; \Gamma; \Delta, x:\TVAR{B_{s'}} & \vdash \ENV{V}, x:v & \text{by assumption,} \\
      \Sigma; \Gamma; \Delta, x:\TVAR{B_{s'}} & \vdash \ENV{V}      & \text{from premise of \nameref{env_t_envv_u},} \\
      \Sigma; \Gamma; \Delta                  & \vdash \ENV{V}      & \text{by Lemma~\ref{lemma:strengthening_delta},} \\
      \Sigma; \Gamma; \Delta                  & \vdash \ENV{SV}     & \text{by \nameref{env_t_envsv}.}
    \end{align*}

  \item Consistency:
    The goal is to show $\Gamma \vdash \ENV{SV}$. This holds because
    \begin{align*}
      \Gamma; \ENV{S} & \vdash \ENV{S}      & \text{by assumption,} \\ 
      \Gamma; \ENV{S} & \vdash \ENV{V}, x:v & \text{by assumption,} \\ 
      \Gamma; \ENV{S} & \vdash \ENV{V}      & \text{from premise of \nameref{env_c_envv_rec}.} \\ 
      \Gamma          & \vdash \ENV{SV}     & \text{by \nameref{env_c_envsv}.}
    \end{align*}

  \item Well-formedness:
    The goal is to show $\Sigma; \Gamma; \ENV{SV}; s \vdash Q$. This holds because
    \begin{align*}
      \Sigma; \Gamma; \ENV{S}; \ENV{V}, x:v s & \vdash \DEL{x}; Q & \text{by assumption,} \\
      \Sigma; \Gamma; \ENV{S}; \ENV{V}, s     & \vdash Q          & \text{from premise of \nameref{wf_stack_del}.}
    \end{align*}

  \item Equivalence holds, since the two environments are equal.
\end{itemize}

\bigskip 
If \nameref{stm_rt_return} was used, then the transition is of the form
\begin{equation*}
             \Sigma; \Gamma; \Delta \ssemDash \CONF{(\ENV{V}', \Delta') ; Q, \ENV{TSV}} 
      \trans \Sigma; \Gamma; \Delta' \ssemDash \CONF{Q, \ENV{TS}; \ENV{V}'} . 
\end{equation*}

\noindent We now show how each of the properties holds, in turn:
\begin{itemize}
  \item Well-typedness:
    The goal is to show $\Sigma; \Gamma; \Delta \vdash \ENV{S}; \ENV{V}'$.
    We have that
    \begin{align*}
      \Sigma; \Gamma; \EMPTYSET   & \vdash \ENV{S}                 & \text{by assumption,} \\ 
      \Sigma; \Gamma; \ENV{SV}; s & \vdash (\ENV{V}', \Delta'); Q  & \text{by assumption,} \\ 
      \Sigma; \Gamma; \Delta'     & \vdash \ENV{V}'                & \text{from premise of \nameref{wf_stack_return},} \\ 
      \Sigma; \Gamma; \Delta'     & \vdash \ENV{S}; \ENV{V}'       & \text{by \nameref{env_t_envsv}.}
    \end{align*}

  \item Consistency:
    The goal is to show $\Gamma \vdash \ENV{SV}$.
    We have that
    \begin{align*}
      \Gamma; \ENV{S}             & \vdash \ENV{S}                 & \text{by assumption,} \\ 
      \Sigma; \Gamma; \ENV{SV}; s & \vdash (\ENV{V}', \Delta'); Q  & \text{by assumption,} \\ 
      \Gamma; \ENV{S}             & \vdash \ENV{V}'                & \text{from premise of \nameref{wf_stack_return},} \\
      \Gamma                      & \vdash \ENV{S}; \ENV{V}'       & \text{by \nameref{env_c_envsv}.}
    \end{align*}

  \item Well-formedness:
    The goal is to show $\Sigma; \Gamma; \ENV{S}; s \vdash Q$.
    We have that
    \begin{align*}
      \Sigma; \Gamma; \ENV{SV}; s           & \vdash (\ENV{V}', \Delta'); Q & \text{by assumption,} \\
      \Sigma; \Gamma; \ENV{S}; \ENV{V}'; s  & \vdash Q                      & \text{from premise of \nameref{wf_stack_return}.} 
    \end{align*}

  \item Equivalence holds, since the two environments are equal.
\end{itemize}

\bigskip 
If \nameref{stm_rt_restore} was used, then the transition is of the form
\begin{equation*}
             \Sigma; \Gamma; \Delta \ssemDash \CONF{s'; Q, \ENV{TSV}} 
      \trans \Sigma; \Gamma; \Delta \ssemDash[{s'}] \CONF{Q, \ENV{TSV}}
\end{equation*}

\noindent and we know from the side condition that $s \ordgeq s'$.
We now show how each of the properties holds:
\begin{itemize}
  \item Well-typedness and Consistency hold by assumption.

  \item Well-formedness:
    The goal is to show $\Sigma; \Gamma; \ENV{SV}; s' \vdash Q$.
    We have that
    \begin{align*}
      \Sigma; \Gamma; \ENV{SV}; s  & \vdash s'; Q & \text{by assumption,} \\
      \Sigma; \Gamma; \ENV{SV}; s' & \vdash Q     & \text{from premise of \nameref{wf_stack_restore}.} 
    \end{align*}

  \item Equivalence holds, since the two environments are equal.
  \qedhere
\end{itemize}
\end{proof}

\label{proof:stacks_runtime_noninterference}
\begin{proof}[Proof of Theorem~\ref{theorem:stacks_runtime_noninterference}]
We first consider the two field environments.
We know that 
\begin{equation*}
  \Gamma \vdash \ENV{S}^1 =_s \ENV{S}^2 .
\end{equation*}
By two applications of Theorem~\ref{theorem:stacks_runtime_preservation}, we have that, for all $s''$ such that $s \not\ordleq s''$:
$$
\Gamma \vdash \ENV{S}^1 =_{s''} \ENV{S}^{1'} 
\qquad
\Gamma \vdash \ENV{S}^2 =_{s''} \ENV{S}^{2'}
$$ 
So, both environments agree before and after the transition on all fields that are \emph{strictly lower} than, or unrelated to, $s$.
Hence, we also have that
\begin{equation*}
  \forall s'' \SUCHTHAT s \not\ordleq s'' \implies \Gamma \vdash \ENV{S}^{1'} =_{s''} \ENV{S}^{2'}
\end{equation*}

Suppose now that the two transitions modified a field that is exactly at level $s$ (the modified field must be the same in both transitions, since $Q$ is the same).
From the semantics (Figures~\ref{fig:typed_semantics_stacks1}--\ref{fig:typed_semantics_stacks2}) we see that there are only three rules that could have been used to conclude a transition that modifies $\ENV{S}$: \nameref{stm_rt_assf}, \nameref{stm_rt_call} and \nameref{stm_rt_fcall}.
The two call rules can modify the \code{balance} fields of the caller and callee, and the assignment rule can modify any other field.
We consider only \nameref{stm_rt_assf}, but the argument is the same also for the two other rules.
The transitions are of the form 
\begin{equation*}
  \begin{split}
              & \Sigma; \Gamma; \Delta \ssemDash \CONF{\code{this.$p$ := $e$} ; Q, \ENV{T}; \ENV{SV}^i} \\
      \trans ~& \Sigma; \Gamma; \Delta \ssemDash \CONF{Q, \ENV{T}; \ENV{S}^i\REBIND{X}{\ENV{F}\REBIND{p}{v_i}}; \ENV{V}^i} 
  \end{split}
\end{equation*}
 for $i \in \SET{1, 2}$, and from the premise of that rule we have that 
\begin{equation*}
  \Sigma; \Gamma; \Delta \esemDash[(B, s')] \CONF{e, \ENV{SV}^i} \trans v_i .
\end{equation*}
 By Corollary~\ref{corollary:expressions_runtime_noninterference}, if
$$
  \Sigma; \Gamma; \Delta \esemDash[(B, s')] \CONF{e, \ENV{SV}^1} \trans v_1 
  \qquad
  \Sigma; \Gamma; \Delta \esemDash[(B, s')] \CONF{e, \ENV{SV}^2} \trans v_2
$$
 then $v_1 = v_2$; so, the expression in the premise yields the same value $v$ in both executions.
Hence, the field $p$ at level $s$ may be changed in the two executions, but it will be changed to the \emph{same value}.
Thus $\ENV{S}^{1'}$ and $\ENV{S}^{2'}$ also agree at level $s$.

For $\ENV{V}^i$, we do not have a result similar to Theorem~\ref{theorem:stacks_runtime_preservation}, so we must examine all the rules that could have been used to conclude the transition:
\begin{itemize}
  \item If one of the rules \nameref{stm_rt_skip}, \nameref{stm_rt_if}, \nameref{stm_rt_whiletrue}, \nameref{stm_rt_whilefalse}, \nameref{stm_rt_assf}, or \nameref{stm_rt_restore} was used, then the result is immediate, as the transition does not modify $\ENV{V}^i$.

  \item If \nameref{stm_rt_delv} was used, the transitions are of the form 
    \begin{equation*}
      \begin{split}
              & \Sigma; \Gamma; \Delta, x:\TVAR{B_{s'}} \ssemDash \CONF{\DEL{x} ; Q, \ENV{T}; \ENV{S}^i; \ENV{V}^i, x:v_i } \\
      \trans ~& \Sigma; \Gamma; \Delta \ssemDash \CONF{Q, \ENV{T}; \ENV{S}^i; \ENV{V}^i} 
      \end{split}
    \end{equation*}
 So, for both $\ENV{V}^1$ and $\ENV{V}^2$ we are removing the entry for the variable $x$.
    Equality of the remaining entries, for level $s$ and lower, then follows from the initial assumption of $s$-equality of the environments.

  \item If \nameref{stm_rt_return} was used, the transitions are of the form 
	$$ \Sigma; \Gamma; \Delta \ssemDash \CONF{(\ENV{V}', \Delta') ; Q, \ENV{T}; \ENV{SV}^i} 
        \trans \Sigma; \Gamma; \Delta' \ssemDash \CONF{Q, \ENV{T}; \ENV{S}^i; \ENV{V}'} 
	$$ 
	i.e.\@ the same variable environment $\ENV{V}'$ is taken off the stack in both transitions, so they obviously agree since they are equal.

  \item If one of \nameref{stm_rt_decv} or \nameref{stm_rt_assv} was used, then the transition either extends the variable environment with a new entry, or modifies an existing entry.
    In either case, the argument is essentially the same.
    In the premise, we have an expression evaluation of the form 
    \begin{equation*}
      \Sigma; \Gamma; \Delta \esemDash[{B_{s'}}] \CONF{e, \ENV{SV}^i} \trans v_i .
    \end{equation*}

    \noindent where $s'$ is the level of the variable being assigned to.
    There are now two cases:
    \begin{enumerate}
      \item If $s' \not\ordleq s$, i.e.\@ $s'$ is \emph{strictly higher} than $s$, (or, in the case of \nameref{stm_rt_decv}, unrelated to $s$), then the result is immediate, since only variables of level $s$ or lower must agree in the two environments for the result to hold.

      \item If $s' \ordleq s$, we have by Corollary~\ref{corollary:expressions_runtime_noninterference} that if 
      $$
        \Sigma; \Gamma; \Delta \esemDash[{B_{s'}}] \CONF{e, \ENV{SV}^1} \trans v_1 
        \qquad
        \Sigma; \Gamma; \Delta \esemDash[{B_{s'}}] \CONF{e, \ENV{SV}^2} \trans v_2
      $$
      then $v_1 = v_2$; so, the expression in the premise yields the same value $v$ in both executions.
      Hence, the two environments will contain the same value $v$ for the variable $x$.
      Equality for all variables of level $s$ or lower then follows from the initial assumption of $s$-equality of the two environments.
    \end{enumerate}

  \item Finally, if one of the call rules \nameref{stm_rt_call}, \nameref{stm_rt_dcall}, or \nameref{stm_rt_fcall} was used, then the $\ENV{V}^{i'}$ environments will be two \emph{new} environments created in the transition, to contain the bindings for the formal parameters $\VEC{x}$ of the method call and for the magic variables.

    For each variable binding, the argument is then exactly the same as for \nameref{stm_rt_decv} and \nameref{stm_rt_assv} above:
    Either the variable is of a level strictly higher than, or unrelated to, $s$, in which case its value is allowed to differ in the two executions; or the variable is of level $s$ or lower, in which case the corresponding expression also must be evaluated at level $s$ or lower, and then we obtain the same value $v$ in both executions by Corollary~\ref{corollary:expressions_runtime_noninterference}.
    Thus, $s$-equality of the two environments is again ensured. \qedhere
\end{itemize}
\end{proof}

\clearpage
\section{Semantic typing of stacks and commands}\label{app:semantic_typing_stacks}
Here we provide the missing details and proofs for the results in Section~\ref{sec:typing_interpretation_stm_stack}.

\begin{proof}[Proof of Lemma~\ref{lemma:untypable}]
By induction on the length of $\ttrans^*$. The base case is an immediate consequence of Definition~\ref{def:typing_interpretation_stm}. For the inductive step, let us consider $\Sigma; \Gamma; \ENV{T} \vDash (Q, \Delta, s) \ttrans (Q'', \Delta'',s'') \ttrans^* (Q', \Delta', s')$; by induction, we know that $(Q'', \Delta'', s'') \not\in \STYPING$ and so, again by Definition~\ref{def:typing_interpretation_stm}, it cannot be that $(Q, \Delta, s) \in \STYPING$.
\end{proof}

In some of the following proofs, we shall need to concatenate two stacks, that is, take a stack $Q_1$, remove the tailing bottom symbol $\bot$, and then append the remainder on top of another stack $Q_2$. Thus, we write $q$ for a finite sequence of stack symbols \emph{except} $\bot$; 
i.e.
\begin{center}
\begin{syntax}[h]
  q \IS \epsilon 
    \OR S; q                 
    \OR \DEL{x}; q
    \OR (\ENV{V}, \Delta); q
    \OR s; q
\end{syntax}
\end{center}
So, we can alternatively write a stack $Q$ as $q; \bot$.

In the following proofs, we shall frequently use function $\FIRSTS\cdot$, defined in Definition~\ref{def:terminal_conf_types}.

\begin{proof}[Proof of Lemma~\ref{lemma:compatibility_stacks_coercion}]
We only prove the first claim of the Lemma, since the second one is an immediate corollary of the first. The proof is by coinduction.
From $\Sigma; \Gamma; \Delta; \ENV{T} \vDash Q : \TCMD{s_1}$ we know there exists a typing interpretation $\STYPI$ containing $(Q, \Delta, s_1)$.
We create a candidate typing interpretation $\CANDSTYPI$ containing $(Q, \Delta, s_2)$ thus:
\begin{equation*}
  \CANDSTYPI \DEFSYM \SET{ (Q', \Delta', s_2') | (Q', \Delta', s_1') \in \STYPI \land s_1' \ordgeq s_2' \ordgeq \FIRSTS{Q'} }
\end{equation*}

\noindent and clearly, since $(Q, \Delta, s_1) \in \STYPI$ and $s_1 \ordgeq s_2 \ordgeq \FIRSTS{Q}$, we also have that $(Q, \Delta, s_2) \in \CANDSTYPI$.

We must now show that $\CANDSTYPI$ indeed is a typing interpretation according to Definition~\ref{def:typing_interpretation_stm}.
Pick any $(Q', \Delta', s_2') \in \CANDSTYPI$;
there are now two cases:
\begin{enumerate}
  \item If $Q' \in \TSTACKS$, then it satisfies Case~\ref{case:typi1} of Definition~\ref{def:typing_interpretation_stm}.

  \item Otherwise, we know that $(Q', \Delta', s_1') \in \STYPI$ and, for all appropriately shaped $\ENV{SV}$, there exists at least one transition 
    \begin{equation*} 
      \Sigma; \Gamma; \Delta' \ssemDash[{s_1'}] \CONF{Q', \ENV{TSV}} \trans \Sigma; \Gamma; \Delta'' \ssemDash[{s_1''}] \CONF{Q'', \ENV{T}; \ENV{SV}'}
    \end{equation*}

    \noindent such that $(Q'', \Delta'', s_1'') \in \STYPI$. 
    By Proposition~\ref{theorem:stacks_coercion_property}, we then have that the transition 
    \begin{equation*} 
      \Sigma; \Gamma; \Delta' \ssemDash[{s_2'}] \CONF{Q', \ENV{TSV}} \trans \Sigma; \Gamma; \Delta'' \ssemDash[{s_2''}] \CONF{Q'', \ENV{T}; \ENV{SV}'}
    \end{equation*}
 can be concluded for any $s_2'$ such that $s_1' \ordgeq s_2' \ordgeq \FIRSTS{Q'}$, and by construction of $\CANDSTYPI$, we also have that $(Q'', \Delta'', s_2'') \in \CANDSTYPI$.
\qedhere
\end{enumerate}
\end{proof}

\begin{figure}\centering
\begin{semantics}
  \RULE[st-bot][stack_st_bot]
    { }
    { \Sigma; \Gamma; \Delta; \ENV{T} \vDash \bot : \TCMD{s} }
  \RULE[st-stm][stack_st_stm]
    { 
      \Sigma; \Gamma; \Delta; \ENV{T} \vDash S : \TCMD{s} \AND 
      \Sigma; \Gamma; \Delta; \ENV{T} \vDash Q : \TCMD{s} 
    }
    { \Sigma; \Gamma; \Delta \vdash S; Q : \TCMD{s} }
  \RULE[st-del][stack_st_del]
    { \Sigma; \Gamma, \Delta; \ENV{T} \vDash Q : \TCMD{s} }
    { \Sigma; \Gamma; \Delta, x:\TVAR{B_{s'}}; \ENV{T} \vDash \DEL{x}; Q : \TCMD{s} }
  \RULE[st-ret][stack_st_return]
    { 
      \Sigma; \Gamma; \Delta' \vdash \ENV{V} \AND 
      \Sigma; \Gamma; \Delta' \vDash Q : \TCMD{s} }
    { \Sigma; \Gamma; \Delta  \vDash (\ENV{V}', \Delta'); Q : \TCMD{s} }
  \RULE[st-res][stack_st_restore]( s \ordgeq s' )
    { \Sigma; \Gamma; \Delta \vDash Q : \TCMD{s'} }
    { \Sigma; \Gamma; \Delta \vDash s'; Q : \TCMD{s} }
\end{semantics}
\caption{Semantic type rules for stacks.}
\label{fig:semantic_type_rules_stacks}
\end{figure}

Stacks can be semantically typed by using the rules in Figure~\ref{fig:semantic_type_rules_stacks}. These rules are just a more compact and readable formulation of the next Lemmas, that essentially describe how to incrementally build semantically typeable stacks.

\begin{lemma}[Bottom]\label{lemma:compatibility_stacks_bottom}
$\Sigma; \Gamma; \Delta; \ENV{T} \vDash \bot : \TCMD{s}$.
\end{lemma}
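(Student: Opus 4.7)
The plan is essentially to construct the smallest possible witness: a singleton typing interpretation containing just $(\bot, \Delta, s)$. Since typing interpretations are defined coinductively (Definition~\ref{def:typing_interpretation_stm}) with a disjunctive condition on each element, and since $\bot$ belongs trivially to the set $\TSTACKS$ of terminal stacks, the first disjunct immediately applies.

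Concretely, I would take $\CANDSTYPI \triangleq \{(\bot, \Delta, s)\}$ and verify that $\CANDSTYPI$ is a typing interpretation w.r.t.\ $\Sigma; \Gamma; \ENV{T}$. The only element to check is $(\bot, \Delta, s)$ itself, and since $\bot \in \TSTACKS$ by the definition $\TSTACKS \triangleq \{\bot\} \UNION \{\code{throw};Q \mid Q \in \STACKS\}$, condition~(\ref{case:typi1}) of Definition~\ref{def:typing_interpretation_stm} is met. Condition~(\ref{case:typi2}) need not be considered because the two cases are disjunctive. Hence $\CANDSTYPI$ is a typing interpretation, and since $\STYPING$ is the union of all typing interpretations, we have $(\bot, \Delta, s) \in \STYPING$, i.e., $\Sigma; \Gamma; \Delta; \ENV{T} \vDash \bot : \TCMD{s}$.

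There is no real obstacle here: the choice to include $\bot$ in $\TSTACKS$ was made precisely so that terminated stacks are considered safe by default, and this lemma is the direct reflection of that design choice at the level of the semantic typing judgment. The result also does not depend on $\Delta$, $s$, or $\ENV{T}$, which matches the intuition that a fully evaluated program imposes no residual proof obligations on the type context or on the current security level.
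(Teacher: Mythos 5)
Your proof is correct and follows exactly the route the paper takes for this family of trivial cases (cf.\ the paper's proofs of the Skip and Throw lemmas, which likewise dispose of the triplet $(\bot, \Delta, s)$ by observing that $\bot \in \TSTACKS$ and invoking Case~1 of Definition~\ref{def:typing_interpretation_stm} on a singleton candidate). Nothing is missing.
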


\begin{lemma}[Statement]\label{lemma:compatibility_stacks_statement}
If $\Sigma; \Gamma; \Delta; \ENV{T} \vDash S : \TCMD{s}$ and $\Sigma; \Gamma; \Delta; \ENV{T} \vDash Q : \TCMD{s}$, 
then $\Sigma; \Gamma; \Delta; \ENV{T} \vDash S; Q : \TCMD{s}$.
\end{lemma}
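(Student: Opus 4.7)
The plan is to prove this by coinduction, by exhibiting an explicit candidate typing interpretation that contains the triplet $(S;Q,\Delta,s)$. Let $\STYPI[R_1]$ be a typing interpretation containing $(S;\bot,\Delta,s)$ and $\STYPI[R_2]$ one containing $(Q,\Delta,s)$, both guaranteed by the hypotheses. For stacks $Q_1,Q_2$, write $Q_1\oplus Q_2$ for the stack obtained by replacing the terminal $\bot$ of $Q_1$ by $Q_2$, so that in particular $\bot\oplus Q_2=Q_2$ and $(S;\bot)\oplus Q=S;Q$. I would then define the candidate
\[
  \SETNAME{R}\;=\;\STYPI[R_2]\;\cup\;\bigl\{\,(Q_1\oplus Q,\Delta',s')\,:\,(Q_1,\Delta',s')\in\STYPI[R_1]\text{ and }Q_1\neq\bot\,\bigr\},
\]
so that $(S;Q,\Delta,s)\in\SETNAME{R}$ via the triplet $(S;\bot,\Delta,s)\in\STYPI[R_1]$; it then only remains to show that $\SETNAME{R}$ is itself a typing interpretation.

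\textbf{Key observation and verification.} The crucial observation is that every rule in Figures~\ref{fig:typed_semantics_stacks1} and~\ref{fig:typed_semantics_stacks2} inspects and rewrites only the topmost element of the stack. Hence, for any $Q_1\neq\bot$, $(Q_1,\Delta',s')\ttrans(Q_1',\Delta'',s'')$ if and only if $(Q_1\oplus Q,\Delta',s')\ttrans(Q_1'\oplus Q,\Delta'',s'')$ by the very same rule, and the reducts of $(Q_1\oplus Q,\Delta',s')$ are in bijection with those of $(Q_1,\Delta',s')$ via $(Q_1',\Delta'',s'')\mapsto(Q_1'\oplus Q,\Delta'',s'')$. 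With this lifting in hand, I would check Definition~\ref{def:typing_interpretation_stm} clause by clause: triplets already in $\STYPI[R_2]$ are unproblematic since $\STYPI[R_2]\subseteq\SETNAME{R}$; for a triplet $(Q_1\oplus Q,\Delta',s')$ of the second form, either $Q_1=\code{throw};Q_1''$ so that $Q_1\oplus Q\in\TSTACKS$ and case~(1) applies, or $Q_1\notin\TSTACKS$, in which case case~(2) for $(Q_1,\Delta',s')\in\STYPI[R_1]$ combined with the lifting supplies a transition and confines every reduct to the form $(Q_1'\oplus Q,\Delta'',s'')$ with $(Q_1',\Delta'',s'')\in\STYPI[R_1]$; if $Q_1'\neq\bot$ this reduct again lies in the second set of $\SETNAME{R}$.

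\textbf{Main obstacle.} The delicate subcase arises when the reduct has $Q_1'=\bot$, for then $Q_1'\oplus Q=Q$ and membership in $\SETNAME{R}$ requires $(Q,\Delta'',s'')\in\STYPI[R_2]$, which only holds if $\Delta''=\Delta$ and $s''=s$. I therefore expect the main work to consist of an auxiliary \emph{return-to-base} lemma: whenever $(S;\bot,\Delta,s)\ttrans^{*}(\bot,\Delta'',s'')$, one necessarily has $\Delta''=\Delta$ and $s''=s$. I would prove this by induction on the length of the reduction, maintaining the structural invariant that each $\DEL{x}$ marker appearing in an intermediate stack corresponds bijectively to exactly one binding of $\Delta'\setminus\Delta$ introduced by an earlier \nameref{stm_rt_decv} step (and retired only by \nameref{stm_rt_delv}), and similarly that each security-level marker on the stack records a previously active level that will eventually be popped by \nameref{stm_rt_restore} (or by the analogous pop following a \nameref{stm_rt_return}). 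When the stack finally empties to $\bot$, every allocation and every raise of the security level has been undone, so the original $\Delta$ and $s$ are recovered. This "balanced-brackets" invariant, though intuitively evident from the shape of the typed rules, is the technically most involved ingredient; once established, it closes the coinductive step and hence the proof.
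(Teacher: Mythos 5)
Your overall strategy is the one the paper follows: build a candidate typing interpretation by splicing $Q$ onto the non-terminal residuals of a typing interpretation for $S;\bot$, use the fact that every transition rule inspects and rewrites only the top of the stack to lift transitions across the splice, and isolate as the crux a ``return-to-base'' property guaranteeing that when the spliced prefix is exhausted the execution is back at $(\Delta,s)$, so that the residual triplet $(Q,\Delta,s)$ is absorbed by the second interpretation. Your balanced-brackets lemma is exactly the paper's Theorem~\ref{theorem:stacks_term_level} (Terminal type level), phrased there via the function $\FINISH{\cdot}$ of Definition~\ref{def:terminal_conf_types} and proved by the same induction on the length of the reduction that you sketch.

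There is, however, one concrete gap in the construction of the candidate. You place in $\SETNAME{R}$ the splice $Q_1\oplus Q$ for \emph{every} non-$\bot$ triplet $(Q_1,\Delta',s')\in\STYPI[R_1]$. But $\STYPI[R_1]$ is merely \emph{some} typing interpretation containing $(S;\bot,\Delta,s)$; nothing forces its other members to be reachable from that triplet, so your return-to-base lemma---which quantifies only over reduction sequences starting at $(S;\bot,\Delta,s)$---says nothing about them. If $\STYPI[R_1]$ happens to contain a stray $(Q_1,\Delta',s')$ that steps to $(\bot,\Delta'',s'')$ with $(\Delta'',s'')\neq(\Delta,s)$, the lifted reduct of $(Q_1\oplus Q,\Delta',s')$ is $(Q,\Delta'',s'')$, which need not lie in $\STYPI[R_2]$ nor in your second component, and the closure check for Definition~\ref{def:typing_interpretation_stm} fails on that triplet. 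The paper repairs precisely this by filtering the spliced triplets with the side condition $\FINISH{q;\bot,\Delta',s'}=(\Delta,s)$: this is a syntactic, per-triplet invariant that is preserved by every transition and that, by Theorem~\ref{theorem:stacks_term_level}, evaluates to the terminal pair whenever the prefix terminates (and keeps the spliced stacks well formed even when it does not). Equivalently, you could first cut $\STYPI[R_1]$ down to the set of triplets reachable from $(S;\bot,\Delta,s)$, which is again a typing interpretation, and splice only those. With either repair, together with your return-to-base lemma, the argument closes.
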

\begin{proof}
By the definition of $\Sigma; \Gamma; \Delta \vDash S;Q : \TCMD{s}$, we must exhibit a typing interpretation $\STYPI$ containing the triplet $(S;Q, \Delta, s)$.

From $\Sigma; \Gamma; \Delta; \ENV{T} \vDash S : \TCMD{s}$ and $\Sigma; \Gamma; \Delta; \ENV{T} \vDash Q : \TCMD{s}$ we know there exist typing interpretations $\STYPI^1$ and $\STYPI^2$ such that 
\begin{align*}
  (S; \bot, \Delta, s) & \in \STYPI^1 \\
  (Q, \Delta, s)       & \in \STYPI^2
\end{align*}

\noindent We then construct the following candidate typing interpretation:
\begin{align*}
  \CANDSTYPI & \DEFSYM \SET{ (q; Q, \Delta', s') \mid (q; \bot, \Delta', s') \in \STYPI^1 \land \FINISH{q;\bot, \Delta', s'} = (\Delta, s) } \\
             & \UNION \STYPI^2
\end{align*}

\noindent where $q$ may be empty.
As $(S;\bot, \Delta, s) \in \STYPI^1$ and $\FINISH{S;\bot, \Delta, s} = (\Delta, s)$, we have by construction that $(S; Q, \Delta, s) \in \CANDSTYPI$. 

Now we must show that $\CANDSTYPI$ indeed is a typing interpretation.
Consider an arbitrary triplet $(Q_1, \Delta_1, s_1) \in \CANDSTYPI$.
If $Q_1 \in \TSTACKS$, then Case~\ref{case:typi1} of Definition~\ref{def:typing_interpretation_stm} holds.
Otherwise, there are three possibilities:
\begin{itemize}
  \item If $Q_1$ is of the form $q_1; Q$ for non-empty $q_1$, then $(q_1; \bot, \Delta_1, s_1) \in \STYPI^1$ by construction of $\CANDSTYPI$.
    Thus we know the transition 
    \begin{equation*}
      \Sigma; \Gamma; \Delta_1 \ssemDash[{s_1}] \CONF{q_1; \bot, \ENV{TSV}} \trans \Sigma; \Gamma; \Delta_1' \ssemDash[{s_1'}] \CONF{q_1';\bot, \ENV{T}; \ENV{SV}'}
    \end{equation*}

    \noindent can be concluded by some rule, and $(q_1';\bot, \Delta_1', s_1') \in \STYPI^1$.
    Therefore, the transition 
    \begin{equation*}
      \Sigma; \Gamma; \Delta_1 \ssemDash[{s_1}] \CONF{q_1; Q, \ENV{TSV}} \trans \Sigma; \Gamma; \Delta_1' \ssemDash[{s_1'}] \CONF{q_1';Q, \ENV{T}; \ENV{SV}'}
    \end{equation*}

    \noindent can be concluded by the same rule, and by construction of $\CANDSTYPI$, we have that $(q_1';Q, \Delta_1', s_1') \in \CANDSTYPI$. 

  \item A special case of the above is if $q_1'$ in the reduct is empty.
    Again we have that $(q_1; \bot, \Delta_1, s_1) \in \STYPI^1$ by construction of $\CANDSTYPI$, and we know the transition
    \begin{equation*}
      \Sigma; \Gamma; \Delta_1 \ssemDash[{s_1}] \CONF{q_1; \bot, \ENV{TSV}} \trans \Sigma; \Gamma; \Delta_1' \ssemDash[{s_1'}] \CONF{\bot, \ENV{T}; \ENV{SV}'}
    \end{equation*}

    \noindent can be concluded by some rule, and $(\bot, \Delta_1', s_1') \in \STYPI^1$.
Still by construction of $\CANDSTYPI$, we know that $\FINISH{q_1;\bot, \Delta_1, s_1} = (\Delta, s)$ and, since the execution terminates, we have by Theorem~\ref{theorem:stacks_term_level} that $\Delta_1' = \Delta$ and $s_1' = s$.
    By the same reasoning as above, the transition 
    \begin{equation*}
      \Sigma; \Gamma; \Delta_1 \ssemDash[{s_1}] \CONF{q_1; Q, \ENV{TSV}} \trans \Sigma; \Gamma; \Delta \ssemDash \CONF{Q, \ENV{T}; \ENV{SV}'}
    \end{equation*}

    \noindent can therefore be concluded and, since $(Q, \Delta, s) \in \STYPI^2$, we also have that $(Q, \Delta, s) \in \CANDSTYPI$ by construction.

  \item Otherwise, the triplet $(Q_1, \Delta_1, s_1)$ came from $\STYPI^2$.
    Hence we know that the transition 
    \begin{equation*}
      \Sigma; \Gamma; \Delta_1 \ssemDash[{s_1}] \CONF{Q_1, \ENV{TSV}} \trans \Sigma; \Gamma; \Delta_1' \ssemDash[{s_1'}] \CONF{Q_1', \ENV{T}; \ENV{SV}'}
    \end{equation*}

    \noindent can be concluded, and $(Q_1', \Delta_1', s_1') \in \STYPI^2$.
    Therefore we also have that $(Q_1', \Delta_1', s_1') \in \CANDSTYPI$ by construction.
\qedhere
\end{itemize}
\end{proof}

Notice that, in the construction of the candidate typing interpretation $\CANDSTYPI$ of the previous proof, we make use of two related facts:
\begin{itemize}
  \item Transitions only ever modify \emph{the top} of the stack; hence, we can remove the $\bot$ symbol from a stack $q;\bot$ and suffix another stack $Q$ onto it.
  \item Theorem~\ref{theorem:stacks_term_level} tells us that, \emph{if} $q;\bot$ eventually terminates normally when executed at level $s$ and with type environment $\Delta$, \emph{then} we can use the function $\FINISH{\cdot}$ to find the security level $s'$ and type environment $\Delta$ in the terminal state.
\end{itemize}

By the definition of $\FINISH{\cdot}$, we have that $\FINISH{S;\bot, \Delta, s} = (\Delta, s)$, so in the construction of $\CANDSTYPI$ we make sure to only pick those triplets $(q;\bot, \Delta', s')$ from $\STYPI^1$ for which it holds that $\FINISH{q;\bot, \Delta', s'} = (\Delta, s)$.
This ensures two things:
\begin{itemize}
  \item Every stack $q;\bot$ that derives from the execution of $S$ will be suffixed with $Q$ and included in the candidate typing interpretation.
  \item Only those stacks $q;\bot$ that will terminate on $(\Delta, s)$, and hence are \emph{compatible} with $Q$, will be suffixed with $Q$ and included.
    This ensures that the newly created stacks $q; Q$ remain well-formed.
\end{itemize}

It is of course possible that some stacks $q;\bot$ will not terminate, in which case Theorem~\ref{theorem:stacks_term_level} does not apply.
This is, however, not a problem, since the execution will then never reach the suffix $Q$ in the compound stacks $q; Q$.
The predicate $\FINISH{q; \bot, \Delta', s'} = (\Delta, s)$ still ensures that every compound stack $q;Q$ is well-formed, and by Theorem~\ref{theorem:stacks_runtime_preservation} we know that well-formedness is preserved by the typed semantics, so the construction is still safe, even if the $Q$ suffix is never reached.

\begin{lemma}[End of scope]\label{lemma:compatibility_stacks_del}
If $\Sigma; \Gamma; \Delta; \ENV{T} \vDash Q : \TCMD{s}$,
then $\Sigma; \Gamma; \Delta, x:\TVAR{B'_{s'}}; \ENV{T} \vDash \DEL{x}; Q : \TCMD{s}$.
\end{lemma}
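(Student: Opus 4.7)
The plan is to proceed by coinduction, explicitly building a typing interpretation that contains the triple $(\DEL{x}; Q, \Delta, x:\TVAR{B'_{s'}}, s)$ out of the one guaranteed by the hypothesis. By assumption, there exists a typing interpretation $\STYPI$ with $(Q, \Delta, s) \in \STYPI$. I would define the candidate
\[
  \CANDSTYPI \DEFSYM \STYPI \cup \{\,(\DEL{x}; Q, \Delta, x{:}\TVAR{B'_{s'}}, s)\,\}
\]
and show that $\CANDSTYPI$ satisfies the clauses of Definition~\ref{def:typing_interpretation_stm}; since $\STYPING$ is the union of all typing interpretations, the desired conclusion then follows.

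For triples inherited from $\STYPI$ the typing-interpretation conditions hold by assumption (transitions out of them land back in $\STYPI \subseteq \CANDSTYPI$). The only real work is the new element $(\DEL{x}; Q, \Delta, x{:}\TVAR{B'_{s'}}, s)$. Since $\DEL{x}; Q \notin \TSTACKS$, clause~(\ref{case:typi1}) does not apply, so I must establish clause~(\ref{case:typi2}): exhibit at least one type-lifted reduct and show every such reduct lies in $\CANDSTYPI$.

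To this end, I would observe that the only operational rule whose left-hand side matches a stack of the form $\DEL{x}; Q'$ is \nameref{stm_rt_delv}. To unfold Definition~\ref{def:type_lifted_transitions}, pick any $\ENV{SV}$ built from $\Sigma; \Gamma; \Delta, x{:}\TVAR{B'_{s'}}$ according to Definition~\ref{def:construction_envsv}; by construction $\DOM{\ENV{V}} = \DOM{\Delta, x{:}\TVAR{B'_{s'}}}$, so $x$ is bound in $\ENV{V}$ with some value $v$ and \nameref{stm_rt_delv} fires. The rule is unconditional and deterministic: the unique reduct is $(Q, \Delta, s)$, which belongs to $\STYPI$ and therefore to $\CANDSTYPI$. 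Hence clause~(\ref{case:typi2}) is met, and $\CANDSTYPI$ is a typing interpretation witnessing $\Sigma; \Gamma; \Delta, x{:}\TVAR{B'_{s'}}; \ENV{T} \vDash \DEL{x}; Q : \TCMD{s}$.

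The main obstacle is really just notational bookkeeping: confirming by inspection of Figure~\ref{fig:typed_semantics_stacks1} that \nameref{stm_rt_delv} is the sole rule applicable to a top-of-stack $\DEL{x}$, and that the $\ENV{V}$ produced by Definition~\ref{def:construction_envsv} indeed has $x$ in its domain so the rule's pattern $\ENV{V}, x{:}v$ matches. No issue arises regarding the security level, since \nameref{stm_rt_delv} preserves $s$, nor regarding the resulting $\Delta$, which matches exactly the environment under which $Q$ was assumed typable.
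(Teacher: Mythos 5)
Your proposal is correct and follows essentially the same route as the paper's proof: the same singleton-extension candidate $\STYPI \cup \{(\DEL{x};Q, (\Delta, x{:}\TVAR{B'_{s'}}), s)\}$, the same observation that \nameref{stm_rt_delv} is the unique applicable rule with unique reduct $(Q,\Delta,s) \in \STYPI$, and the same appeal to the construction of $\ENV{SV}$ (the paper invokes Corollary~\ref{theorem:stacks_runtime_states_abstraction} to cover all appropriately shaped environments, which your "pick any $\ENV{SV}$" phrasing implicitly relies on). No gap.
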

\begin{proof}
By the definition of $\Sigma; \Gamma; \Delta,x:\TVAR{B'_{s'}}; \ENV{T} \vDash \DEL{x}; Q : \TCMD{s}$, we must exhibit a typing interpretation $\STYPI'$ which is such that 
\begin{equation*}
  (\DEL{x};Q, (\Delta, x:\TVAR{B'_{s'}}), s) \in \STYPI'.
\end{equation*}

From $\Sigma; \Gamma; \Delta; \ENV{T} \vDash Q : \TCMD{s}$, we know there exists a typing interpretation $\STYPI$ containing the triplet $(Q, \Delta, s)$.
We then construct a candidate typing interpretation $\CANDSTYPI$, such that it contains $(\DEL{x}; Q, (\Delta, x:\TVAR{B'_{s'}}), s)$ as follows:
\begin{equation*}
  \CANDSTYPI \DEFSYM \SET{ (\DEL{x}; Q, (\Delta, x:\TVAR{B'_{s'}}), s) } \UNION \STYPI
\end{equation*}

As we have only added a single triple, all that now is needed is to show that there exists at least one transition 
\begin{equation*}
  \Sigma; \Gamma; \ENV{T} \vDash (\DEL{x}; Q, (\Delta, x:\TVAR{B'_{s'}}), s) \ttrans P_1'
\end{equation*}

\noindent and that, for all such transitions, the reduct is again in $\CANDSTYPI$.

From $(Q, \Delta, s) \in \STYPI$, we know that an appropriately shaped $\ENV{SV}$ can be built from $\Sigma; \Gamma; \Delta$ for all transitions from $Q$ by the method in Definition~\ref{def:construction_envsv} (and if $Q$ has no transitions by Case~\ref{case:typi1} of Definition~\ref{def:typing_interpretation_stm}, then the shape of $\ENV{SV}$ is irrelevant).
Pick any such environment, and pick any value $v$ such that $\TYPEOF{v} = (B_1, s_1)$ and $\Sigma \vdash B_1 \SUBS B'$ and $s_1 \ordleq s'$.
Now, the transition 
\begin{equation*}
              \Sigma; \Gamma; \Delta, x:\TVAR{B'_{s'}} \ssemDash \CONF{\DEL{x} ; Q, \ENV{TS}; \ENV{V}, x:v } 
      \trans  \Sigma; \Gamma; \Delta \ssemDash \CONF{Q, \ENV{TSV}}
\end{equation*}

\noindent can be concluded by \nameref{stm_rt_delv} and, by Corollary~\ref{theorem:stacks_runtime_states_abstraction}, the transition can also be concluded for any other $\ENV{SV}'$ having the same structure and satisfying the requirements.
No other transition is possible, so we conclude that the transition
\begin{equation*}
  \Sigma; \Gamma; \ENV{T} \vDash (\DEL{x}; Q, (\Delta, x:\TVAR{B'_{s'}}), s) \ttrans (Q, \Delta, s)
\end{equation*}

\noindent exists and, as $(Q, \Delta, s) \in \STYPI$, we also have that $(Q, \Delta, s) \in \CANDSTYPI$ by construction.
\end{proof}

\begin{lemma}[Method return]\label{lemma:compatibility_stacks_return}
If $\Sigma; \Gamma; \Delta'; \ENV{T} \vDash Q : \TCMD{s}$ and $\Sigma; \Gamma; \Delta' \vdash \ENV{V}'$, 
then $\Sigma; \Gamma; \Delta; \ENV{T} \vDash (\ENV{V}', \Delta'); Q : \TCMD{s}$.
\end{lemma}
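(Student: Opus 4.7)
The plan is to exhibit a typing interpretation $\CANDSTYPI$ containing the triplet $((\ENV{V}', \Delta'); Q, \Delta, s)$, following the pattern of the other compatibility lemmas. By the first hypothesis $\Sigma; \Gamma; \Delta'; \ENV{T} \vDash Q : \TCMD{s}$, there is some typing interpretation $\STYPI$ with $(Q, \Delta', s) \in \STYPI$. I would then take as candidate
\[
  \CANDSTYPI \;\triangleq\; \STYPI \;\cup\; \{((\ENV{V}', \Delta'); Q, \Delta, s)\}
\]
and verify the two clauses of Definition~\ref{def:typing_interpretation_stm} for every element of $\CANDSTYPI$. For the triplets already belonging to $\STYPI$ there is nothing to do, since $\STYPI$ is itself a typing interpretation and $\STYPI \subseteq \CANDSTYPI$.

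The interesting case is the newly added triplet. Clause 1 clearly fails, because $(\ENV{V}', \Delta'); Q \notin \TSTACKS$, so I would establish clause 2. For the existence of at least one transition, I would instantiate Definition~\ref{def:type_lifted_transitions} by building an $\ENV{SV}$ from $\Sigma, \Gamma, \Delta$ according to Definition~\ref{def:construction_envsv} (Lemma~\ref{lemma:construction_envsv} ensures well-typedness and consistency), and then apply rule \nameref{stm_rt_return} whose sole side condition $\Sigma; \Gamma; \Delta' \vdash \ENV{V}'$ is exactly the second hypothesis. This yields
\[
  \Sigma; \Gamma; \ENV{T} \vDash ((\ENV{V}', \Delta'); Q, \Delta, s) \ttrans (Q, \Delta', s),
\]
noting that the rule does not alter the current security level. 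Conversely, \nameref{stm_rt_return} is the only rule whose left-hand side matches a stack whose top is a return symbol, so \emph{every} reduct of the new triplet must be $(Q, \Delta', s)$, which lies in $\STYPI \subseteq \CANDSTYPI$ by assumption.

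The only delicate points I foresee are bookkeeping rather than conceptual. First, one has to check that the quantification over $\ENV{SV}$ in Definition~\ref{def:type_lifted_transitions} is harmless here: any $\ENV{SV}$ built from $\Sigma, \Gamma, \Delta$ supplies whatever $\ENV{S}$ is needed to fire \nameref{stm_rt_return}, and the reduct's $\ENV{SV}'$ is $(\ENV{S}, \ENV{V}')$, whose shape is consistent with $\Delta'$ by the given hypothesis on $\ENV{V}'$. Second, one must observe that the security level is preserved, so no coercion step is required; in particular, there is no interaction with $\FIRSTS{Q}$, since $(Q, \Delta', s) \in \STYPI$ already carries all the well-formedness information needed downstream. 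With these remarks, $\CANDSTYPI$ satisfies Definition~\ref{def:typing_interpretation_stm}, hence $((\ENV{V}', \Delta'); Q, \Delta, s) \in \STYPING$, which is the desired semantic typing judgement.
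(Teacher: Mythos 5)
Your proposal is correct and follows essentially the same route as the paper's own proof: the same candidate typing interpretation $\STYPI \cup \{((\ENV{V}', \Delta'); Q, \Delta, s)\}$, the same use of rule \nameref{stm_rt_return} with the hypothesis $\Sigma; \Gamma; \Delta' \vdash \ENV{V}'$ as its side condition, and the same observation that this is the only applicable rule so every reduct is $(Q, \Delta', s) \in \STYPI$. The only cosmetic difference is that the paper invokes Corollary~\ref{theorem:stacks_runtime_states_abstraction} explicitly to discharge the quantification over all $\ENV{SV}$ built according to Definition~\ref{def:construction_envsv}, which you handle informally but adequately.
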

\begin{proof}
By the definition of $\Sigma; \Gamma; \Delta; \ENV{T} \vDash (\ENV{V}', \Delta'); Q : \TCMD{s}$, we must exhibit a typing interpretation $\STYPI'$ containing the triplet $((\ENV{V}', \Delta'); Q, \Delta, s)$.

From $\Sigma; \Gamma; \Delta'; \ENV{T} \vDash Q : \TCMD{s}$, we know there exists a typing interpretation $\STYPI$ containing the triplet $(Q, \Delta', s)$.
We then construct a candidate typing interpretation $\CANDSTYPI'$ containing $((\ENV{V}', \Delta'); Q, \Delta, s)$ as follows:
\begin{equation*}
  \CANDSTYPI \DEFSYM \SET{ ((\ENV{V}', \Delta'); Q, \Delta, s) } \UNION \STYPI
\end{equation*}

As we have only added a single triple, all that now is needed is to show that there exists at least one transition 
\begin{equation*}
  \Sigma; \Gamma; \ENV{T} \vDash ((\ENV{V}', \Delta');Q, \Delta, s) \ttrans P_1'
\end{equation*}

\noindent and that, for all such transitions, the reduct is again in $\CANDSTYPI$.
Pick any $\ENV{SV}$ built from $\Sigma; \Gamma; \Delta$ by the method in Definition~\ref{def:construction_envsv}.
As we know that $\Sigma; \Gamma; \Delta' \vdash \ENV{V}'$, the transition 
\begin{equation*}
             \Sigma; \Gamma; \Delta \ssemDash \CONF{(\ENV{V}', \Delta') ; Q, \ENV{TSV}} 
      \trans \Sigma; \Gamma; \Delta' \ssemDash \CONF{Q, \ENV{TS}; \ENV{V}'} 
\end{equation*}

\noindent can now be concluded by \nameref{stm_rt_return} and, by Corollary~\ref{theorem:stacks_runtime_states_abstraction}, the transition can also be concluded for any other $\ENV{SV}$ built according to Definition~\ref{def:construction_envsv}.
No other transition is possible, so we conclude that the transition
\begin{equation*}
  \Sigma; \Gamma; \ENV{T} \vDash ((\ENV{V}', \Delta');Q, \Delta, s) \ttrans (Q, \Delta', s)
\end{equation*}

\noindent exists and, as $(Q, \Delta', s) \in \STYPI$, we also have that $(Q, \Delta', s) \in \CANDSTYPI$ by construction.
\end{proof}

\begin{lemma}[Security level restore]\label{lemma:compatibility_stacks_restore}
If $\Sigma; \Gamma; \Delta; \ENV{T} \vDash Q : \TCMD{s'}$ and $s \ordgeq s'$, then $\Sigma; \Gamma; \Delta; \ENV{T} \vDash s'; Q : \TCMD{s}$.
\end{lemma}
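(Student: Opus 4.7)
The plan is to prove the lemma by the standard coinductive technique used in this paper: exhibit a typing interpretation that contains the stack type triplet of interest. Concretely, assume $\Sigma; \Gamma; \Delta; \ENV{T} \vDash Q : \TCMD{s'}$, so by Definition~\ref{def:typing_interpretation_stm} there exists a typing interpretation $\STYPI$ with $(Q, \Delta, s') \in \STYPI$. I would then define the candidate
\[
\CANDSTYPI \;\DEFSYM\; \STYPI \UNION \{(s'; Q, \Delta, s)\}
\]
and verify that $\CANDSTYPI$ is itself a typing interpretation w.r.t.\ $\Sigma$, $\Gamma$, $\ENV{T}$. All triplets already in $\STYPI$ trivially satisfy the defining conditions of Definition~\ref{def:typing_interpretation_stm} relative to $\CANDSTYPI$, since their reducts lie in $\STYPI \subseteq \CANDSTYPI$, so only the new triplet $(s'; Q, \Delta, s)$ requires attention.

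For $(s'; Q, \Delta, s)$, observe first that $s'; Q \notin \TSTACKS$, so condition~\ref{case:typi1} of Definition~\ref{def:typing_interpretation_stm} does not apply; I must establish condition~\ref{case:typi2}. Because the operational rules in Figures~\ref{fig:typed_semantics_stacks1} and \ref{fig:typed_semantics_stacks2} are syntax-directed on the top of the stack, the only rule whose conclusion matches a configuration of the form $\CONF{s'; Q, \ENV{TSV}}$ is \nameref{stm_rt_restore}. This rule requires as side condition that the current execution level dominates the stack-top level, which in our case amounts to $s \ordgeq s'$ — exactly our hypothesis. Moreover, \nameref{stm_rt_restore} has no premise constraining $\ENV{SV}$, so one can pick any $\ENV{SV}$ built from $\Sigma; \Gamma; \Delta$ according to Definition~\ref{def:construction_envsv} (such an $\ENV{SV}$ exists by Lemma~\ref{lemma:construction_envsv}) to conclude $\Sigma; \Gamma; \ENV{T} \vDash (s'; Q, \Delta, s) \ttrans (Q, \Delta, s')$ in the type-lifted transition system of Definition~\ref{def:type_lifted_transitions}.

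Hence there is at least one reduct of $(s'; Q, \Delta, s)$, and because no other rule is applicable to this shape of stack, every reduct equals $(Q, \Delta, s')$, which lies in $\STYPI \subseteq \CANDSTYPI$. This discharges condition~\ref{case:typi2}, so $\CANDSTYPI$ is a typing interpretation. Since $(s'; Q, \Delta, s) \in \CANDSTYPI \subseteq \STYPING$, we obtain $\Sigma; \Gamma; \Delta; \ENV{T} \vDash s'; Q : \TCMD{s}$, as required.

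The only genuine subtlety — and what I regard as the main point to be careful about — is the uniqueness of the reduct in the type-lifted transition system. Although \nameref{stm_rt_restore} is the only rule applicable, Definition~\ref{def:type_lifted_transitions} quantifies existentially over $\ENV{SV}$, so strictly speaking one should note that the reduct does not depend on that choice (the rule leaves $\Delta$ untouched and pops $s'$ deterministically). Once this is observed, the coinductive step goes through without further work and the lemma is established.
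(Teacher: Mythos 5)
Your proposal is correct and follows essentially the same route as the paper's own proof: the same candidate typing interpretation $\STYPI \UNION \SET{(s'; Q, \Delta, s)}$, the same observation that \nameref{stm_rt_restore} is the only applicable rule and that its side condition is exactly the hypothesis $s \ordgeq s'$, and the same conclusion that the unique reduct $(Q, \Delta, s')$ already lies in $\STYPI$. Your closing remark about independence from the choice of $\ENV{SV}$ is the point the paper handles by invoking Corollary~\ref{theorem:stacks_runtime_states_abstraction}.
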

\begin{proof}
By the definition of $\Sigma; \Gamma; \Delta; \ENV{T} \vDash s'; Q : \TCMD{s}$, we must exhibit a typing interpretation $\STYPI'$ containing the triplet $(s'; Q, \Delta, s)$.

From $\Sigma; \Gamma; \Delta; \ENV{T} \vDash Q : \TCMD{s'}$, we know there exists a typing interpretation $\STYPI$ containing the triplet $(Q, \Delta, s')$.
We then construct a candidate typing interpretation $\CANDSTYPI$, such that it contains $(s'; Q, \Delta, s)$ for some $s \ordgeq s'$ as follows:
\begin{equation*}
  \CANDSTYPI \DEFSYM \SET{ (s'; Q, \Delta, s) } \UNION \STYPI
\end{equation*}

As we have only added a single triple, all that now is needed is to show that there exists at least one transition 
\begin{equation*}
  \Sigma; \Gamma; \ENV{T} \vDash (s';Q, \Delta, s) \ttrans P_1'
\end{equation*}

\noindent and that, for all such transitions, the reduct is again in $\CANDSTYPI$.
Pick any $\ENV{SV}$ built from $\Sigma; \Gamma; \Delta$ by the method in Definition~\ref{def:construction_envsv}.
As we know that $s \ordgeq s'$, the transition 
\begin{equation*}
             \Sigma; \Gamma; \Delta \ssemDash \CONF{s'; Q, \ENV{TSV}} 
      \trans \Sigma; \Gamma; \Delta \ssemDash[{s'}] \CONF{Q, \ENV{TSV}}
\end{equation*}

\noindent can be concluded by \nameref{stm_rt_restore} and, by Corollary~\ref{theorem:stacks_runtime_states_abstraction}, the transition can also be concluded for any other $\ENV{SV}'$ having the same structure and satisfying the requirements.
No other transition is possible, so we conclude that the transition
\begin{equation*}
  \Sigma; \Gamma; \ENV{T} \vDash (s';Q, \Delta, s) \ttrans (Q, \Delta, s')
\end{equation*}

\noindent exists and, as $(Q, \Delta, s') \in \STYPI$, we also have that $(Q, \Delta, s') \in \CANDSTYPI$ by construction.
\end{proof}

The next Lemmas prove  soundness of the rules in Figure~\ref{fig:semantic_type_rules_stm}. Again, the rules depicted there are just a more readable formulation of the following results.

\begin{lemma}[Skip]\label{lemma:compatibility_stm_skip}
$\Sigma; \Gamma; \Delta; \ENV{T}  \vDash \code{skip} : \TCMD{s}$.
\end{lemma}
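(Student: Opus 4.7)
The plan is to exhibit a concrete typing interpretation $\CANDSTYPI$ that contains the triplet $(\code{skip}; \bot, \Delta, s)$, and verify that it satisfies Definition~\ref{def:typing_interpretation_stm}. By coinduction, this directly establishes $\Sigma; \Gamma; \Delta; \ENV{T} \vDash \code{skip} : \TCMD{s}$.

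The natural candidate is the two-element set $\CANDSTYPI \triangleq \{(\code{skip}; \bot, \Delta, s),\ (\bot, \Delta, s)\}$. For $(\bot, \Delta, s)$, condition~(1) of Definition~\ref{def:typing_interpretation_stm} holds immediately since $\bot \in \TSTACKS$ by definition. For $(\code{skip}; \bot, \Delta, s)$, I verify condition~(2): by rule \nameref{stm_rt_skip}, which has no premises and no side conditions, for any $\ENV{SV}$ built from $\Sigma; \Gamma; \Delta$ according to Definition~\ref{def:construction_envsv} (such an $\ENV{SV}$ exists and is well-typed and consistent by Lemma~\ref{lemma:construction_envsv}), we obtain the typed transition $\Sigma; \Gamma; \Delta \ssemDash \CONF{\code{skip}; \bot, \ENV{TSV}} \trans \Sigma; \Gamma; \Delta \ssemDash \CONF{\bot, \ENV{TSV}}$. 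Lifting to the triplet level (Definition~\ref{def:type_lifted_transitions}) yields $\Sigma; \Gamma; \ENV{T} \vDash (\code{skip}; \bot, \Delta, s) \ttrans (\bot, \Delta, s)$, which witnesses the existence requirement. Since \nameref{stm_rt_skip} is the only rule whose conclusion matches a configuration with $\code{skip}$ at the top of the stack, and it does not modify $\Delta$, $s$, or the rest of the stack, every reduct of $(\code{skip}; \bot, \Delta, s)$ is exactly $(\bot, \Delta, s)$, which is in $\CANDSTYPI$.

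Therefore $\CANDSTYPI$ satisfies the defining clauses of a typing interpretation, so $(\code{skip}; \bot, \Delta, s) \in \STYPING$, giving the claim. There is no substantive obstacle: the only thing to be careful about is making the quantification over $\ENV{SV}$ precise, namely that \nameref{stm_rt_skip} applies uniformly to every $\ENV{SV}$ that can be built according to Definition~\ref{def:construction_envsv} and always produces the same $\Delta$, $s$, and residual stack, so the universal side of clause~(2) is satisfied trivially.
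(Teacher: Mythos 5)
Your proof is correct and follows essentially the same route as the paper's: the same two-element candidate typing interpretation $\{(\code{skip};\bot,\Delta,s), (\bot,\Delta,s)\}$, verified via rule \nameref{stm_rt_skip} for the first triplet and membership of $\bot$ in $\TSTACKS$ for the second. Your extra care about the quantification over $\ENV{SV}$ is a harmless elaboration of the same argument.
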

\begin{proof}
By the definition of $\Sigma; \Gamma; \Delta \vDash \code{skip} : \TCMD{s}$, we must exhibit a typing interpretation $\STYPI$ containing the triplet $(\code{skip}; \bot, \Delta, s)$.
Here is our candidate typing interpretation:
\begin{equation*}
  \CANDSTYPI \DEFSYM \SET{(\code{skip}; \bot, \Delta, s), (\bot, \Delta, s)}
\end{equation*}

\noindent and, clearly, $(\code{skip}; \bot, \Delta, s) \in \CANDSTYPI$.
We must now show that $\CANDSTYPI$ indeed is a typing interpretation by checking the criteria of Definition~\ref{def:typing_interpretation_stm}.
We examine each of the pairs we have added:
\begin{itemize}
  \item By rule \nameref{stm_rt_skip} we can conclude the transition 
  \begin{equation*}
           \Sigma; \Gamma; \Delta \ssemDash \CONF{\code{skip} ; \bot, \ENV{TSV}} 
    \trans \Sigma; \Gamma; \Delta \ssemDash \CONF{\bot, \ENV{TSV}} 
  \end{equation*}

  \noindent for any $\Delta$, $s$, and $\ENV{SV}$.
  There are no other possible transitions, so we conclude that the transition
  \begin{equation*}
    \Sigma; \Gamma; \ENV{T} \vDash (\code{skip};\bot, \Delta, s) \ttrans (\bot, \Delta, s)
  \end{equation*}

  \noindent exists, and $(\bot, \Delta, s) \in \CANDSTYPI$ by construction.

  \item The triplet $(\bot, \Delta, s)$ belongs to $\CANDSTYPI$ by Case~\ref{case:typi1}  of Definition~\ref{def:typing_interpretation_stm}.
\qedhere
\end{itemize}
\end{proof}

\begin{lemma}[Throw]\label{lemma:compatibility_stm_throw}
$\Sigma; \Gamma; \Delta; \ENV{T} \vDash \code{throw} : \TCMD{s}$.
\end{lemma}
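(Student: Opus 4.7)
The plan is to exhibit a concrete typing interpretation that contains the relevant stack type triplet and then appeal directly to clause~(\ref{case:typi1}) of Definition~\ref{def:typing_interpretation_stm}. Unfolding the notation, $\Sigma; \Gamma; \Delta; \ENV{T} \vDash \code{throw} : \TCMD{s}$ means that there exists some typing interpretation $\STYPI$ containing the triplet $(\code{throw}; \bot,\Delta,s)$.

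I would simply take the singleton candidate $\CANDSTYPI = \{(\code{throw}; \bot,\Delta,s)\}$ and verify that it is indeed a typing interpretation. The only triplet to check is $(\code{throw}; \bot,\Delta,s)$ itself, and since terminal stacks were defined as $\TSTACKS \triangleq \{\bot\} \UNION \{\code{throw};Q\ |\ Q \in \STACKS\}$, the stack $\code{throw}; \bot$ belongs to $\TSTACKS$. Hence clause~(\ref{case:typi1}) of Definition~\ref{def:typing_interpretation_stm} is immediately satisfied, no transition needs to be exhibited, and the second clause need not be examined.

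Since $\CANDSTYPI$ is a typing interpretation and $(\code{throw}; \bot,\Delta,s) \in \CANDSTYPI$, we conclude $\Sigma; \Gamma; \Delta; \ENV{T} \vDash \code{throw} : \TCMD{s}$. There is no real obstacle here: the whole point of having introduced $\TSTACKS$ in Section~\ref{sec:typing_interpretation_stm_stack}, with \code{throw} treated as an abnormal but safe termination, is precisely to make this compatibility lemma (and Lemma~\ref{lemma:compatibility_stm_skip}) trivial, so that the remaining constructors can then be handled by genuine coinductive arguments.
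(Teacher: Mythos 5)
Your proof is correct and matches the paper's own argument exactly: both take the singleton candidate $\{(\code{throw};\bot,\Delta,s)\}$ and observe that $\code{throw};\bot \in \TSTACKS$, so clause~(1) of Definition~\ref{def:typing_interpretation_stm} applies immediately. Nothing further is needed.
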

\begin{proof}
By the definition of $\Sigma; \Gamma; \Delta \vDash \code{throw} : \TCMD{s}$, we must exhibit a typing interpretation $\STYPI$ containing the triplet $(\code{throw}; \bot, \Delta, s)$.
Here is our candidate typing interpretation:
\begin{equation*}
  \CANDSTYPI \DEFSYM \SET{(\code{throw}; \bot, \Delta, s)}
\end{equation*}
that is a typing interpretation by Case~\ref{case:typi1}  of Definition~\ref{def:typing_interpretation_stm}.
\end{proof}

\begin{lemma}[Sequential composition]\label{lemma:compatibility_stm_seq}
If $\Sigma; \Gamma; \Delta; \ENV{T} \vDash S_1 : \TCMD{s}$ and $\Sigma; \Gamma; \Delta; \ENV{T} \vDash S_2 : \TCMD{s}$, then $\Sigma; \Gamma; \Delta; \ENV{T} \vDash S_1; S_2 : \TCMD{s}$.
\end{lemma}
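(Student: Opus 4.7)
The plan is to reduce this assertion about statement-level sequential composition to the analogous result at the stack level, Lemma~\ref{lemma:compatibility_stacks_statement}, and then bridge the remaining one-step syntactic gap via a small coinductive extension. By the hypotheses and Definition~\ref{def:typing_interpretation_stm}, there exist typing interpretations $\STYPI[R_1]$ and $\STYPI[R_2]$ with $(S_1; \bot, \Delta, s) \in \STYPI[R_1]$ and $(S_2; \bot, \Delta, s) \in \STYPI[R_2]$. Reading the second triplet as a stack-level assertion, namely $\Sigma; \Gamma; \Delta; \ENV{T} \vDash S_2; \bot : \TCMD{s}$, and combining it with $\Sigma; \Gamma; \Delta; \ENV{T} \vDash S_1 : \TCMD{s}$, Lemma~\ref{lemma:compatibility_stacks_statement} instantiated with the statement $S_1$ and the stack $S_2; \bot$ yields $(S_1; S_2; \bot, \Delta, s) \in \STYPING$, where the two occurrences of $;$ on the right are stack-cons. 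Let $\STYPI$ be a typing interpretation witnessing this membership.

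It remains to account for the fact that the desired conclusion concerns $((S_1; S_2); \bot, \Delta, s)$, in which the composite statement $(S_1; S_2)$ sits as a single element at the top of the stack. The small-step semantics in Appendix~\ref{app:untyped_operational_semantics} contains the unfolding rule $\CONF{(S_1; S_2); Q, \ENV{TSV}} \trans \CONF{S_1; S_2; Q, \ENV{TSV}}$, whose conclusion is independent of $\ENV{TSV}$ and leaves both $\Delta$ and $s$ unchanged. Consequently, in the type-lifted system of Definition~\ref{def:type_lifted_transitions}, the triplet $((S_1; S_2); \bot, \Delta, s)$ admits the unique reduct $(S_1; S_2; \bot, \Delta, s)$ irrespective of which $\ENV{SV}$ is chosen per Definition~\ref{def:construction_envsv}.

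I then set $\CANDSTYPI = \STYPI \cup \{((S_1; S_2); \bot, \Delta, s)\}$ and check that it is a typing interpretation. For every triplet already in $\STYPI$, Definition~\ref{def:typing_interpretation_stm} is inherited. For the new triplet, $(S_1; S_2); \bot \notin \TSTACKS$ and its unique type-lifted reduct lies in $\STYPI \subseteq \CANDSTYPI$, so clause~(2) is met; hence $\CANDSTYPI$ witnesses $\Sigma; \Gamma; \Delta; \ENV{T} \vDash S_1; S_2 : \TCMD{s}$. The main subtlety is purely bookkeeping—distinguishing the statement-level $;$ from the stack-cons $;$ and confirming that the unfolding is the sole type-lifted transition available from the new triplet; beyond that, the argument is a one-step instance of the pattern already used in Lemmas~\ref{lemma:compatibility_stacks_coercion} and~\ref{lemma:compatibility_stacks_statement}.
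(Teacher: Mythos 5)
Your core route is the same as the paper's: read $\Sigma; \Gamma; \Delta; \ENV{T} \vDash S_2 : \TCMD{s}$ as the stack-level judgement $\Sigma; \Gamma; \Delta; \ENV{T} \vDash S_2; \bot : \TCMD{s}$ and then invoke Lemma~\ref{lemma:compatibility_stacks_statement} with statement $S_1$ and stack $S_2; \bot$. That is exactly the paper's (two-line) proof, and it is the part of your argument that does the work.

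The one place where you diverge is the ``bridging'' step, and there you appeal to a rule that does not exist: neither the untyped semantics (Figures~\ref{fig:semantics_stacks1}--\ref{fig:semantics_stacks2}) nor the typed one (Figures~\ref{fig:typed_semantics_stacks1}--\ref{fig:typed_semantics_stacks2}) contains an unfolding rule $\CONF{(S_1; S_2); Q, \ENV{TSV}} \trans \CONF{S_1; S_2; Q, \ENV{TSV}}$. The paper instead silently identifies the composite-statement stack $(S_1; S_2); Q$ with the flattened stack $S_1; (S_2; Q)$ --- the statement-level ``;'' and the stack-cons ``;'' are overloaded and treated associatively, which is why no transition rule for sequential composition is ever given and why the paper's proof needs no extra step. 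Under that identification your additional triplet $((S_1; S_2); \bot, \Delta, s)$ coincides with $(S_1; S_2; \bot, \Delta, s)$, which is already in the typing interpretation produced by Lemma~\ref{lemma:compatibility_stacks_statement}, so your coinductive extension is vacuous rather than wrong; but as written, your justification for clause~(2) of Definition~\ref{def:typing_interpretation_stm} for the new triplet rests on a transition that the semantics does not provide (and without either the identification or such a rule, that configuration would simply be stuck and hence untypable, since it is not in $\TSTACKS$). You were right to flag the overloading of ``;'' as the only subtlety here --- the fix is just to resolve it by syntactic identification, as the paper does, rather than by positing an unfolding transition.
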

\begin{proof}
By Definition~\ref{def:typing_interpretation_stm}, 
\begin{equation*}
  \Sigma; \Gamma; \Delta; \ENV{T} \vDash S_2 : \TCMD{s} = \Sigma; \Gamma; \Delta; \ENV{T} \vDash S_2; \bot : \TCMD{s}
\end{equation*} 

\noindent and $S_2; \bot$ is a stack $Q$.
The result then follows directly from Lemma~\ref{lemma:compatibility_stacks_statement}.
\end{proof}

\begin{lemma}[If-then-else]\label{lemma:compatibility_stm_if}
If $\Sigma; \Gamma; \Delta \vDash e : \TBOOL_s$, 
$\Sigma; \Gamma; \Delta; \ENV{T} \vDash S_\TRUE  : \TCMD{s}$ and 
$\Sigma; \Gamma; \Delta; \ENV{T} \vDash S_\FALSE : \TCMD{s}$, then $\Sigma; \Gamma; \Delta; \ENV{T} \vDash \code{if $e$ then $S_\TRUE$ else $S_\FALSE$} : \TCMD{s}$.  
\end{lemma}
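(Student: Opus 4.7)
I would prove this by coinduction: exhibit a typing interpretation $\CANDSTYPI$ witnessing $(\code{if } e \code{ then } S_\TRUE \code{ else } S_\FALSE; \bot, \Delta, s) \in \STYPING$ and verify that $\CANDSTYPI$ meets Definition~\ref{def:typing_interpretation_stm}. From the hypotheses on the two branches, pick typing interpretations $\CANDSTYPI[R_T]$ and $\CANDSTYPI[R_F]$ containing $(S_\TRUE; \bot, \Delta, s)$ and $(S_\FALSE; \bot, \Delta, s)$. The candidate I would take is
\[
\CANDSTYPI \;\triangleq\; \CANDSTYPI[R_T] \cup \CANDSTYPI[R_F] \cup \{(\code{if } e \code{ then } S_\TRUE \code{ else } S_\FALSE; \bot, \Delta, s)\} \cup \mathcal{R},
\]
where $\mathcal{R}$ collects the reducts $(S_b; s; \bot, \Delta, s')$ of the if-triplet (one for each $b \in \BOOLEANS$ and each $s' \ordgeq s$ at which $e$ can be typed) together with all their $\ttrans$-descendants.

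Discharging the if-triplet itself is routine: Lemma~\ref{lemma:construction_envsv} furnishes an $\ENV{SV}$ compatible with $\Sigma;\Gamma;\Delta$, and the hypothesis $\Sigma; \Gamma; \Delta \vDash e : \TBOOL_s$ makes \nameref{stm_rt_if} applicable at $s' = s$, so an existential reduct exists; the universal clause holds by design of $\mathcal{R}$. Triplets inherited from $\CANDSTYPI[R_T] \cup \CANDSTYPI[R_F]$ satisfy the typing-interpretation conditions automatically.

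The delicate part—and the one I expect to be the real obstacle—is handling the reducts $(S_b; s; \bot, \Delta, s')$ in $\mathcal{R}$: the ambient level has just been raised from $s$ to $s'$, while the branches are only assumed safe at $s$. I would exploit the compositional pattern of the earlier compatibility lemmas—Lemma~\ref{lemma:compatibility_stacks_bottom} for $\bot$, Lemma~\ref{lemma:compatibility_stacks_restore} to bridge between levels $s$ and $s'$ via the pending-restore marker $s$ on the stack, and Lemma~\ref{lemma:compatibility_stacks_statement} to prepend $S_b$—and coinductively track $S_b$'s reductions atop the marker until, once $S_b$ is fully consumed, \nameref{stm_rt_restore} fires and returns control to level $s$, at which point the tail $(\bot, \Delta, s)$ is captured by $\CANDSTYPI[R_T] \cup \CANDSTYPI[R_F]$ and closes the coinduction. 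Managing the interaction between the raised-level prefix and the restored tail—especially checking that every descendant of $(S_b; s; \bot, \Delta, s')$ that $\ttrans$ can reach stays within $\mathcal{R}$ or safely transitions into $\CANDSTYPI[R_T] \cup \CANDSTYPI[R_F]$ after the restore—is where the real bookkeeping lies.
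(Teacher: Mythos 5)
Your overall architecture is the paper's: a coinductively verified candidate containing the if-triplet, the post-branch reducts carrying the pending restore marker $s$ on the stack, and the terminal triplet $(\bot, \Delta, s)$, with \nameref{stm_rt_restore} closing the loop once a branch has been consumed. Your idea of discharging the reduct $(S_b; s; \bot, \Delta, s)$ compositionally---Lemma~\ref{lemma:compatibility_stacks_bottom} for $\bot$, Lemma~\ref{lemma:compatibility_stacks_restore} for $s;\bot$, Lemma~\ref{lemma:compatibility_stacks_statement} to prepend $S_b$---is a legitimate and arguably cleaner shortcut than what the paper actually does, namely rebuilding the descendant set explicitly as $\SET{ (q_b; s; \bot, \Delta', s'') \mid (q_b;\bot,\Delta',s'') \in \STYPI^b \land \FINISH{q_b;\bot,\Delta',s''} = (\Delta,s) }$ for each branch interpretation $\STYPI^b$, using Theorem~\ref{theorem:stacks_term_level} to see that a terminated branch lands exactly on $(\Delta,s)$ so that \nameref{stm_rt_restore} fires with $s \ordgeq s$. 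Your wholesale inclusion of the branch interpretations and your reliance on $(\bot,\Delta,s)$ being ``captured'' by them are harmless but slightly loose; the paper adds $(\bot,\Delta,s)$ explicitly.

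The gap is in the step you yourself flag as the obstacle. You populate $\mathcal{R}$ with reducts $(S_b; s; \bot, \Delta, s')$ for \emph{every} $s' \ordgeq s$ at which $e$ types, and propose to handle the raised level by combining Lemmas~\ref{lemma:compatibility_stacks_restore} and~\ref{lemma:compatibility_stacks_statement}. That combination cannot close the case where $s'$ is strictly higher than $s$: Lemma~\ref{lemma:compatibility_stacks_restore} does give $s;\bot : \TCMD{s'}$, but Lemma~\ref{lemma:compatibility_stacks_statement} then needs $S_b : \TCMD{s'}$, and the only coercion available for statements (Corollary~\ref{lemma:compatibility_stm_coercion}) \emph{lowers} the level---$\TCMD{}$ is contravariant, so $S_b : \TCMD{s}$ yields nothing at a higher level, and a branch that writes a container of level exactly $s$ is genuinely stuck when executed above $s$. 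The paper's proof avoids this entirely: it instantiates \nameref{stm_rt_if} at $s' = s$ (licensed by $\Sigma;\Gamma;\Delta \vDash e : \TBOOL_s$) and asserts that $(S_\TRUE; s;\bot,\Delta,s)$ and $(S_\FALSE; s;\bot,\Delta,s)$ are the only reducts to be covered. Under that reading of the type-lifted transition system your ``delicate part'' disappears and your compositional argument goes through as written; if instead one insists on quantifying over all $s' \ordgeq s$, the lemmas you cite do not suffice, and no argument from the stated hypotheses could, since those higher-level reducts can be genuinely unsafe.
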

\begin{proof}
By the definition of $\Sigma; \Gamma; \Delta; \ENV{T} \vDash \code{if $e$ then $S_\TRUE$ else $S_\FALSE$} : \TCMD{s}$, we must exhibit a typing interpretation $\STYPI$ which is such that it contains the triplet $(\code{if $e$ then $S_\TRUE$ else $S_\FALSE$}; \bot, \Delta, s)$.

From $\Sigma; \Gamma; \Delta; \ENV{T} \vDash S_\TRUE : \TCMD{s}$ and $\Sigma; \Gamma; \Delta; \ENV{T} \vDash S_\FALSE : \TCMD{s}$ we know that there exist typing interpretations $\STYPI^1$ and $\STYPI^2$ such that 
\begin{align*}
   (S_\TRUE;  \bot, \Delta, s) & \in \STYPI^1 \\
   (S_\FALSE; \bot, \Delta, s) & \in \STYPI^2 .
\end{align*}

\noindent We then construct the following candidate typing interpretation, where $q_1$ and $q_2$ may be empty:
\begin{align*}
    \CANDSTYPI & \DEFSYM \SET{ (\code{if $e$ then $S_\TRUE$ else $S_\FALSE$}; \bot, \Delta, s) }            \\
               & \UNION  \SET{ (q_1; s; \bot, \Delta_1, s_1) \mid (q_1; \bot, \Delta_1, s_1) \in \STYPI^1 \land \FINISH{q_1;\bot, \Delta_1, s_1} = (\Delta, s) } \\
               & \UNION  \SET{ (q_2; s; \bot, \Delta_2, s_2) \mid (q_2; \bot, \Delta_2, s_2) \in \STYPI^2 \land \FINISH{q_2;\bot, \Delta_2, s_2} = (\Delta, s) } \\ 
               & \UNION  \SET{ (\bot, \Delta, s) }
\end{align*}
We must now show that $\CANDSTYPI$ in fact is a typing interpretation.
First we consider the two single triples that we added:
\begin{itemize}
  \item The triplet $(\bot, \Delta, s)$ satisfies Case~\ref{case:typi1} of Definition~\ref{def:typing_interpretation_stm}.

  \item From $\Sigma; \Gamma; \Delta \vDash e : \TBOOL_s$ and Definition~\ref{def:typing_interpretation_expr}, we know that 
    \begin{equation*}
      \Sigma; \Gamma; \Delta \esemDash[{\TBOOL_s}] \CONF{e, \ENV{SV}} \trans b
    \end{equation*}

    \noindent where $b$ is a boolean of level $s$ for any $\ENV{SV}$ built from $\Sigma; \Gamma; \Delta$ according to Definition~\ref{def:construction_envsv}.
    Thus, the transition 
    \begin{align*}
              \Sigma; \Gamma; \Delta \ssemDash \CONF{\code{if $e$ then $S_{\TRUE}$ else $S_{\FALSE}$} ; \bot, \ENV{TSV}} 
      \trans \Sigma; \Gamma; \Delta \ssemDash \CONF{S_b ; s; \bot, \ENV{TSV}}     
    \end{align*}

    \noindent can be concluded by rule \nameref{stm_rt_if}, where $S_b$ can be either $S_\TRUE$ or $S_\FALSE$.
    Hence, there are two possible transitions:
    \begin{align*}
      \Sigma; \Gamma; \ENV{T} & \vDash (\code{if $e$ then $S_{\TRUE}$ else $S_{\FALSE}$};\bot, \Delta, s) \ttrans (S_\TRUE;s;\bot, \Delta, s) \\
      \Sigma; \Gamma; \ENV{T} & \vDash (\code{if $e$ then $S_{\TRUE}$ else $S_{\FALSE}$};\bot, \Delta, s) \ttrans (S_\FALSE;s;\bot, \Delta, s) 
    \end{align*}
    
    \noindent which by Corollary~\ref{theorem:stacks_runtime_states_abstraction} can be concluded for any $\ENV{SV}$ built from $\Sigma; \Gamma; \Delta$ according to Definition~\ref{def:construction_envsv}.
    There are no other possible transitions.
    Finally, since 
    \begin{align*}
      (S_\TRUE;  \bot, \Delta, s) & \in \STYPI^1 \land \FINISH{S_\TRUE;  \bot, \Delta, s} = (\Delta, s) \\
      (S_\FALSE; \bot, \Delta, s) & \in \STYPI^2 \land \FINISH{S_\FALSE; \bot, \Delta, s} = (\Delta, s)
    \end{align*}

    \noindent we have that $(S_\TRUE; s; \bot, \Delta, s) \in \CANDSTYPI$ and $(S_\FALSE; s; \bot, \Delta, s) \in \CANDSTYPI$ by construction.
\end{itemize}

Next we consider an arbitrary triplet $(q_1; s; \bot, \Delta_1, s_1)$ from the set 
\begin{equation*}
  \SET{ (q_1; s; \bot, \Delta_1, s_1) | (q_1; \bot, \Delta_1, s_1) \in \STYPI^1 \land \FINISH{q_1;\bot, \Delta_1, s_1} = (\Delta, s) } .
\end{equation*}
(the case for triplets from the other set is similar).
As $q_1$ can be empty, we have three cases to consider:
\begin{itemize}
  \item If $q_1$ is empty, then it must have come from a triplet $(\bot, \Delta, s) \in \STYPI^1$, since $\FINISH{\bot, \Delta, s} = (\Delta, s)$.
    Thus, the triplet in $\CANDSTYPI$ is of the form $(s; \bot, \Delta, s)$ by construction of $\CANDSTYPI$.
    As $s \ordgeq s$, this satisfies the side condition of \nameref{stm_rt_restore}, so the transition 
    \begin{equation*}
             \Sigma; \Gamma; \Delta \ssemDash \CONF{s; \bot, \ENV{TSV}} 
      \trans \Sigma; \Gamma; \Delta \ssemDash \CONF{\bot, \ENV{TSV}}
    \end{equation*}

    \noindent can be concluded by that rule for any appropriately shaped $\ENV{SV}$ and $(\bot, \Delta, s) \in \CANDSTYPI$ by construction.

  \item If $q_1$ is of the form $\code{throw}; q_1'$, then this satisfies Case~\ref{case:typi1} of Definition~\ref{def:typing_interpretation_stm}.

  \item If $q_1$ is non-empty and not of the form $\code{throw}; q_1'$, then we know is that $(q_1;\bot, \Delta_1, s_1) \in \STYPI^1$, that $\FINISH{q_1, \Delta_1, s_1} = (\Delta, s)$, and that there exists at least one transition 
    \begin{equation*}
      \Sigma; \Gamma; \Delta_1 \ssemDash[{s_1}] \CONF{q_1; \bot, \ENV{TSV}} \trans \Sigma; \Gamma; \Delta_1' \ssemDash[{s_1'}] \CONF{q_1'; \bot, \ENV{T}; \ENV{SV}'}
    \end{equation*}
    \noindent for all appropriately shaped $\ENV{SV}$, with $(q_1';\bot, \Delta_1', s_1') \in \STYPI^1$.
    There are now two possibilities, depending on the shape of $q_1'$:
    \begin{itemize}
      \item If $q_1'$ is empty, the reduct is of the form $\Sigma; \Gamma; \Delta \ssemDash \CONF{\bot, \ENV{T}; \ENV{SV}'}$.
        Then, the transition 
        \begin{equation*}
          \Sigma; \Gamma; \Delta_1 \ssemDash[{s_1}] \CONF{q_1; s; \bot, \ENV{TSV}} \trans \Sigma; \Gamma; \Delta \ssemDash \CONF{s; \bot, \ENV{T}; \ENV{SV}'}
        \end{equation*}

        \noindent can likewise be concluded, and $(s; \bot, \Delta, s) \in \CANDSTYPI$ by construction.

      \item Otherwise, if $q_1'$ is non-empty, then the reduct is instead of the form $\Sigma; \Gamma; \Delta_1' \ssemDash[{s_1'}] \CONF{q_1'; \bot, \ENV{T}; \ENV{SV}'}$ and $(q_1';\bot, \Delta_1', s_1') \in \STYPI^1$.
        Then, the transition 
        \begin{equation*}
          \Sigma; \Gamma; \Delta_1 \ssemDash[{s_1}] \CONF{q_1; s; \bot, \ENV{TSV}} \trans \Sigma; \Gamma; \Delta_1' \ssemDash[s_1'] \CONF{q_1'; s; \bot, \ENV{T}; \ENV{SV}'}
        \end{equation*}

        \noindent can likewise be concluded.
        Finally, by Theorem~\ref{theorem:stacks_term_level}, we have that $\FINISH{q_1';\bot, \Delta_1', s_1'} = (\Delta, s)$, so by construction we also have that $(q_1'; s; \bot, \Delta_1', s_1') \in \CANDSTYPI$.
\qedhere
    \end{itemize}
\end{itemize}
\end{proof}

\begin{lemma}[While]\label{lemma:compatibility_stm_while}
If $\Sigma; \Gamma; \Delta \vDash e : \TBOOL_s$ and $\Sigma; \Gamma; \Delta \ENV{T} \vDash S : \TCMD{s}$,
then $\Sigma; \Gamma; \Delta; \ENV{T} \vDash \code{while $e$ do $S$} : \TCMD{s}$.
\end{lemma}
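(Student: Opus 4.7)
The plan is to construct, by coinduction, a candidate typing interpretation $\CANDSTYPI$ containing $(\code{while $e$ do $S$}; \bot, \Delta, s)$ and then to verify it satisfies the two clauses of Definition~\ref{def:typing_interpretation_stm}.

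First I would use Lemma~\ref{lemma:compatibility_stacks_coercion} together with the hypothesis $\Sigma; \Gamma; \Delta; \ENV{T} \vDash S : \TCMD{s}$ to extract, for each $s' \ordgeq s$, a typing interpretation $\CANDSTYPI_S^{s'}$ containing $(S; \bot, \Delta, s')$. Writing $Q \cdot R$ for the concatenation of stacks (replacing the trailing $\bot$ of $Q$ by $R$), I then set
\[
\CANDSTYPI \;=\; \STYPING \;\cup\; \bigl\{(\code{while $e$ do $S$}; \bot, \Delta, s)\bigr\} \;\cup\; C ,
\]
where $C = \bigl\{(Q \cdot (s; \code{while $e$ do $S$}; \bot), \Delta', s'') \;\bigm|\; s' \ordgeq s,\; (Q, \Delta', s'') \in \CANDSTYPI_S^{s'}\bigr\}$. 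Intuitively, $C$ collects every configuration reached while the loop body is being executed above the continuation $s; \code{while $e$ do $S$}; \bot$.

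The verification is then a case analysis. Triplets already in $\STYPING$ satisfy the closure condition by definition. The seed triplet $(\code{while $e$ do $S$}; \bot, \Delta, s)$ is non-terminal: combining the guard hypothesis $\Sigma; \Gamma; \Delta \vDash e : \TBOOL_s$ with the analogous coercion on expression types (widening $\TBOOL_s$ to $\TBOOL_{s'}$ for any $s' \ordgeq s$) yields either a step by \nameref{stm_rt_whiletrue}, whose reduct $(S; s; \code{while $e$ do $S$}; \bot, \Delta, s')$ lies in $C$, or a step by \nameref{stm_rt_whilefalse}, whose reduct $(\bot, \Delta, s)$ lies in $\STYPING$ by Lemma~\ref{lemma:compatibility_stacks_bottom}. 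For a triplet in $C$ arising from $(Q, \Delta', s'') \in \CANDSTYPI_S^{s'}$, the key observation is that every typed operational rule depends only on the topmost frame of the stack, so any transition of $(Q, \Delta', s'')$ offered by $\CANDSTYPI_S^{s'}$ lifts unchanged to the composite stack, with the reduct again in $C$. The only exceptional cases are when $Q$ is terminal: if $Q = \code{throw}; Q_0$ then the whole composite stack is terminal, whereas if $Q = \bot$ the top of the composite stack is the $s$-marker, so \nameref{stm_rt_restore} fires (its side condition $s'' \ordgeq s$ holds because $s'' \ordgeq s' \ordgeq s$) producing $(\code{while $e$ do $S$}; \bot, \Delta', s)$.

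The main obstacle is precisely this last subcase: the reduct must lie in $\CANDSTYPI$, which forces $\Delta' = \Delta$. This is a balanced-scoping invariant for the typed executions of $(S; \bot, \Delta, s')$: every extension of $\Delta$ by \nameref{stm_rt_decv} is matched by a later \nameref{stm_rt_delv}, and every method call pushes a return frame subsequently popped by \nameref{stm_rt_return}, restoring the caller's $\Delta$. Once the stack has been reduced to $\bot$, no $\DEL{x}$ nor return frame can remain outstanding, and $\Delta$ must have returned to its initial value. This invariant follows from the well-formedness component of Theorem~\ref{theorem:stacks_runtime_preservation}, which prevents scoping symbols from being left dangling beyond the bottom of the stack. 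With this invariant in place, $\CANDSTYPI$ is closed under $\ttrans$, hence a typing interpretation containing the target triplet, giving $\Sigma; \Gamma; \Delta; \ENV{T} \vDash \code{while $e$ do $S$} : \TCMD{s}$.
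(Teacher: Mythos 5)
Your overall architecture (seed triplet, plus body-executions with the continuation $s;\code{while $e$ do $S$};\bot$ appended, plus the terminal triplet) matches the paper's, but two steps do not go through as written. First, you invoke Lemma~\ref{lemma:compatibility_stacks_coercion} to obtain a typing interpretation containing $(S;\bot,\Delta,s')$ for every $s' \ordgeq s$. That lemma is \emph{contravariant}: from $\TCMD{s_1}$ it yields $\TCMD{s_2}$ only for $s_2 \ordleq s_1$, so the hypothesis $\Sigma;\Gamma;\Delta;\ENV{T} \vDash S : \TCMD{s}$ gives you nothing at levels above $s$ (and cannot: a body that writes to a container at exactly level $s$ is safe at $s$ but stuck at any strictly higher execution level, by the side condition of \nameref{stm_rt_assf}). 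The paper avoids the issue entirely by taking the \nameref{stm_rt_whiletrue} reduct at level $s$ itself, matching the guard hypothesis $\Sigma;\Gamma;\Delta \vDash e : \TBOOL_s$, and works with the single typing interpretation $\STYPI$ containing $(S;\bot,\Delta,s)$.

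Second, your set $C$ appends the continuation to \emph{every} triplet of $\CANDSTYPI_S^{s'}$, but a typing interpretation may contain triplets unrelated to the execution of $(S;\bot,\Delta,s')$, e.g.\@ $(\bot,\Delta'',s'')$ for an arbitrary $\Delta''$; for such a member the \nameref{stm_rt_restore} step lands on $(\code{while $e$ do $S$};\bot,\Delta'',s)$, which lies outside your candidate. The paper closes this hole by filtering: it only admits $(q';s;\code{while $e$ do $S$};\bot,\Delta',s')$ when $\FINISH{q';\bot,\Delta',s'}=(\Delta,s)$, and the fact that this predicate is preserved along transitions and forces the terminal environment and level to be exactly $(\Delta,s)$ is Theorem~\ref{theorem:stacks_term_level} (Terminal type level) --- not, as you claim, a consequence of the well-formedness clause of Theorem~\ref{theorem:stacks_runtime_preservation}, which preserves well-formedness of the stack but says nothing about which $\Delta$ a terminating execution ends on. Your ``balanced-scoping invariant'' is the right intuition, but it must be baked into the definition of $C$ via $\FINISH{\cdot}$ (or an explicit reachability restriction) for the closure check at the bottom of the body's stack to succeed.
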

\begin{proof}
By the definition of $\Sigma; \Gamma; \Delta; \ENV{T} \vDash \code{while $e$ do $S$} : \TCMD{s}$, we must exhibit a typing interpretation $\STYPI'$ containing the triplet $(\code{while $e$ do $S$}; \bot, \Delta, s)$.

From $\Sigma; \Gamma; \Delta; \ENV{T} \vDash S : \TCMD{s}$ we know that there exists a typing interpretation $\STYPI$ such that $(S, \Delta, s) \in \STYPI$.
We then construct the following candidate typing interpretation, where $q'$ may be empty:
\begin{align*}
    \CANDSTYPI & \DEFSYM \SET{ (\code{while $e$ do $S$}; \bot, \Delta, s) }                   \\
               & \UNION  \SET{ (q'; s; \code{while $e$ do $S$}; \bot, \Delta', s') ~\bigg|    %
                 \begin{array}{r @{~} l}                                                      %
                          (q'; \bot, \Delta', s') \in \STYPI 
                   \land \FINISH{q';\bot, \Delta', s'} = (\Delta, s)   
                  \end{array}                                                                 %
                }                                                                             \\
               & \UNION  \SET{ (\bot, \Delta, s) }
\end{align*}

We must now show that $\CANDSTYPI$ in fact is a typing interpretation.
First we consider the two single triplets that we added:
\begin{itemize}
  \item The triplet $(\bot, \Delta, s)$ satisfies Case~\ref{case:typi1} of Definition~\ref{def:typing_interpretation_stm}.

  \item From $\Sigma; \Gamma; \Delta \vDash e : \TBOOL_s$ and Definition~\ref{def:typing_interpretation_expr}, we know that 
    \begin{equation*}
      \Sigma; \Gamma; \Delta \esemDash[{\TBOOL_s}] \CONF{e, \ENV{SV}} \trans b
    \end{equation*}

    \noindent can be concluded for any $\ENV{SV}$ built from $\Sigma; \Gamma; \Delta$ according to Definition~\ref{def:construction_envsv}.
    Depending on the value of $b$, we now have a choice between two different rules by which to conclude the transition.
    By Corollary~\ref{theorem:stacks_runtime_states_abstraction}, either of the two transition can then be concluded for any $\ENV{SV}$ built according to Definition~\ref{def:construction_envsv}.
    We consider each case in turn:
    \begin{itemize}
      \item If $b = \FALSE$, then we can conclude the transition 
        \begin{align*}
                  \Sigma; \Gamma; \Delta \ssemDash \CONF{\code{while $e$ do $S$, \ENV{TSV}} ; \bot, \ENV{TSV}} 
          \trans \Sigma; \Gamma; \Delta \ssemDash \CONF{\bot, \ENV{TSV}}     
        \end{align*}

        \noindent by \nameref{stm_rt_whilefalse}, and $(\bot, \Delta, s) \in \CANDSTYPI$ by construction.

      \item If $b = \TRUE$, then we can conclude the transition
        \begin{align*}
             \quad   \qquad  \Sigma; \Gamma; \Delta \ssemDash \CONF{\code{while $e$ do $S$} ; \bot, \ENV{TSV}} 
          \trans \Sigma; \Gamma; \Delta \ssemDash \CONF{S ; s ; \code{while $e$ do $S$} ; \bot, \ENV{TSV}}  
        \end{align*}

        \noindent Since $(S;\bot, \Delta, s) \in \STYPI$ and $\FINISH{S;\bot, \Delta, s} = (\Delta, s)$, then we also have that $(S ; s ; \code{while $e$ do $S$} ; \bot, \Delta, s) \in \CANDSTYPI$ by construction.
    \end{itemize}
\end{itemize}

Next we consider an arbitrary triplet $(q'; s; \code{while $e$ do $S$} ; \bot, \Delta', s')$ from the set 
\begin{equation*}
  \left\{ (q'; s; \code{while $e$ do $S$}; \bot, \Delta', s') ~\bigg|         %
    \begin{array}{r @{~} l}                                                   %
            (q'; \bot, \Delta', s') \in \STYPI 
      \land \FINISH{q';\bot, \Delta', s'} = (\Delta, s)
    \end{array}                                                               %
  \right\}                                                                    
\end{equation*}

\noindent As $q'$ can be empty, we have three cases to consider:
\begin{itemize}
  \item If $q'$ is empty, then it must have come from the triplet $(\bot, \Delta, s) \in \STYPI$, since $\FINISH{\bot, \Delta, s} = (\Delta, s)$.
    Thus, the triplet is of the form $(s; \code{while $e$ do $S$}; \bot, \Delta, s)$.
    As $s \ordgeq s$, this satisfies the side condition of \nameref{stm_rt_restore}, so the transition 
    \begin{align*}
           \qquad   \Sigma; \Gamma; \Delta \ssemDash \CONF{s; \code{while $e$ do $S$}; \bot, \ENV{TSV}} 
      \trans  \Sigma; \Gamma; \Delta \ssemDash \CONF{\code{while $e$ do $S$}; \bot, \ENV{TSV}}
    \end{align*}

    \noindent can be concluded by that rule for any appropriately shaped $\ENV{SV}$, and we have that $(\code{while $e$ do $S$}; \bot, \Delta, s) \in \CANDSTYPI$ by construction.

  \item If $q'$ is of the form $\code{throw}; q''$, then this satisfies Case~\ref{case:typi1} of Definition~\ref{def:typing_interpretation_stm}.

  \item If $q'$ is non-empty and not of the form $\code{throw}; q''$, then $(q';\bot, \Delta', s') \in \STYPI$, $\FINISH{q';\bot, \Delta', s'} = (\Delta, s)$, and there exists at least one transition 
    \begin{equation*}
      \Sigma; \Gamma; \Delta' \ssemDash[{s'}] \CONF{q';\bot, \ENV{TSV}} \trans \Sigma; \Gamma; \Delta'' \ssemDash[{s''}] \CONF{q'';\bot, \ENV{T}; \ENV{SV}'}
    \end{equation*}

    \noindent for all appropriately shaped $\ENV{SV}$ built from $\Sigma; \Gamma; \Delta'$ according to Definition~\ref{def:construction_envsv}, with $(q'';\bot, \Delta'', s'') \in \STYPI$.    
    There are now two possibilities, depending on the shape of $q''$:
    \begin{itemize}
      \item If $q''$ is empty, the reduct is of the form $\Sigma; \Gamma; \Delta \ssemDash \CONF{\bot, \ENV{T}; \ENV{SV}'}$.
        Then, the transition 
        \begin{align*}
                 & \Sigma; \Gamma; \Delta' \ssemDash[{s'}] \CONF{q';s;\code{while $e$ do $S$};\bot, \ENV{TSV}} \\
          \trans ~& \Sigma; \Gamma; \Delta \ssemDash \CONF{s;\code{while $e$ do $S$};\bot, \ENV{T}; \ENV{SV}'}
        \end{align*}

        \noindent can likewise be concluded, and $(s;\code{while $e$ do $S$};\bot, \Delta, s) \in \CANDSTYPI$ by construction.

      \item Otherwise, if $q''$ is non-empty, then the reduct is instead of the form $\Sigma; \Gamma; \Delta'' \ssemDash[{s''}] \CONF{q''; \bot, \ENV{T}; \ENV{SV}'}$ and $(q'';\bot, \Delta'', s'') \in \STYPI$.
        Then, the transition 
        \begin{align*}
                  & \Sigma; \Gamma; \Delta_1 \ssemDash[{s'}] \CONF{q';s;\code{while $e$ do $S$};\bot, \ENV{TSV}}          \\
          \trans ~& \Sigma; \Gamma; \Delta'' \ssemDash[s''] \CONF{q'';s;\code{while $e$ do $S$};\bot, \ENV{T}; \ENV{SV}'}
        \end{align*}

        \noindent can likewise be concluded.
        Finally, by Theorem~\ref{theorem:stacks_term_level}, we have that $\FINISH{q'';\bot, \Delta'', s''} = (\Delta, s)$, so by construction we also have that $(q'';s;\code{while $e$ do $S$};\bot, \Delta'', s'') \in \CANDSTYPI$.
        \qedhere
    \end{itemize}
\end{itemize}
\end{proof}

\begin{lemma}[Variable declaration]\label{lemma:compatibility_stm_decl}
If $\Sigma; \Gamma; \Delta \vDash e : B_{s_1}$ and $\Sigma; \Gamma; \Delta, x:\TVAR{B_{s_1}}; \ENV{T} \vDash S : \TCMD{s}$,
then $\Sigma; \Gamma; \Delta; \ENV{T} \vDash \code{$\TVAR{B_{s_1}}$ $x$ := $e$ in $S$} : \TCMD{s}$.
\end{lemma}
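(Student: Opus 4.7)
The plan is to mirror the coinductive pattern used for the other stack/statement compatibility lemmas in Section~\ref{sec:typing_interpretation_stm_stack}: first assemble typing for the "continuation stack" produced by one application of rule \nameref{stm_rt_decv}, then exhibit a candidate typing interpretation witnessing the conclusion by adjoining the declaration triplet to that pre-existing interpretation.

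Concretely, I would proceed as follows. From Lemma~\ref{lemma:compatibility_stacks_bottom} I get $\Sigma;\Gamma;\Delta;\ENV{T}\vDash\bot:\TCMD{s}$; hence, by Lemma~\ref{lemma:compatibility_stacks_del}, $\Sigma;\Gamma;\Delta,x{:}\TVAR{B_{s_1}};\ENV{T}\vDash\DEL{x};\bot:\TCMD{s}$. Combining this with the second hypothesis via Lemma~\ref{lemma:compatibility_stacks_statement} yields $\Sigma;\Gamma;\Delta,x{:}\TVAR{B_{s_1}};\ENV{T}\vDash S;\DEL{x};\bot:\TCMD{s}$, i.e.\ there is a typing interpretation $\STYPI$ with $(S;\DEL{x};\bot,\ \Delta,x{:}\TVAR{B_{s_1}},\ s)\in\STYPI$. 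I would then define the candidate
\[
\CANDSTYPI \;\triangleq\; \STYPI \;\cup\; \bigl\{\,(\code{$\TVAR{B_{s_1}}$ $x$ := $e$ in $S$};\bot,\ \Delta,\ s)\,\bigr\}
\]
and verify clause (\ref{case:typi2}) of Definition~\ref{def:typing_interpretation_stm} for the single new triplet (the original $\STYPI$ already satisfies both clauses by assumption). That triplet is not terminal, so I must show it has at least one type-lifted transition and that every such transition lands back in $\CANDSTYPI$. The head of the stack can only be reduced by \nameref{stm_rt_decv}, and by the semantic expression hypothesis $\Sigma;\Gamma;\Delta\vDash e:B_{s_1}$ together with Lemma~\ref{lemma:construction_envsv}, for any $\ENV{SV}$ built from $\Sigma;\Gamma;\Delta$ per Definition~\ref{def:construction_envsv}, the evaluation premise $\Sigma;\Gamma;\Delta\esemDash[B_{s_1}]\CONF{e,\ENV{SV}}\trans v$ holds. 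The rule fires (using the implicit freshness $x\notin\DOM{\Delta}$ inherited from well-formedness of $\Delta,x{:}\TVAR{B_{s_1}}$, which in turn forces $x\notin\DOM{\ENV{V}}$ by construction of $\ENV{V}$), and the unique reduct triplet is $(S;\DEL{x};\bot,\ \Delta,x{:}\TVAR{B_{s_1}},\ s)\in\STYPI\subseteq\CANDSTYPI$. Hence $\CANDSTYPI$ is a typing interpretation, and since $\STYPING$ is the union of all typing interpretations, the declaration triplet lies in $\STYPING$, giving the desired conclusion.

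The main obstacle is the smooth interaction between the \emph{semantic} expression judgement in the hypothesis and the \emph{operational} evaluation premise of \nameref{stm_rt_decv}: the former must be shown to guarantee the latter uniformly over every admissible $\ENV{SV}$ (not just a single witness), and the resulting value $v$ must be of a subtype of $B_{s_1}$ so that the subsequent binding $x{:}\TVAR{B_{s_1}}$ preserves well-typedness of $\ENV{V}'$. This is exactly what the semantic definition of $\Sigma;\Gamma;\Delta\vDash e:B_{s_1}$ (Appendix~\ref{app:semantic_types_expressions}) is tailored to deliver, so the bridging step will essentially unfold that definition and invoke Corollary~\ref{theorem:stacks_runtime_states_abstraction} to discharge the quantification over $\ENV{SV}$. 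Everything else in the argument is a purely bookkeeping composition of lemmas already proved in this section.
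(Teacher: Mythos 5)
Your proof is correct, but it takes a genuinely different route from the paper's. The paper proves this lemma by hand-rolling a candidate typing interpretation from scratch: it takes the interpretation $\STYPI$ witnessing $\Sigma;\Gamma;\Delta,x{:}\TVAR{B_{s_1}};\ENV{T}\vDash S:\TCMD{s}$, adjoins the declaration triplet and $(\bot,\Delta,s)$, and adds the whole family $\SET{(q';\DEL{x};\bot,\Delta',s') \mid (q';\bot,\Delta',s')\in\STYPI \land \FINISH{q';\bot,\Delta',s'}=((\Delta,x{:}\TVAR{B_{s_1}}),s)}$, then re-verifies the suffix-appending bookkeeping (including the empty-$q'$ case that fires \nameref{stm_rt_delv}) inline. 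You instead obtain the continuation $\Sigma;\Gamma;\Delta,x{:}\TVAR{B_{s_1}};\ENV{T}\vDash S;\DEL{x};\bot:\TCMD{s}$ compositionally from Lemmas~\ref{lemma:compatibility_stacks_bottom}, \ref{lemma:compatibility_stacks_del} and \ref{lemma:compatibility_stacks_statement} (all established earlier, so there is no circularity), and then only have to check a single new triplet whose unique type-lifted reduct is $(S;\DEL{x};\bot,\ (\Delta,x{:}\TVAR{B_{s_1}}),\ s)$, already in the reused interpretation. This buys a shorter and less error-prone argument, since the $\FINISH{\cdot}$ gymnastics and the end-of-scope case analysis are delegated to lemmas that have already paid that cost; the paper's self-contained construction buys nothing extra here beyond uniformity of presentation with the other statement lemmas. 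Your discharge of the existence of a transition (via Definition~\ref{def:typing_interpretation_expr} and Corollary/Theorem~\ref{theorem:expressions_runtime_states_abstraction} quantifying over all $\ENV{SV}$ built per Definition~\ref{def:construction_envsv}) and of the freshness side condition $x\notin\DOM{\ENV{V}},\DOM{\Delta}$ matches what the paper needs and is in fact slightly more explicit than the paper's own treatment, which never mentions that side condition.
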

\begin{proof}
By the definition of $\Sigma; \Gamma; \Delta; \ENV{T} \vDash \code{$\TVAR{B_{s_1}}$ $x$ := $e$ in $S$} : \TCMD{s}$, we must exhibit a typing interpretation containing $(\code{$\TVAR{B_{s_1}}$ $x$ := $e$ in $S$}; \bot, \Delta, s)$.

From $\Sigma; \Gamma; \Delta, x:\TVAR{B_{s_1}}; \ENV{T} \vDash S : \TCMD{s}$ we know there exists a typing interpretation $\STYPI$ such that 
\begin{equation*}
  (S; \bot, (\Delta, x:\TVAR{B_{s_1}}), s) \in \STYPI 
\end{equation*}

\noindent We then construct the following candidate typing interpretation:
\begin{align*}
  \CANDSTYPI & \DEFSYM \SET{ (\code{$\TVAR{B_{s_1}}$ $x$ := $e$ in $S$}; \bot, \Delta, s) }              \\
             & \UNION  \SET{ (q';\DEL{x}; \bot, \Delta', s') ~\bigg|                                     %
               \begin{array}{r @{~} l}                                                                   %
                       (q';\bot, \Delta', s') \in \STYPI                              
                 \land  \FINISH{q';\bot, \Delta', s'} = ((\Delta, x:\TVAR{B_{s_1}}), s) 
               \end{array}                                                                               %
               }                                                                                         \\
             & \UNION  \SET{ (\bot, \Delta, s) }
\end{align*}
We must now show that $\CANDSTYPI$ indeed is a typing interpretation.
First, we consider the two single triplets that we added:
\begin{itemize}
  \item The triplet $(\bot, \Delta, s)$ satisfies Case~\ref{case:typi1} of Definition~\ref{def:typing_interpretation_stm}.

  \item From $\Sigma; \Gamma; \Delta \vDash e : B_{s_1}$ and Definition~\ref{def:typing_interpretation_expr}, we know that 
    \begin{equation*}
      \Sigma; \Gamma; \Delta \esemDash[{B_{s_1}}] \CONF{e, \ENV{SV}} \trans v
    \end{equation*}

    \noindent where $v$ is a value of type $B$ and level $s_1$ for any $\ENV{SV}$ built from $\Sigma; \Gamma; \Delta$ according to Definition~\ref{def:construction_envsv}.
    Thus, the transition 
    \begin{align*}
              & \Sigma; \Gamma; \Delta \ssemDash \CONF{\code{$\TVAR{B_{s_1}}$ $x$ := $e$ in $S$};\bot, \ENV{TSV}} \\
      \trans ~& \Sigma; \Gamma; \Delta, x:\TVAR{B_{s_1}} \ssemDash \CONF{S ; \DEL{x}; \bot, \ENV{TS}; \ENV{V}, x:v}     
    \end{align*}

    \noindent can be concluded by rule \nameref{stm_rt_decv}.
    By Corollary~\ref{theorem:stacks_runtime_states_abstraction}, the transition can be concluded for any $\ENV{SV}$ built from $\Sigma; \Gamma; \Delta$ according to Definition~\ref{def:construction_envsv}.
    There are no other possible transitions.
    Finally, since 
    \begin{align*}
      (S;\bot, (\Delta, x:\TVAR{B_{s_1}}), s)        & \in \STYPI              \\
      \FINISH{S;\bot, (\Delta, x:\TVAR{B_{s_1}}), s} & = ((\Delta, x:\TVAR{B_{s_1}}), s)
    \end{align*}
  
    \noindent we have that $(S;\DEL{x};\bot, (\Delta, x:\TVAR{B_{s_1}}), s) \in \CANDSTYPI$, by construction of $\CANDSTYPI$.
\end{itemize}

Next we consider an arbitrary triplet $(q'; \DEL{x}; \bot, \Delta', s')$ from the set 
\begin{equation*}
 \quad   \left\{ (q';\DEL{x}; \bot, \Delta', s') ~\bigg|                                            %
    \begin{array}{r @{~} l}                                                                  %
         (q';\bot, \Delta', s') \in \STYPI 
     \land \FINISH{q';\bot, \Delta', s'} = ((\Delta, x:\TVAR{B_{s_1}}), s) 
    \end{array}                                                                              %
  \right\}                                                                                   
\end{equation*}

\noindent As $q'$ can be empty, we have three cases to consider:
\begin{itemize}
  \item If $q'$ is empty, then it must have come from the triplet $(\bot, (\Delta, x:\TVAR{B_{s_1}}), s) \in \STYPI$, since $\FINISH{\bot, (\Delta, x:\TVAR{B_{s_1}}), s} = ((\Delta, x:\TVAR{B_{s_1}}), s)$.
    Thus, the triplet in $\CANDSTYPI$ is of the form $(\DEL{x}; \bot, (\Delta, x:\TVAR{B_{s_1}}), s)$.
    The transition 
    \begin{align*}
              \Sigma; \Gamma; \Delta, x:\TVAR{B_{s'}} \ssemDash \CONF{\DEL{x};\bot, \ENV{TS}; \ENV{V}, x:v } 
      \trans \Sigma; \Gamma; \Delta \ssemDash \CONF{\bot, \ENV{TSV}} 
    \end{align*}

    \noindent can be concluded by that rule for any appropriately shaped $\ENV{S}; \ENV{V}, x:v$ by \nameref{stm_rt_delv}, and $(\bot, \Delta, s) \in \CANDSTYPI$ by construction.

  \item If $q'$ is of the form $\code{throw}; q''$, then this satisfies Case~\ref{case:typi1} of Definition~\ref{def:typing_interpretation_stm}.

  \item If $q'$ is non-empty and not of the form $\code{throw}; q''$, then $(q';\bot, \Delta', s') \in \STYPI$, $\FINISH{q';\bot, \Delta', s'} = ((\Delta, x:\TVAR{B_{s_1}}), s)$, and there exists at least one transition 
    \begin{equation*}
      \Sigma; \Gamma; \Delta' \ssemDash[{s'}] \CONF{q';\bot, \ENV{TSV}} \trans \Sigma; \Gamma; \Delta'' \ssemDash[{s''}] \CONF{q'';\bot, \ENV{T}; \ENV{SV}'}
    \end{equation*}

    \noindent for all appropriately shaped $\ENV{SV}$, with $(q'';\bot, \Delta'', s'') \in \STYPI$.
    There are now two possibilities, depending on the shape of $q''$:
    \begin{itemize}
      \item If $q''$ is empty, the reduct is of the form $\Sigma; \Gamma; \Delta \ssemDash \CONF{\bot, \ENV{T}; \ENV{S}'; \ENV{V}', x:v''}$ and $\Delta'$ must have contained the entry $x:\TVAR{B_{s_1}}$ as well; i.e.\@ $\Delta' = \Delta'', x:\TVAR{B_{s_1}}$ for some $\Delta''$.
        Thus, the transition is of the form 
        \begin{align*}
                  & \Sigma; \Gamma; \Delta'', x:\TVAR{B_{s_1}} \ssemDash[{s'}] \CONF{q';\bot, \ENV{TS}; \ENV{V}, x:v'} \\
          \trans ~& \Sigma; \Gamma; \Delta, x:\TVAR{B_{s_1}} \ssemDash \CONF{\bot, \ENV{T}; \ENV{S}'; \ENV{V}', x:v''}        
        \end{align*}
        
        \noindent for some $\Delta''$, $s'$, $\ENV{S}'$, $\ENV{V}'$ and $v''$, since we cannot know which, if any, of these components were affected by the transition.
        Then, the transition
        \begin{align*}
                  & \Sigma; \Gamma; \Delta'', x:\TVAR{B_{s_1}} \ssemDash[{s'}] \CONF{q';\DEL{x};\bot, \ENV{TS}; \ENV{V}, x:v'} \\
          \trans ~& \Sigma; \Gamma; \Delta, x:\TVAR{B_{s_1}} \ssemDash \CONF{\DEL{x};\bot, \ENV{T}; \ENV{S}'; \ENV{V}', x:v''}        
        \end{align*}

        \noindent can likewise be concluded, and $(\DEL{x};\bot, (\Delta, x:\TVAR{B_{s_1}}), s) \in \CANDSTYPI$ by construction.

      \item Otherwise, if $q''$ is non-empty, then the reduct is instead of the form $\Sigma; \Gamma; \Delta'' \ssemDash[{s''}] \CONF{q''; \bot, \ENV{T}; \ENV{SV}'}$ and $(q'';\bot, \Delta'', s'') \in \STYPI$.
        Then, the transition 
        \begin{align*}
                  & \Sigma; \Gamma; \Delta'  \ssemDash[{s'}]  \CONF{q';\DEL{x};\bot, \ENV{TS}; \ENV{V}, x:v'} \\
          \trans ~& \Sigma; \Gamma; \Delta'' \ssemDash[{s''}] \CONF{q'';\DEL{x};\bot, \ENV{T}; \ENV{S}'; \ENV{V}', x:v''}        
        \end{align*}

        \noindent can likewise be concluded.
        Finally, by Theorem~\ref{theorem:stacks_term_level}, we have that $\FINISH{q'';\bot, \Delta'', s''} = ((\Delta, x:\TVAR{B_{s_1}}), s)$, so by construction we also have that $(q'';\DEL{x};\bot, \Delta'', s'') \in \CANDSTYPI$.
        \qedhere
    \end{itemize}
\end{itemize}
\end{proof}

\begin{lemma}[Variable assignment]\label{lemma:compatibility_stm_v_ass}
If $\Sigma; \Gamma; \Delta \vDash e : B_s$ and $\Delta(x) = \TVAR{B_s}$, then $\Sigma; \Gamma; \Delta; \ENV{T} \vDash \code{$x$ := $e$} : \TCMD{s}$.
\end{lemma}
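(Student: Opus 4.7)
The plan is to exhibit a typing interpretation containing the triplet $(\code{$x$ := $e$};\bot,\Delta,s)$, following the same recipe used in the preceding compatibility lemmas. Concretely, I would take as candidate
\[
\CANDSTYPI \triangleq \STYPING \ \cup\ \{\,(\code{$x$ := $e$};\bot,\Delta,s)\,\}
\]
and verify the two clauses of Definition~\ref{def:typing_interpretation_stm} for every element. For triplets already in $\STYPING$, the clauses hold by construction, since $\STYPING$ is the union of all typing interpretations. Hence the real work is limited to the single new element.

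The triplet $(\code{$x$ := $e$};\bot,\Delta,s)$ is not in $\TSTACKS$, so I must establish clause~\ref{case:typi2}: at least one type-lifted transition exists, and every such transition lands back in $\CANDSTYPI$. First I would pick any $\ENV{SV}$ built from $\Sigma;\Gamma;\Delta$ according to Definition~\ref{def:construction_envsv}; since $x \in \DOM{\Delta}$, the construction guarantees $x \in \DOM{\ENV{V}}$. From the hypothesis $\Sigma;\Gamma;\Delta \vDash e : B_s$ together with the corresponding coherence property for the semantic typing of expressions (Section~\ref{sec:semantic_types_expressions}, which ensures that in a well-typed and consistent $\ENV{SV}$ the evaluation of $e$ yields a value of type $B_{s'}$ with $\Sigma \vdash B_{s'} \SUBS B_s$), we obtain a $v$ with $\Sigma;\Gamma;\Delta \esemDash[(B,s)] \CONF{e,\ENV{SV}} \trans v$. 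With $\Delta(x)=\TVAR{B_s}$ and the trivial $s \ordleq s$, the side conditions of rule \nameref{stm_rt_assv} are met, so this rule fires and the underlying typed transition produces a configuration whose type-lifted reduct is $(\bot,\Delta,s)$.

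Next, I would observe that \nameref{stm_rt_assv} is the only rule that can match a top-of-stack of the form $\code{$x$ := $e$}$, and that in the type-lifted system the $\ENV{SV}'$ component is discarded. Consequently, every type-lifted transition out of $(\code{$x$ := $e$};\bot,\Delta,s)$ has the \emph{same} reduct triplet $(\bot,\Delta,s)$, regardless of the particular $\ENV{SV}$ chosen. By Lemma~\ref{lemma:compatibility_stacks_bottom}, $(\bot,\Delta,s) \in \STYPING \subseteq \CANDSTYPI$, closing clause~\ref{case:typi2}. It then follows that $\CANDSTYPI$ is a typing interpretation, so $(\code{$x$ := $e$};\bot,\Delta,s)\in\STYPING$, which is exactly $\Sigma;\Gamma;\Delta;\ENV{T}\vDash \code{$x$ := $e$} : \TCMD{s}$.

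The main obstacle is the appeal in the second paragraph to the coherence property of the semantic typing of expressions: one must know that $\Sigma;\Gamma;\Delta\vDash e : B_s$ is not vacuous and, moreover, that it produces an actual value in every $\ENV{SV}$ of the shape prescribed by Definition~\ref{def:construction_envsv}. This is precisely the analogue for expressions of Theorem~\ref{theorem:stacks_runtime_preservation}/Corollary~\ref{theorem:stacks_runtime_states_abstraction} for stacks, and once it is available from the expression-level development (Appendix~\ref{app:semantic_types_expressions}), the remaining argument reduces to the one-step coinductive check sketched above.
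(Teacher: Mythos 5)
Your proof is correct and follows essentially the same route as the paper's: exhibit a candidate typing interpretation containing $(\code{$x$ := $e$};\bot,\Delta,s)$, use the expression-level semantic typing together with safe state abstraction to show that \nameref{stm_rt_assv} fires for every $\ENV{SV}$ built from Definition~\ref{def:construction_envsv}, note that every type-lifted reduct is $(\bot,\Delta,s)$, and close the coinductive check. The only immaterial difference is that the paper takes the minimal candidate $\SET{(\code{$x$ := $e$};\bot,\Delta,s),(\bot,\Delta,s)}$ instead of unioning the new triplet with the whole typing $\STYPING$.
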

\begin{proof}
By the definition of $\Sigma; \Gamma; \Delta; \ENV{T} \vDash \code{$x$ := $e$} : \TCMD{s}$, we must exhibit a typing interpretation $\STYPI$ containing the triplet $(\code{$x$ := $e$}; \bot, \Delta, s)$.
We construct a candidate typing interpretation as follows:
\begin{equation*}
  \CANDSTYPI \DEFSYM \SET{ (\code{$x$ := $e$}; \bot, \Delta, s), (\bot, \Delta, s) }
\end{equation*}
We must now show that $\CANDSTYPI$ indeed is a typing interpretation.
We consider each of the triplets we have added:
\begin{itemize}
  \item $(\bot, \Delta, s) \in \CANDSTYPI$ by Case~\ref{case:typi1} of Definition~\ref{def:typing_interpretation_stm}.

  \item From $\Sigma; \Gamma; \Delta \vDash e : B_s$ and Definition~\ref{def:typing_interpretation_expr}, we know that 
    \begin{equation*}
      \Sigma; \Gamma; \Delta \esemDash \CONF{e, \ENV{SV}} \trans v
    \end{equation*}

    \noindent where $v$ is a value of type $B$ and level $s$ for any $\ENV{SV}$ built from $\Sigma; \Gamma; \Delta$ according to Definition~\ref{def:construction_envsv}. 
    Furthermore, from $\Delta(x) = \TVAR{B_s}$, we know that $\Delta = \Delta', x:\TVAR{B_s}$ for some $\Delta'$.
    By the construction in Definition~\ref{def:construction_envsv}, we therefore also have that $\ENV{V}$ is such that $\ENV{V} = \ENV{V}', x:v'$ for some $\ENV{V}'$ and some $v'$ of type $B_s$.
    We can then conclude the transition
    \begin{align*}
              \Sigma; \Gamma; \Delta \ssemDash \CONF{\code{$x$ := $e$};\bot, \ENV{TS};\ENV{V}', x:v'} 
      \trans  \Sigma; \Gamma; \Delta \ssemDash \CONF{\bot, \ENV{TS}; \ENV{V}, x:v}     
    \end{align*}

    \noindent by rule \nameref{stm_rt_assv}.
    By Corollary~\ref{theorem:stacks_runtime_states_abstraction}, the transition can then be concluded for any $\ENV{SV}$ built from $\Sigma; \Gamma; \Delta$ according to Definition~\ref{def:construction_envsv}.
    There are no other possible transitions.
    Finally, $(\bot, \Delta, s) \in \CANDSTYPI$ by construction.
\qedhere
\end{itemize}
\end{proof}

\begin{lemma}[Field assignment]\label{lemma:compatibility_stm_f_ass}
If $\Sigma; \Gamma; \Delta \vDash e : B_s$, $\Delta(\code{this}) = \TVAR{I_{s_1}}$, $\Gamma(I)(p) = \TVAR{B_s}$, and $s_1 \ordleq s$, then $\Sigma; \Gamma; \Delta; \ENV{T} \vdash \code{this.$p$ := $e$} : \TCMD{s}$.
\end{lemma}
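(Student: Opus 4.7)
The plan is to produce an explicit typing interpretation that witnesses $\Sigma; \Gamma; \Delta; \ENV{T} \vDash \code{this.$p$ := $e$} : \TCMD{s}$, namely one that contains the triplet $T_0 \DEFSYM (\code{this.$p$ := $e$};\bot, \Delta, s)$. By Lemma~\ref{lemma:compatibility_stacks_bottom} there exists a typing interpretation $\STYPI_0$ containing $(\bot, \Delta, s)$; define the candidate
$$ \CANDSTYPI \DEFSYM \STYPI_0 \UNION \{T_0\}. $$
It then suffices to verify that $\CANDSTYPI$ fulfils the coinductive conditions in Definition~\ref{def:typing_interpretation_stm}, since then $(Q_0, \Delta, s) \in \CANDSTYPI \subseteq \STYPING$, which is precisely the conclusion.

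The elements inherited from $\STYPI_0$ satisfy the clauses because $\STYPI_0$ is already a typing interpretation; so the only real obligation is to check clause~(2) at $T_0$, whose stack top is not terminal. For this, I would take any $\ENV{SV}$ constructed from $\Sigma; \Gamma; \Delta$ according to Definition~\ref{def:construction_envsv}. By Lemma~\ref{lemma:construction_envsv}, this $\ENV{SV}$ is well-typed and consistent. Consistency together with $\Delta(\code{this}) = \TVAR{I_{s_1}}$ guarantees that $\ENV{V}(\code{this}) = X$ for some address $X$ with $\Gamma(X) = (I, s_1)$ and $\ENV{S}(X) = \ENV{F}^X$; the hypothesis $\Gamma(I)(p) = \TVAR{B_s}$ then forces $p \in \DOM{\ENV{F}^X}$. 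The semantic typing premise $\Sigma; \Gamma; \Delta \vDash e : B_s$ yields $\Sigma; \Gamma; \Delta \esemDash[(B,s)] \CONF{e, \ENV{SV}} \trans v$ for some $v$.

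With these ingredients in place, rule \nameref{stm_rt_assf} applies to $T_0$: its side conditions are $s_1 \ordleq s$ (from the hypothesis) and $s \ordleq s$ (trivial), and the where-conditions were just established. The rule produces a type-lifted transition $\Sigma; \Gamma; \ENV{T} \vDash T_0 \ttrans (\bot, \Delta, s)$. Moreover, \nameref{stm_rt_assf} is the \emph{only} stack rule whose left-hand side matches a top element of the form \code{this.$p$ := $e$}, so every reduct of $T_0$ in the type-lifted system is again of the shape $(\bot, \Delta, s)$, which lies in $\STYPI_0 \subseteq \CANDSTYPI$. Clause~(2) is thereby verified and $\CANDSTYPI$ is a typing interpretation.

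The main delicacy is the bridge between the semantic typing of the expression $e$ and the concrete evaluation premise that rule \nameref{stm_rt_assf} demands: one must know that $\Sigma; \Gamma; \Delta \vDash e : B_s$ actually supplies, for every state $\ENV{SV}$ built as in Definition~\ref{def:construction_envsv}, a successful typed evaluation $\esemDash[(B,s)] \CONF{e, \ENV{SV}} \trans v$. I expect this to be a direct consequence of the definition of the semantic typing judgement for expressions from Section~\ref{sec:semantic_types_stacks} (and Appendix~\ref{app:semantic_types_expressions}), together with Lemma~\ref{lemma:construction_envsv}. Once this link is made explicit, the rest of the argument is bookkeeping, and the construction of $\CANDSTYPI$ follows the same pattern used in the preceding compatibility lemmas (\ref{lemma:compatibility_stm_skip}--\ref{lemma:compatibility_stm_v_ass}).
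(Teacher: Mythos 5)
Your proposal is correct and follows essentially the same route as the paper: both exhibit the candidate typing interpretation $\{(\code{this.$p$ := $e$};\bot,\Delta,s),(\bot,\Delta,s)\}$ (your $\STYPI_0\cup\{T_0\}$ with the bottom witness), discharge the terminal triplet by Case~1 of Definition~\ref{def:typing_interpretation_stm}, and verify Case~2 for $T_0$ by unfolding the semantic typing of $e$ over states built via Definition~\ref{def:construction_envsv}, applying rule \nameref{stm_rt_assf} (whose side conditions reduce to $s_1\ordleq s$ and $s\ordleq s$), and noting it is the only applicable rule so every reduct is $(\bot,\Delta,s)$. The ``bridge'' you flag is exactly what the paper uses, namely Definition~\ref{def:typing_interpretation_expr} together with Corollary~\ref{theorem:stacks_runtime_states_abstraction} to pass from one constructed $\ENV{SV}$ to all of them.
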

\begin{proof}
By the definition of $\Sigma; \Gamma; \Delta; \ENV{T} \vDash \code{this.$p$ := $e$} : \TCMD{s}$, we must exhibit a typing interpretation $\STYPI$ containing the triplet $(\code{this.$p$ := $e$}; \bot, \Delta, s)$.
We construct a candidate typing interpretation as follows:
\begin{equation*}
  \CANDSTYPI \DEFSYM \SET{ (\code{this.$p$ := $e$}; \bot, \Delta, s), (\bot, \Delta, s) }
\end{equation*}
We must now show that $\CANDSTYPI$ indeed is a typing interpretation.
We consider each of the triplets we have added:
\begin{itemize}
  \item $(\bot, \Delta, s) \in \CANDSTYPI$ by Case~\ref{case:typi1} of Definition~\ref{def:typing_interpretation_stm}.

  \item From $\Sigma; \Gamma; \Delta \vDash e : B_s$ and Definition~\ref{def:typing_interpretation_expr}, we know that 
    \begin{equation*}
      \Sigma; \Gamma; \Delta \esemDash \CONF{e, \ENV{SV}} \trans v
    \end{equation*}

    \noindent where $v$ is a value of type $B$ and level $s$ for any $\ENV{SV}$ built from $\Sigma; \Gamma; \Delta$ according to Definition~\ref{def:construction_envsv}. 
    From $\Delta(\code{this}) = \TVAR{I_{s_1}}$, we know that $\Delta = \Delta', \code{this}:\TVAR{I_{s_1}}$ for some $\Delta'$, and $s_1 \ordleq s$.
    Likewise, from $\Gamma(I)(p) = \TVAR{B_s}$, we know that the field $p$ exist on the interface $I$ and is declared with type $\TVAR{B_s}$.
    By the construction in Definition~\ref{def:construction_envsv}, we therefore also have that the environments are such that     
    \begin{align*}
      \ENV{V} & = \ENV{V}', \code{this}:X \\
      \ENV{S} & = \ENV{S}', X:\ENV{F}     \\ 
      \ENV{F} & = \ENV{F}', p:v'
    \end{align*}
    
    \noindent for some $\ENV{V}'$, $\ENV{S}'$, $\ENV{F}'$, for some address $X$ of type $I_{s_1}$ and for some value $v'$ of type $B_s$.
    We can then conclude the transition
    \begin{align*}
              & \Sigma; \Gamma; \Delta \ssemDash \CONF{\code{this.$p$ := $e$};\bot, \ENV{T};\ENV{S}\REBIND{X}{\ENV{F}\REBIND{p}{v'}};\ENV{V}} \\
      \trans ~& \Sigma; \Gamma; \Delta \ssemDash \CONF{\bot, \ENV{T};\ENV{S}\REBIND{X}{\ENV{F}\REBIND{p}{v}};\ENV{V}}
    \end{align*}

    \noindent by rule \nameref{stm_rt_assf}.
    By Corollary~\ref{theorem:stacks_runtime_states_abstraction}, the transition can then be concluded for all $\ENV{SV}$ built from $\Sigma; \Gamma; \Delta$ according to Definition~\ref{def:construction_envsv}.
    There are no other possible transitions.
    Finally, $(\bot, \Delta, s) \in \CANDSTYPI$ by construction.
\qedhere
\end{itemize}
\end{proof}

\begin{figure}\centering
\begin{semantics}
  \RULE[st-envt$_1$]
    { }
    { \Sigma; \Gamma; \ENV{T} \vDash \ENV{T}^\EMPTYSET }
  \RULE[st-envt$_2$]
    { 
      \Sigma; \Gamma; \ENV{T} \vDash \ENV{T}'  \AND
      \Sigma; \Gamma; \ENV{T} \vDash_X \ENV{M}
    }
    { \Sigma; \Gamma; \ENV{T} \vDash \ENV{T}', X:\ENV{M} }
  \RULE[st-envm$_1$]
    { }
    { \Sigma; \Gamma; \ENV{T} \vDash_X \ENV{M}^\EMPTYSET }
  \RULE[st-envm$_2$][st_env_envm2](s \ordleq s_1) 
    { 
      \Sigma; \Gamma; \ENV{T}         \vDash_X \ENV{M} \AND 
      \Sigma; \Gamma; \Delta; \ENV{T} \vDash S : \TCMD{s} 
    }
    { \Sigma; \Gamma; \ENV{T} \vDash_X \ENV{M}, f:(x_1, \ldots, x_n, S) }
  \WHERE{ I_{s_1} }           { = \Gamma(X)}          
  \WHERE{ \TVAR{\TINT_{s_2}} }{ = \Gamma(I)(\code{balance}) } 
  \WHERE{\Gamma(I)(f)}        { = \TSPROC{(B_1, s_1'), \ldots, (B_n, s_n')}{s} }
  \WHERE{\Delta}              { = \code{this}:\TVAR{I_{s_1}}, \code{value}:\TVAR{\TINT_{s_2}}, \code{sender}:\TVAR{\ITOP_{\STOP}}, x_1:\TVAR{B_1, s_1'}, \ldots, x_n:\TVAR{B_n, s_n'} }
\end{semantics}
\caption{Semantic type rules for code environments $\ENV{T}$.}
\label{fig:semantic_type_rules_envtm}
\end{figure}

The semantic notion of type safety for the code environment $\ENV{T}$, using Definition~\ref{def:typing_interpretation_stm} is given by the rules in Figure~\ref{fig:semantic_type_rules_envtm}.
As usual, the definition closely follows the syntactic type rules of Figure~\ref{fig:type_rules_env},
except that we now use the double turnstile, and we also have $\ENV{T}$ on the left-hand side.

\begin{lemma}[Method call]\label{lemma:compatibility_stm_call}
Assume that $\Sigma; \Gamma; \ENV{T} \vDash \ENV{T}$, 
 $\Sigma; \Gamma; \Delta \vDash e_1 : I^Y_s$, 
 $\Sigma; \Gamma; \Delta \vDash e_2 : \TINT_s$, 
 $\Sigma; \Gamma; \Delta \vDash e_i' : (B_i', s_i')$ for each $i\in \{1,\ldots,h\}$, 
 $\Delta(\code{this}) = I^X_{s_1}$, 
 $\Gamma(I^X)(\code{balance}) = \TVAR{\TINT_{s_3}}$, 
 $\Gamma(I^Y)(\code{balance}) = \TVAR{\TINT_{s_4}}$, 
 $\Gamma(I^Y)(f) = \TSPROC{(B_1', s_1'), \ldots, (B_h', s_h')}{s}$, and
 $s_1 \ordleq s \ordleq s_3, s_4$. 
Then $\Sigma; \Gamma; \Delta; \ENV{T} \vDash \CALL{e_1}{f}{e_1', \ldots, e_h'}{e_2} : \TCMD{s}$.
\end{lemma}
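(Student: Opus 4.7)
The plan is to exhibit a typing interpretation for the triplet by taking
$\CANDSTYPI \DEFSYM \{(\CALL{e_1}{f}{e_1', \ldots, e_h'}{e_2}; \bot, \Delta, s)\} \cup \STYPING$
and verifying that it satisfies the two clauses of Definition~\ref{def:typing_interpretation_stm}. Elements already in $\STYPING$ trivially satisfy the clauses, since $\STYPING$ is the union of all typing interpretations (hence itself one, by a standard greatest-fixed-point argument). So the work concentrates on the fresh initial triplet, which is not in $\TSTACKS$ and must therefore satisfy clause (2).

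The first step is to produce at least one outgoing type-lifted transition. Choose any $\ENV{SV}$ built from $\Sigma$, $\Gamma$, $\Delta$ according to Definition~\ref{def:construction_envsv}; by Lemma~\ref{lemma:construction_envsv}, it is well-typed and consistent. The expression-level hypotheses on $e_1$, $e_2$, $\VEC{e'}$ together with the expression analogues of our compatibility results guarantee that $e_1$ evaluates to some $Y$ of an interface subtyping $I^Y$, $e_2$ to some $z \in \INTEGERS$, and $\VEC{e'}$ to some $\VEC{v}$. Choosing the new execution level to be exactly $s$, the side conditions $s_1 \ordleq s$, $s \ordleq s$, and (if $z \neq 0$) $s \ordleq s_3, s_4$ all hold, so either \nameref{stm_rt_call} (if $f \in \DOM{\ENV{T}(Y)}$) or \nameref{stm_rt_fcall} fires, yielding a reduct of the shape $(S'; (\ENV{V}, \Delta); s; \bot, \Delta', s)$.

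The main work is the universal part of clause (2): every such reduct must lie in $\CANDSTYPI$. I will build up its membership in $\STYPING$ from the bottom of the stack upward, using the compatibility lemmas already proved:
(i) $\Sigma; \Gamma; \Delta; \ENV{T} \vDash \bot : \TCMD{s}$ by Lemma~\ref{lemma:compatibility_stacks_bottom};
(ii) $\Sigma; \Gamma; \Delta; \ENV{T} \vDash s; \bot : \TCMD{s}$ by Lemma~\ref{lemma:compatibility_stacks_restore} (with $s \ordgeq s$);
(iii) $\Sigma; \Gamma; \Delta'; \ENV{T} \vDash (\ENV{V}, \Delta); s; \bot : \TCMD{s}$ by Lemma~\ref{lemma:compatibility_stacks_return}, since the pre-call $\ENV{V}$ is well-typed with respect to $\Delta$ by the implicit well-typedness of the source configuration;
(iv) $\Sigma; \Gamma; \Delta'; \ENV{T} \vDash S' : \TCMD{s}$ by unfolding the hypothesis $\Sigma; \Gamma; \ENV{T} \vDash \ENV{T}$, which yields a typing interpretation for every method body in $\ENV{T}$ at the level dictated by its signature in $\Gamma$;
(v) Lemma~\ref{lemma:compatibility_stacks_statement} glues (iv) and (iii) into $\Sigma; \Gamma; \Delta'; \ENV{T} \vDash S'; (\ENV{V}, \Delta); s; \bot : \TCMD{s}$.
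Hence the reduct belongs to $\STYPING$, closing the coinductive step.

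The chief obstacle is step (iv) in the presence of subtyping: the actual value $Y$ is only guaranteed to have some interface $I' \subs I^Y$, and the signature of $f$ in $I'$ may differ from the one in $\Gamma(I^Y)$, though the two must be in the subtyping relation (contravariant in parameter types, covariant in the command level, per rule \nameref{typerules_sub_proc}). Showing that the $\Delta'$ actually assembled by \nameref{stm_rt_call} is a context under which $S'$ is semantically safe therefore requires relating $\Sigma; \Gamma; \ENV{T} \vDash \ENV{T}$ to subtyping and invoking Corollary~\ref{lemma:compatibility_stm_coercion} to descend to level $s$. The fallback branch is handled analogously, additionally leveraging the standing assumption that \code{args} is only forwarded and never inspected, which justifies the tuple typing $\TVAR{\VEC{B}, \VEC{s}'}$ assigned to it in $\Delta'$ and the dummy typing of \code{id}.
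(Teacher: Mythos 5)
Your proof is correct in its essentials but takes a genuinely more modular route than the paper. The paper builds the candidate interpretation by hand: it takes the typing interpretations $\STYPI^i$ for the finitely many method bodies implementing $f$ (one per subtype of $I^Y$ in $\ENV{T}$), suffixes each of their stacks with $(\ENV{V},\Delta);s;\bot$ subject to the bookkeeping condition $\FINISH{q_i';\bot,\Delta_i',s_i'}=(\Delta_i,s_i)$, adds the intermediate triplets $(s;\bot,\Delta,s_i)$ and $(\bot,\Delta,s)$, and then re-verifies the coinductive conditions case by case (essentially replaying the proofs of the stack compatibility lemmas inline, with Theorem~\ref{theorem:stacks_term_level} doing the work your $\FINISH$-free argument avoids). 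You instead take $\{\text{initial triplet}\}\cup\STYPING$, observe that $\STYPING$ is itself a typing interpretation so only the fresh triplet needs checking, and discharge its reducts by composing Lemmas~\ref{lemma:compatibility_stacks_bottom}, \ref{lemma:compatibility_stacks_restore}, \ref{lemma:compatibility_stacks_return} and \ref{lemma:compatibility_stacks_statement} as black boxes. This is cleaner and reuses the infrastructure the paper has already established; the price is that you must be careful that the universal clause of Definition~\ref{def:typing_interpretation_stm} ranges over \emph{all} derivable instantiations of \nameref{stm_rt_call} --- because of the coercion property the rule can be concluded with the lookup interface being $I^Y$ (reduct at level $s$, with $\Delta'$ from $I^Y$'s signature) or any subtype $I'$ actually implemented by $Y$ (reduct at level $s_i\ordgeq s$, with $\Delta_i$ from $I'$'s signature), and each such reduct must land in $\STYPING$. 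You flag this in your final paragraph but only sketch it; note also that the $\Delta'$ assembled by \nameref{stm_rt_call} types \code{sender} and \code{value} differently from the $\Delta$ used in rule \nameref{st_env_envm2}, so bridging $\Sigma;\Gamma;\ENV{T}\vDash\ENV{T}$ to the actual reduct needs an argument in the spirit of Lemma~\ref{lemma:dcall_stm_type} (the paper itself only footnotes this).

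One concrete point to correct: you allow for the possibility that \nameref{stm_rt_fcall} fires and claim the fallback branch is ``handled analogously''. It cannot fire here, and it is important that it cannot: the hypothesis $\Gamma(I^Y)(f)=\TSPROC{\ldots}{s}$ together with the consistency of $\Sigma$ (which forces $\DOM{\Gamma(I^Y)}\subseteq\DOM{\Gamma(I')}$ for every subtype $I'$) and the consistency of $\ENV{T}$ (which forces every declared method to be implemented) give $f\in\DOM{\ENV{T}(Y)}$, violating the side condition of \nameref{stm_rt_fcall}. Had the fallback branch been reachable, your argument would break, because $\Sigma;\Gamma;\ENV{T}\vDash\ENV{T}$ deliberately does \emph{not} assert type safety of fallback bodies --- that is precisely the gap that rule \nameref{stm_st_fcall} and the explicitly supplied typing interpretations are designed to fill. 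So the branch should be excluded, not ``handled analogously''.
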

\begin{proof}
By the definition of $\Sigma; \Gamma; \Delta; \ENV{T} \vDash \CALL{e_1}{f}{e_1', \ldots, e_h'}{e_2} : \TCMD{s}$, we must exhibit a typing interpretation $\STYPI$ which is such that it contains the triplet $(\CALL{e_1}{f}{e_1', \ldots, e_h'}{e_2}; \bot, \Delta, s)$.

From $\Sigma; \Gamma; \Delta \vDash e_1 : I^Y_s$ and Theorem~\ref{theorem:expressions_runtime_safety} we know that $e_1$ evaluates to an address $Y$ of type $I'$ such that $\Sigma \vdash I' \SUBS I^Y$ and of a level $s'$ such that $s' \ordleq s$.
Furthermore, from $\Gamma(I^Y)(f) = \TSPROC{(B_1', s_1'), \ldots, (B_h', s_h')}{s}$, we know that every such interface $I'$ declares a method $f$ with this signature, or a subtype thereof, since every subtype of $I^Y$ must declare at least the same methods.

Finally, from $\Sigma; \Gamma; \ENV{T} \vDash \ENV{T}$ (Figure~\ref{fig:semantic_type_rules_envtm}), we know that, for each such declaration of the method $f$ in $\ENV{T}$, it holds that $\Sigma; \Gamma; \Delta_i; \ENV{T}\vDash S_i : \TCMD{s_i}$, where $S_i$ is the method body, and $\Delta_i$ and $s_i$ are obtained from the method signature in $\Gamma$.
Since both $\ENV{T}$ and $\Gamma$ are finite, the number of interfaces in $\Gamma$ which are subtypes of $I^Y$ is also finite, and the number of implementations of $I^Y$ and any of its subtypes in $\ENV{T}$ is also finite.
We use $i$ to index this collection in the following.
Thus, we know there exists a finite collection of typing interpretations $\STYPI^i$ containing the triplets $(S_i, \Delta_i, s_i)$ for each declaration of $f$.\footnote{The $\Delta_i$ and $s_i$ may differ because of subtyping of the method signature, and because of different types for the magic variable \code{this}. Each of the types in any $\Delta_i$ may be supertypes of the types from the signature of $f$ on $I^Y$, and $s_i \ordgeq s$, by contravariance of the type constructor.}
We then construct a candidate typing interpretation as follows:
\begin{align*}
    \CANDSTYPI & \DEFSYM \SET{ (\CALL{e_1}{f}{e_1', \ldots, e_h'}{e_2};\bot, \Delta, s) }                \\ 
               & \UNION \bigcup_i \SET{ (q_i';(\ENV{V}, \Delta);s;\bot, \Delta_i', s_i') ~\bigg|                  %
               \begin{array}{r @{~} l}                                                                   %
                        (q_i';\bot, \Delta_i', s_i') \in \STYPI^i
                 \land  \FINISH{q_i';\bot, \Delta_i', s_i'} = (\Delta_i, s_i)           
               \end{array}                                                                               %
               }                                                                                       \\
               & \UNION \bigcup_i \SET{ (s;\bot, \Delta, s_i) } \\ 
               & \UNION \SET{ (\bot, \Delta, s) }
\end{align*}

\noindent where $\ENV{V}$ is constructed from $\Sigma$, $\Gamma$ and $\Delta$ according to the method in Definition~\ref{def:construction_envsv}.

We must now show that $\CANDSTYPI$ indeed is a typing interpretation.
First, we consider the three singleton sets we have added:
\begin{itemize}
  \item The triplet $(\bot, \Delta, s)$ satisfies Case~\ref{case:typi1} of Definition~\ref{def:typing_interpretation_stm}.

  \item Consider the triplets $(s;\bot, \Delta, s_i)$ for each $i$.
    For each $s_i$, we know that $s_i \ordgeq s$ from the definition of subtyping of method declarations in the consistency rules for $\Sigma$ (Figure~\ref{fig:consistency_rules_sigma}).
    Thus, this satisfies the side condition of \nameref{stm_rt_restore}, so the transition 
    \begin{equation*}
             \Sigma; \Gamma; \Delta \ssemDash[{s_i}] \CONF{s; \bot, \ENV{TSV}} 
      \trans \Sigma; \Gamma; \Delta \ssemDash \CONF{\bot, \ENV{TSV}}
    \end{equation*}

    \noindent can be concluded by that rule for any appropriately shaped $\ENV{SV}$, and, by construction, $(\bot, \Delta, s) \in \CANDSTYPI$.

  \item Consider the triplet $(\CALL{e_1}{f}{e_1', \ldots, e_h'}{e_2};\bot, \Delta, s)$.
    Pick any $\ENV{SV}$ built from $\Sigma$, $\Gamma$ and $\Delta$ according to Definition~\ref{def:construction_envsv}.
    From 
    \begin{equation*}
    \Sigma; \Gamma; \Delta \vDash e_2 : \TINT_s \qquad  \Sigma; \Gamma; \Delta \vDash e_1' : (B_1', s_1') \qquad \ldots \qquad \Sigma; \Gamma; \Delta \vDash e_h' : (B_h', s_h')
    \end{equation*} 

    \noindent by Definition~\ref{def:typing_interpretation_expr} and Theorem~\ref{theorem:expressions_runtime_safety}, we know that each of the expressions evaluates to a value of the appropriate type. 
    From this and the remaining premises of the lemma, we know that the transition 
    \begin{align*}
\!\!\!\!\!\Sigma; \Gamma; \Delta \ssemDash \CONF{ \CALL{e_1}{f}{\VEC{e}}{e_2} ; \bot, \ENV{TSV}}    
      \trans \Sigma; \Gamma; \Delta_i \ssemDash[{s_i}] \CONF{S_i ; (\ENV{V}, \Delta) ; s ; \bot, \ENV{T}, \ENV{SV}'}  
    \end{align*}

    \noindent can be concluded by \nameref{stm_rt_call} and, as we know that $f \in \DOM{\ENV{T}(Y)}$, the rule \nameref{stm_rt_fcall} cannot be used. 
    Thus, no other transitions can be concluded and $(S_i; (\ENV{V}, \Delta); s; \bot, \Delta_i, s_i) \in \CANDSTYPI$ by construction.
\end{itemize}

Next we consider an arbitrary triplet $(q_i';(\ENV{V}, \Delta);s;\bot, \Delta_i', s_i')$ from the sets
\begin{equation*}
  \left\{ (q_i';(\ENV{V}, \Delta);s;\bot, \Delta_i', s_i') ~\bigg|                 %
    \begin{array}{r @{~} l}                                                        %
            (q_i';\bot, \Delta_i', s_i') \in \STYPI^i             
     \land  \FINISH{q_i';\bot, \Delta_i', s_i'} = (\Delta_i, s_i) 
    \end{array}                                                                    %
  \right\}                                                                                       
\end{equation*}

\noindent As $q_i'$ can be empty, we have three cases to consider:
\begin{itemize}
  \item If $q_i'$ is empty, then it must have come from the triplet $(\bot, \Delta_i, s_i) \in \STYPI^i$, since $\FINISH{\bot, \Delta_i, s_i} = (\Delta_i, s_i)$.
    Thus by construction, the triplet in $\CANDSTYPI$ is of the form $((\ENV{V},\Delta);s;\bot, \Delta_i, s_i)$, and we know that $\Sigma; \Gamma; \Delta \vdash \ENV{V}$.
    The transition 
    \begin{equation*}
            \Sigma; \Gamma; \Delta_i \ssemDash[{s_i}] \CONF{(\ENV{V}, \Delta); s; \bot, \ENV{TS},\ENV{V}'} 
      \trans \Sigma; \Gamma; \Delta \ssemDash[{s_i}] \CONF{s;\bot, \ENV{TSV}} 
    \end{equation*}

    \noindent can therefore be concluded by rule \nameref{stm_rt_return}, for any appropriately shaped $\ENV{S},\ENV{V}'$, and $(s;\bot, \Delta, s_i) \in \CANDSTYPI$ by construction.

  \item If $q_i'$ is of the form $\code{throw}; q_i''$, then this satisfies Case~\ref{case:typi1} of Definition~\ref{def:typing_interpretation_stm}.

  \item If $q_i'$ is non-empty and not of the form $\code{throw}; q_i''$, then  $(q_i';\bot, \Delta_i', s_i') \in \STYPI^i$, $\FINISH{q_i';\bot, \Delta_i', s_i'} = (\Delta_i, s_i)$, and there exists at least one transition 
    \begin{equation*}
      \Sigma; \Gamma; \Delta_i' \ssemDash[{s_i'}] \CONF{q_i';\bot, \ENV{TSV}} \trans \Sigma; \Gamma; \Delta_i'' \ssemDash[{s_i''}] \CONF{q_i'';\bot, \ENV{T}; \ENV{SV}'}
    \end{equation*}

    \noindent for all appropriately shaped $\ENV{SV}$ built from $\Sigma; \Gamma; \Delta_i'$ according to Definition~\ref{def:construction_envsv}, with $(q_i'';\bot, \Delta_i'', s_i'') \in \STYPI^i$.
    There are now two possibilities, depending on $q_i''$:
    \begin{itemize}
      \item If $q_i''$ is empty, the reduct is of the form $\Sigma; \Gamma; \Delta_i \ssemDash[{s_i}] \CONF{\bot, \ENV{T}; \ENV{SV}'}$.
        Then, the transition
        \begin{align*}
\!\!\!\!\!\Sigma; \Gamma; \Delta_i' \ssemDash[{s_i'}] \CONF{q_i';(\ENV{V},\Delta);s;\bot, \ENV{TSV}} 
          \trans  \Sigma; \Gamma; \Delta_i \ssemDash[{s_i}] \CONF{(\ENV{V},\Delta);s;\bot, \ENV{T}; \ENV{SV}'}        
        \end{align*}

        \noindent can likewise be concluded, and $((\ENV{V},\Delta);s;\bot, \Delta_i, s_i) \in \CANDSTYPI$ by construction.

      \item Otherwise, if $q_i''$ is non-empty, then the reduct is instead of the form $\Sigma; \Gamma; \Delta_i'' \ssemDash[{s_i''}] \CONF{q_i''; \bot, \ENV{T}; \ENV{SV}'}$ and $(q_i'';\bot, \Delta_i'', s_i'') \in \STYPI^i$.
        Then, the transition 
        \begin{align*}
\!\!\!\!\! \Sigma; \Gamma; \Delta'  \ssemDash[{s_i'}]  \CONF{q_i';(\ENV{V},\Delta);s;\bot, \ENV{TSV}} 
          \trans  \Sigma; \Gamma; \Delta'' \ssemDash[{s_i''}] \CONF{q_i'';(\ENV{V},\Delta);s;\bot, \ENV{T}; \ENV{SV}'}
        \end{align*}

        \noindent can likewise be concluded.
        Finally, by Theorem~\ref{theorem:stacks_term_level}, we have that $\FINISH{q_i'';\bot, \Delta_i'', s_i''} = (\Delta_i, s_i)$, so by construction we also have that $(q_i'';(\ENV{V},\Delta);s;\bot, \Delta_i'', s_i'') \in \CANDSTYPI$.
\qedhere
    \end{itemize}
\end{itemize}
\end{proof}

Soundness of the rule for the fallback function is obtained through a lemma, whose proof consists in the same steps as those in the proof of Lemma~\ref{lemma:compatibility_stm_call}; hence, we shall omit this.

Before we can show the lemma for delegate calls (Lemma~\ref{lemma:compatibility_stm_dcall}), we need a few auxiliary results, which ensure that changing the types of the two magic variables \code{this} and \code{value} does not permit any unwanted information flows.
In the following three lemmas below (Lemmas~\ref{lemma:dcall_exp_trans}--\ref{lemma:dcall_stm_type}) we use the following notations.
Let
\begin{itemize}
  \item $\Delta_1 \DEFSYM \Delta, \code{this}:\TVAR{I_1, s_1}, \code{value}:\TVAR{\TINT, s_1'}$,
  \item $\Delta_2 \DEFSYM \Delta, \code{this}:\TVAR{I_2, s_2}, \code{value}:\TVAR{\TINT, s_2'}$,
\end{itemize}

\noindent where $\Sigma \vdash (I_2, s_2) \SUBS (I_1, s_1)$ and $s_2' \ordleq s_1'$, and 
\begin{equation*}
  \forall q \in (\DOM{\Gamma(I^Y)} \INTERSECT (\FNAMES \UNION \{\code{balance}\})) . \Gamma(I_1)(q) = \Gamma(I_2)(q) ,
\end{equation*}

\noindent Furthermore, let $\ENV{SV}^1$ be built from $\Gamma, \Delta_1$ and $\ENV{SV}^2$ be built from $\Gamma, \Delta_2$ according to Definition~\ref{def:construction_envsv}.

\begin{lemma}\label{lemma:dcall_exp_trans}
If   $\Sigma; \Gamma; \Delta_1 \esemDash \CONF{e, \ENV{SV}^1} \trans v_1$, 
then $\Sigma; \Gamma; \Delta_2 \esemDash \CONF{e, \ENV{SV}^2} \trans v_2$. 
\end{lemma}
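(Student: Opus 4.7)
My plan is to prove Lemma~\ref{lemma:dcall_exp_trans} by induction on the structure of the expression $e$ (or equivalently, on the derivation of the typed expression transition from Figure~\ref{fig:typed_semantics_expressions}). The statement only requires the \emph{existence} of some value $v_2$, so the two transitions may produce different concrete values; what matters is that all the type-related side conditions in the typed expression rules continue to be satisfiable under $\Delta_2$ and the abstractly-built $\ENV{SV}^2$.

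In the base cases, \textbf{r-val} is immediate since no environment is consulted, and \textbf{r-var} splits by whether $x$ is one of the magic variables affected by the change. For $x \notin \{\code{this},\code{value}\}$ we have $\Delta_1(x) = \Delta_2(x)$, so Definition~\ref{def:construction_envsv} guarantees that $\ENV{V}^2(x)$ is some value meeting the required type, and the side conditions of \nameref{exp_rt_var} transfer verbatim. For $x = \code{value}$ or $x = \code{this}$ I would use the assumptions $s_2' \ordleq s_1'$ and $\Sigma \vdash (I_2,s_2) \SUBS (I_1,s_1)$: by construction $\ENV{V}^2(x)$ has the more refined type, and a short subtyping/coercion argument shows that all the chains $s'' \ordleq s_2 \ordleq s$ and $\Sigma \vdash B'' \SUBS I_2 \SUBS \ldots$ required by \nameref{exp_rt_var} are satisfied.

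For the inductive case \textbf{r-op}, I would apply the induction hypothesis to each argument expression to obtain a transition for each $e_i$ under $\ENV{SV}^2$ (producing some $v_i^2$), and then use the safety requirement for operations (Definition~\ref{def:safety_operations}) to get $\op(\vec{v}^2)\trans_\op v_2$ with $\TYPEOF{v_2}=(B,\SBOT)$, which suffices. The real work is in \textbf{r-field}, where $e=e'.p$: the induction hypothesis gives a transition of $e'$ under $\ENV{SV}^2$ to some address $X'$ at type $I_s$, and I would use Theorem~\ref{theorem:expressions_runtime_safety} and Lemma~\ref{lemma:construction_envsv} to show that $\ENV{S}^2(X')(p)$ is defined and yields a value whose type matches the field's declared type in $\Gamma(I)$.

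The main obstacle is precisely the field case: looking up \code{this.p} directly touches the magic variable whose type has changed, so the field's \emph{declared} type under $\Delta_1$ is $\Gamma(I_1)(p)$, while under $\Delta_2$ it is $\Gamma(I_2)(p)$. These need not coincide under bare subtyping---$\Gamma(I_2)(p)$ could be a strict subtype with a higher security level---which would break the side condition $s' \ordleq s$ of \nameref{exp_rt_field}. This is exactly where the hypothesis $\forall q \in \DOM{\Gamma(I^Y)} \cap (\FNAMES \cup \{\code{balance}\}) . \Gamma(I_1)(q) = \Gamma(I_2)(q)$ is essential: restricted to fields visible through the delegate-call interface $I^Y$, the types are \emph{equal}, not merely related by subtyping, so the side conditions of \nameref{exp_rt_field} carry over unchanged. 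I would treat the general field case (when $e'$ is not \code{this} or does not factor through $I^Y$) by observing that the interface type of $X'$ is taken from $\Gamma$, which is the same in both environments, so the declared field type is literally identical.
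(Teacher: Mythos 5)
Your proof follows essentially the same route as the paper's: induction on the derivation, with \nameref{exp_rt_val} and \nameref{exp_rt_var} immediate except for $\code{this}$ and $\code{value}$ (handled by the subtyping assumptions), \nameref{exp_rt_op} by the induction hypothesis on the operands, and \nameref{exp_rt_field} by the induction hypothesis on $e'$ followed by re-applying the rule. One small correction: in rule \nameref{exp_rt_field} the field type is looked up as $\Gamma(I)(p)$, where $I$ is the \emph{runtime} type annotating the premise for $e'$ and is therefore unchanged when passing from $\Delta_1$ to $\Delta_2$, so the hypothesis that $\Gamma(I_1)(q)=\Gamma(I_2)(q)$ on common fields is not actually needed for this lemma (it is needed in the companion Lemma~\ref{lemma:dcall_stm_trans}, in the \nameref{stm_rt_assf} case, where the lookup really does go through $\Delta(\code{this})$), though invoking it here does no harm.
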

\begin{proof}
By induction on the derivation of the transition.
For the base step, 
the case for \nameref{exp_rt_val} is immediate and
the case for \nameref{exp_rt_var} is immediate as well, except in the two special cases where $x = \code{this}$ or $x = \code{value}$;
then, subtyping is used to coerce the type up to match $B_s$.
For the inductive step, we reason on the last rule used.
When \nameref{exp_rt_op} is the last rule, we conclude by $n$ applications of the induction hypothesis (one for each of the operands $e_1, \ldots, e_n$).
When \nameref{exp_rt_field} is the last rule, then $e = e'.p$; we apply the induction hypothesis to $e'$ and conclude by \nameref{exp_rt_field}.
\end{proof}

\begin{lemma}\label{lemma:dcall_stm_trans}
If   $\Sigma; \Gamma; \Delta_1 \ssemDash \CONF{Q, \ENV{T};\ENV{SV}^1} \trans \Sigma; \Gamma; \Delta_1' \ssemDash[{s'}] \CONF{Q', \ENV{T};\ENV{SV}^{1'}}$,
then $\Sigma; \Gamma; \Delta_2 \ssemDash \CONF{Q, \ENV{T};\ENV{SV}^2} \trans \Sigma; \Gamma; \Delta_2' \ssemDash[{s'}] \CONF{Q', \ENV{T};\ENV{SV}^{2'}}$
\end{lemma}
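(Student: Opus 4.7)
The plan is to proceed by case analysis on the last rule used to conclude the typed transition from $\Sigma; \Gamma; \Delta_1 \ssemDash \CONF{Q, \ENV{T};\ENV{SV}^1}$. In every case, expression evaluations appearing in premises are handled uniformly by Lemma~\ref{lemma:dcall_exp_trans} (applied at the runtime type appearing in the original premise), so the heart of the argument is to check that the numeric side conditions about security levels, the lookups in $\Delta$, and the lookups in $\Gamma$ remain valid after we replace $\Delta_1$ by $\Delta_2$ and $\ENV{SV}^1$ by $\ENV{SV}^2$. For rules that neither read $\code{this}$/$\code{value}$ in their side conditions nor modify $\Delta$ in a way that depends on these entries — namely \nameref{stm_rt_skip}, \nameref{stm_rt_whilefalse}, \nameref{stm_rt_delv}, \nameref{stm_rt_return}, \nameref{stm_rt_restore}, and the control-flow rules \nameref{stm_rt_if}, \nameref{stm_rt_whiletrue} — the conclusion is immediate: the premises are either purely expression judgments (handled by Lemma~\ref{lemma:dcall_exp_trans}) or side conditions on $s$ that are unchanged, and the resulting $\Delta_1'$ differs from the prospective $\Delta_2'$ only in the entries for $\code{this}$ and $\code{value}$, so the invariant $\Delta_i = \Delta, \code{this}:\TVAR{I_i, s_i}, \code{value}:\TVAR{\TINT, s_i'}$ is preserved.

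For \nameref{stm_rt_decv} and \nameref{stm_rt_assv}, the lookup $\Delta(x)$ is either a fresh variable or a variable distinct from the magic ones (since variable names are disjoint from $\code{this}$ and $\code{value}$), so the side conditions transfer directly; in the reduct, the declared variable extends the common $\Delta$ portion, preserving the invariant. The interesting case is \nameref{stm_rt_assf}: the side conditions require $\Delta(\code{this}) = \TVAR{I, s_1^\star}$ and use $\Gamma(I)(p) = \TVAR{B, s'}$ together with $s_1^\star \ordleq s'$ and $s \ordleq s'$. Under $\Delta_2$ we now have $\Delta_2(\code{this}) = \TVAR{I_2, s_2}$ with $\Sigma \vdash I_2 \SUBS I_1$ and $s_2 \ordleq s_1$, hence by transitivity $s_2 \ordleq s'$. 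The field $p$ is written via $\code{this}$, so $p$ belongs to the common fields of $I_1$ and $I_2$; the hypothesis $\Gamma(I_1)(q) = \Gamma(I_2)(q)$ on common field names then guarantees that $\Gamma(I_2)(p) = \TVAR{B, s'}$, so the same rule applies.

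The call rules \nameref{stm_rt_call}, \nameref{stm_rt_dcall} and \nameref{stm_rt_fcall} are similar and are the main obstacle because they build a fresh $\Delta'$ from $\Gamma$ and from $\Delta(\code{this})$. Three observations suffice. First, the side conditions involving $\Delta(\code{this}) = \TVAR{I^X, s_1^\star}$ and $s_1^\star \ordleq s'$ continue to hold after switching to $I_2, s_2$, by the subtyping assumption and transitivity. Second, the balance types $\Gamma(I^X)(\code{balance})$ and $\Gamma(I^Y)(\code{balance})$, used in the currency-transfer side condition, are unaffected when $I^X$ is in $\{I_1, I_2\}$ because $\code{balance}$ is a common field; likewise for \nameref{stm_rt_dcall}, the subtyping premise $\Sigma \vdash I^X \SUBS I^Y$ is preserved because $\Sigma \vdash I_2 \SUBS I_1$ and subtyping is transitive, and the common-field condition is inherited from the analogous assumption in the lemma. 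Third, the new $\Delta'$ produced by each rule for the callee's frame binds $\code{this}$ and $\code{sender}$ from types read out of $\Gamma$ (not from $\Delta$), so $\Delta_1'$ and the $\Delta_2'$ obtained in the parallel derivation agree on everything except the $\code{this}$/$\code{value}$ entries inherited for the residual of the current frame — and those inherited entries are exactly what is needed to maintain the invariant shape required by the statement. Putting these observations together with Lemma~\ref{lemma:dcall_exp_trans} applied to each expression premise yields the matching derivation of $\Sigma; \Gamma; \Delta_2 \ssemDash \CONF{Q, \ENV{T};\ENV{SV}^2} \trans \Sigma; \Gamma; \Delta_2' \ssemDash[{s'}] \CONF{Q', \ENV{T};\ENV{SV}^{2'}}$ in every case.
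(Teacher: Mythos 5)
Your proof is correct and follows essentially the same route as the paper's: a case analysis on the last rule, with Lemma~\ref{lemma:dcall_exp_trans} discharging the expression premises, and transitivity of $\ordleq$, transitivity of subtyping, and the common-field agreement between $I_1$ and $I_2$ handling the side conditions of \nameref{stm_rt_assf} and the call rules. The only difference is presentational: the paper singles out \nameref{stm_rt_assf} as the one ``special'' case and treats the call rules as routine, whereas you spell out the call-rule side conditions explicitly, which is a harmless (and arguably helpful) elaboration.
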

\begin{proof}
By a case analysis of the rule used to conclude the transition.
\begin{itemize}
  \item For \nameref{stm_rt_skip}, \nameref{stm_rt_whilefalse}, \nameref{stm_rt_delv}, \nameref{stm_rt_return}, and \nameref{stm_rt_restore} the result is immediate.
  \item For \nameref{stm_rt_if}, \nameref{stm_rt_whiletrue}, \nameref{stm_rt_decv}, \nameref{stm_rt_assv}, \nameref{stm_rt_call}, \nameref{stm_rt_dcall} and \nameref{stm_rt_fcall}, we apply Lemma~\ref{lemma:dcall_exp_trans} for the expression(s) occurring in the premises and then conclude by the same rule.
  \item The case for \nameref{stm_rt_assf} is special.
    Here, $S = \code{this.$p$ := e}$, and from the premise we know that 
    \begin{align*}  
      \Delta_1(\code{this}) & = \TVAR{I_1, s_1} \\ 
      \Gamma(I_1)(p)        & = \TVAR{B, s''}   \\ 
      s_1, s                & \ordleq s''
    \end{align*}

    By assumption, $\Delta_2(\code{this}) = \TVAR{I_2, s_2}$ and $\Sigma \vdash (I_2, s_2) \SUBS (I_1, s_1)$, which implies $s_2 \ordleq s_1$, and therefore also that $s_2 \ordleq s''$ by transitivity of $\ordleq$.
    Furthermore, by the requirement 
    \begin{equation*}
      \forall q \in (\DOM{\Gamma(I_1)} \INTERSECT (\FNAMES \UNION \{\code{balance}\})) . \Gamma(I_1)(q) = \Gamma(I_2)(q) ,
    \end{equation*}

    \noindent we know that also $\Gamma(I_2)(p) = \TVAR{B, s''}$.
    Finally, we apply Lemma~\ref{lemma:dcall_exp_trans} for the evaluation of $e$ and conclude by \nameref{stm_rt_assf}.
    \vspace*{-.5cm}
\end{itemize}
\end{proof}

\begin{lemma}\label{lemma:dcall_stm_type}
If   $\Sigma; \Gamma; \Delta_1, \ENV{T} \vDash S : \TCMD s$, 
then $\Sigma; \Gamma; \Delta_2, \ENV{T} \vDash S : \TCMD s$.
\end{lemma}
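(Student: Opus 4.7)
The plan is to establish the lemma by coinduction, constructing a candidate typing interpretation $\CANDSTYPI$ containing $(S;\bot,\Delta_2,s)$ from the typing interpretation $\STYPI$ that witnesses the hypothesis $\Sigma;\Gamma;\Delta_1;\ENV{T} \vDash S : \TCMD{s}$. The intuition is that Lemma~\ref{lemma:dcall_stm_trans} already lifts every single $\Delta_1$-step to a $\Delta_2$-step with structurally the same reduct, differing only in the types of the magic variables \code{this} and \code{value}; iterating this gives a step-by-step simulation of $\STYPI$ inside $\CANDSTYPI$.

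Concretely, I would define a substitution $\sigma$ on type environments that rewrites the binding $\code{this}:\TVAR{I_1,s_1}$ to $\code{this}:\TVAR{I_2,s_2}$ and $\code{value}:\TVAR{\TINT,s_1'}$ to $\code{value}:\TVAR{\TINT,s_2'}$, leaving all other bindings fixed, and extend $\sigma$ to stack type triplets by also acting on the saved $\Delta$ inside each return symbol $(\ENV{V},\Delta')$ on the stack whose stored binding for \code{this} still matches the outer type (so that, once control returns from an inner call, the restored environment is correctly substituted). The candidate is
\[
  \CANDSTYPI \DEFSYM \{\sigma P \mid P \in \STYPI\}.
\]
Since $\sigma(S;\bot,\Delta_1,s) = (S;\bot,\Delta_2,s)$, the target triplet is in $\CANDSTYPI$ by construction. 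To verify Definition~\ref{def:typing_interpretation_stm}, pick any $\sigma P = \sigma(Q',\Delta',s') \in \CANDSTYPI$: if $Q' \in \TSTACKS$ then so is $\sigma Q'$ and Case~\ref{case:typi1} holds; otherwise $P$ satisfies Case~\ref{case:typi2} in $\STYPI$, and I would use Lemma~\ref{lemma:dcall_stm_trans} to lift the witnessing transition of $P$ to a transition out of $\sigma P$ whose reduct is of the form $\sigma P''$ for some $P'' \in \STYPI$ and hence lies in $\CANDSTYPI$. The universal clause, requiring every reduct of $\sigma P$ to be in $\CANDSTYPI$, would be handled by a dual case analysis on the typed operational rule used, invoking Lemma~\ref{lemma:dcall_exp_trans} to ensure the relevant expression evaluations are mirrored in $\STYPI$.

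The main obstacle is that the subtyping and level conditions in rules such as \nameref{stm_rt_assf}, \nameref{stm_rt_if}, \nameref{stm_rt_call} and \nameref{stm_rt_dcall} are \emph{easier} to satisfy under $\Delta_2$ than under $\Delta_1$; hence $\sigma P$ might enjoy transitions at strictly smaller intermediate security levels that have no direct counterpart in $\STYPI$. Showing that such extra transitions still land in $\CANDSTYPI$ requires either pre-closing $\STYPI$ under Lemma~\ref{lemma:compatibility_stacks_coercion} (downward coercion on security levels) before applying $\sigma$, or enriching $\CANDSTYPI$ with the coerced triplets directly and checking that this enlargement is itself closed under transitions. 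A secondary subtlety is the bookkeeping at method-call boundaries: $\sigma$ must not act on the freshly constructed $\Delta'$ of a nested callee whose \code{this} happens coincidentally to be typed $\TVAR{I_1,s_1}$, and must be applied only to the saved environments destined to restore the outermost frame, so that the simulation is preserved across every \nameref{stm_rt_call}--\nameref{stm_rt_return} round trip.
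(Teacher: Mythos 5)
Your proposal matches the paper's own proof in all essentials: the paper likewise builds the candidate interpretation by replacing $\Delta_1$ with $\Delta_2$ both in the current type environment of each triplet and inside the saved return symbols $(\ENV{V}', (\Delta_1, \hat\Delta))$ occurring on the stack (precisely the two-part bookkeeping you describe), and then lifts each transition via Lemma~\ref{lemma:dcall_stm_trans}. The additional care you devote to the universal clause of Definition~\ref{def:typing_interpretation_stm} --- the possibility that $\Delta_2$ admits extra reducts at lower intermediate security levels, to be absorbed by closing the candidate under coercion as in Lemma~\ref{lemma:compatibility_stacks_coercion} --- addresses a point the paper's proof passes over silently, so your version is, if anything, the more scrupulous of the two.
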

\begin{proof}
We know there exists a typing interpretation $\STYPI$ containing $(S;\bot, \Delta_1, s)$.
We then construct a new candidate typing interpretation $\CANDSTYPI$ containing $(S;\bot, \Delta_2, s)$ as follows:
\begin{align*}
  \CANDSTYPI & \DEFSYM \SET{ (Q, (\Delta_2, \hat\Delta), s') \;|\; (Q, (\Delta_1, \hat\Delta), s') \in \STYPI  } \\ 
             & \UNION  \SET{ (q_1; (\ENV{V}', (\Delta_2, \hat\Delta)); q_2; \bot, \Delta', s') \;|\; (q_1; (\ENV{V}', (\Delta_1, \hat\Delta)); q_2; \bot, \Delta', s') \in \STYPI }
\end{align*}

\noindent where $\hat\Delta$ represents a (possibly empty) sequence of further type assumptions extending $\Delta_1$ resp.\@ $\Delta_2$.
Clearly, $(S;\bot, \Delta_2, s) \in \CANDSTYPI$, with $Q = S;\bot$ and $\hat\Delta = \EMPTYSET$.
All that remains is now to show that $\CANDSTYPI$ in fact is a typing interpretation.
For this, pick an arbitrary triplet $(Q', \Delta', s') \in \CANDSTYPI$.
We distinguish 3 cases:
\begin{itemize}
  \item If $Q' \in \TSTACKS$ then this satisfies Case~\ref{case:typi1} of Definition~\ref{def:typing_interpretation_stm}.
  \item Otherwise, if $\Delta' = \Delta_2, \hat\Delta$, then we know that $(Q', (\Delta_1, \hat\Delta), s') \in \STYPI$, which implies that the transition 
    \begin{equation*}
             \Sigma; \Gamma; (\Delta_1,\hat\Delta)  \ssemDash[{s'}]  \CONF{Q',  \ENV{T},\ENV{SV}} 
      \trans \Sigma; \Gamma; (\Delta_1,\hat\Delta') \ssemDash[{s''}] \CONF{Q'', \ENV{T},\ENV{SV}^{1'}} 
    \end{equation*}

    \noindent can be concluded by some rule, for any appropriately shaped $\ENV{SV}^{1'}$.
    By Lemma~\ref{lemma:dcall_stm_trans}, a transition
    \begin{equation*}
             \Sigma; \Gamma; (\Delta_2,\hat\Delta)  \ssemDash[{s'}]  \CONF{Q',  \ENV{T},\ENV{SV}} 
      \trans \Sigma; \Gamma; (\Delta_2,\hat\Delta') \ssemDash[{s''}] \CONF{Q'', \ENV{T},\ENV{SV}^{2'}} 
    \end{equation*}

    \noindent can therefore also be concluded, and $(Q'', (\Delta_2,\hat\Delta), s'') \in \CANDSTYPI$ by construction.
    
  \item Otherwise, if $\Delta' \neq \Delta_2, \hat\Delta$, then the triplet must have come from one where $(\ENV{V}', (\Delta_1, \hat\Delta))$ resided on the stack.
    Thus, $Q' = q_1; (\ENV{V}', (\Delta_2, \hat\Delta)); q_2; \bot$, which by construction must have come from $Q = q_1; (\ENV{V}', (\Delta_1, \hat\Delta)); q_2; \bot$.
    There are now two sub-cases:
    \begin{itemize}
      \item If $q_1$ is non-empty, we know that the transition 
      \begin{equation*}
               \Sigma; \Gamma; \Delta'  \ssemDash[{s'}]  \CONF{Q,  \ENV{T},\ENV{SV}} 
        \trans \Sigma; \Gamma; \Delta'' \ssemDash[{s''}] \CONF{Q_1, \ENV{T},\ENV{SV}^{1}} 
      \end{equation*}

      \noindent exists, and we can therefore conclude a similar transition for $Q'$, since only the top-most element on the stack can affect transitions.
      Thus 
      \begin{equation*}
               \Sigma; \Gamma; \Delta'  \ssemDash[{s'}]  \CONF{Q',  \ENV{T},\ENV{SV}} 
        \trans \Sigma; \Gamma; \Delta'' \ssemDash[{s''}] \CONF{Q_2, \ENV{T},\ENV{SV}^{2}} 
      \end{equation*}

      \noindent and $(Q_2, \Delta'', s'') \in \CANDSTYPI$ by construction.

      \item Otherwise, if $q_1$ is empty, we have that $Q' = (\ENV{V}', (\Delta_2, \hat\Delta)); q_2; \bot$, which by construction must have come from $Q = (\ENV{V}', (\Delta_1, \hat\Delta)); q_2; \bot$.
        The transition for $Q$ would therefore have been concluded by \nameref{stm_rt_return}, with the reduct triplet $(q_2;\bot, (\Delta_1, \hat\Delta), s') \in \STYPI$.
        We can therefore conclude a similar transition for $Q'$ by \nameref{stm_rt_return}, and the reduct triplet $(q_2;\bot, (\Delta_2, \hat\Delta), s') \in \CANDSTYPI$ by construction.
    \end{itemize}
\vspace*{-.5cm}
\end{itemize}
\end{proof}

\begin{lemma}[Delegate call]\label{lemma:compatibility_stm_dcall}
Assume that $\Sigma; \Gamma; \ENV{T} \vDash \ENV{T}$, 
 $\Sigma; \Gamma; \Delta \vDash e : I^Y_s$, 
 $\Sigma; \Gamma; \Delta \vDash e_i' : (B_i', s_i')$ for each $i\in \{1,\ldots,h\}$, 
 $\Delta(\code{this}) = I^X_{s_1}$, 
 $\Sigma \vdash I^X \SUBS I^Y$,  
 $\forall q \in (\DOM{\Gamma(I^Y)} \INTERSECT (\FNAMES \UNION \{\code{balance}\})).\ \Gamma(I^Y)(q) = \Gamma(I^X)(q)$,
 $\Gamma(I^Y)(f) = \TSPROC{(B_1', s_1'), \ldots, (B_h', s_h')}{s}$, and
 $s_1 \ordleq s$.
Then $\Sigma; \Gamma; \Delta; \ENV{T} \vdash \DCALL{e}{f}{e_1', \ldots, e_h'} : \TCMD{s}$.
\end{lemma}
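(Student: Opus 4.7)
My plan is to prove this lemma coinductively, by exhibiting a suitable candidate typing interpretation and then verifying the two defining clauses of Definition~\ref{def:typing_interpretation_stm}. Specifically, I would take
\[
\CANDSTYPI \;\DEFSYM\; \STYPING \;\UNION\; \{(\DCALL{e}{f}{e_1', \ldots, e_h'};\bot,\,\Delta,\,s)\}
\]
and show that $\CANDSTYPI$ is a typing interpretation w.r.t.\@ $\Sigma;\Gamma;\ENV{T}$. Since $\STYPING$ is already a typing interpretation by definition, only the added triplet needs checking; once this is done, we conclude $(\DCALL{e}{f}{e_1', \ldots, e_h'};\bot,\,\Delta,\,s) \in \STYPING$.

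The added triplet cannot satisfy clause~\ref{case:typi1} of Definition~\ref{def:typing_interpretation_stm} (since $\DCALL{...};\bot \notin \TSTACKS$), so I have to establish clause~\ref{case:typi2}. First I would show that rule \nameref{stm_rt_dcall} is applicable for any $\ENV{SV}$ built according to Definition~\ref{def:construction_envsv}: the premises of the lemma on the expression typings $\Sigma;\Gamma;\Delta \vDash e : I^Y_s$ and $\Sigma;\Gamma;\Delta \vDash e_i' : (B_i', s_i')$ guarantee that $e$ and the $e_i'$ evaluate to appropriately typed values under any such $\ENV{SV}$; the side conditions $\Sigma \vdash I^X \SUBS I^Y$ and field-equality on the common members are exactly the ones required by \nameref{stm_rt_dcall}; the remaining constraints $s_1 \ordleq s$ (coinciding with $s' = s$, the method's declared level) are direct premises. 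The existence of the implementation $\ENV{T}(Y)(f) = (\VEC{x}, S)$ follows from the consistency aspect of $\Sigma;\Gamma;\ENV{T} \vDash \ENV{T}$ and the assumption $\Gamma(I^Y)(f) = \TSPROC{\BSVEC}{s}$. Hence at least one $\ttrans$-successor exists, and each has the form
\[
P' \;=\; \bigl(S\,;\,(\ENV{V}, \Delta)\,;\,s\,;\,\bot,\ \Delta',\ s\bigr),
\]
where $\ENV{V}$ ranges over the variable parts of the admissible initial $\ENV{SV}$ and $\Delta'$ is precisely as constructed by \nameref{stm_rt_dcall}.

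Next I would prove $P' \in \STYPING$ by stacking up the stack-compatibility lemmas from Section~\ref{sec:typing_interpretation_stm_stack}: Lemma~\ref{lemma:compatibility_stacks_bottom} gives $\bot : \TCMD{s}$ under $\Delta$; Lemma~\ref{lemma:compatibility_stacks_restore} then yields $s;\bot : \TCMD{s}$ (using $s \ordgeq s$); Lemma~\ref{lemma:compatibility_stacks_return} lifts this to $(\ENV{V},\Delta);s;\bot : \TCMD{s}$ under $\Delta'$, for which the required premise $\Sigma;\Gamma;\Delta \vdash \ENV{V}$ follows from Lemma~\ref{lemma:construction_envsv} since $\ENV{V}$ comes from an $\ENV{SV}$ built per Definition~\ref{def:construction_envsv}; finally Lemma~\ref{lemma:compatibility_stacks_statement} gives $S;(\ENV{V},\Delta);s;\bot : \TCMD{s}$ under $\Delta'$ provided we can show $\Sigma;\Gamma;\Delta';\ENV{T} \vDash S : \TCMD{s}$.

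The main obstacle is exactly this last obligation, and it is where the delegate-call-specific premises pay off. From $\Sigma;\Gamma;\ENV{T} \vDash \ENV{T}$ we obtain a typing interpretation for $S$, but under the \emph{canonical} environment $\Delta_f$ derived from $\Gamma(I^Y)(f)$ in which $\code{this}$ has type $\TVAR{I^Y,\cdot}$; whereas $\Delta'$ keeps $\code{this} : \TVAR{I^X, s_1}$ inherited from the caller. The mismatch is harmless precisely because of our two structural premises: $\Sigma \vdash I^X \SUBS I^Y$ ensures that every method reference within $S$ is resolved to a supertype-compatible signature in $\Gamma(I^X)$, and the invariance condition on fields, $\Gamma(I^Y)(q) = \Gamma(I^X)(q)$ for every common $q \in \FNAMES \UNION \{\code{balance}\}$, ensures that every field access or update performed by $S$ has exactly the same type and security level under $I^X$ as under $I^Y$. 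Thus every step of the typed operational semantics that $S$ can take under $\Delta_f$ is mirrored by an identically typed step under $\Delta'$, and the typing interpretation for $S$ carries over unchanged — a fact which can be extracted as an auxiliary substitution/swap lemma or, more directly, by observing that the candidate $\{(S'',\Delta'',s'') : (S'', \Delta''[\code{this}\mapsto\TVAR{I^Y,s_1}], s'') \in \STYPING\}$ is itself a typing interpretation under the two structural assumptions. This completes the verification that $\CANDSTYPI$ is a typing interpretation, from which the lemma follows.
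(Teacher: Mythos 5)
Your proposal is correct and follows essentially the same route as the paper: reduce the delegate call to the method-call argument (building a candidate typing interpretation around the transition produced by \nameref{stm_rt_dcall}, with the continuations typed via $\Sigma; \Gamma; \ENV{T} \vDash \ENV{T}$ and the stack compatibility lemmas), and isolate the one genuinely new obligation --- retyping the body $S$ under a $\Delta'$ that keeps the caller's types for \code{this} and \code{value} --- which you discharge by exactly the ``swap'' argument the paper packages as Lemma~\ref{lemma:dcall_stm_type} (supported by Lemmas~\ref{lemma:dcall_exp_trans} and~\ref{lemma:dcall_stm_trans}), resting on $\Sigma \vdash I^X \SUBS I^Y$ and the field-invariance premise. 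The only cosmetic difference is that you form the candidate by unioning the single new triplet with $\STYPING$ and reuse the already-proved compatibility lemmas for the continuation stack, whereas the paper constructs the indexed union of continuation sets explicitly as in Lemma~\ref{lemma:compatibility_stm_call}; both organisations are sound.
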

\begin{proof}
By the definition of $\Sigma; \Gamma; \Delta; \ENV{T} \vDash \DCALL{e}{f}{e_1', \ldots, e_h'} : \TCMD{s}$, we must exhibit a typing interpretation $\STYPI$ that contains the triplet $(\DCALL{e}{f}{e_1', \ldots, e_h'}; \bot, \Delta, s)$.
The proof is almost exactly the same as the proof for Lemma~\ref{lemma:compatibility_stm_call} above.
The only differences are that we here do not have a value parameter $e_2$, and the magic variables \code{this} and \code{value} are not changed in the new type environments $\Delta_i$. 
As we know from $\Sigma; \Gamma; \ENV{T} \vDash \ENV{T}$ that the body $S$ is type safe w.r.t.\@ a type environment $\Delta_i'$ in which the types of these variables are set as in the proof of Lemma~\ref{lemma:compatibility_stm_call}, i.e.\@ $\Sigma; \Gamma; \Delta_i'; \ENV{T} \vDash S : \TCMD{s}$, we use Lemma~\ref{lemma:dcall_stm_type} to conclude that $\Sigma; \Gamma; \Delta_i; \ENV{T} \vDash S : \TCMD{s}$ as well. 
The rest of the proof is as the proof for Lemma~\ref{lemma:compatibility_stm_call}, so we do not repeat it.
\end{proof}

\begin{proof}[Proof of Theorem~\ref{thm:compat_comms}]
By induction on the derivations of the two type judgments 
\begin{equation*}
  \Sigma; \Gamma; \EMPTYSET \vdash \ENV{T} \qquad\text{and}\qquad \Sigma; \Gamma; \Delta \vdash S : \TCMD{s}, 
\end{equation*}

\noindent and a case analysis of the last rules used.
Both statements have to be shown together, because of their mutual dependency.
Each syntactic construct is typed by a particular syntactic type rule, and the case can then be concluded by the corresponding compatibility lemma, and using Lemma~\ref{lemma:compatibility_stm_coercion}(2) for cases where the syntactic type rule has a side condition of the form $s \ordleq s'$.

As an example, suppose $\Sigma; \Gamma; \Delta \vdash S : \TCMD{s}$ was concluded by \nameref{stm_t_if}.
Then the judgment is of the form 
\begin{equation*}
  \Sigma; \Gamma; \Delta \vdash \code{if $e$ then $S_\TRUE$ else $S_\FALSE$} : \TCMD{s}
\end{equation*}

\noindent and from the premise and side condition of that rule we know that 
\begin{align*}
  \Sigma; \Gamma; \Delta & \vdash e : (\TBOOL_{s'})      \\ 
  \Sigma; \Gamma; \Delta & \vdash S_{\TRUE} : \TCMD{s'}  \\
  \Sigma; \Gamma; \Delta & \vdash S_{\FALSE} : \TCMD{s'} \\ 
  s                      & \ordleq s'
\end{align*}

\noindent By Theorem~\ref{theorem:compatibility_expressions}, we have that 
\begin{equation*}
  \Sigma; \Gamma; \Delta \vdash e : \TBOOL_{s'} \implies \Sigma; \Gamma; \Delta \vDash e : \TBOOL_{s'} 
\end{equation*}

\noindent and, by the induction hypothesis, we obtain that 
\begin{align*}
 \Sigma; \Gamma; \Delta \vDash S_{\TRUE} : \TCMD{s'} \qquad
 \Sigma; \Gamma; \Delta \vDash S_{\FALSE} : \TCMD{s'} \qquad
 \Sigma; \Gamma; \EMPTYSET \vdash \ENV{T} \implies \Sigma; \Gamma; \ENV{T} \vDash \ENV{T}
\end{align*}
This satisfies the hypotheses of Lemma~\ref{lemma:compatibility_stm_if}; hence,
\begin{equation*}
  \Sigma; \Gamma; \Delta; \ENV{T} \vDash \code{if $e$ then $S_\TRUE$ else $S_\FALSE$} : \TCMD{s'}
\end{equation*}

\noindent Finally, from $s \ordleq s'$ and  Lemma~\ref{lemma:compatibility_stm_coercion}(2), we can conclude 
\begin{equation*}
  \Sigma; \Gamma; \Delta; \ENV{T} \vDash \code{if $e$ then $S_\TRUE$ else $S_\FALSE$} : \TCMD{s}
\qedhere
\end{equation*}
\end{proof}

A statement similar to that of Theorem~\ref{thm:compat_comms} can also be shown for stacks $Q$, which finally gives us the same guarantees as a Subject-Reduction result; namely that well-typedness indeed implies safety, as expected.\footnote{There is actually a minor, technical difference: Subject reduction cannot be shown directly, using the untyped semantics, because it does not store information about entering a heightened security context, which we, in the typed semantics, represented by updating the security level on the turnstile $\ssemDash[{s'}]$ and pushing the original security level $s$ onto the stack. For example, in the case of \code{if-else}, execution of the guarded statement $S_b$ must preserve the judgment $\TCMD{s'}$, and not merely $\TCMD{s}$ for all its execution steps, but the untyped semantics does not take this difference into account.}

To conclude, we remark that the (semantic) typing rules given in Figures~\ref{fig:semantic_subtype_rules_stacks_stm}, \ref{fig:semantic_type_rules_stm} and~\ref{fig:semantic_type_rules_stacks} 
constitute our semantic type system for \TINYSOL{}.

\clearpage
\section{Up-to techniques}\label{app:upto}
\begin{proof}[Proof of Theorem~\ref{lemma:validity_upto_union}]
We proceed by showing that $\UPTOSTYPI \UNION \STYPI$ is a typing interpretation according to Definition~\ref{def:typing_interpretation_stm}.

By assumption, we know that $\UPTOSTYPI \progress \UPTOSTYPI \UNION \STYPI$, where $\STYPI$ is a typing interpretation.
Now consider an arbitrary triplet $P \in \UPTOSTYPI \UNION \STYPI$. 
If $P \in \STYPI$, then we already know that $P$ satisfies the requirements of Definition~\ref{def:typing_interpretation_stm}, since $\STYPI$ is a typing interpretation.
Otherwise, if $P \in \UPTOSTYPI$, there are two possibilities:
\begin{itemize}
  \item If $P~\not\ttrans$ then $P$ must satisfy Case~\ref{case:prog1} of Definition~\ref{def:progression}.
    But then $P$ also satisfies Case~\ref{case:typi1} of Definition~\ref{def:typing_interpretation_stm}, since the requirements are the same.

  \item If $P \ttrans P_1'$ then we know by Case~\ref{case:prog2} of Definition~\ref{def:progression} that, for all $P_2'$ such that $P \ttrans P_2'$ (and, in particular, when $P_2' = P_1'$), it holds that $P_2' \in \STYPI$.
    Thus $P_2' \in \UPTOSTYPI \UNION \STYPI$ as well.
    Therefore $P$ satisfies Case~\ref{case:typi2} of Definition~\ref{def:typing_interpretation_stm}.
\end{itemize}

Thus, we conclude that $\UPTOSTYPI \UNION \STYPI$ indeed is a typing interpretation.
By Definition~\ref{def:typing_interpretation_stm}, it therefore also holds that 
\begin{equation*}
  \UPTOSTYPI \UNION \STYPI \subseteq \STYPING
\end{equation*}
 since $\STYPING$ is the union of all typing interpretations, and so $\UPTOSTYPI \subseteq \STYPING$, as required.
\end{proof}

\clearpage
\section{Semantic Typing for Call Integrity}\label{app:CI}

In this appendix, we report the changes needed to the setting of this paper to ensure, through our approach, another security property, specifically designed for smart-contract languages: {\em call integrity}. This property has been introduced by Grishchenko, Maffei and Schneidewind in \cite{grishchenko2018} to rule out a dangerous phenomenon, called {\em reentrancy}, in Ethereum smart contracts. 
In a nutshell, reentrancy is a pattern based on mutual recursion, where one method $f$ calls another method $g$ while also transferring an amount of currency along with the call; if $g$ then immediately calls $f$ back, it may yield a recursion where $f$ continues to transfer funds to $g$ (for more concrete examples suffering from reentrancy phaenomena, see \cite{AGL25,grishchenko2018}). This could be avoided by using call integrity: this property requires that any call to a method in a `trusted' contract (say, $X$) yields the very same sequence of currency flows (i.e.\@ method calls) even if some of the other `untrusted' contracts (or their stored values) are changed.
In a sense, the code and values of the other contracts, which could be controlled by an attacker, must not be able to affect the currency flow from $X$.

The main result in \cite{grishchenko2018} is that call integrity implies the absence of reentrancy (see Theorem~1 in {\em loc.cit.}); thus, call integrity has been identified in the literature as one of the safety properties that smart contracts should have.
In \cite{AGL25} it is proved that the static type system for non-interference (reported in Appendix~\ref{app:syntactic_type_system}) also ensures call integrity for the version of \TINYSOL{} \emph{without} fallback functions.
We now want to extend this result to the present version of \TINYSOL{}, which includes fallback functions, using the semantic type system we developed in this paper.

To this end, we need a few new definitions (we adapt them from \cite{AGL25}; for a more thorough discussion, we refer to the original reference). 
Essentially, some modifications are needed to properly define the labelled transitions, since in \cite{AGL25} a big-step operational semantics was adopted, whereas here we use a small-step one.

\newcommand{\ltrans}[1]{\trans[#1]}

\begin{definition}[Trace semantics]
\label{def:traces}
A \emph{trace} of method invocations is given by
\begin{center}
\begin{syntax}[h]
  \pi \IS \epsilon \OR \TRANSACT{X}{Y}{f}{\VEC{v}}{z}, \pi 
\end{syntax}
\end{center}

\noindent where $X$ is the address of the calling contract, $Y$ is the address of the called contract, $f$ is the method name, and $\VEC{v}$ and $z$ are the actual parameters.
We annotate the small-step operational semantics with information about the invoked method (if any), to yield labelled transitions of the form $\ltrans\pi$. 
To do this, we first modify the rules in Figures~\ref{fig:semantics_stacks1} and~\ref{fig:semantics_stacks2} as follows:
\begin{itemize}
\item in all rules of Figure~\ref{fig:semantics_stacks1}, all occurrences of $\ \trans\ $ in the conclusions of the rules become $\ \ltrans\epsilon\ $;
\item rules \nameref{stm_s_call}, \nameref{stm_s_fcall} and \nameref{stm_s_dcall} respectively become:
\vspace*{.4cm}
\begin{center}\normalfont
\begin{math}
\begin{array}{l}
  \dfrac
    { \text{\em same premises and side conditions as in rule \nameref{stm_s_call}} }
    { \CONF{\CALL{e_1}{f}{\VEC{e}}{e_2};Q , \ENV{TSV}} \ltrans{\TRANSACT{X}{Y}{f}{\VEC{v}}{z}} \CONF{S;\ENV{V};Q , \ENV{T}, \ENV{SV}' }}
\vspace*{.4cm}
\\
  \dfrac
    { \text{\em same premises and side conditions as in rule \nameref{stm_s_fcall}} }
    { \CONF{\CALL{e_1}{f}{\VEC{e}}{e_2};Q , \ENV{TSV}} \ltrans{\TRANSACT{X}{Y}{f}{\VEC{v}}{z}} \CONF{S;\ENV{V};Q , \ENV{T}, \ENV{SV}' }}
\vspace*{.4cm}
\\
  \dfrac
    { \ENV V(\code{this}) = X \qquad  \text{\em same premises and side conditions as in rule \nameref{stm_s_dcall}} }
    { \CONF{\DCALL{e}{f}{\VEC{e}};Q, \ENV{TSV}} \ltrans{\TRANSACT{X}{Y}{f}{\VEC{v}}{0}} \CONF{S;Q , \ENV{T}, \ENV{SV}'} }
\vspace*{.2cm}
\end{array}
\end{math}
\end{center}
\end{itemize}

\noindent Then, the relation $\ltrans\pi$ is obtained as the 
concatenation of the labelled small-step transitions---that is, 
for every $h \geq 0$ and $\pi_1,\ldots\pi_h$ such that every $\pi_i$ is either $\epsilon$ or $\TRANSACT{X_i}{Y_i}{f_i}{\VEC{v_i}}{z_i}$,
we write $\CONF{Q_0,\ENV{TSV}^0} \ltrans{\pi_1,\ldots,\pi_h} \CONF{Q_{h},\ENV{TSV}^{h}}$ whenever  $\CONF{Q_0,\ENV{TSV}^0} \ltrans{\pi_1} \CONF{Q_1,\ENV{TSV}^1} \ltrans{\pi_2} \ldots \ltrans{\pi_h} \CONF{Q_h,\ENV{TSV}^h}$.
\end{definition}

\begin{definition}[Projection]
\label{def:project}
The \emph{projection} of a trace to a specific contract $X$, written $\pi \BARBSYM_X$, is the trace of calls with $X$ as the calling address. 
Formally:
\begin{equation*}
\epsilon \BARBSYM_X \ =\ \epsilon
\qquad\qquad
  (\TRANSACT{Z}{Y}{f}{\VEC{v}}{z}, \pi) \BARBSYM_{X} \ =\ %
  \begin{cases}
    \TRANSACT{X}{Y}{f}{\VEC{v}}{z}, (\pi \BARBSYM_{X}) & \text{if $Z = X$} \tabularnewline
    \pi \BARBSYM_{X}                                   & \text{otherwise} 
  \end{cases} 
\end{equation*}
\end{definition}

\smallskip

Call integrity is formulated in terms of \emph{transactions}, which are essentially sequences of method calls. 
These can be rendered in our setting as particular kinds of stacks, consisting of sequences of method calls interspersed with variable environments $\ENV{V}$, associating the name of the caller, $X$, to the variable \code{this}.
By executing such a stack, the variable environment will first be loaded by rule \nameref{untyped_return}, thus setting up the appropriate binding for \code{this}, and then the method call will be executed by rules \nameref{stm_s_call}/\nameref{stm_s_fcall}, which will use the currently active binding for \code{this} to set the variable \code{sender}.
Thus, we formally define transactions as:
\begin{center}
\begin{syntax}[h]
  \tau \IS \bot \OR \SET{(\code{this}, X)}; \CALL{Y}{f}{\VEC{v}}{z}; \tau
\end{syntax}
\end{center}

\noindent Finally, we need another piece of notation: given a (partial) function $\phi$ and a subset $\Xi$ of its domain, we write $\phi|_{\Xi}$ to denote the restriction of $\phi$ to $\Xi$.

\begin{definition}[Call integrity]\label{def:call_integrity}
Let \ANAMES\ 
be the set of all contracts (addresses),
$\SETNAME{X} \subseteq \ANAMES$ denote a set of trusted contracts and
$\SETNAME{Y} \DEFSYM \ANAMES \SETMINUS \SETNAME{X}$ 
stand for all other contracts.
A contract $X \in \SETNAME{X}$ has \emph{call integrity} for $\SETNAME{Y}$ if,
for every transaction $\tau$
and environments $\ENV{TS}^1$ and $\ENV{TS}^2$ such that
$\ENV{TS}^1|_{\SETNAME{X}} = \ENV{TS}^2|_{\SETNAME{X}}$, 
$\CONF{\tau, \ENV{TS}^1,\ENV V ^\emptyset} \ltrans{\pi_1} \CONF{\bot, \ENV T^1, \ENV{SV}^{1'} }$ and
$\CONF{\tau, \ENV{TS}^2,\ENV V ^\emptyset} \ltrans{\pi_2} \CONF{\bot, \ENV T^2, \ENV{SV}^{2'} }$,
it holds that $\pi_1\! \BARBSYM_X\ = \pi_2 \!\BARBSYM_X$.
\end{definition}

Notice that, like \cite{AGL25,grishchenko2018}, we require that, to satisfy call integrity, either both execution or none of them must terminate. 
This was implicit in the definition provided in \cite{AGL25}, because there a big-step operational semantics was assumed (and so every configuration with an operational evolution was ensured to terminate). 
Without this requirement, in a small-step setting there would be no way to guarantee that both traces have the same length.

Then, the typed operational semantics remains mostly unchanged; the only modification needed is to remove ``$z \neq 0 \Rightarrow$'' from the side conditions of rules \nameref{stm_rt_call} and \nameref{stm_rt_fcall}. 
This makes sense because call integrity is concerned with the (sequence of) method invocations, and these are not affected by the fact that the amount of currency transferred is 0 or not.\footnote{
	The sequence of method calls would be influenced by the currency transfer if we took into account the requirement that the amount transferred is available to the caller (i.e., the currency exchanged is at least the value contained in the caller's \code{balance} field). 
        As we noted in Footnote~\ref{fn:no_exceptions} on Page~\pageref{fn:no_exceptions}, we prefer not to consider this issue for the moment.
}
In contrast, non-interference is concerned with the values contained in some portions of the memories, which may include the \code{balance} fields; in this case, since with a transferred value $z$ that is 0 such fields are not modified, checking that $s' \ordleq s_3,s_4$ is only needed when there is an actual change of balance. 
Notice that this change has no effect on the proofs of Lemma~\ref{lemma:compatibility_stm_call} and of the analogous Lemma corresponding to rule \nameref{stm_st_fcall} in Figure~\ref{fig:semantic_type_rules_fallback}. 
Hence, also in the new framework tailored for call integrity, the typeability in the static type system implies typeability in the semantic one (see Theorem~\ref{thm:compat_comms}).

We can now also prove a result analogous to Theorem~\ref{theorem:stacks_runtime_noninterference}, stating that the typed operational semantics implies call integrity. 
To this end, we also need to provide a typed labelled semantics of the form $\Sigma; \Gamma; \Delta  \ssemDash[{s}]  \CONF{Q,  \ENV{TSV}} \ltrans{\pi} \Sigma; \Gamma; \Delta' \ssemDash[{s'}]  \CONF{Q',  \ENV{T},\ENV{SV}'} $, by modifying the rules in Figures~\ref{fig:typed_semantics_stacks1} and~\ref{fig:typed_semantics_stacks2} as we did for the untyped semantics in Definition~\ref{def:traces}.

\begin{definition}[$s$-contracts]
\label{def:sContracts}
A contract $X$ is an \emph{$s$-contract w.r.t.\@ $\Gamma$} if $\Gamma(X) = I_s$, for some $I$ such that 
  \begin{itemize}
    \item $\forall q \in (\DOM{\Gamma(I)} \cap (\FNAMES \cup \{\code{balance}\}))\ \exists B \SUCHTHAT \Gamma(I)(q) = \TVAR{B_s}$, and
    \item $\forall f \in (\DOM{\Gamma(I)} \cap \MNAMES)\ \exists \widetilde{B_{s'}} \SUCHTHAT \Gamma(I)(f) = \TSPROC{\widetilde{B_{s'}}}{s}$.
  \end{itemize}
\end{definition}

\begin{lemma}
\label{lem:HvsL}
If $\Sigma; \Gamma; \Delta \ssemDash[{s}] \CONF{S; Q, \ENV{TSV}} \ltrans{\pi} \Sigma; \Gamma; \Delta' \ssemDash[{s'}] \CONF{Q, \ENV T; \ENV{SV}'}$, then $\pi \BARBSYM_Z = \epsilon$, for every $s''$-contract $Z$ such that $s \not\ordleq s''$.
\end{lemma}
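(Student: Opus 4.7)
The plan is to proceed by induction on the number of labels in $\pi$ (equivalently, on the number of small-step transitions in $\ltrans{\pi}$). The base case, when $\pi$ is empty, is immediate. For the inductive step, I would split the reduction chain as $\CONF{S;Q,\ENV{TSV}} \ltrans{\pi_1} \CONF{Q_1,\ENV{TSV}^1} \ltrans{\pi'} \CONF{Q,\ENV T;\ENV{SV}'}$ with $\pi = \pi_1,\pi'$, and do a case analysis on the rule producing the first step. If this rule is $\epsilon$-labelled (any rule other than the three call rules), then $\pi_1 = \epsilon$, and the result for $\pi_1$ is trivial; the result for $\pi'$ follows from the inductive hypothesis applied to the residual computation.

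The real content lies in the case when the first step is produced by one of \nameref{stm_rt_call}, \nameref{stm_rt_dcall}, or \nameref{stm_rt_fcall}, so that $\pi_1 = \TRANSACT{Z}{Y}{f}{\VEC v}{z}$ with $Z = \ENV V(\code{this})$. I need to show $Z \neq X$ for every $s''$-contract $X$ with $s \not\ordleq s''$. The key is to maintain throughout the chain the following invariant: at every intermediate configuration with execution level $s^*$ and variable environment $\ENV V^*$, if $\ENV V^*(\code{this}) = Z^*$ and $\Gamma(Z^*) = (I^*,s_{Z^*})$, then $s \ordleq s^* \ordleq s_{Z^*}$. The first inequality is preserved because the only way the execution level drops is via \nameref{stm_rt_restore}, which pops exactly a previously pushed level $\ordgeq s$; all other rules either leave the level unchanged or raise it (with side conditions $s \ordleq s'$). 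The second inequality is preserved by \nameref{stm_rt_if}, \nameref{stm_rt_whiletrue}, \nameref{stm_rt_assv}, \nameref{stm_rt_assf} because $\Delta(\code{this})$ does not change, and by \nameref{stm_rt_call}, \nameref{stm_rt_fcall} because the new execution level is the callee's method level $s'$, which by Definition~\ref{def:sContracts} coincides with $s_Y = s_{Z^*_{\text{new}}}$ whenever the new \code{this} is an $s_Y$-contract; for \nameref{stm_rt_dcall}, \code{this} is unchanged, and the requirement $\Sigma \vdash I^X \SUBS I^Y$ together with the contravariance in rule \nameref{typerules_sub_proc} forces $s' \ordleq s_{Z^*}$.

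Given the invariant, the case $Z = X$ yields $s \ordleq s^* \ordleq s_X = s''$, contradicting $s \not\ordleq s''$; hence $Z \neq X$ and $\pi_1 \BARBSYM_X = \epsilon$. Combined with the inductive hypothesis for $\pi'$, we conclude $\pi \BARBSYM_X = \epsilon$. The main obstacle, as I see it, lies in the base of the invariant for the initial configuration: we need the initial level $s$ to be $\ordleq$ the declared level of the initial \code{this}. This is not visible in the lemma statement but should be extractable either from well-typedness of $\ENV V$ combined with the structural fact that execution of a stack $S;Q$ at level $s$ arises from the typed call rules (which always set the new level to the callee's method level), or from an implicit well-formedness hypothesis on $(\Delta,\ENV{TSV},s)$ analogous to those used in Theorem~\ref{theorem:stacks_runtime_preservation}. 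Making this precise, and proving the analogous statement for the \code{fallback} variant where $\ENV T(Y)(\code{fallback})$ plays the role of $\ENV T(Y)(f)$, will be the most delicate step.
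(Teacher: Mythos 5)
Your overall strategy---induction on the length of the labelled transition sequence, with the crux being that the caller recorded in any emitted label cannot be an $s''$-contract with $s \not\ordleq s''$---is in the same family as the paper's proof, and the first half of your invariant ($s \ordleq s^*$: the execution level never drops below $s$ because \nameref{stm_rt_restore} only pops levels that were pushed at or above $s$) corresponds exactly to how the paper threads the raised level $\hat s \ordgeq s$ through its induction, using that $s \not\ordleq s''$ implies $\hat s \not\ordleq s''$. The genuine gap is the second half of your invariant, $s^* \ordleq s_{Z^*}$, which is both unprovable as stated and unnecessary. It is unprovable for three reasons. First, as you note yourself, it fails at the initial configuration: nothing in the lemma relates $s$ to the declared level of the initial \code{this}, and no such hypothesis is needed or assumed. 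Second, your preservation argument for \nameref{stm_rt_call} and \nameref{stm_rt_fcall} appeals to Definition~\ref{def:sContracts} to identify the callee's method level $s'$ with the callee's declared level; but a generic contract need not be an $s$-contract for any $s$ (its members may carry mixed levels), so after a call to such a contract the invariant is not re-established, and later calls made from inside that method break your argument. Third, between \nameref{stm_rt_return} and the subsequent \nameref{stm_rt_restore}, \code{this} has already reverted to the caller while the execution level is still the callee's, so the invariant is momentarily false even in the uniform case.

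The missing idea is that no globally maintained relation between the level and \code{this} is needed: the side conditions of the typed call rules give you, \emph{locally at the moment of each call}, everything required. In \nameref{stm_rt_call} and \nameref{stm_rt_fcall} (with the ``$z \neq 0 \Rightarrow$'' guard removed in the call-integrity variant of the semantics) one has $s_1 \ordleq s' \ordleq s_3$, where $s_1$ is the level of $\Delta(\code{this})$ and $s_3$ that of the caller's \code{balance} field; if the caller were an $s''$-contract, then $s_1 = s_3 = s''$, forcing $s' = s''$, and the side condition relating the current execution level to $s'$ (together with part (a) of your invariant) yields $s \ordleq s''$, contradicting $s \not\ordleq s''$. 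For \nameref{stm_rt_dcall} the same conclusion follows from $\Sigma \vdash I^X \SUBS I^Y$ and the contravariance of \nameref{typerules_sub_proc}: the caller's copy of $f$ has level $s''$ and must dominate the callee's level $s'$, so again $s \ordleq s' \ordleq s''$. Replacing your invariant (b) by this local argument closes the gap and dissolves the ``obstacle'' you identify at the end, which is an artefact of your formulation rather than of the lemma.
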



\begin{proof}
By induction on the length of the sequence of labelled transitions (that generate $\pi$).
We have 4 base cases (i.e., when the length is 1):
\begin{enumerate}
\item $S = \code{skip}$;
\item $S = \code{while $e$ do $S'$}$, for $\Sigma; \Gamma; \Delta \esemDash[(\TBOOL, \hat s)] \CONF{e, \ENV{SV}} \trans \FALSE$;
\item $S = \code{$x$ := $e$}$;
\item $S = \code{this.$p$ := $e$}$.
\end{enumerate}
In all these cases, $\pi=\epsilon$ and the claim trivially holds.
For the inductive step, we have 5 cases to consider:
\begin{enumerate}
\item $S = \code{$\TVAR{B, \hat s}$ $x$ := $e$ in $S'$}$;
\item $S = \code{while $e$ do $S'$}$, for $\Sigma; \Gamma; \Delta \esemDash[(\TBOOL, \hat s)] \CONF{e, \ENV{SV}} \trans \TRUE$;
\item $S=\code{if $e$ then $S_{\TRUE}$ else $S_{\FALSE}$}$;
\item $S = \CALL{e_1}{m}{\VEC{e}}{e_2}$;
\item $S = \DCALL{e_1}{m}{\VEC{e}}$.
\end{enumerate}
The first case is easy, since $s' = s$ and so can be concluded by induction.

In the second and in the third cases, the security level is raised to $\hat s$, i.e., the security level needed to evaluate the guard $e$ (viz., $\Sigma; \Gamma; \Delta \esemDash[(\TBOOL, \hat s)] \CONF{e, \ENV{SV}} \trans b \in \BOOLEANS$); so, by induction, $\pi \BARBSYM_Z = \epsilon$, for every $s''$-contract $Z$ such that $\hat s \not\ordleq s''$. However, the level can be raised from $s$ to $\hat s$ only if $s \ordleq \hat s$; so, every $s''$ such that $s \not\ordleq s''$ is also such that $\hat s \not\ordleq s''$. Hence, $\pi \BARBSYM_Z = \epsilon$, for every $s''$-contract $Z$ such that $s \not\ordleq s''$.

For the fourth case, we have that
$\Sigma; \Gamma; \Delta \ssemDash \CONF{Q , \ENV{TSV}} \ltrans{\TRANSACT{X}{Y}{f}{\VEC{v}}{z}}
\Sigma; \Gamma; \Delta'' \ssemDash[{\hat s}] \CONF{S' ; (\ENV{V}, \Delta) ; s ;  Q' , \ENV{T};\ENV{SV}''}$, where $s \ordleq \hat s$ and
\begin{itemize}
\item $X = \ENV{V}(\code{this})$ and $\Delta(\code{this}) = \TVAR{I^{X}, s_1}$, for $s_1 \ordleq \hat s$;
\item $\Sigma; \Gamma; \Delta \esemDash[(I^{Y}, \hat s)]  \CONF{e_1, \ENV{SV}} \trans Y$ (for $\Gamma(Y) = (I^{Y}, s_2)$);
\item $\Sigma; \Gamma; \Delta \esemDash[(\TINT, \hat s)] \CONF{e_2, \ENV{SV}} \trans z$;
\item $\Sigma; \Gamma; \Delta \esemDash[(\VEC{B}, \VEC{s}')] \CONF{\VEC{e}, \ENV{SV}} \trans \VEC{v}$;
\item $m$ is either $f$ or $m = \code{id}$ and $f = \ENV V(\code{id})$;
\item if $f \not\in  \DOM{\ENV{T}(Y)}$, then $\ENV{T}(Y)(\code{fallback}) = (\epsilon, S')$ and $\Gamma(I^{Y})(\code{fallback}) = \TSPROC{}{\hat s}$, otherwise $\ENV{T}(Y)(f)  = (\VEC{x}, S')$ and $ \Gamma(I^{Y})(f) = \TSPROC{\BSVEC}{\hat s}$;
\item $\Delta''$ is made up by the associations $\code{this}:\TVAR{I^{Y}, s_2}, \code{sender}:\TVAR{I^X, s_1}, \code{value}:\TVAR{\TINT, \hat s}$; moreover, if this is a normal call, it also contains $\VEC x : \BSVEC$~; otherwise, it also contains the two associations $\code{id}:\TVAR{\TIDF, \hat s}, \code{args}:\TVAR{\VEC{B}, \VEC{s'}}$;
\item $\ENV{V}''$ is made up by the associations $\code{this}:Y, \code{sender}:X, \code{value}:z$; moreover, if this is a normal call, it also contains $\VEC x:\VEC v$; otherwise, it also contains the two associations $\code{id}:f, \code{args}:\VEC{v}$;
\item $\ENV{S}'' = \ENV{S}\REBIND{X}{\ENV{F}^{X} [\code{balance -= } z]}\REBIND{Y}{\ENV{F}^{Y} [\code{balance += } z]} $, where 
\begin{itemize}
\item $\ENV{S}^i(X) = \ENV{F}^{X}$ and $\ENV{S}^i(Y) = \ENV{F}^{Y}$, 
\item $\Gamma(I^X)(\code{balance}) = \TVAR{\TINT, s_3}$ and $\Gamma(I^{Y})(\code{balance}) =  \TVAR{\TINT, s_4}$, with $s' \ordleq s_3, s_4$.
\end{itemize}
\end{itemize}
So, $\pi = \TRANSACT{X}{Y}{f}{\VEC{v}}{z},\pi'$, for some $\pi'$. By induction, $\pi' \BARBSYM_Z = \epsilon$, for every $s''$-contract $Z$ such that $\hat s \not\ordleq s''$ and so, like above, $\pi' \BARBSYM_Z = \epsilon$, for every $s''$-contract $Z$ such that $s \not\ordleq s''$, being $s \ordleq \hat s$. Concerning $ \TRANSACT{X}{Y}{f}{\VEC{v}}{z}$, it is impossible that $X$ is an $s''$-contract, for every $s''$ such that $s \not\ordleq s''$: indeed, if this was the case, by the conditions above, we would have that $s'' = s_1 \ordleq \hat s \ordleq s_3 = s''$, and so $s'' = \hat s$, in contradiction with $s \not\ordleq s''$ (being $s \ordleq \hat s$). Thus, $\pi \BARBSYM_Z = \epsilon$, for every $s''$-contract $Z$ such that $s \not\ordleq s''$, as desired.

The fifth case is similar to the previous one. The main difference is the way in which we can conclude that $X$ (the caller) cannot be an $s''$-contract, for every $s''$ such that $s \not\ordleq s''$. By contradiction. 
First of all, we know that $\Sigma \vdash I^X \SUBS I^Y$; this implies that $X$ must have a method $f$ with the same signature as the one in $Y$ but with a possibly {\em higher} security level (since subtyping is contravariant in the security level of methods). Since $X$ is an $s''$-contract, it must be that $\Gamma(I^X)(f) = \TSPROC{\BSVEC}{s''}$; being $\Gamma(I^Y)(f) = \TSPROC{\BSVEC}{\hat s}$, we have that $\hat s \ordleq s''$. This yields the desired contradiction, since we know that $s \ordleq \hat s$, and so also $s \ordleq s''$.
\end{proof}

\begin{theorem}[Runtime call integrity for stacks]\label{theorem:stacks_runtime_CI}
Let $\SECLEVELS \DEFSYM \SET{L, H}$ with $L \ordleq H$.
Choose a bipartition of \ANAMES\ into $\SETNAME X$ and $\SETNAME Y$, and fix a type environment $\Gamma$ with respect to which every element of $\SETNAME{X}$ is a $L$-contract and every element of $\SETNAME{Y}$ is a $H$-contract.
For $i \in \SET{1,2}$, assume that
\begin{itemize}
  \item $\Sigma; \Gamma; \Delta \vdash \ENV{SV}^i$ (Well-typedness),
  \item $\Gamma \vdash \ENV{TSV}^i$ (Consistency),
  \item $\Sigma; \Gamma; \ENV{SV}^i; s \vdash Q$ (Well-formedness)
\end{itemize}
and that 
\begin{itemize}
  \item $\ENV{TS}^1|_{\SETNAME{X}} = \ENV{TS}^2|_{\SETNAME{X}}$, and 
  \item $\Sigma; \Gamma; \Delta \vdash \ENV{V}^1 =_L \ENV{V}^2$.
\end{itemize}

\noindent If 
$\Sigma; \Gamma; \Delta   \ssemDash[{s}]   \CONF{Q,    \ENV{TSV}^i} \ltrans{\pi_i} 
\Sigma; \Gamma; \Delta_i \ssemDash[{s_i}] \CONF{\bot, \ENV{T}^i,\ENV{SV}^{i'}}$
for $i \in \SET{1, 2}$, then $\pi_1 \BARBSYM_X = \pi_2 \BARBSYM_X$, for any $X \in \SETNAME{X}$. 

\end{theorem}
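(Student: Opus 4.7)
The plan is to prove the theorem by strong induction on the combined length of the two labelled transition sequences, maintaining throughout the inductive argument the invariants provided by Theorem~\ref{theorem:stacks_runtime_preservation} (well-typedness, consistency, well-formedness) together with $L$-equivalence of the memory states and agreement of the code/state environments on the trusted contracts. The key observation, exploiting the modification described in Appendix~\ref{app:CI}, is that the side condition $s' \ordleq s_3, s_4$ on the call rules now holds unconditionally, so any method call to a trusted $L$-contract can only be fired from a configuration whose current execution level is $L$; symmetrically, at execution level $H$ no call to a trusted contract is possible.

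The induction step proceeds by case analysis on the current execution level $s$. When $s = L$, both executions are $L$-equivalent by the invariant, and any expression guarding the next transition is evaluated at level $L$ or lower, so by Corollary~\ref{corollary:expressions_runtime_noninterference} both sides read the same values from the low portion of memory. Hence the two executions apply the same typed semantic rule and produce identical labels: in particular, for a call transition, $\ENV V(\code{this})$ coincides in both (it is $L$-typed), so the caller address, callee, method name, arguments and value all match, giving $\pi'_1 = \pi'_2$. Theorem~\ref{theorem:stacks_runtime_noninterference} then ensures $L$-equivalence of the reducts and the invariant propagates. When $s = H$, Lemma~\ref{lem:HvsL} guarantees that the transition label projected to any $L$-contract is $\epsilon$, so the step contributes nothing to $\pi_i \BARBSYM_X$; furthermore, Theorem~\ref{theorem:stacks_runtime_preservation} (applied to each side separately) preserves $L$-equivalence of the memories, because every field assignment or balance update at level $H$ is to a container of level $H$ or higher.

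The main technical obstacle is \emph{synchronising} the two executions after an $H$-region, during which the two sides may diverge arbitrarily, take different numbers of steps, and expose completely different subtrees of the stack (e.g., opposite branches of an $H$-guarded \code{if}, or entirely different fallback bodies invoked on $H$-typed addresses). To handle this, the induction must allow the two sides to proceed independently while at level $H$, accumulating only $X$-projection-empty trace fragments, and then show that both sides eventually pop back to level $L$ (via \nameref{stm_rt_restore}) with $L$-equivalent memories and with the \emph{shared} tail of the stack exposed. This last point is delicate: it requires an auxiliary lemma stating that any well-formed stack executed at level $H$ on $L$-equivalent memories will, if it ever reduces to a configuration whose current level is again $L$, do so with a stack whose top is common to both executions and with memories still $L$-equivalent. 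This lemma is proved by induction on the well-formedness derivation of the stack, exploiting the descending-order property of stack security levels together with the preservation result of Theorem~\ref{theorem:stacks_runtime_preservation}.

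Once this synchronisation lemma is available, the inductive argument concludes by splitting each of the two full labelled sequences $\ltrans{\pi_1}$ and $\ltrans{\pi_2}$ into alternating $L$-segments and $H$-segments; $L$-segments contribute identical, matching labels on both sides (which therefore coincide under projection to any $X \in \SETNAME X$), while $H$-segments contribute only labels that vanish under the same projection. Composing the contributions yields $\pi_1 \BARBSYM_X = \pi_2 \BARBSYM_X$ as required. The assumption that both executions reach $\bot$ is used to rule out the pathological case in which one side diverges inside an $H$-region while the other returns to $L$; this ensures the alternation above is well-defined and finite on both sides.
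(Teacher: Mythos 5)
Your overall architecture is the one the paper uses: induction on the combined length of the two labelled transition sequences, lockstep reasoning at level $L$ via Corollary~\ref{corollary:expressions_runtime_noninterference}, and Lemma~\ref{lem:HvsL} to discharge the $H$-regions, with the termination hypothesis ruling out one side diverging inside such a region. Your explicit ``synchronisation lemma'' is essentially what the paper does inline: for each $H$-guarded construct it decomposes $\pi_i$ into the fragment produced while the raised level is active (which vanishes under $\BARBSYM_X$ by Lemma~\ref{lem:HvsL}) and the continuation from the shared stack suffix $Q'$, to which the induction hypothesis applies.

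There is, however, one concrete step that fails as stated. You claim that when the execution level is $L$ ``the two executions apply the same typed semantic rule and produce identical labels: in particular, for a call transition, $\ENV{V}(\code{this})$ coincides in both (it is $L$-typed).'' This is not true for the call that \emph{enters} an $H$-region: rule \nameref{stm_rt_call} fires at level $s=L$ but evaluates $e_1$, $\VEC{e}$, $e_2$ at the method's level $s'$, and requires only $s_1 \ordleq s'$ for $\Delta(\code{this})=\TVAR{I^X,s_1}$. If $s'=H$, both the caller and the callee may be stored in $H$-containers and so may differ between the two runs, producing two genuinely different labels $\TRANSACT{X_1}{Y_1}{f_1}{\VEC{v}_1}{z_1}$ and $\TRANSACT{X_2}{Y_2}{f_2}{\VEC{v}_2}{z_2}$. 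The paper closes this by observing that the (now unconditional) side condition $s'\ordleq s_3$ forces the caller's \code{balance} to have level at least $s'=H$, so the caller is an $H$-contract, hence lies in $\SETNAME{Y}$, and the divergent label is invisible to the projection $\BARBSYM_X$ for any trusted $X\in\SETNAME{X}$. Relatedly, your stated ``key observation'' is phrased in terms of the \emph{callee} (``no call \emph{to} a trusted contract is possible at level $H$''), but the projection of Definition~\ref{def:project} filters on the \emph{calling} address; the fact you actually need, and which Lemma~\ref{lem:HvsL} and the boundary-call argument provide, is that no call \emph{from} a trusted contract can occur while the method level is $H$. With that correction your segment decomposition goes through.
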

\begin{proof}
By induction on the sum of the lengths of the two given sequences of labelled transitions (that generate $\pi_1$ and $\pi_2$).
The base case (i.e., when the sum is 0) is when $Q = \bot$; in this case, $\pi_1=\pi_2=\epsilon$ and the claim trivially holds.
For the inductive step, we know that $Q \not\in \TSTACKS$ and we distinguish the top-most element of $Q$.

If $Q \in \SET{s';Q'\ ,\ \code{del}(x);Q'\ , \ (\ENV V',\Delta');Q'}$, then, for $i \in \{1,2\}$, we have that
$\Sigma; \Gamma; \Delta  \ssemDash[{s}]  \CONF{Q,  \ENV{TSV}^i} \ltrans\epsilon 
\Sigma; \Gamma; \hat\Delta \ssemDash[{\hat s}]  \CONF{Q',  \ENV{TS}^i,\widehat{\ENV V^i}}$, where
$$
\hat\Delta = \left\{
\begin{array}{ll}
\Delta' & \mbox{if } Q = (\ENV V',\Delta');Q'
\\
\Delta & \mbox{otherwise}
\end{array}
\right.
\quad
\hat s = \left\{
\begin{array}{ll}
s' & \mbox{if } Q = s';Q'
\\
s & \mbox{otherwise}
\end{array}
\right.
\quad
\widehat{\ENV V^i} = \left\{
\begin{array}{ll}
\ENV V' & \mbox{if } Q = (\ENV V',\Delta');Q'
\\
\ENV V^i & \mbox{otherwise}
\end{array}
\right.
$$
We can thus immediately conclude by induction applied to $\Sigma; \Gamma; \hat\Delta \ssemDash[{\hat s}]  \CONF{Q',  \ENV{TS}^i,\widehat{\ENV V^i}} \ltrans{\pi_i} \Sigma; \Gamma; \Delta_i \ssemDash[{s_i}]  \CONF{\bot,  \ENV{T}^i,\ENV{SV}^{i'}}$, for $i \in \{1,2\}$.

The delicate case is when $Q = S;Q'$, for some $S$ and $Q'$. 
If $s=H$, then the claim immediately holds by Lemma~\ref{lem:HvsL}.
So let us assume that $s=L$ and reason by case analysis on $S$.
\begin{itemize}
\item $S = \code{skip}$: in this case, $\Sigma; \Gamma; \Delta  \ssemDash[{s}]  \CONF{Q,  \ENV{TSV}^i} \ltrans\epsilon 
\Sigma; \Gamma; \Delta \ssemDash[{s}]  \CONF{Q',  \ENV{TSV}^i}$, for $i \in \{1,2\}$, and we trivially conclude by induction.

\item $S = \code{$\TVAR{B, s'}$ $x$ := $e$ in $S'$}$: in this case, $\Sigma; \Gamma; \Delta  \ssemDash[{s}]  \CONF{Q,  \ENV{TSV}^i} \ltrans\epsilon \Sigma; \Gamma; \Delta, x:\TVAR{B, s'} \ssemDash \CONF{S' ; \DEL{x} ; Q', \ENV{TS}^i; \ENV{V}^i, x:v_i}$, where
$\Sigma; \Gamma; \Delta \esemDash[(B, s')] \CONF{e, \ENV{SV}^i} \trans v_i$, for $i \in \{1,2\}$.
If $s' =L$, then $\Gamma;\Delta \vdash \ENV S^1 =_L \ENV S^2$, since we assume that $\ENV{S}^1|_{\SETNAME{X}} = \ENV{S}^2|_{\SETNAME{X}}$ and all elements of every $X \in \SETNAME{X}$ are labelled $L$.
Moreover, we assume that $\Gamma;\Delta \vdash \ENV V^1 =_L \ENV V^2$. 
Thus, by Corollary~\ref{corollary:expressions_runtime_noninterference}, we obtain that $v_1 = v_2 = v$, and so 
\begin{equation*}
    \Gamma;\Delta,x:\TVAR{B_L} \vdash \ENV V^1,x:v =_L \ENV V^2,x:v.
\end{equation*}

\noindent Otherwise, $s'=H$ and so $x$ is an $H$-variable; hence, 
\begin{equation*}
  \Gamma;\Delta, x:\TVAR{B_H} \vdash \ENV V^1,x:v_1 =_L \ENV V^2,x:v_2.
\end{equation*}

\noindent In both cases, we can apply the induction hypothesis to $\Sigma; \Gamma; \Delta, x:\TVAR{B, s'} \ssemDash[{s}] \CONF{S' ; \DEL{x} ; Q', \ENV{TS}^i; \ENV{V}^i, x:v_i} \ltrans{\pi_i} \Sigma; \Gamma; \Delta_i \ssemDash[{s_i}]  \CONF{\bot,  \ENV{T}^i,\ENV{SV}^{i'}} $, for $i \in \{1,2\}$, and conclude. 

\item The cases for $S = \code{$x$ := $e$}$ and $S = \code{this.$p$ := $e$}$ are similar to the previous one (just notice that here $\Delta$ is not changed in the first transition).

\item $S=\code{if $e$ then $S_{\TRUE}$ else $S_{\FALSE}$}$: in this case, we have that
$\Sigma; \Gamma; \Delta \ssemDash \CONF{Q , \ENV{TSV}^i} \ltrans\epsilon  
\Sigma; \Gamma; \Delta \ssemDash[{s'}] \CONF{S_{b_i} ; s; Q' , \ENV{TSV}^i}$, for $i \in \{1,2\}$, where
$\Sigma; \Gamma; \Delta \esemDash[(\TBOOL, s')] \CONF{e, \ENV{SV}^i} \trans b_i (\in \BOOLEANS)$ and $s \ordleq s'$.
We distinguish two sub-cases:
\begin{enumerate}
\item $s'= s (=L)$: 
since we assume that $\ENV{S}^1|_{\SETNAME{X}} = \ENV{S}^2|_{\SETNAME{X}}$ and $\Gamma;\Delta \vdash \ENV V^1 =_L \ENV V^2$, by Corollary~\ref{corollary:expressions_runtime_noninterference}, we obtain that $b_1 = b_2 = b \in \BOOLEANS$.
So, we can apply the induction
hypothesis to $\Sigma; \Gamma; \Delta \ssemDash[{s}] \CONF{S_{b} ; s; Q' , \ENV{TSV}^i} \ltrans{\pi_i}
\Sigma; \Gamma; \Delta_i \ssemDash[{s_i}]  \CONF{\bot,  \ENV{T}^i,\ENV{SV}^{i'}} $, for $i \in \{1,2\}$, and conclude.
\item $s'=H$: if $S_{b_1}=S_{b_2}$, we reason like in the sub-case above. Otherwise, we have that each $\pi_i$ is equal to $\pi_i',\pi_i''$, where
$\Sigma; \Gamma; \Delta \ssemDash[{s'}] \CONF{S_{b_i} ; s; Q' , \ENV{TSV}^i} 
\ltrans{\pi_i'}
\Sigma; \Gamma; \Delta \ssemDash[{s}] \CONF{Q' , \ENV{TSV}^{i''}} 
\ltrans{\pi_i''}
\Sigma; \Gamma; \Delta_i \ssemDash[{s_i}]  \CONF{\bot,  \ENV{T}^i,\ENV{SV}^{i'}} $.
Being $s' = H$, by Lemma~\ref{lem:HvsL} we have that $\pi_1' \BARBSYM_X = \pi_2' \BARBSYM_X = \epsilon$, for any $X \in \SETNAME{X}$; then, by induction hypothesis, we obtain that $\pi_1'' \BARBSYM_X = \pi_2'' \BARBSYM_X$, that allows us to conclude.
\end{enumerate}

\item $S=\code{while $e$ do $S'$}$: this case is similar to the previous one. If $e$ evaluates to the same boolean value in the two environments,
we simply apply the induction. Otherwise, $s'=H$ and the computation where $e$ evaluates to $\FALSE$  yields $Q'$ by exhibiting trace $\epsilon$, whereas the the computation where $e$ evaluates to $\TRUE$ will yield $Q'$ after some trace $\pi$. However, since the overall level has been raised to $H$, because of Lemma~\ref{lem:HvsL}, we have that $\pi \BARBSYM_X = \epsilon$, for any $X \in \SETNAME{X}$, and we conclude by induction.

\item $S = \CALL{e_1}{m}{\VEC{e}}{e_2}$: in this case, we have that
$\Sigma; \Gamma; \Delta \ssemDash \CONF{Q , \ENV{TSV}^i} \ltrans{\TRANSACT{X_i}{Y_i}{f_i}{\VEC{v}_i}{z_i}}
\Sigma; \Gamma; \Delta_i' \ssemDash[{s'}] \CONF{S_i ; (\ENV{V}^i, \Delta) ; s ;  Q' , \ENV{T}^i;\ENV{SV}^{i''}}$, for $i \in \{1,2\}$, where $s \ordleq s'$ and
\begin{itemize}
\item $X_i = \ENV{V}^i(\code{this})$ and $\Delta(\code{this}) = \TVAR{I^{X}, s_1}$, for $s_1 \ordleq s'$;
\item $\Sigma; \Gamma; \Delta \esemDash[(I^{Y_i}, s')]  \CONF{e_1, \ENV{SV}^i} \trans Y_i$ (for $\Gamma(Y_i) = (I^{Y_i}, s_2^i)$);
\item $\Sigma; \Gamma; \Delta \esemDash[(\TINT, s')] \CONF{e_2, \ENV{SV}^i} \trans z_i$;
\item $\Sigma; \Gamma; \Delta \esemDash[(\VEC{B}, \VEC{s}')] \CONF{\VEC{e}, \ENV{SV}^i} \trans \VEC{v}_i$;
\item $m$ is either a method name $f_i$ for both $i$'s, or $m = \code{id}$ and $f_i = \ENV V^i(\code{id})$; 
\item if $f_i \not\in  \DOM{\ENV{T}^i(Y_i)}$, then $\ENV{T}^i(Y_i)(\code{fallback}) = (\epsilon, S_i)$ and $\Gamma(I^{Y_i})(\code{fallback}) = \TSPROC{}{s'}$, otherwise $\ENV{T}^i(Y_i)(f_i)  = (\VEC{x}, S_i)$ and $ \Gamma(I^{Y_i})(f_i) = \TSPROC{\BSVEC}{s'}$;
\item $\Delta'_i$ is made up by the associations $\code{this}:\TVAR{I^{Y_i}, s_2^i}, \code{sender}:\TVAR{I^X, s_1}, \code{value}:\TVAR{\TINT, s'}$; moreover, if in environments $i$ this is a normal call, it also contains $\VEC x : \BSVEC$~; otherwise, it also contains the two associations $\code{id}:\TVAR{\TIDF, s'}, \code{args}:\TVAR{\VEC{B}, \VEC{s}'}$;
\item $\ENV{V}^{i''}$ is made up by the associations $\code{this}:Y_i, \code{sender}:X_i, \code{value}:z_i$; moreover, if in environments $i$ this is a normal call, it also contains $\VEC x:\VEC v_i$; otherwise, it also contains the two associations $\code{id}:f_i, \code{args}:\VEC{v}_i$;
\item $\ENV{S}^{i''} = \ENV{S}\REBIND{X_i}{\ENV{F}^{X_i} [\code{balance -= } z_i]}\REBIND{Y_i}{\ENV{F}^{Y_i} [\code{balance += } z_i]} $, where 
\begin{itemize}
\item $\ENV{S}^i(X_i) = \ENV{F}^{X_i}$ and $\ENV{S}^i(Y_i) = \ENV{F}^{Y_i}$, 
\item $\Gamma(I^X)(\code{balance}) = \TVAR{\TINT, s_3}$ and $\Gamma(I^{Y_i})(\code{balance}) =  \TVAR{\TINT, s_4^i}$, with $s' \ordleq s_3, s_4^i $.
\end{itemize}
\end{itemize}

If $s'=L$, then like above, by Corollary~\ref{corollary:expressions_runtime_noninterference}, we obtain that 
$X_1 = X_2$, $Y_1 = Y_2$, $f_1=f_2$, $\VEC v_1 = \VEC v_2$, and $z_1 = z_2$; so, consequently, also that $\Delta_1' = \Delta_2'$, $S_1 = S_2$, $\ENV{S}^{1''} = \ENV{S}^{2''}$ and $\ENV{V}^{1''} = \ENV{V}^{2''}$. Thus, by induction, we can easily conclude.

If $s'=H$, then we reason as follows. First, observe that the two $\pi_i$'s are equal to $\TRANSACT{X_i}{Y_i}{f_i}{\VEC{v}_i}{z_i} , \pi_i' ,\pi_i''$, where
$\Sigma; \Gamma; \Delta_i' \ssemDash[{s'}] \CONF{S_i ; (\ENV{V}^i, \Delta) ; s ;  Q' , \ENV{T}^i;\ENV{SV}^{i''}} 
\ltrans{\pi_i'}
\Sigma; \Gamma; \Delta \ssemDash[{s}] \CONF{Q' , \ENV{TV}^i;\ENV{S}^{i'''}}
\ltrans{\pi_i''}
\Sigma; \Gamma; \Delta_i \ssemDash[{s}] \CONF{\bot , \ENV{T}^i;\ENV{SV}^{i'}}$.
For any $X \in \SETNAME{X}$, we have that 
$\pi_1' \BARBSYM_X = \pi_2' \BARBSYM_X = \epsilon$ (by Lemma~\ref{lem:HvsL}) and
$\pi_1'' \BARBSYM_X = \pi_2'' \BARBSYM_X$ (by induction).
We are left with the first label of the traces.
The crucial observation is that $X_1$ and $X_2$ must belong to $\SETNAME Y$, because their \code{balance} fields are $H$ (given that $H = s' \ordleq s_3$); hence, the first labels of the two traces are excluded from any projection on a contract from $\SETNAME X$ and we conclude.

\item $S = \DCALL{e_1}{m}{\VEC{e}}$: this case is similar to the previous one when no fallback function is invoked.
The only notable difference is that in this case we have no explicit typing information about the \code{balance} fields, since they are not modified by the method invocation. However, we know that $\Gamma(I^{Y_i})(\code{balance}) = \Gamma(I^X)(\code{balance})$. In the case when $s'=H$, this again entails that $X_1,X_2 \in \SETNAME Y$, because $Y_1,Y_2 \in \SETNAME Y$, being that $\Sigma; \Gamma; \Delta \esemDash[(I^{Y_i}, s')]  \CONF{e_1, \ENV{SV}^i} \trans Y_i$.
\vspace*{-.5cm}
\end{itemize}
\end{proof}

\begin{corollary}[Type safety implies call integrity]
Assume the same security labelling as in Theorem~\ref{theorem:stacks_runtime_CI} and that, for $i \in \SET{1,2}$
\begin{itemize}
  \item $\Sigma; \Gamma; \EMPTYSET \vdash \ENV{S}^i$ (Well-typedness), 
  \item $\Gamma \vdash \ENV{TS}^i; \ENV{V}^\EMPTYSET$ (Consistency), 
  \item $\Sigma; \Gamma; \ENV{S}^i; \ENV{V}^\EMPTYSET; s \vdash \tau$ (Well-formedness),
  \item $\Sigma; \Gamma; \ENV{T}^i \vDash \ENV{T}^i$ ($\ENV{T}$ type safety),
  \item $\Sigma; \Gamma; \EMPTYSET; \ENV{T}^i \vDash \tau : \TCMD{s}$ (stack type safety),
\end{itemize}

\noindent If $\ENV{TS}^1|_{\SETNAME{X}} = \ENV{TS}^2|_{\SETNAME{X}}$ and
$\CONF{\tau, \ENV{TS}^i,\ENV{V}^\EMPTYSET} \ltrans{\pi_i} \CONF{\bot, \ENV{T}^i, \ENV{SV}^{i'} }$
for $i \in \SET{1,2}$, then
$\pi_1 \BARBSYM_X = \pi_2 \BARBSYM_X$, for any $X \in \SETNAME{X}$. 
\end{corollary}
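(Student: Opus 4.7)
The plan is to reduce the corollary to an application of Theorem~\ref{theorem:stacks_runtime_CI}, which already establishes the call-integrity-style conclusion under the \emph{typed} labelled semantics. Since the hypothesis provides executions only under the \emph{untyped} labelled semantics, the crux of the proof will be to lift these untyped executions to typed executions carrying the very same labels, and this is precisely where the semantic type safety assumptions $\Sigma; \Gamma; \ENV{T}^i \vDash \ENV{T}^i$ and $\Sigma; \Gamma; \EMPTYSET; \ENV{T}^i \vDash \tau : \TCMD{s}$ come into play.

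First I would observe that the equivalence $\Gamma; \EMPTYSET \vdash \ENV{V}^\EMPTYSET =_L \ENV{V}^\EMPTYSET$ holds vacuously, so that the assumed well-typedness, consistency, well-formedness, together with $\ENV{TS}^1|_{\SETNAME{X}} = \ENV{TS}^2|_{\SETNAME{X}}$, match (with $Q = \tau$, $\Delta = \EMPTYSET$ and the given $s$) all the hypotheses needed by Theorem~\ref{theorem:stacks_runtime_CI}. Thus the entire problem reduces to showing that, for each $i \in \{1,2\}$, the given untyped sequence $\CONF{\tau, \ENV{TS}^i, \ENV{V}^\EMPTYSET} \ltrans{\pi_i} \CONF{\bot, \ENV{T}^i, \ENV{SV}^{i'}}$ is mirrored, step by step and label by label, by a typed sequence $\Sigma; \Gamma; \EMPTYSET \ssemDash[s] \CONF{\tau, \ENV{T}^i; \ENV{S}^i; \ENV{V}^\EMPTYSET} \ltrans{\pi_i} \Sigma; \Gamma; \Delta_i \ssemDash[s_i] \CONF{\bot, \ENV{T}^i; \ENV{SV}^{i'}}$. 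I would prove this lifting by induction on the length of the untyped trace: the base case $\tau = \bot$ is immediate with $\pi_i = \epsilon$. For the inductive step, the semantic type safety of the stack yields, via Definition~\ref{def:typing_interpretation_stm}, a typing interpretation containing $(\tau, \EMPTYSET, s)$; hence either $\tau \in \TSTACKS$, in which case the typed trace is empty and vacuously matches, or at least one typed transition exists and every typed reduct still lies in that typing interpretation. Combining this with Theorem~\ref{theorem:stacks_runtime_preservation} (preservation of well-typedness, consistency, and well-formedness along typed transitions) and Theorem~\ref{theorem:stacks_semantic_compatibility} (every typed transition projects down to a matching untyped transition with the same label), the typed execution can be propagated one step at a time in parallel with the given untyped trace, inheriting the same labels throughout.

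The main obstacle will be bridging between the type-lifted transition system of Definition~\ref{def:type_lifted_transitions}, which quantifies over $\ENV{SV}$ built canonically as in Definition~\ref{def:construction_envsv}, and the \emph{actual} state environments $\ENV{SV}^i$ supplied in the hypothesis. Since these $\ENV{SV}^i$ are assumed well-typed and consistent, and the typing interpretation witnesses the existence of a transition in the type-lifted system, Corollary~\ref{theorem:stacks_runtime_states_abstraction} (together with its expression-level analogue Theorem~\ref{theorem:expressions_runtime_states_abstraction}) delivers the existence of a typed transition for our specific environments as well; one subtlety is that this abstraction is stated for canonically built states, so the inductive invariant must carry enough information (well-typedness and consistency) to invoke it at each step, which is exactly what Theorem~\ref{theorem:stacks_runtime_preservation} maintains. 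Once the lifting is in place, a single application of Theorem~\ref{theorem:stacks_runtime_CI} to the two lifted sequences yields $\pi_1 \BARBSYM_X = \pi_2 \BARBSYM_X$ for every $X \in \SETNAME{X}$, which is the desired conclusion.
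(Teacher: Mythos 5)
Your proposal is correct and follows essentially the same route as the paper: the paper's entire proof consists of the single ``crucial observation'' that the given untyped traces lift to typed traces with the same labels (justified by noting that otherwise some reachable non-terminal stack would have no outgoing typed transition, contradicting $\Sigma; \Gamma; \EMPTYSET; \ENV{T}^i \vDash \tau : \TCMD{s}$), after which Theorem~\ref{theorem:stacks_runtime_CI} applies. You simply spell out this lifting in more detail --- via induction on the trace, Definition~\ref{def:typing_interpretation_stm}, Theorem~\ref{theorem:stacks_runtime_preservation}, Theorem~\ref{theorem:stacks_semantic_compatibility}, and Corollary~\ref{theorem:stacks_runtime_states_abstraction} --- which is a faithful elaboration of the argument the paper leaves implicit.
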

\begin{proof}
The crucial observation is that, if $\Sigma; \Gamma; \EMPTYSET; \ENV{T}^i \vDash \tau : \TCMD{s}$ and 
$\CONF{\tau, \ENV{TS}^i,\ENV{V}^\EMPTYSET} \ltrans{\pi_i} \CONF{\bot, \ENV{T}^i, \ENV{SV}^{i'} }$,
then $\Sigma; \Gamma; \EMPTYSET  \ssemDash[{s}]  \CONF{\tau,  \ENV{TS}^i,\ENV{V}^\EMPTYSET} \ltrans{\pi_i}
\Sigma; \Gamma; \Delta_i \ssemDash[{s_i}]  \CONF{\bot,  \ENV{T}^i,\ENV{SV}^{i'}} $; this holds otherwise we would eventually have a non-terminal stack that has no outgoing typed transitions (in contradiction with the type safe assumption).
\end{proof}

To conclude, a challenging avenue to explore is to to drop the condition that both executions have to terminate in Definition~\ref{def:call_integrity} and instead show that if neither of the two executions get stuck (i.e.~if there were no type errors), then we cannot find in finitely many steps a point at which the two call-traces disagree. 
This would extend  the notion of call-integrity provided in~\cite{grishchenko2018} to non-terminating programs.

\clearpage
\section{Summary of symbols and notations}\label{app:notations}

\begin{tabular}{||c c c c||} 
 \hline
 {\bf Name} & {\bf Symbol} & {\bf Set} & {\bf Defined in} \\ [0.5ex] 
 \hline\hline
 Variable names & $x,y$ & $\VNAMES$ & Figure~\ref{fig:syntax_tinysol} \\
 \hline
 Field names & $p$ & $\FNAMES$ & Figure~\ref{fig:syntax_tinysol} \\
 \hline
 Method names & $f,g$ & $\MNAMES$ & Figure~\ref{fig:syntax_tinysol} \\
 \hline
 Address names & $A,X,Y,Z$ & $\ANAMES$ & Figure~\ref{fig:syntax_tinysol} \\
 \hline
 Interface names & $I$ & $\TNAMES$ & Figure~\ref{fig:syntax_tinysol} \\
 \hline
 Names (all the previous ones) & $n$ & $\NAMES$ & Definition~\ref{def:types} \\
 \hline
 Method identifiers & $m$ & $\MNAMES \cup \{\code{id}\}$ & Figure~\ref{fig:syntax_tinysol}\\
 \hline
 \hline
 Integers & $z$ & $\mathbb{Z}$ & Section~\ref{sec:typedSyntax} \\
 \hline
 Booleans & $b$ & $\mathbb{B}$ & Section~\ref{sec:typedSyntax} \\
 \hline
 Values & $v$ & $\VALUES$ & Figure~\ref{fig:syntax_tinysol} \\
 \hline
 \hline
 Field Declarations & $DF$ & $\DEC{F}$ & Figure~\ref{fig:syntax_tinysol} \\ 
 \hline
Method Declarations & $DM$ & $\DEC{M}$ & Figure~\ref{fig:syntax_tinysol} \\ 
 \hline
 Contract Declarations & $DC$ & $\DEC{C}$ & Figure~\ref{fig:syntax_tinysol} \\ 
 \hline
 Interface Fields & $IF$ &  & Definition~\ref{def:interfaces} \\ 
 \hline
 Interface Methods & $IM$ & & Definition~\ref{def:interfaces} \\ 
 \hline
 Interface Declarations & $ID$ & & Definition~\ref{def:interfaces} \\ 
 \hline
 \hline
 Expressions & $e$ & $\EXPR$ & Figure~\ref{fig:syntax_tinysol} \\
 \hline
 Statements & $S$ & $\STM$ & Figure~\ref{fig:syntax_tinysol} \\
 \hline
 Stacks & $Q$ & $\STACKS$ & Section~\ref{sec:typedSyntax} \\
 \hline
 Terminal Stacks &  & $\TSTACKS$ & Section~\ref{sec:typing_interpretation_stm_stack} \\
 \hline
 \hline
 Variable environments & $\ENV V$ & $\SETENV{V}$ & Section~\ref{sec:typedSyntax} \\
 \hline
 Field environments & $\ENV F$ & $\SETENV{F}$ & Section~\ref{sec:typedSyntax} \\
 \hline
 Method environments & $\ENV M$ & $\SETENV{M}$ & Section~\ref{sec:typedSyntax} \\
 \hline
 States & $\ENV S$ & $\SETENV{S}$ & Section~\ref{sec:typedSyntax} \\
 \hline
 Method tables & $\ENV T$ & $\SETENV{T}$ & Section~\ref{sec:typedSyntax} \\
 \hline
 \hline
  Security levels (with order $\ordleq$) & $s$ & $\SECLEVELS$ & Section~\ref{sec:types} \\
 \hline
 Base types & $B$ & $\BASETYPES$ & Definition~\ref{def:types} \\
 \hline
 Types & $T$ & $\TYPES$ & Definition~\ref{def:types} \\
 \hline
 Type environments & $\Gamma,\Delta$ & $\TYPEENVS$ & Definition~\ref{def:types} \\
 \hline
 Interface hierarchies & $\Sigma$ &  & Definition~\ref{def:types} \\
 \hline
 \end{tabular}

\begin{tabular}{||c c c||} 
 \hline
 {\bf Name} & {\bf Symbol} & {\bf Defined in} \\ [0.5ex] 
 \hline
  \hline
 Operation evaluation & $\trans_{\op}$ & Assumed \\
 \hline
 Evaluation of expressions & $\CONF{e,\ENV{SV}}\trans v$ & Figure~\ref{fig:semantics_expressions} \\
 \hline
 Operational semantics  & $ \CONF{Q, \ENV{TSV}} \trans \CONF{Q', \ENV{T},\ENV{SV}'} $ & Figg.~\ref{fig:semantics_stacks1} and~\ref{fig:semantics_stacks2} \\
 \hline
 Typed eval.\@ of expressions & $\Sigma; \Gamma; \Delta \esemDash \CONF{e, \ENV{SV}} \trans v$ & Figure~\ref{fig:typed_semantics_expressions} \\
 \hline
 Typed oper.\@ semantics  & \!\!\!\!\!\!$\Sigma; \Gamma; \Delta \ssemDash \CONF{Q, \ENV{TSV}} \trans $ & Figg.~\ref{fig:typed_semantics_stacks1} and~\ref{fig:typed_semantics_stacks2} \\
  & $\qquad \Sigma; \Gamma; \Delta' \ssemDash[{s'}] \CONF{Q', \ENV{T};\ENV{SV}'} $ &  \\
 \hline
 Type lifted transitions  & $\Sigma; \Gamma; \ENV{T} \vDash (Q, \Delta, s) \ttrans (Q', \Delta', s')$ & Definition~\ref{def:type_lifted_transitions} \\
 \hline
 \hline
 Consistency of $\Sigma$ & $\Sigma; \Gamma \vdash \Sigma'$ & Figure~\ref{fig:consistency_rules_sigma} \\
 \hline
 Consistency of environments & $\Gamma \vdash \ENV{TSV}$ & Figg.~\ref{fig:consistency_rules_envtsv} and~\ref{fig:consistency_rules_envfm}\\
 \hline
 Well-formedness of stacks & $\Sigma; \Gamma; \ENV{SV}; s \vdash Q$ & Appendix~\ref{app:wellformedness_stack}\\
 \hline
 $s$-equivalence & $\Gamma; \Delta \vdash \ENV{SV}^1 =_s \ENV{SV}^2$ & Appendix~\ref{app:equivalence_states} \\
 \hline
 \hline
 Subtyping & $\Sigma \vdash T_1 \SUBS T_2 $ & Sec.~\ref{sec:types} and Fig.~\ref{fig:subtyping_expressions}\\
 \hline
 Syntactic typing of expressions & $\Sigma; \Gamma; \Delta \vdash e:B_s$ & Figure~\ref{fig:type_rules_expr} \\
 \hline
 Syntactic typing of statements & $\Sigma; \Gamma; \Delta \vdash S:\TCMD s$ & Figure~\ref{fig:type_rules_stm} \\
 \hline
 Syntactic typing of stacks & $\Sigma; \Gamma; \Delta \vdash Q:\TCMD s$ & Figure~\ref{fig:type_rules_stacks} \\
 \hline
 Syntactic typing of environments & $\Sigma; \Gamma; \Delta \vdash_{(X)} \ENV{}$ & Figure~\ref{fig:type_rules_env} \\
 \hline
 \hline
  Typing interpr.\@ for expressions & $ \ETYPI / \ETYPING$ & Definition~\ref{def:typing_interpretation_expr}\\
 \hline
  Semantic typing of expressions & $\Sigma; \Gamma; \Delta \vDash e : B_s$ & Definition~\ref{def:typing_interpretation_expr}\\
 \hline
  Typing interpr.\@ for stacks & $ \STYPI / \STYPING$ & Definition~\ref{def:typing_interpretation_stm}\\
 \hline
  Semantic typing of statements & $\Sigma; \Gamma; \Delta; \ENV{T} \vDash S : \TCMD{s}$ & Definition~\ref{def:typing_interpretation_stm}\\
 \hline
  Semantic typing of stacks & $\Sigma; \Gamma; \Delta; \ENV{T} \vDash Q : \TCMD{s}$ & Definition~\ref{def:typing_interpretation_stm}\\
 \hline
 Type safe code environments & $\Sigma; \Gamma; \ENV{T} \vDash \ENV{T}$ & Figure~\ref{fig:semantic_type_rules_envtm}
 \\
  \hline
  Progression & $\UPTOSTYPI \progress \CANDSTYPI$ & Definition~\ref{def:progression} \\
  \hline
  \hline
  Transactions & $\tau$ & Appendix~\ref{app:CI}, pg.~\pageref{def:call_integrity}\\
  \hline
  Traces & $\pi$ & Definition~\ref{def:traces} \\
  \hline
  Trace projection & $\pi  \BARBSYM_X$ & Definition~\ref{def:project} \\
  \hline
  Labelled oper.\@ semantics & $\CONF{Q,\ENV{TSV}} \ltrans{\pi} \CONF{Q',\ENV{T},\ENV{SV}'}$ & Definition~\ref{def:traces} \\
  \hline
  Labelled typed oper. sem. & \!\!\!\!\!\!$\Sigma; \Gamma; \Delta  \ssemDash[{s}]  \CONF{Q,  \ENV{TSV}} \ltrans{\pi} $ & Appendix~\ref{app:CI}, pg.~\pageref{def:sContracts} \\
  & \qquad $\Sigma; \Gamma; \Delta' \ssemDash[{s'}]  \CONF{Q',  \ENV{T},\ENV{SV}'} $ & \\
  \hline
  
\end{tabular}

\clearpage
\section{The PMW attack}\label{Sect:PMW-attack}

\begin{figure}
\begin{minipage}[t]{.5\textwidth}
\begin{lstlisting}[style=tinysol]
contract Lib : @$(I^L,{s_L})$@ { 
  field owner := X;
  @$\lstvdots$@
  func init(newOwner) {
    this.owner := newOwner
  }
}
\end{lstlisting}
\end{minipage}\hfill
\begin{minipage}[t]{.45\textwidth}
\begin{lstlisting}[style=tinysol]
contract Wallet : @$(I^W,{s_W})$@ {
  field owner := Y;
  @\lstvdots@
  func fallback() { 
    dcall Lib.id(args)
  }
}
\end{lstlisting}
\end{minipage}
\vspace*{-.5cm}
\caption{The Parity Multisig Wallet attack in \TINYSOL.}
\label{fig:PMWattack}
\end{figure}

Figure~\ref{fig:PMWattack} reports the code for the PMW attack, as presented in \cite{chinese2021flowtypes} and translated into \TINYSOL{} (irrelevant details have been omitted). 
To trigger the attack, the attacker (with address \code A) would issue a transaction with 
\[
\code{call Wallet.init(A)}
\]

\noindent This call invokes the fallback function of \code{Wallet}, since the contract \code{Wallet} does not declare a method named \code{init}. 
This triggers the execution of \code{Lib.init} in the context of \code{Wallet}, which would overwrite the value \code{Y} stored in \code{Wallet.owner} and change it to \code{A}.

As desirable, there cannot be any typing interpretation containing this example, because we do not permit delegate calls to arbitrary contracts but only to those that are super-types of the calling contract. 
In the example, \code{Lib} is not a super-type of \code{Wallet}, precisely because it contains the method \code{init}, which \code{Wallet} lacks. 
Formally, by assuming the expectable environments, since $\code{init} \not\in \ENV T (\code{Wallet})$ and $\ENV T (\code{Wallet})(\code{fallback}) = (\epsilon, \DCALL{\code{Lib}}{\code{id}}{\code{args}})$, by rule \nameref{stm_rt_fcall} we have that, 
\[
    \begin{array}{l}
      \Sigma; \Gamma; \Delta \ssemDash \CONF{\CALL{\code{Wallet}}{\code{init}}{\code A}{z} ; Q, \ENV{TSV}}  \\
      \trans \Sigma; \Gamma; \Delta' \ssemDash[{s'}] \CONF{ \DCALL{\code{Lib}}{\code{id}}{\code{args}} ; (\ENV{V}, \Delta) ; s ; Q, \ENV{T}; \ENV{SV}'} 
    \end{array}
\]

\noindent where
\[
\begin{array}{ll}
\Delta' = & \code{this}:\TVAR{I^W, s_W}, 
\code{sender}:\TVAR{I^{\code A}, s_A}, 
\code{value}:\TVAR{\TINT, s'}, 
\\
& \code{id}:\TVAR{\TIDF, s'}, 
\code{args}:\TVAR{I^{\code A}, s_A} 
\end{array}
\]

\noindent and
$$\ENV{V}' = \code{this}:\code{Wallet}, \code{sender}:\code A, \code{value}:z, \code{id}:\code{init}, \code{args}:\code A.
$$
Now, the point is that, by rule \nameref{stm_rt_dcall},
\begin{equation}
\label{eq:PMW}
\Sigma; \Gamma; \Delta' \ssemDash[{s'}] \CONF{ \DCALL{\code{Lib}}{\code{id}}{\code{args}} ; (\ENV{V}, \Delta) ; s ; Q, \ENV{T}; \ENV{SV}'} 
\end{equation}
is stuck, 
because $\code{init} \in (\DOM{\Gamma(I^L)} \INTERSECT (\FNAMES \UNION \{\code{balance}\}))$ but 
$\Gamma(I^L)(\code{init}) \neq \Gamma(I^W)(\code{init})$, since the first is defined, whereas the second is not.
Using Lemma~\ref{lemma:untypable}, we can now conclude that there is no typing interpretation $\STYPI$ containing the triplet $(\CALL{\code{Wallet}}{\code{init}}{\code A}{z} ; Q, \Delta, s)$.

Hence, the only way for \eqref{eq:PMW} to evolve (and having a typing interpretation witnessing the safety of the code) is that \code{Wallet} did, in fact, contain a declaration for \code{init} as well. 
However, in this case, the attack would not be possible, because the fallback function of \code{Wallet} would not be invoked by \code{A}'s transaction.

\end{document}